\definecolor{prxlink}{RGB}{46,48,146}
\theoremstyle{plain}
\newtheorem{theorem}{Theorem}
\newtheorem{result}{Result}
\newtheorem{proposition}[theorem]{Proposition}
\newtheorem{lemma}[theorem]{Lemma}
\newtheorem{fact}[theorem]{Fact}
\newtheorem{corollary}[theorem]{Corollary}
\theoremstyle{definition}
\newtheorem{definition}[theorem]{Definition}
\theoremstyle{remark}
\newtheorem{remark}[theorem]{Remark}
\newcommand{\ci}{\ensuremath{\mathrm{i}}}
\newcommand{\rsbk}{{\mathrm{RSB}_k}}
\newcommand{\loc}{\mathrm{loc}}
\newcommand{\pauli}{\mathrm{Pauli}}
\newcommand{\EE}[1]{\mathbb{E}\left[#1\right]}
\newcommand{\E}{\mathbb{E}}
\newcommand{\KL}{\mathrm{KL}}
\newcommand{\cA}{\mathcal{A}}
\newcommand{\cB}{\mathcal{B}}
\newcommand{\cC}{\mathcal{C}}
\newcommand{\cD}{\mathcal{D}}
\newcommand{\cE}{\mathcal{E}}
\newcommand{\cH}{\mathcal{H}}
\newcommand{\cL}{\mathcal{L}}
\newcommand{\cN}{\mathcal{N}}
\newcommand{\cO}{\mathcal{O}}
\newcommand{\cP}{\mathcal{P}}
\newcommand{\cS}{\mathcal{S}}
\newcommand{\cT}{\mathcal{T}}
\newcommand{\cU}{\mathcal{U}}
\newcommand{\lr}[1]{\left(#1\right)}
\begin{document}
\title{Average-case quantum complexity from glassiness}

\author{Alexander Zlokapa}
\email{azlokapa@mit.edu}
\affiliation{Center for Theoretical Physics, MIT, 77 Massachusetts Ave., Cambridge, MA 02139, USA}
\author{Bobak T.\ Kiani}
\email{b.kiani@bowdoin.edu}
\affiliation{Department of Computer Science, Bowdoin College, 255 Maine Street, Brunswick, Maine 04011, USA}
\author{Eric R.\ Anschuetz}
\email{eans@caltech.edu}
\affiliation{Institute for Quantum Information and Matter, Caltech, 1200 E. California Blvd., Pasadena, CA 91125, USA}
\affiliation{Walter Burke Institute for Theoretical Physics, Caltech, 1200 E. California Blvd., Pasadena, CA 91125, USA}

\begin{abstract}
    We provide a framework for average-case quantum complexity by showing that glassiness obstructs a natural family of quantum algorithms. Glassiness---a phenomenon in physics characterized by a rough free energy landscape---is known to imply hardness for stable classical algorithms; for example, constant-time Langevin dynamics or message-passing fail for random $k$-SAT and max-cut problems in a glassy parameter regime. We present comparable results in the quantum setting.
    \begin{itemize}[rightmargin=7em]
        \item \emph{Quantum optimal transport implications of glassiness.} We show that the standard notion of quantum glassiness in physics implies that the Gibbs state is decomposed into clusters extensively separated in quantum Wasserstein distance. We prove this implies lower bounds on the quantum Wasserstein complexity of channels from non-glassy to glassy states.
        \item \emph{Structural argument for hardness.} We define \emph{stable quantum algorithms} in terms of Lipschitz temperature dependence. We prove that constant-time local Lindbladian evolution and shallow variational algorithms are stable and hence fail to capture all clusters of the Gibbs state, yielding a geometrically interpretable algorithmic obstruction. Unlike mixing time lower bounds, our results hold even when starting from the maximally mixed state.
    \end{itemize}
    Our results apply to non-commuting, non-stoquastic quantum Hamiltonians that satisfy a formal definition of glassiness. To show glassiness, we rely on the replica trick, a widely used but non-rigorous method in statistical physics. We find that the ensemble of all 3-local Pauli strings with independent Gaussian coefficients is average-case hard. As a complementary result, we also provide strong analytical evidence that the general $p$-local Pauli ensemble is \emph{not} glassy for sufficiently large constant $p$; this is distinct from classical (Ising $p$-spin, glassy phase for any $p$) and fermionic (SYK, never glassy) analogues.\\
    \\
    \textit{This is a preliminary draft of the paper posted to arXiv due to the mandatory QIP 2026 conference deadline. We intend to provide further numerical analyses in a v2 posting.}
\end{abstract}

\maketitle

\section{Introduction}

We study the algorithmic implications of \emph{glassiness} in the quantum setting. A phenomenon characterized by slow dynamics and a disordered energy landscape, glassiness is observed in many natural (classical and quantum) Hamiltonians~\cite{mezard1987spin,charbonneau2023spin}: examples include random $k$-SAT~\cite{kirkpatrick1994critical,monasson1997statistical,monasson1999determining,mezard2002analytic}, classical and quantum Heisenberg models~\cite{bray1981metastable,georges2001quantum,christos2022spin,kavokine2024exact}, and the classical and quantum Sherrington-Kirkpatrick model~\cite{thouless1977solution,almeida1978,yamamoto1987perturbation,kopec1989instabilities}.

Glassy classical Hamiltonians are believed to obstruct efficient classical algorithms in both sampling and optimization tasks. Arguments for this \emph{average-case complexity} in the non-oracle model are formalized via, e.g., the overlap gap property~\cite{achlioptas2006solution,mezard2005clustering,ogp}, low-degree likelihood ratio~\cite{barak2019nearly,hopkins2017bayesian,hopkins2017power,hopkins2018statistical}, or transport disorder chaos~\cite{el2022sampling,el2025sampling,el2025shattering}.
Similar average-case results are largely absent in the quantum setting because the classical proof techniques are difficult to generalize. 
Among the many challenges in generalizing arguments to the quantum setting, quantum Hamiltonians do not have explicitly known eigenbases, leading to a sign problem that prevents the application of common probabilistic tools.

Our main results overcome this issue and formalize a connection between quantum glassiness and quantum algorithms, providing a natural approach to a theory of average-case quantum complexity. This allows us to address several open questions~\cite{chen2024sparse,basso2024optimizing,gamarnik2024slowmixingquantumgibbs,swingle2023bosonicmodelquantumholography,ramkumar2024mixing,anschuetz2025efficientlearningimpliesquantum}, such as the following.
\begin{itemize}
    \item[Q1:] Is it hard to efficiently prepare the Gibbs state (at constant temperature) of random $p$-local Hamiltonians, the quantum Heisenberg model (of which quantum max-cut is a special case), or other non-commuting non-stoquastic Hamiltonians?
    \item[Q2:] Can lower bounds be shown for the runtime of Lindbladian Gibbs samplers to prepare the Gibbs state starting from the maximally mixed state? (In comparison, mixing time lower bounds from Markov theory only guarantee lower bounds for Lindbladians applied to a worst-case initial state.)
    \item[Q3:] Do the physics and algorithmic hardness of the random $p$-local Pauli string Hamiltonian ensemble more closely resemble its classical (Ising $p$-spin) or fermionic (SYK) analogues?
\end{itemize}
We answer Q1 in the affirmative with evidence comparable to analogous average-case hardness results in the classical literature. Specifically, we define a family of ``stable'' quantum algorithms that generalizes classical notions of stability and show that these stable algorithms fail to prepare the Gibbs state of Hamiltonian ensembles in Q1. We prove that Q2 is encompassed by our definition of stability. We resolve Q3 by computing the one-step and full replica symmetry breaking solutions of the $p$-local Hamiltonian ensemble using methods from statistical physics.

To describe our results in more detail, we introduce a definition of glassiness that coincides with the notion of a quantum spin glass in the physics literature. We define it in terms of the following bilinear form between two $n$-qubit states:
\begin{equation}\label{eq:pauliform}
    \langle \bm \rho, \bm \rho'\rangle_\pauli := \sum_{r=1}^n \sum_{\mu \in \{X,Y,Z\}} \Tr[\sigma_r^\mu \bm \rho] \Tr[\sigma_r^\mu \bm \rho'],
\end{equation}
where $\sigma_r^\mu$ denotes Pauli matrix $\mu$ acting on qubit $r$. This induces a seminorm $\|\bm \rho - \bm \rho'\|_\pauli^2 = \langle \bm \rho, \bm \rho\rangle_\pauli + \langle \bm \rho', \bm \rho'\rangle_\pauli - 2\langle \bm \rho, \bm \rho'\rangle_\pauli$\footnote{It is a seminorm because there may exist inputs $\bm X \neq 0$ for which $\|\bm X\|_\pauli^2 = 0$.}. In the physics literature, glassiness is typically characterized by replica symmetry breaking (RSB) or shattering~\cite{edwards1975theory,parisi1979infinite,monasson1995structural,franz1998effective}, which are special cases of the following definition. Informally, both are characterized by a clustering of the Gibbs state according to the Pauli seminorm.

\begin{definition}[Glassiness]\label{def:introglass}
    Let $(\cH_n)_n$ be a sequence of $n$-qubit Hamiltonian ensembles. The ensemble is \emph{glassy} at inverse temperature $\beta$ if with probability $1-o(1)$, the Gibbs state $\bm \rho = e^{-\beta H}/\Tr[e^{-\beta H}]$ for random instance $H \sim \cH_n$ satisfies
    \begin{align}\label{eq:firstdecomp}
        \frac{1}{2}\norm{\bm \rho - \sum_{i=1}^{m(n)} c_i \bm \rho_i}_1 = o(1), \qquad c_i > 0 \; \forall \, i \in [m(n)], \qquad \sum_{i=1}^{m(n)} c_i = 1 - o(1),
    \end{align}
    where $m(n) \geq 2$ for sufficiently large $n$, and $\bm \rho_i$ are $n$-qubit states satisfying the following properties.
    \begin{enumerate}
        \item Separate clusters. There exists some constant $q > 0$ such that for sufficiently large $n$,
        \begin{align}
            \forall \, i, j \in [m(n)] \text{ s.t. } i \neq j, \; \frac{1}{n} \norm{\bm \rho_i - \bm \rho_j}_\pauli^2 \geq q.
        \end{align}
        \item Equal clusters. Each $\bm \rho_i$ appears with asymptotically similar weight, i.e.,
        \begin{equation}
            \forall \,i \in [m(n)], \; c_i = \exp[-\Theta(\log m(n))].
        \end{equation}
    \end{enumerate}
    If \eqref{eq:firstdecomp} holds up to error $\epsilon$, we refer to clusters $\cC = \{(c_i, \bm \rho_i)\}_{i=1}^{m(n)}$ as witnessing the $(\epsilon, q, m(n))$-decomposition of the Gibbs state. If no such $\cC$ satisfying these properties exists, we say $\bm{\rho}$ is \emph{non-glassy}.
\end{definition}

\begin{remark}\label{rem:glass}
    In \Cref{sec:glasses}, we show that this definition describes replica symmetry breaking and shattering for both classical Hamiltonians that are local in $\sigma^Z$ operators and quantum Hamiltonians local in Pauli strings. We also elaborate on the relation of the parameters $q$ and $m$ to quantities such as the TAP complexity~\cite{Thouless01031977}.
\end{remark}

Crucially, this notion of clustering is independent of the Hamiltonian eigenbasis. In comparison, the slow mixing lower bounds of~\cite{gamarnik2024slowmixingquantumgibbs,rakovszky2024bottlenecksquantumchannelsfinite} on Lindbladian Gibbs samplers~\cite{chen2023efficientexactnoncommutativequantum} required a basis that approximately block diagonalized the Hamiltonian; each block corresponded to a cluster under a distance metric defined by Lindbladian jump operators. Similarly, the recent work of~\cite{anschuetz2025efficientlearningimpliesquantum} chose a random basis and examined if the resulting classical probability distribution was clustered. Because of this limitation of an unknown basis, prior work was unable to prove tight lower bounds for natural models such as dense, random $p$-local Hamiltonians. Our \Cref{def:introglass} avoids imposing a basis; it instead corresponds to the notion of clustering found in physics, and thus we expect it to produce tight lower bounds for many natural models.

At a technical level, we show algorithmic implications of glassiness by relating it to quantum optimal transport. We show that glassiness implies lower bounds on the quantum Wasserstein complexity (WC)~\cite{li2022wasserstein} of any quantum channel that takes a non-glassy state to a glassy state, where a channel $\bm \cE$ from $n$-qubit to $n$-qubit states has WC:
\begin{align}
    \mathrm{WC}(\bm \cE) := \max_{\bm \rho} \norm{\bm \rho - \bm \cE(\bm \rho)}_{W_1},
\end{align}
where the maximization is over $n$-qubit states $\bm \rho$ and $\norm{\cdot}_{W_1}$ denotes the quantum Wasserstein distance~\cite{9420734}.

\begin{theorem}[Extensive Wasserstein complexity across the glass transition, informal]\label{thm:wass}
    Let $\cH_n$ be a Hamiltonian ensemble that is glassy at inverse temperature $\beta$ and non-glassy at $\beta'$. For $H \sim \cH_n$, denote the corresponding Gibbs states by $\bm \rho_\beta$ and $\bm \rho_{\beta'}$. Then with probability $1-o(1)$, there exists constant $\epsilon_* > 0$ such that for all $0 < \epsilon < \epsilon_*$, any quantum channel $\bm\varPhi$ that satisfies
    \begin{align}
        \frac{1}{2}\norm{\bm \rho_\beta - \bm\varPhi(\bm \rho_{\beta'})}_1 \leq \epsilon
    \end{align}
    must also satisfy
    \begin{align}
        \mathrm{WC}(\bm\varPhi) \geq \alpha n
    \end{align}
    for constant $\alpha$ independent of $\epsilon$.
\end{theorem}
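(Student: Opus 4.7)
The plan is to take $\bm\rho_{\beta'}$ itself as the probe input in the definition of $\mathrm{WC}(\bm\varPhi)$ and reduce everything to an extensive lower bound on the state-space distance $\|\bm\rho_{\beta'} - \bm\rho_\beta\|_{W_1}$. Using the standard bound $\|\cdot\|_{W_1} \leq (n/2)\|\cdot\|_1$ on the assumed trace-distance inequality, the triangle inequality gives
\begin{equation*}
\mathrm{WC}(\bm\varPhi) \;\geq\; \|\bm\rho_{\beta'} - \bm\varPhi(\bm\rho_{\beta'})\|_{W_1} \;\geq\; \|\bm\rho_{\beta'} - \bm\rho_\beta\|_{W_1} - \epsilon n.
\end{equation*}
Thus the theorem follows once I establish $\|\bm\rho_{\beta'} - \bm\rho_\beta\|_{W_1} \geq \alpha_0 n$ for some constant $\alpha_0 > 0$ with probability $1 - o(1)$; I then set $\epsilon_* = \alpha_0/2$ and $\alpha = \alpha_0/2$.

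To produce the extensive $W_1$ separation I would work through the Kantorovich-style duality $\|\bm\rho - \bm\sigma\|_{W_1} = \sup_{\|O\|_L \leq 1} \Tr[O(\bm\rho - \bm\sigma)]$. Single-qubit-local test operators $O = \sum_{r,\mu} a_{r,\mu}\sigma_r^\mu$ satisfy $\|O\|_L \leq 2\sqrt{3}\|a\|_\infty$: subtracting all terms not acting on qubit $r$ leaves $\sum_\mu a_{r,\mu}\sigma_r^\mu$, whose operator norm is $\sqrt{\sum_\mu a_{r,\mu}^2} \leq \sqrt{3}\|a\|_\infty$. Writing $v_{r,\mu}^{(\gamma)} := \Tr[\sigma_r^\mu \bm\rho_\gamma]$ for $\gamma \in \{\beta, \beta'\}$ and choosing the signs $a_{r,\mu} = \mathrm{sgn}(v_{r,\mu}^{(\beta)} - v_{r,\mu}^{(\beta')})$ yields
\begin{equation*}
\|\bm\rho_\beta - \bm\rho_{\beta'}\|_{W_1} \;\geq\; \tfrac{1}{2\sqrt{3}}\,\|v^{(\beta)} - v^{(\beta')}\|_1 \;\geq\; \tfrac{1}{4\sqrt{3}}\,\|\bm\rho_\beta - \bm\rho_{\beta'}\|_\pauli^2,
\end{equation*}
using $\|w\|_1 \geq \|w\|_2^2/\|w\|_\infty$ with $\|w\|_\infty \leq 2$. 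It therefore suffices to show $\|\bm\rho_\beta - \bm\rho_{\beta'}\|_\pauli^2 = \Omega(n)$. Glassiness at $\beta$ supplies $\bm\rho_\beta \approx \sum_i c_i \bm\rho_i$ with $\|\bm\rho_i - \bm\rho_j\|_\pauli^2 \geq qn$; a pigeonhole in $\mathbb{R}^{3n}$ places at most one cluster within Pauli-distance $\sqrt{qn}/2$ of $v^{(\beta')}$, so by the equal-weight condition in \Cref{def:introglass} a constant fraction of the $\bm\rho_i$ are Pauli-extensively far from $\bm\rho_{\beta'}$.

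The main obstacle is that per-cluster Pauli separation does not automatically propagate to the centroid: $v^{(\beta)} = \sum_i c_i v(\bm\rho_i)$ may sit arbitrarily close to $v^{(\beta')}$ even when every $v(\bm\rho_i)$ is Pauli-far, as happens in symmetric spin-glass phases where clusters are related by an unbroken $\mathbb{Z}_2$ (or more general) symmetry and their magnetizations cancel upon averaging. In that regime, linear single-Pauli test operators cannot witness the separation, and I would need to promote to a two-local observable $O = \sum_{r,r',\mu,\nu} b_{r,r'}^{\mu,\nu}\sigma_r^\mu \sigma_{r'}^\nu$ aligned with an Edwards--Anderson-type two-point signature $\sum_i c_i \Tr[\sigma_r^\mu \bm\rho_i]\Tr[\sigma_{r'}^\nu \bm\rho_i]$, which is extensive for a glassy $\bm\rho_\beta$ but vanishes for non-glassy $\bm\rho_{\beta'}$ that is perturbatively close to the maximally mixed state. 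The technical crux is holding $\|O\|_L = O(1)$ while the signature scales as $\Omega(n)$; I would approach this by restricting $b$ to a bounded-degree interaction graph so that $\|O\|_L$ is controlled by local degree rather than total weight, and by invoking concentration in the non-glassy phase to pin down the two-point correlators of $\bm\rho_{\beta'}$.
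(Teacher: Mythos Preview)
Your reduction to $\|\bm\rho_{\beta'}-\bm\rho_\beta\|_{W_1}\geq\alpha_0 n$ is exactly what the paper \emph{avoids}, and for a reason it states explicitly: bounding the quantum Wasserstein distance between the glassy and non-glassy Gibbs states directly is hard, which is why the argument is routed through Wasserstein complexity instead. You already identified the failure of the one-local witness (centroid cancellation under the $\mathbb{Z}_2$ or replica symmetry), but the proposed two-local fix does not close the gap. The quantity $\sum_i c_i \Tr[\sigma_r^\mu \bm\rho_i]\Tr[\sigma_{r'}^\nu \bm\rho_i]$ is an Edwards--Anderson order parameter built from the \emph{cluster} one-point functions; a two-local observable $O$ applied to $\bm\rho_\beta$ returns $\sum_i c_i \Tr[\sigma_r^\mu\sigma_{r'}^\nu \bm\rho_i]$, a genuine two-point correlator of each cluster, and there is no reason these coincide. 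Glassiness in \Cref{def:introglass} constrains only the single-qubit marginals of the $\bm\rho_i$, so nothing in the hypotheses forces any bounded-Lipschitz observable to separate $\bm\rho_\beta$ from $\bm\rho_{\beta'}$ by $\Omega(n)$.

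The paper's route is structurally different and bypasses this obstruction entirely. Rather than comparing $\bm\rho_\beta$ and $\bm\rho_{\beta'}$ in $W_1$, it feeds each \emph{cluster} $\bm\rho_a$ through $\bm\varPhi$ and uses the definition of $\mathrm{WC}$ on those inputs: $\|\bm\varPhi(\bm\rho_a)-\bm\rho_a\|_{W_1}\leq\mathrm{WC}(\bm\varPhi)$, hence (via \Cref{cor:Was_from_Pauli}) $\|\bm\varPhi(\bm\rho_a)-\bm\rho_a\|_\pauli\leq 2\sqrt{\mathrm{WC}(\bm\varPhi)}$. If $\mathrm{WC}(\bm\varPhi)\leq qn/144$, the triangle inequality keeps the images $\{\bm\varPhi(\bm\rho_a)\}$ pairwise separated in $\|\cdot\|_\pauli$, and linearity of $\bm\varPhi$ plus trace-norm contractivity gives $\bm\varPhi(\bm\rho_\beta)\approx\sum_a c_a\bm\varPhi(\bm\rho_a)$. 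So a low-WC channel sends a glassy state to a glassy state (\Cref{lem:wc_preserves_shattering}); combined with trace-distance robustness of the decomposition (\Cref{lem:shattering_robust}), this contradicts non-glassiness on the other side. The key idea you are missing is that $\mathrm{WC}(\bm\varPhi)$ controls the movement of \emph{every} input, in particular each cluster individually, so you never need an extensive $W_1$ gap between the two Gibbs states themselves.
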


Note that this approach is distinct from classical proofs relating glassiness to optimal transport (e.g., transport temperature/disorder chaos~\cite{el2022sampling,el2025sampling,el2025shattering}). For non-commuting Hamiltonians, one cannot directly diagonalize $\bm \rho_\beta, \bm \rho_\beta'$ and apply probabilistic methods since the eigenbasis is not a priori known, and changes with the disorder. This also makes bounding the quantum Wasserstein distance between glassy and non-glassy states difficult, prompting our use of WC instead. Similarly, we define stable quantum algorithms in terms of WC. Our definition is similar to the disorder-Lipschitz criterion previously used to define stable algorithms~\cite{el2022sampling,el2025sampling,anschuetz2025efficientlearningimpliesquantum}.

\begin{definition}[Stable quantum algorithm, informal]\label{def:temp_stab_inf}
    Let $(\bm \cA_n)_n$ be a sequence of quantum algorithms and $H_n \sim \cH_n$ be sampled from a sequence of Hamiltonian ensembles $(\cH_n)_n$, where the algorithm $\cA_n(H_n,\beta)$ maps an $n$-qubit Hamiltonian and inverse temperature to an $n$-qubit quantum state. The sequence $(\bm \cA_n)_n$ is \emph{temperature stable} with respect to the Hamiltonian ensemble if with probability $1-o(1)$, for any temperatures $\beta,\beta' > 0$, there exists a constant $L > 0$ such that
    \begin{align}
        \inf_{\bm\varPhi \,:\, \norm{\bm\varPhi(\bm \cA_n(H_n, \beta)) - \bm \cA_n(H_n, \beta')}_1 = o(1)} \mathrm{WC}(\bm\varPhi) \leq Ln|\beta-\beta'|,
    \end{align}
    where the infimum is evaluated over all channels $\bm\varPhi$ that take $n$-qubit states to $n$-qubit states.
\end{definition}

We emphasize that stable algorithms include classically intractable algorithms: we do \emph{not} require that the algorithm itself has low Wasserstein complexity. For instance, quantum algorithms such as Decoded Quantum Interferometry (DQI) and phase estimation are known to satisfy similar definitions of stability~\cite{anschuetz2025decodedquantuminterferometryrequires,anschuetz2025efficientlearningimpliesquantum}.

Classically, algorithms such as local Langevin dynamics are stable up to constant time~\cite{ben2018spectral,el2025shattering}. 
We show here that the quantum Lindbladian analogue is also stable, as well as parameterized shallow circuits such as ``hardware efficient'' variational algorithms~\cite{kandala2017hardware}.

\begin{theorem}[Examples of stable quantum algorithms, informal]\label{thm:algs}
    For any initial state $\bm \rho$, evolution under the Lindbladian $e^{\bm{\cL}_\beta t}[\bm \rho]$ for any time $t=O(1)$ is stable, where
    \begin{equation}\label{eq:introlind}
        \bm{\mathcal{L}}_\beta\left[\bm{\rho}\right]=-\ci\left[\bm{\hat H},\bm{\rho}\right]+\sum_i\gamma_i\left(\beta\right)\left(\bm{A}_i\bm{\rho}\bm{A}_i^\dagger-\frac{1}{2}\left\{\bm{A}_i^\dagger\bm{A}_i,\bm{\rho}\right\}\right)
    \end{equation}
    such that $\bm{\hat H}$ is Hermitian, $\gamma_i>0$ are Lipschitz functions of $\beta$, and $\{\bm A_i\}$ are a set of $O(n)$ temperature-independent constant-local jump operators with $\norm{\bm A_i} = O(1)$. Moreover, constant-depth algorithms of the form:
    \begin{equation}
        \bm{\mathcal{A}}\left(\beta\right) := \bigcirc_\ell \left(\mathcal{U}_\ell \circ \exp\left(\bm{\mathcal{L}}_{\ell,\beta}\right)\right)\left[\bm{\rho}_0\right],
    \end{equation}
    are stable, where $\cU_\ell$ are constant-depth parameterized unitaries composed of 2-local gates on a sparse, constant-degree hypergraph, and $\bm\cL_{\ell,\beta}$ have the form of \eqref{eq:introlind}. Note that in both of the above algorithms, $\cU_\ell$ and $\bm A_i$ may depend on the Hamiltonian disorder and are only required to be independent of the temperature.
\end{theorem}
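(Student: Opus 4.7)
The plan is to handle the bare Lindbladian evolution first and then lift to the variational composition by induction on layers. For $e^{\bm{\mathcal{L}}_\beta t}[\bm{\rho}_0]$ at constant $t$, I would construct the interpolating channel $\bm{\varPhi}$ as a short-time auxiliary Lindbladian evolution, controlling its Wasserstein complexity through the Lipschitz continuity of the rates $\gamma_i(\beta)$ and the constant locality and bounded norm of the jump operators $\bm{A}_i$.

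The first step is to bound $\norm{e^{\bm{\mathcal{L}}_{\beta'} t}[\bm{\rho}_0] - e^{\bm{\mathcal{L}}_\beta t}[\bm{\rho}_0]}_{W_1}$ via the Duhamel identity
\begin{equation*}
e^{\bm{\mathcal{L}}_{\beta'} t}[\bm{\rho}_0] - e^{\bm{\mathcal{L}}_\beta t}[\bm{\rho}_0] = \int_0^t e^{\bm{\mathcal{L}}_{\beta'}(t-s)}\bigl(\bm{\mathcal{L}}_{\beta'} - \bm{\mathcal{L}}_\beta\bigr)e^{\bm{\mathcal{L}}_\beta s}[\bm{\rho}_0]\,ds.
\end{equation*}
The perturbation $\bm{\mathcal{L}}_{\beta'} - \bm{\mathcal{L}}_\beta$ is a sum of $O(n)$ local dissipators, each with coefficient $|\gamma_i(\beta')-\gamma_i(\beta)|=O(|\beta-\beta'|)$ by Lipschitz continuity, acting nontrivially only on a constant number of qubits. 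Combined with a uniform constant-time Wasserstein lightcone bound for $e^{\bm{\mathcal{L}}_{\beta'}(t-s)}$, this gives $\norm{e^{\bm{\mathcal{L}}_{\beta'} t}[\bm{\rho}_0] - e^{\bm{\mathcal{L}}_\beta t}[\bm{\rho}_0]}_{W_1} \leq C n t |\beta-\beta'|$.

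The more delicate part is exhibiting a concrete CPTP $\bm{\varPhi}$ with $\mathrm{WC}(\bm{\varPhi})=O(n|\beta-\beta'|)$ whose action on $e^{\bm{\mathcal{L}}_\beta t}[\bm{\rho}_0]$ approximates $e^{\bm{\mathcal{L}}_{\beta'} t}[\bm{\rho}_0]$ in trace distance. I would take $\bm{\varPhi}$ to be evolution under an auxiliary local Lindbladian $\widetilde{\bm{\mathcal{L}}}$ whose jump operators reproduce the structure of $\bm{\mathcal{L}}_{\beta'}-\bm{\mathcal{L}}_\beta$, but with rates rectified to be non-negative: whenever $\gamma_i(\beta')<\gamma_i(\beta)$, compensating local dissipators (e.g.\ additional depolarization) are inserted so that $\widetilde{\bm{\mathcal{L}}}$ remains a valid Lindblad generator while producing the desired first-order correction. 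Since each local dissipator contributes $O(|\beta-\beta'|)$ to $\mathrm{WC}$ and there are $O(n)$ of them, the total is $O(n|\beta-\beta'|)$.

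For the variational algorithm $\bm{\mathcal{A}}(\beta)=\bigcirc_\ell(\mathcal{U}_\ell\circ\exp(\bm{\mathcal{L}}_{\ell,\beta}))[\bm{\rho}_0]$, I would induct on layers: since each $\mathcal{U}_\ell$ is a constant-depth circuit on a constant-degree hypergraph, Heisenberg propagation of local Wasserstein observables through $\mathcal{U}_\ell$ incurs only a constant lightcone factor, so conjugating the previous layer's correction channel through $\mathcal{U}_\ell$ preserves its $O(n|\beta-\beta'|)$ complexity up to an absolute constant; each $\exp(\bm{\mathcal{L}}_{\ell,\beta})$ layer contributes an additive $O(n|\beta-\beta'|)$ via the Lindbladian argument, and only constantly many layers are composed. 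The main obstacle I anticipate is the construction of $\widetilde{\bm{\mathcal{L}}}$: the naive ``rate-difference'' generator is not completely positive when some $\gamma_i$ decreases with $\beta$, and the compensation must be arranged carefully so that the auxiliary evolution still implements the correct first-order transport on $e^{\bm{\mathcal{L}}_\beta t}[\bm{\rho}_0]$ without inflating $\mathrm{WC}$ beyond $O(n|\beta-\beta'|)$.
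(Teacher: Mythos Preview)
Your plan is broadly right but differs from the paper's execution in one key construction, and your first step is unnecessary.

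The Duhamel bound on $\|e^{\bm{\mathcal{L}}_{\beta'}t}[\bm{\rho}_0]-e^{\bm{\mathcal{L}}_\beta t}[\bm{\rho}_0]\|_{W_1}$ is not what stability requires: you must exhibit a map $\bm{\varPhi}$ whose \emph{worst-case} displacement $\mathrm{WC}(\bm{\varPhi})=\sup_{\bm{\rho}}\|\bm{\rho}-\bm{\varPhi}(\bm{\rho})\|_{W_1}$ is small, not bound the distance between the two specific outputs. You acknowledge this when you call the construction of $\bm{\varPhi}$ the ``more delicate part,'' but the Duhamel step never feeds into that construction and can be dropped.

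The paper avoids your compensation problem by Trotterizing in $\beta$ rather than in $t$. Writing $\bm{\mathcal{L}}_\beta=\bm{\mathcal{L}}_0+\sum_{\tau=1}^{\beta m}(\bm{\mathcal{L}}_{\tau/m}-\bm{\mathcal{L}}_{(\tau-1)/m})$ as a telescope and Trotterizing, both $e^{t\bm{\mathcal{L}}_{\beta^+}}$ and $e^{t\bm{\mathcal{L}}_{\beta^-}}$ become ordered products sharing a common prefix; the interpolating map is then \emph{exactly} the tail $\bm{\varPhi}_t=\prod_{\tau=\beta^-m}^{\beta^+m}\exp\!\big(t(\bm{\mathcal{L}}_{(\tau-1)/m}-\bm{\mathcal{L}}_{\tau/m})\big)$, giving $(\epsilon{=}0)$-stability with no approximation. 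Its Wasserstein complexity is bounded by subadditivity under concatenation: each factor displaces any state by at most $t\sum_i|\delta\gamma_i|\cdot\tfrac{3}{2}|\mathrm{supp}(\bm{A}_i)|\,\|\bm{A}_i\|^2$ in $W_1$ (because each dissipator is traceless over its own support, via \Cref{fac:meas}), and in the $m\to\infty$ limit these sum to $\tfrac{3t}{2}|\beta-\beta'|\sum_i\lambda_i|\mathrm{supp}(\bm{A}_i)|\|\bm{A}_i\|^2$. For the shallow circuit the paper also inducts on layers, but with the explicit map $\bm{\varLambda}_{\beta,\beta',p}=(\bigcirc_\ell\bm{\varPhi}_{\beta',\ell})\circ(\bigcirc_\ell\bm{\varPhi}_{\beta,\ell})^{-1}$; the inductive step pulls out a factor $\|\bm{\varPhi}_{\beta',p}^{-1}\|_{W_1\to W_1}$ bounded by the light cone of $\mathcal{U}_p$ (\Cref{prop:w1_supop_bound}), rather than conjugating an auxiliary Lindbladian through the unitary as you propose.

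Your CPTP worry is well-placed and not merely a defect of your route: the paper's $\bm{\varPhi}$ is built from exponentials of Lindbladian \emph{differences} (possibly negative rates) and, in the shallow case, from channel inverses $\exp(-\bm{\mathcal{L}}_{\ell,\beta})$, neither of which is generally CPTP. The paper does not address this and implicitly treats $\mathrm{WC}$ as a functional on arbitrary linear maps. So the obstacle you anticipate is a genuine looseness in the paper's own argument; your rectified-rate compensation idea is one way to try to repair it, but the paper offers no help there.
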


At the glass transition temperature $1/\beta$, the state undergoes a phase transition from non-glassy to glassy, which we show results in an $\Omega(n)$ change in Wasserstein distance. The WC of a stable algorithm can be made arbitrarily small across this phase transition by choosing sufficiently small constant $|\beta'-\beta|$, violating the glassy WC bound. This gives the following result.

\begin{theorem}[Glassiness obstructs stable algorithms]\label{thm:introlb}
    Let $\bm \cA$ be a temperature stable algorithm and $\cH$ be a Hamiltonian ensemble that is non-glassy at temperatures $\beta < \beta_*$ and glassy at temperatures $\beta > \beta_*$ for some constant $\beta_* > 0$. Then there exist constants $\beta, \delta > 0$ such that with high probability
    \begin{align}
        \frac{1}{2} \norm{\bm \cA(H, \beta) - \bm \rho_\beta}_1 \geq \delta,
    \end{align}
    where $\bm \rho_\beta = e^{-\beta H}/\Tr[e^{-\beta H}]$.
\end{theorem}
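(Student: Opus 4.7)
The plan is to argue by contradiction: suppose that for every constant $\delta > 0$ and every $\beta > 0$, the algorithm satisfies $\frac{1}{2}\norm{\bm\cA(H,\beta) - \bm\rho_\beta}_1 < \delta$ with probability $1 - o(1)$ over $H \sim \cH$. I will combine \Cref{def:temp_stab_inf} with \Cref{thm:wass} to derive a contradiction, which then furnishes the desired $\beta, \delta$.

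First, I would pick two inverse temperatures straddling the glass transition: $\beta_1 := \beta_* - \eta$ in the non-glassy regime and $\beta_2 := \beta_* + \eta$ in the glassy regime, where $\eta > 0$ is a small constant to be fixed at the end. By the contradiction hypothesis, for a small $\delta > 0$ to be chosen shortly, the events $E_j := \{\frac{1}{2}\norm{\bm\cA(H,\beta_j) - \bm\rho_{\beta_j}}_1 < \delta\}$ each have probability $1-o(1)$ for $j = 1, 2$. By temperature stability (an event $E_\mathrm{stab}$ also of probability $1-o(1)$), there is a channel $\bm\varPhi$ with $\norm{\bm\varPhi(\bm\cA(H,\beta_1)) - \bm\cA(H,\beta_2)}_1 = o(1)$ and $\mathrm{WC}(\bm\varPhi) \leq 2L\eta n + o(1)$, where the additive $o(1)$ absorbs the slack from taking a near-minimizer of the infimum in \Cref{def:temp_stab_inf}. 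By a union bound, $E_1 \cap E_2 \cap E_\mathrm{stab}$ has probability $1 - o(1)$.

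On this intersection, contractivity of the trace norm under channels together with the triangle inequality yields
\begin{align*}
    \norm{\bm\varPhi(\bm\rho_{\beta_1}) - \bm\rho_{\beta_2}}_1 &\leq \norm{\bm\varPhi(\bm\rho_{\beta_1}) - \bm\varPhi(\bm\cA(H,\beta_1))}_1 + \norm{\bm\varPhi(\bm\cA(H,\beta_1)) - \bm\cA(H,\beta_2)}_1 + \norm{\bm\cA(H,\beta_2) - \bm\rho_{\beta_2}}_1 \\
    &\leq 2\delta + o(1) + 2\delta.
\end{align*}
Since $\beta_1$ is non-glassy and $\beta_2$ is glassy, \Cref{thm:wass} supplies constants $\epsilon_* > 0$ and $\alpha > 0$ such that any channel mapping $\bm\rho_{\beta_1}$ to within trace distance $\epsilon < \epsilon_*$ of $\bm\rho_{\beta_2}$ must satisfy $\mathrm{WC} \geq \alpha n$. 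Fixing $\delta < \epsilon_*/10$ and $n$ large enough that $4\delta + o(1) < \epsilon_*$ forces $\mathrm{WC}(\bm\varPhi) \geq \alpha n$.

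Combining both bounds on $\mathrm{WC}(\bm\varPhi)$ gives $\alpha n \leq 2L\eta n + o(1)$, i.e., $\eta \geq \alpha/(2L) - o(1)$. Choosing the constant $\eta < \alpha/(2L)$ at the outset contradicts this, so the contradiction hypothesis fails and the theorem follows with witnesses $\beta \in \{\beta_1, \beta_2\}$ and $\delta > 0$ depending only on $\beta_*, L, \alpha$. The main technical obstacle is ensuring that $\alpha$, $\epsilon_*$, and the Lipschitz constant $L$ can be taken uniformly for $\beta_1, \beta_2$ in a small neighborhood of $\beta_*$, so that shrinking $\eta$ does not simultaneously shrink $\alpha$; a secondary subtlety is that the infimum in \Cref{def:temp_stab_inf} need not be attained, handled by the near-minimizer slack above, and the probabilistic bookkeeping over three $1-o(1)$ events, handled by a routine union bound.
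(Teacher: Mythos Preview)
Your proposal is correct and follows essentially the same contradiction strategy as the paper: pick temperatures on either side of $\beta_*$, use stability to produce a low-WC channel between the algorithm's outputs, transfer this via the triangle inequality to a low-WC channel between the Gibbs states themselves, and then invoke the Wasserstein-complexity lower bound across the glass transition to force a contradiction.

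One cosmetic difference: you run the stability channel from the non-glassy side to the glassy side and invoke the informal \Cref{thm:wass}, whereas the paper's formal version (\Cref{thm:no_trans_wit}) and its proof route the channel from glassy to non-glassy (since \Cref{lem:wc_preserves_shattering} shows clustering is \emph{preserved} under low-WC channels, which directly rules out glassy $\to$ non-glassy). Because temperature stability in the paper is two-sided, you can simply swap the roles of $\beta_1,\beta_2$ to match the formal theorem with no change to the argument. Your flagged concern about uniformity of $\alpha$ in $\eta$ is exactly the right thing to watch; the paper handles it implicitly by first fixing a glassy $\beta$ (which pins down $q$ and hence $\alpha=q/144$) and only then choosing $\beta'$ close enough that $L|\beta-\beta'|<\alpha$, rather than moving both temperatures symmetrically toward $\beta_*$.
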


The state produced by a stable algorithm is not only far in trace distance from $\bm \rho_\beta$ but is also structurally distinct, containing the wrong number of clusters as measured by $\norm{\cdot}_\pauli$ (as in \Cref{def:introglass}). We state this more carefully as follows (proven in \Cref{lem:observable_from_Pauli,lem:wc_preserves_shattering}).

\begin{remark}[Topological barrier]
    We prove a stronger (and topologically interpretable) sense of separation between $\bm{\cA}(H,\beta)$ and $\rho_\beta$ than described by trace distance. Assume that the algorithm fails such that $\bm \cA(H, \beta)$ is not glassy at some $\beta > \beta_*$. Then $\bm \cA(H, \beta)$ can be close to at most one of the clusters $\bm \rho_i$ in the decomposition $\bm \rho_\beta = \sum_i c_i \bm \rho_i$, i.e., $\norm{\bm \cA(H, \beta) - \bm \rho_i}_\pauli^2 = o(n)$. The algorithm's output is locally distinguishable from all other clusters: for every $\bm \rho_j \neq \bm \rho_i$, there is a bounded one-local observable $X = \sum_{r=1}^n X_r$ for which $\Tr[X (\bm \cA(H, \beta)-\bm \rho_j)] = \Omega(1)$.
\end{remark}

\begin{figure}
    \centering
    \includegraphics[width=0.9\linewidth]{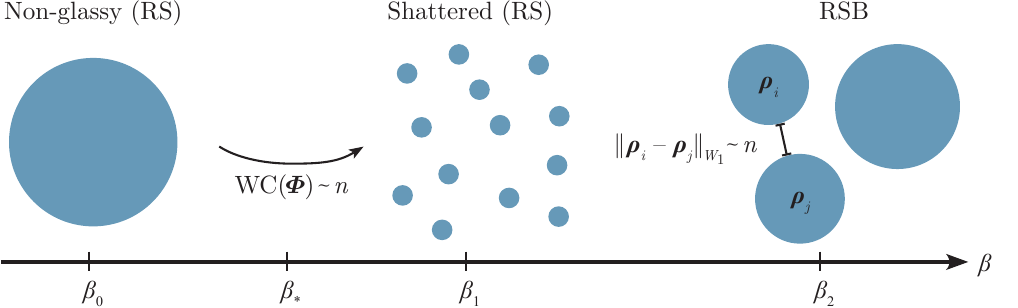}
    \caption{Cartoon of the glass decomposition $\bm \rho_\beta = \sum_i c_i \bm \rho_i$ of a Gibbs state that is non-glassy, shattered, and 1RSB. At a high temperature $\beta_0$, the Gibbs state is replica symmetric and non-glassy, and is dominated by one ``cluster'' as measured by $\norm{\cdot}_\pauli$ \eqref{eq:pauliform}. At a cooler temperatures $\beta_1$, the Gibbs state may shatter into a glassy phase defined by exponentially many clusters $\bm \rho_i$, which we show are separated by $\Omega(n)$ in quantum Wasserstein distance. Although the state remains replica symmetric in the shattered phase (due to each cluster being exponentially small), we show that any quantum channel satisfying $\bm \varPhi(\bm \rho_{\beta_0}) \approx \bm \rho_{\beta_1}$ must have Wasserstein complexity $\Omega(n)$. At an even cooler temperature $\beta_2$, the Gibbs state is decomposed into a constant number of clusters due to (static) replica symmetry breaking. Since the transition from a non-glassy to a glassy phase occurs abruptly at some constant $\beta_*$, any algorithm that is Lipschitz in $\beta$ will fail at the glass transition.}
    \label{fig:glass}
\end{figure}

Previous lower bounds for Gibbs state preparation via Lindbladian dynamics addressed only worst-case initial states~\cite{gamarnik2024slowmixingquantumgibbs,rakovszky2024bottlenecksquantumchannelsfinite}, whereas we prove hardness from any initial state (including the maximally mixed state). We also note that~\cite{ding2025lower} uses WC to show no-fast-forwarding results for specific Lindbladians (rather than the task of Gibbs state preparation).

This topological description of the algorithmic obstruction in terms of quantum optimal transport gives a new route to analyzing algorithms. 
Past work relating glassiness to hardness for quantum algorithms considered only \emph{classical} Hamiltonians\footnote{Or reduced the hardness of quantum Hamiltonians to that of classical Hamiltonians~\cite{anschuetz2025efficientlearningimpliesquantum}.}. This line of work examined optimization of these classical systems and mostly studied the quantum adiabatic algorithm~\cite{jorg2010first,young2010first,bapst2013quantum} or QAOA~\cite{basso2022performance,chou2021limitations}. In comparison, our results establish hardness for classical spin glasses across a more generic family of quantum algorithms that represents the quantum generalization of stable classical algorithms such as Langevin dynamics. More importantly, our results also encompass quantum non-commuting, non-stoquastic Hamiltonians. To emphasize this fact, we study the following ensemble of random $p$-local Pauli Hamiltonians.

\begin{definition}[Random $p$-local Hamiltonian ensemble]\label{def:pspin}
    Let $\cP_{n,p}$ be the set of all Pauli strings of weight $p$ acting on $n$ qubits. The ensemble of random $p$-local Hamiltonians is defined as the distribution of
    \begin{align}
        H_{n,p} = \sum_{P \in \cP_{n,p}} J_P P, \qquad J_P \sim_\mathrm{iid} \cN\lr{0, \frac{J^2(p-1)!}{n^{p-1}}}.
    \end{align}
\end{definition}

This ensemble has been previously studied using tools from high-dimensional probability and random matrix theory~\cite{chen2023efficientexactnoncommutativequantum,erdHos2014phase,anschuetz2025bounds}. However, these direct approaches to studying spectral properties have largely obtained loose results; for example, \cite{chen2023efficientexactnoncommutativequantum} only studied the non-local case at $p=n$. We pursue here the approach began by~\cite{swingle2023bosonicmodelquantumholography} using the replica trick to probe the glassiness of the $p$-local ensemble at arbitrary constant $p$. The replica trick~\cite{mezard1987spin} is a method from statistical physics that is non-rigorous due to an analytic continuation that relies on the re-ordering of limits; nonetheless, it is widely used for quantum models (e.g., see Kitaev's analysis of the SYK model~\cite{kitaev2015talks}) due to the difficulty of formally controlling the path integral in quantum mechanics, and in practice is predictive of the physics of such models.

\begin{result}[Glassiness of random $p$-local Hamiltonian ensemble]\label{res:glass}
    For $p=3$, there exists a constant temperature $\beta_*$ such that the random $p$-local Hamiltonian ensemble is non-glassy for all $\beta < \beta_*$ and glassy for all $\beta > \beta_*$. For sufficiently large constant $p$, the random $p$-local Hamiltonian ensemble is non-glassy at all $\beta > 0$ sub-exponential in $p$.
\end{result}

\begin{corollary}[Hardness of 3-local Hamiltonians]
    Let $\bm \rho_\beta$ denote the Gibbs state of the random 3-local $n$-qubit Hamiltonian ensemble of \Cref{def:pspin}. There exists a constant inverse temperature $\beta > 0$ and a constant $\delta > 0$ independent of $n$ such that stable quantum algorithms w.h.p. output a state $\bm \cA(H, \beta)$ satisfying $\frac{1}{2}\norm{\bm \rho_\beta - \bm \cA(H, \beta)}_1 \geq \delta$.
\end{corollary}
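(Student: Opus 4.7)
The plan is to derive this corollary by directly combining \Cref{res:glass} for $p=3$ with \Cref{thm:introlb} (glassiness obstructs stable algorithms). The statement is essentially the instantiation of the general obstruction theorem on the one concrete Hamiltonian ensemble for which we have a glass transition at a constant temperature.

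First I would invoke \Cref{res:glass} in the $p=3$ case to obtain a constant $\beta_* > 0$ such that the random 3-local Hamiltonian ensemble $\cH_n$ of \Cref{def:pspin} is non-glassy for every $\beta < \beta_*$ and glassy for every $\beta > \beta_*$. The hypotheses of \Cref{thm:introlb} are then satisfied for this ensemble with exactly this $\beta_*$. Applying the theorem yields constants $\beta > \beta_*$ and $\delta > 0$ (both independent of $n$) such that with probability $1-o(1)$ over $H \sim \cH_n$, every temperature-stable algorithm $\bm{\cA}$ outputs a state $\bm{\cA}(H,\beta)$ with
\begin{align}
    \tfrac{1}{2}\norm{\bm{\rho}_\beta - \bm{\cA}(H,\beta)}_1 \geq \delta,
\end{align}
which is the claimed bound. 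One only needs to verify that the notion of ``stable quantum algorithm'' used in the corollary agrees with \Cref{def:temp_stab_inf}: this is immediate since the corollary is stated for the same class of algorithms, and \Cref{thm:algs} certifies that the natural examples (constant-time local Lindbladian evolution and constant-depth hardware-efficient variational circuits) belong to this class.

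The main obstacle is really hidden inside \Cref{res:glass} rather than in the corollary itself: one must establish that the 3-local ensemble has a glass transition at a strictly positive and $n$-independent inverse temperature. This is where the bulk of the work lives, and it is carried out by computing the one-step and full replica symmetry breaking solutions via the replica trick, which is non-rigorous because of the analytic continuation in the number of replicas. Once \Cref{res:glass} is granted, the transition from glassy to non-glassy is sharp at a constant $\beta_*$, so any algorithm whose Wasserstein complexity across a temperature gap is Lipschitz in $\beta$ cannot bridge the $\Omega(n)$ Wasserstein jump required by \Cref{thm:wass}, forcing an $\Omega(1)$ trace-distance error at some $\beta$ just above $\beta_*$. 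Choosing $\beta$ close enough to $\beta_*$ (and $\beta' < \beta_*$) that $L n|\beta - \beta'| < \alpha n$ (with $L$ the stability constant and $\alpha$ the Wasserstein-complexity constant from \Cref{thm:wass}) makes the contradiction quantitative and yields the stated constant $\delta$.
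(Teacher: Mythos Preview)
Your proposal is correct and matches the paper's treatment: the corollary is stated without a separate proof, as it follows immediately by combining \Cref{res:glass} at $p=3$ with \Cref{thm:introlb}. Your additional unpacking of the Wasserstein-complexity contradiction is consistent with the argument underlying \Cref{thm:introlb} but is not needed once that theorem is invoked.
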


To show our glassiness result at $p=3$, we analytically examine the stability of the replica symmetric solution to obtain bounds on the glass transition temperature. We confirm these bounds by solving the replica saddle point equations numerically via a continuous-time Monte Carlo algorithm; note that these numerics take place in the $n\to\infty$ limit of system size due to the replica trick. Prior work~\cite{swingle2023bosonicmodelquantumholography} predicted that the $p=3$ model has a glassy phase by examining the ground states of random Hamiltonian instances on $n \leq 17$ qubits. Our result thus upholds this prediction in the thermodynamic limit at constant $\beta$. Moreover, we also derive the full RSB saddle point equations, which can be numerically solved to completely characterize the physics of random 3-local Hamiltonians.

To show our non-glassy result at large $p \geq p_*$, we analyze the replica saddle point equations via a sharpened log-Sobolev inequality and prove that no solution exists outside the replica symmetric solution. This method only addresses temperatures satisfying $\beta J = \exp[o(p)]$. Beyond this result mentioned in \Cref{res:glass}, we estimate the value of the smallest non-glassy $p$ using our log-Sobolev bound. This suggests that non-glassy physics begins at some $4 \lesssim p_* \lesssim 8$. The finite-$n$ ground state numerics of~\cite{swingle2023bosonicmodelquantumholography} suggest that even at extensive $\beta$, $p=5$ is non-glassy, closely matching our result. In combination, we regard these two results as strong evidence that the ensemble has no glassy phase even for fairly small $p$. This yields a surprising phase diagram compared to those of the fermionic and classical analogues.
\begin{itemize}
    \item Classical ensemble (Ising $p$-spin)~\cite{montanari2003nature,ferrari2012two,talagrand2000rigorous,gamarnik2023shatteringisingpurepspin,alaoui2024near}: glassy phase exists for any constant $p$.
    \item Fermionic ensemble (SYK)~\cite{sachdev1993gapless,kitaev2015talks,maldacena2016remarks}: no glassy phase at any constant $p$.
    \item Sparsified variant of $p$-local Pauli ensemble~\cite{anschuetz2025efficientlearningimpliesquantum}: glassy phase exists for any constant $p$.
    \item $p$-local Pauli ensemble (\textbf{this work}): glassy phase exists for small constant $p$; non-glassy at all sufficiently large constant $p$.
\end{itemize}

Evaluating replica solutions for models on spin-1/2 operators is analytically difficult, although the saddle point equations can be solved exactly via numerical integration (see, e.g.,~\cite{kavokine2024exact}). Besides finite-$n$ numerics, \cite{swingle2023bosonicmodelquantumholography} provided an analytical replica symmetric (RS) solution in the $p\to\infty$ limit that assumed vanishing off-diagonals in the replica ansatz. Here, we derive the 1RSB (and full RSB) saddle point equations for any $p$. We analytically control the 1RSB solution with large $p$ by a sharpened log-Sobolev inequality~\cite{fathi2014quantitative}, which allows us to rule out glassiness. To identify the glass phase at $p=3$, we find that it suffices to numerically evaluate the RS saddle point equations with generic off-diagonal elements in the replica ansatz. We also provide an analytical estimate of the glass transition temperature following an expansion similar to~\cite{bray1980replica}.

We summarize open questions that we believe can be answered given the framework established in this paper.
\begin{itemize}
    \item \textbf{Additional algorithms and Hamiltonians.} Our main results show sufficient conditions for a Hamiltonian ensemble to be hard for stable quantum algorithms. Although we provide new examples of stable algorithms and glassy Hamiltonians, we expect many other previously studied algorithms to be stable and Hamiltonians to be glassy. Examples may include systems on lattices~\cite{rieger1994zero,guo1994quantum,laumann2008cavity,morita2005gauge,itoi2024gauge}, Gibbs samplers based on quantum phase estimation~\cite{jiang2024quantummetropolissamplingweak}, and quasi-local Lindbladian dynamics~\cite{chen2023efficientexactnoncommutativequantum}.
    \item \textbf{Mixing time lower bounds for non-stoquastic Hamiltonians.} Our main results lower-bound the time required for Lindbladian evolution with local jumps to converge to the Gibbs state from an arbitrary starting state. Although there are known sufficient conditions for exponential-time mixing lower bounds (i.e., from a \emph{worst-case} starting state) to hold for Gibbs samplers on generic non-commuting, non-stoquastic Hamiltonians~\cite{gamarnik2024slowmixingquantumgibbs,rakovszky2024bottlenecksquantumchannelsfinite}, the only examples where the analysis is tractable are either stoquastic or code-like. We expect that the definition of glassiness provided in this work implies slow mixing, similarly to the classical case (e.g., see Corollary 2.9 of~\cite{el2025shattering}).
    \item \textbf{Lower bounds for optimization.} We present average-case lower bounds for the task of Gibbs sampling. Another natural task is optimization, where the goal is to construct a state below some energy. A quantum version of the \emph{overlap gap property} (OGP)~\cite{ogp} can be defined in terms of the Pauli overlap we use. We anticipate that lower bounds on optimization can be shown for Hamiltonians satisfying such a quantum OGP, and that spin-glass methods can be used to provide examples of systems with a quantum OGP.
    \item \textbf{New upper bounds.} Although our main results concern lower bounds, we identify two paths towards improving our understanding of quantum upper bounds. First, our work can be used to identify promising Hamiltonian ensembles for quantum algorithms. We expect that non-glassy systems are quantumly easy; this belief is supported by the Schwinger-Keldysh computation in quantum disordered systems~\cite{cugliandolo1993analytical,cugliandolo1998quantum,cugliandolo1999real,biroli2001quantum,kennett2001time,biroli2002out,cugliandolo2006dissipative}, which yields rapid mixing when coupled to a bath. As an example, while the results of~\cite{chen2024sparse} showed that random $n$-local Pauli Hamiltonians are quantumly easy, our results suggest that random constant-local Pauli Hamiltonians may also be quantumly easy, giving a possible avenue for more natural settings for quantum advantage. Second, by giving insight into why quantum algorithms fail, our work suggests key features for successful quantum algorithms. The geometric nature of our lower bounds provides additional perspective on analyzing quantum algorithms via optimal transport.
\end{itemize}
We also remark on two additional open questions relevant to this work that remain difficult to address but would be valuable to resolve. The first question asks for a rigorous proof of quantum glassiness. All of the examples of glasses shown above are established via non-rigorous methods in statistical physics based on the replica trick; producing rigorous proofs of classical glasses required decades of effort~\cite{guerra2002thermodynamic,guerra2003broken,talagrand2003spin,talagrand2006parisi,panchenko2013sherrington}, and thus we expect the quantum setting to also be difficult. The second question asks to extend average-case lower bounds for sampling beyond stable algorithms, to capture, e.g., Lindbladian evolution for polynomial time starting from the maximally mixed state. Classically, it also remains open if lower bounds can be shown for Langevin dynamics beyond constant time~\cite{el2025shattering}, and thus we also expect this open question to be challenging.

\section{Glasses and quantum optimal transport}
\label{sec:glasses}

In this section, we formally relate replica symmetry (RS), replica symmetry breaking (RSB), and shattering to quantum optimal transport, showing the formal version of \Cref{thm:wass} in the introduction. We begin in \Cref{sec:paulioverlap} by precisely defining RS, RSB and shattering in terms of the Pauli bilinear form of \eqref{eq:pauliform}. 
We will see that RSB and shattering are special cases of the glassy decomposition of \Cref{def:introglass}. 
We then show in \Cref{sec:transport} that glassiness implies a lower bound on the Wasserstein complexity of quantum channels between non-glassy and glassy states, which will ultimately yield our desired algorithmic obstruction.

Below, we use $\sigma_r^\mu$ to denote Pauli matrix $\mu \in \{X,Y,Z\}$ acting on qubit $r$. We use $\cS_n$ to denote the set of $n$-qubit states.

\subsection{Glassiness from Pauli overlaps}\label{sec:paulioverlap}
We first define a replica symmetric state, which may or may not be glassy as per \Cref{def:introglass}.

\begin{definition}[Replica symmetry]\label{def:rs}
    Let $(\cH_n)_n$ be a sequence of $n$-qubit Hamiltonian ensembles. The ensemble is \emph{replica symmetric} at inverse temperature $\beta$ if with probability $1-o(1)$ over draws $H \sim \cH_n$ , and for every $1 \leq m(n) \leq \exp[\Theta(n)]$ admitting the below decomposition, the Gibbs state $\bm \rho = e^{-\beta H}/\Tr[e^{-\beta H}]$ satisfies
    \begin{align}
        \frac{1}{2}\norm{\bm \rho - \sum_{i=1}^{m(n)} c_i \bm \rho_i}_1 = o(1), \qquad c_i > 0 \; \forall \, i \in [m(n)], \qquad \sum_{i=1}^{m(n)} c_i = 1 - o(1),
    \end{align}
    where $\bm \rho_i$ are $n$-qubit states satisfying the following properties:
    \begin{enumerate}
        \item Separate clusters: There exists constant $q > 0$ such that for sufficiently large $n$,
        \begin{align}
            \forall \, i, j \in [m(n)] \text{ s.t. } i \neq j, \; \frac{1}{n} \norm{\bm \rho_i - \bm \rho_j}_\pauli^2 \geq q.
        \end{align}
        \item Equal weight clusters: Each $\bm \rho_i$ appears with asymptotically similar weight,
        \begin{align}
            \forall \,i \in [m(n)], \; c_i = \exp[-\Theta(\log (m(n))].
        \end{align}
        \item Concentration to single overlap: Let $\cD$ be the probability distribution that draws $\bm \rho_i$ with probability $c_i/\cN$ for $\cN = \sum_i c_i$. Then there exists constant $q^* \in \mathbb R$ such that
        \begin{align}
            \mathrm{Pr}_{\bm \rho_i, \bm \rho_j \sim_\mathrm{iid}\cD}\left[\left|\frac{1}{n}\norm{\bm \rho_i - \bm \rho_j}_\pauli^2 - q^*\right| \leq \delta\right] = 1 - o(1)
        \end{align}
        for all $\delta > 0$.
    \end{enumerate}
\end{definition}

In the classical setting, replica symmetry has an operational meaning: with high probability, the Hamming distance between two spin configurations drawn independently from the Gibbs distribution concentrates to a single value. In contrast, replica symmetry \emph{breaking} requires the Hamming distance of pairs to have nonzero probability over at least two distinct values when pairs are drawn i.i.d. from the Gibbs distribution. Hence, no Gibbs distribution can be simultaneously RSB and RS. In the quantum setting, we replace Hamming distance with $\norm{\cdot}_\pauli^2$. In the following definition, we make it explicitly clear that RSB and shattering are special cases of glassiness (\Cref{def:introglass}).

\begin{definition}[Shattering and RSB]\label{def:shatterrsb}
    A Gibbs state satisfies \emph{replica symmetry breaking} if it can be decomposed as \Cref{def:introglass} with $m(n) = \Theta(1)$. (Recall that per \Cref{def:introglass}, we require $m(n) \geq 2$ for sufficiently large $n$.) A Gibbs state is \emph{shattered} if it can be decomposed as \Cref{def:introglass} with $m(n) = \exp[\Theta(n)]$.
\end{definition}
\begin{remark}[Comparison to physics convention]
    The number of clusters $m(n)$ in the decomposition $\bm \rho = \sum_{i=1}^{m(n)} c_i \bm \rho_i$ does not necessarily correspond precisely to the true ``TAP states'' referred to in the physics literature~\cite{monasson1995structural,franz1998effective,mezard1987spin,thouless1977solution,biroli2001quantum}. For example, to satisfy the definition of RSB above, it suffices to group different TAP states into a single $\bm \rho_i$.
\end{remark}

Note that shattering does not necessarily imply replica symmetry breaking. Indeed, a shattered state can be RS. Since $m= \exp[\Theta(n)]$, distinct clusters ($i \neq j$) are drawn with probability $1-o(1)$ in the distribution $\cD$ of \Cref{def:rs}; if the Pauli overlap between distinct pairs of clusters concentrates to a single value, then the state will satisfy the RS definition.

The definitions above encompass classical spin glasses that are local in only $\sigma^Z$ operators and non-commuting quantum spin glasses that are local in Pauli strings. Both of these classes of Hamiltonians feature a \emph{glass transition} from non-glassy RS to either shattered or RSB at a certain critical temperature.

\begin{definition}[Glass transition]
    Let $(\cH_n)_n$ be a sequence of $n$-qubit Hamiltonian ensembles. The ensemble is said to have a \emph{glass transition} if there exist constant temperatures $\beta_- < \beta_* < \beta_+$ such that the ensemble is replica symmetric but not glassy for all $\beta \in (\beta_-, \beta_*)$ and glassy (\Cref{def:introglass}) for all $\beta \in (\beta_*, \beta_+)$.
\end{definition}

Historically, the glass transition was first identified using the \emph{replica trick}, which evaluates in the $n\to\infty$ limit the quenched free energy $-\EE{\log Z}/\beta$ for partition function $Z = \Tr[e^{-\beta H}]$ by computing $\EE{Z^s}$ for integer values of $s$ and performing an analytic continuation of $s \to 0$ \cite{kirkpatrick1987p}. 
The replica trick was confirmed numerically to correctly predict the physics of many models, such as the Ising $p$-spin model (and the special case of the Sherrington-Kirkpatrick model, which sets $p=2$).
This computation is non-rigorous due to its treatment of limits and assumption of an ansatz that corresponds precisely to the above decomposition into clusters.

\begin{remark}[Glass transition of the Ising $p$-spin Hamiltonian ensemble~\cite{montanari2003nature,ferrari2012two,talagrand2000rigorous,gamarnik2023shatteringisingpurepspin,alaoui2024near}]\label{rem:ising}
    The ensemble of Ising $p$-spin Hamiltonians is given by the distribution of
    \begin{align}
        H_n = \sum_{1 \leq i_1 < \cdots < i_p \leq n} J_{i_1\cdots i_p} \sigma^Z_{i_1} \cdots \sigma^Z_{i_p}
    \end{align}
    for i.i.d. coefficients $J_{i_1\cdots i_p} \sim \cN(0, J^2 p! / 2N^{p-1})$ and $\Gamma > 0$.
    For all constant $p \geq 2$, this Hamiltonian ensemble has a glass transition.
\end{remark}

Breakthrough results in mathematics computed the free energy (and the correctness of the ansatz) rigorously for the Ising $p$-spin and other classical Hamiltonians several decades after the replica trick was first proposed in the physics literature \cite{guerra2002thermodynamic,guerra2003broken,talagrand2003spin,talagrand2006parisi,panchenko2013sherrington}. A rigorous proof has not verified the replica trick for quantum Hamiltonians\footnote{We make an exception here for the case of the transverse field Ising model which has been rigorously analyzed \cite{leschke2021existence,manai2025parisi}.}, but it is widely used in the physics literature and has been extensively tested numerically~\cite{sachdev1993gapless,chakrabarti1996quantum,maldacena2016remarks,charbonneau2023spin,kavokine2024exact}. These methods recover glass transitions in stoquastic ensembles such as the classical Ising $p$-spin ensemble with a weak transverse field, as well as in more quantum ensembles such as the quantum Heisenberg model, of which quantum max-cut is a special case.

\begin{remark}[Glass transition of the quantum Heisenberg Hamiltonian ensemble~\cite{kavokine2024exact}]
    The ensemble of random quantum Heisenberg Hamiltonians is defined as the distribution of
    \begin{align}
        H_n = \sum_{1 \leq i < j \leq n} J_{ij} \, (X_i X_j + Y_i Y_j + Z_i Z_j)
    \end{align}
    for i.i.d coefficients $J_{ij} \sim \cN(0, J^2)$. This Hamiltonian ensemble has a glass transition.
\end{remark}

There are many other variants of quantum spin glasses. For example, many quantum spin glasses are defined on bounded-degree graphs; these include both classical spin glasses on regular graphs or lattices placed in a transverse field~\cite{rieger1994zero,guo1994quantum,laumann2008cavity} and non-stoquastic models such as quantum gauge glasses~\cite{morita2005gauge} and the quantum XYZ model~\cite{itoi2024gauge} in finite dimensions. Some non-stoquastic quantum Hamiltonian ensembles are replica symmetric and non-glassy at any constant temperature. One such example is the SYK model~\cite{sachdev1993gapless,kitaev2015talks,maldacena2016remarks}, which replaces the Ising spins in \Cref{rem:ising} with Majorana fermions.

In \Cref{sec:plocal}, we will use the replica trick to show that another ensemble has a glass transition: the random Pauli $p$-local Hamiltonian ensemble. We show it is glassy at $p=3$ and that it is not glassy in the $p\to\infty$ limit.

Note that our definitions of RS/RSB/shattering do not properly capture Hamiltonians that are local in different operators, e.g., the spherical $p$-spin model or the SYK model. However, for many applications in quantum computing, and for the models discussed in this work, our definitions suffice. We also remark that in \Cref{sec:plocal}, we will further distinguish 1RSB, $k$RSB, and full RSB; we also relate the above definitions to the evaluation of the path integral in the replica trick and to TAP states.

Before we proceed with further proofs, we remark that the decomposition into clusters (e.g., in \Cref{def:introglass}) is robust to small errors in trace distance. This will be useful later.
\begin{lemma}\label{lem:shattering_robust}
    Let $\bm{\rho}$ be a $\left(\epsilon,q,M\right)$-decomposed state as given by \Cref{def:introglass}. Then any state $\bm{\tilde{\rho}}$ satisfying
    \begin{equation}
        \frac{1}{2}\left\lVert\bm{\rho}-\bm{\tilde{\rho}}\right\rVert_1\leq\delta
    \end{equation}
    for some $0\leq\delta\leq 1-\epsilon$ is $\left(\epsilon+\delta,q,M\right)$-decomposed.
\end{lemma}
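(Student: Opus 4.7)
The plan is to reuse the exact same cluster data $\{(c_i,\bm\rho_i)\}_{i=1}^{M}$ that witnesses the $(\epsilon,q,M)$-decomposition of $\bm\rho$, and argue it also witnesses an $(\epsilon+\delta,q,M)$-decomposition of $\bm{\tilde\rho}$. Since the clusters themselves and their weights are unchanged, the separation condition
\begin{equation}
    \frac{1}{n}\norm{\bm\rho_i-\bm\rho_j}_\pauli^2\geq q
\end{equation}
and the equal-weight condition $c_i=\exp[-\Theta(\log M)]$ are preserved verbatim. The only nontrivial clause is the trace-distance closeness of $\bm{\tilde\rho}$ to $\sum_i c_i\bm\rho_i$.

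First I would invoke the triangle inequality for the trace norm:
\begin{equation}
    \frac{1}{2}\norm{\bm{\tilde\rho}-\sum_{i=1}^{M} c_i\bm\rho_i}_1
    \leq \frac{1}{2}\norm{\bm{\tilde\rho}-\bm\rho}_1 + \frac{1}{2}\norm{\bm\rho-\sum_{i=1}^{M} c_i\bm\rho_i}_1
    \leq \delta + \epsilon.
\end{equation}
The hypothesis $\delta\leq 1-\epsilon$ ensures the resulting error $\epsilon+\delta$ lies in $[0,1]$, so the decomposition remains meaningful (the trace distance is automatically bounded by $1$, but the hypothesis keeps the stated error tight and ensures the claim is a nonvacuous refinement of the trivial bound).

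There is essentially no obstacle here: the separation is a property of the cluster points in $\norm{\cdot}_\pauli$ alone and is insensitive to the overall state being approximated, while the weight normalization $\sum_i c_i=1-o(1)$ is inherited unchanged. The lemma is really a stability statement saying that the decomposition data $(c_i,\bm\rho_i)$ is an \emph{intrinsic} witness whose quality degrades additively in trace distance.
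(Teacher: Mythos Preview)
Your proof is correct and matches the paper's argument essentially line for line: reuse the same witnessing clusters $\{(c_i,\bm\rho_i)\}$ and apply the triangle inequality to bound $\frac{1}{2}\norm{\bm{\tilde\rho}-\sum_i c_i\bm\rho_i}_1\leq\epsilon+\delta$, with the separation and weight conditions inherited unchanged.
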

\begin{proof}
    Let $\left\{\left(c_a,\bm{\rho}_a\right)\right\}_{a=1}^M$ be a clustering witnessing the $\left(\epsilon,q,M\right)$-decomposition of $\bm{\rho}$. We have by the triangle inequality that
    \begin{equation}
        \begin{aligned}
            \frac{1}{2}\left\lVert\bm{\tilde{\rho}}-\sum_{a=1}^M c_a \bm{\rho}_a\right\rVert_1&\leq\frac{1}{2}\left\lVert\bm{\tilde{\rho}}-\bm{\rho}\right\rVert_1+\frac{1}{2}\left\lVert\bm{\rho} -\sum_{a=1}^Mc_a \bm{\rho}_a\right\rVert_1 \leq\delta+\epsilon.
        \end{aligned}
    \end{equation}
    Therefore, $\left\{\left(c_a,\bm{\rho}_a\right)\right\}_{a=1}^M$ witnesses the $\left(\epsilon+\delta,q,M\right)$-decomposition of $\bm{\rho}$.
\end{proof}

\subsection{Quantum optimal transport implications of glassiness}\label{sec:transport}
Our main results for algorithmic obstructions rely on relating notions of glassiness to the quantum Wasserstein distance~\cite{9420734}, defined between two $n$-qubit quantum states $\bm \rho, \bm \sigma$ as:
\begin{align}\label{eq:wass}
    \norm{\bm \rho - \bm \sigma}_{W_1} &= \frac{1}{2} \min_{\{X_i\}} \left\{\sum_i \norm{\bm X_i}_1 \,:\, \bm \rho - \bm \sigma=  \sum_i \bm X_i, \, \bm X_i \in \cO_n, \, \Tr_i[\bm X_i] = 0 \, \forall \, i\right\},
\end{align}
where $\cO_n$ is the set of traceless Hermitian $n$-qubit operators. We will need the following properties of the quantum $W_1$ norm.

\begin{fact}[Tensorization, Proposition 4 of~\cite{9420734}]\label{fac:tensor}
    For any $(m+n)$-qubit states $\bm \rho, \bm \sigma$,
    \begin{align}
        \norm{\bm \rho-\bm\sigma}_{W_1} \geq \norm{\bm\rho_{1\dots m} - \bm\sigma_{1\dots m}}_{W_1} + \norm{\bm\rho_{m+1\dots m+n} - \bm\sigma_{m+1\dots m+n}}_{W_1},
    \end{align}
    where for $S \subseteq [n]$, $\bm\rho_S \coloneqq \Tr_{[n]\setminus S}(\bm \rho)$ denotes the marginal state on qubits in $S$.
\end{fact}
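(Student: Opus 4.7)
My plan is to prove the stated tensorization inequality by taking an optimal decomposition achieving the Wasserstein distance on the left-hand side and projecting it, via partial trace, to produce feasible decompositions certifying the two marginal Wasserstein distances on the right.

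Concretely, let $\{\bm X_i\}_{i=1}^{m+n}$ be an optimal set of traceless Hermitian operators with $\Tr_i[\bm X_i]=0$ and $\bm\rho-\bm\sigma = \sum_i \bm X_i$, so that $\tfrac{1}{2}\sum_i \|\bm X_i\|_1 = \|\bm\rho-\bm\sigma\|_{W_1}$. Partition the index set as $A=\{1,\dots,m\}$ and $B=\{m+1,\dots,m+n\}$, and define $\bm Y_i \coloneqq \Tr_B[\bm X_i]$ for $i\in A$. The first step is to check that $\{\bm Y_i\}_{i\in A}$ is feasible for the Wasserstein problem on qubits $A$: commuting partial traces gives $\Tr_i[\bm Y_i] = \Tr_B[\Tr_i[\bm X_i]] = 0$, and $\bm Y_i$ remains traceless Hermitian on $A$. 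The second step is to verify that this decomposition actually reconstructs $\bm\rho_A-\bm\sigma_A$: applying $\Tr_B$ to $\sum_i \bm X_i = \bm\rho-\bm\sigma$ yields $\bm\rho_A-\bm\sigma_A$, while the contributions from $i\in B$ vanish because $\Tr_i[\bm X_i]=0$ implies $\Tr_B[\bm X_i]=0$ whenever $i\in B$. Hence $\sum_{i\in A}\bm Y_i = \bm\rho_A-\bm\sigma_A$. The symmetric construction works for the $B$-marginal.

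To conclude, I would invoke contractivity of the trace norm under the CPTP map $\Tr_B$, giving $\|\bm Y_i\|_1 \leq \|\bm X_i\|_1$, and similarly for the $B$-side operators. Summing these inequalities yields
\begin{equation}
\|\bm\rho_A-\bm\sigma_A\|_{W_1} + \|\bm\rho_B-\bm\sigma_B\|_{W_1} \leq \tfrac{1}{2}\sum_{i\in A}\|\bm X_i\|_1 + \tfrac{1}{2}\sum_{i\in B}\|\bm X_i\|_1 = \|\bm\rho-\bm\sigma\|_{W_1},
\end{equation}
since the left-hand Wasserstein norms are infima over feasible decompositions, of which the $\bm Y_i$'s (and their $B$-counterparts) are particular instances.

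I do not expect any serious obstacle: the argument is essentially bookkeeping plus two elementary facts, namely commutativity of partial traces and monotonicity of $\|\cdot\|_1$ under partial trace. The only subtlety is recognizing that the admissibility constraint $\Tr_i[\bm X_i]=0$ is preserved by partial tracing over a disjoint set of qubits and, dually, that it forces $\Tr_B[\bm X_i]=0$ whenever $i\in B$, which is exactly what makes the sum split cleanly along the bipartition. No extension of the argument is needed to handle an arbitrary (rather than contiguous) bipartition $A \sqcup B = [m+n]$.
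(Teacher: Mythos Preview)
Your argument is correct. The paper does not actually prove this statement: it is recorded as a Fact imported from~\cite{9420734} (Proposition~4 there) and used as a black box. Your proof---taking an optimal decomposition $\{\bm X_i\}$ for $\bm\rho-\bm\sigma$, partially tracing to get feasible decompositions on each block, and invoking monotonicity of $\|\cdot\|_1$ under partial trace---is precisely the standard argument from that reference, so there is nothing further to compare.
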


\begin{fact}[Discarding qubits, Proposition 5 of~\cite{9420734}]\label{fac:meas}
    Let $S \in [n]$ and let $X$ be a traceless Hermitian operator on $n$ qubits such that $\Tr_S X = 0$. Then
    \begin{align}
        \norm{X}_{W_1} \leq |S| \frac{3}{4} \norm{X}_1.
    \end{align}
\end{fact}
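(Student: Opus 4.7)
The plan is to exhibit a concrete decomposition $X=\sum_{i\in S}X_i$ in which every $X_i$ is traceless on qubit $i$, and then to bound $\tfrac{1}{2}\sum_{i\in S}\|X_i\|_1$ directly by $\tfrac{3}{4}|S|\|X\|_1$. After relabeling qubits I may assume $S=\{1,\dots,|S|\}$.

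First I would set up a telescoping decomposition across the qubits in $S$. Writing $\tau = I/2$, set $X^{(0)} = X$ and, for $1\leq j\leq|S|$,
\begin{equation}
    X^{(j)} \;=\; \tau_1\otimes\cdots\otimes\tau_j\otimes \Tr_{1\ldots j}(X),
\end{equation}
and define $X_i = X^{(i-1)} - X^{(i)}$. The hypothesis $\Tr_S X=0$ forces $X^{(|S|)}=0$, so the telescoping sum yields $\sum_{i=1}^{|S|} X_i = X^{(0)}=X$. A one-line computation using $\Tr\,\tau=1$ shows $\Tr_i X^{(i-1)} = \Tr_i X^{(i)}$, giving $\Tr_i X_i = 0$, so $\{X_i\}_{i\in S}$ is admissible in the infimum of \eqref{eq:wass}.

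The main step is bounding $\|X_i\|_1$. Factoring out leading $\tau$'s gives $X_i=\tau_1\otimes\cdots\otimes\tau_{i-1}\otimes(Y_i-\tau_i\otimes\Tr_i Y_i)$ with $Y_i=\Tr_{1\ldots i-1}(X)$. Since $\|\tau\|_1=1$ and the trace norm is multiplicative on tensor products, this reduces to the single-qubit bound $\|Y-\tau_i\otimes\Tr_i Y\|_1 \le \tfrac{3}{2}\|Y\|_1$ for any Hermitian $Y$. I would prove this via the Pauli twirl on qubit $i$,
\begin{equation}
    \tau_i\otimes \Tr_i Y \;=\; \tfrac{1}{4}Y+\tfrac{1}{4}\sum_{\mu\in\{x,y,z\}}\sigma_i^\mu Y \sigma_i^\mu,
\end{equation}
which rearranges to $Y-\tau_i\otimes\Tr_i Y = \tfrac{1}{4}\sum_{\mu\in\{x,y,z\}}(Y-\sigma_i^\mu Y \sigma_i^\mu)$. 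Each term on the right has 1-norm at most $2\|Y\|_1$ by the triangle inequality and unitary invariance, producing the constant $3/2$. Combined with the fact that the partial trace is a 1-norm contraction ($\|Y_i\|_1\le\|X\|_1$), this gives $\|X_i\|_1 \le \tfrac{3}{2}\|X\|_1$.

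The claim then follows by summing over $i\in S$: $\|X\|_{W_1} \le \tfrac{1}{2}\sum_{i=1}^{|S|}\|X_i\|_1 \le \tfrac{3}{4}|S|\|X\|_1$. The main obstacle is getting the constant $3/2$ rather than the naive $2$ that a direct triangle inequality against $\|Y\|_1+\|\tau_i\otimes\Tr_i Y\|_1$ would give; the Pauli-twirl identity is exactly what converts the three non-identity Pauli conjugations into an average and recovers a factor of $3/4$. The hypothesis $\Tr_S X=0$ enters only through $X^{(|S|)}=0$, which is what lets the sum terminate after $|S|$ terms instead of $n$.
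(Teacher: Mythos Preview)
Your proof is correct. The paper itself does not supply a proof of this statement; it is quoted verbatim as a Fact from \cite{9420734} (De Palma--Marvian--Trevisan--Lloyd), so there is no in-paper proof to compare against. Your telescoping decomposition together with the Pauli-twirl identity on a single qubit is precisely the argument used in that reference to obtain the constant $3/2$ per qubit (and hence $3/4$ after the $\tfrac12$ in the $W_1$ definition), so your approach coincides with the original source's proof.
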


This allows us to immediately relate the Pauli bilinear form to $W_1$ via the following result.

\begin{lemma}[Wasserstein distance lower bound from overlap] \label{lem:was_dist_lower_bound}
    Given states $\bm{\rho}_a, \bm{\rho}_b \in \mathcal{S}_n$ on $n$ qubits, where
    \begin{equation}
        \begin{split}
            \text{(overlap) }& \quad \frac1n \langle \bm \rho_a , \bm \rho_b\rangle_\pauli = q_0 \\
            \text{(induced norm) }& \quad \frac1n \langle \bm \rho_c , \bm \rho_c\rangle_\pauli =  q_c  \text{ for } c \in \{a, b\},
        \end{split}
    \end{equation}
    then $\|\bm{\rho}_a - \bm{\rho}_b\|_{W_1} \geq \frac{n}{4}\left(q_a + q_b - 2q_0\right)$.
\end{lemma}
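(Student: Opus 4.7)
The plan is to reduce the claim to independent single-qubit statements by tensorization of the quantum $W_1$ norm, and then exploit the Pauli expansion of a single-qubit traceless operator. First, I would iterate Fact~\ref{fac:tensor} over the $n$ qubits to obtain $\|\bm{\rho}_a - \bm{\rho}_b\|_{W_1} \geq \sum_{r=1}^n \|(\bm{\rho}_a)_r - (\bm{\rho}_b)_r\|_{W_1}$, where $(\bm{\rho}_c)_r$ denotes the reduced state on qubit $r$. On a single qubit the only admissible decomposition in \eqref{eq:wass} is trivial, since any $X_1 \in \mathcal{O}_1$ is automatically annihilated by the partial trace over qubit $1$; hence the single-qubit $W_1$ norm reduces to half of the trace norm, and I would use this to rewrite the right-hand side in terms of trace norms of single-qubit marginals.

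Next, on each qubit the operator $(\bm{\rho}_a - \bm{\rho}_b)_r$ is traceless and Hermitian, so expanding in the Pauli basis as $(\bm{\rho}_a - \bm{\rho}_b)_r = \tfrac12\sum_\mu \Tr[\sigma_r^\mu(\bm{\rho}_a - \bm{\rho}_b)]\,\sigma^\mu$ gives the exact identity $\|(\bm{\rho}_a)_r - (\bm{\rho}_b)_r\|_1^2 = \sum_{\mu\in\{X,Y,Z\}} \Tr[\sigma_r^\mu(\bm{\rho}_a - \bm{\rho}_b)]^2$ from the eigenvalues $\pm\sqrt{\sum_\mu c_\mu^2}$ of a $2\times 2$ traceless Hermitian. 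Summing over $r$ and expanding $\|\bm{\rho}_a - \bm{\rho}_b\|_\pauli^2 = \langle \bm{\rho}_a,\bm{\rho}_a\rangle_\pauli + \langle \bm{\rho}_b,\bm{\rho}_b\rangle_\pauli - 2\langle \bm{\rho}_a,\bm{\rho}_b\rangle_\pauli$ under the overlap hypotheses yields $\sum_r \|(\bm{\rho}_a)_r - (\bm{\rho}_b)_r\|_1^2 = n(q_a + q_b - 2q_0)$.

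The remaining step trades squares for linear terms: the trace norm of the difference of two single-qubit density operators is at most $2$, so $\|(\bm{\rho}_a)_r - (\bm{\rho}_b)_r\|_1 \geq \tfrac{1}{2}\|(\bm{\rho}_a)_r - (\bm{\rho}_b)_r\|_1^2$. Chaining this with the previous two steps gives $\|\bm{\rho}_a - \bm{\rho}_b\|_{W_1} \geq \tfrac{1}{4}\|\bm{\rho}_a - \bm{\rho}_b\|_\pauli^2 = \tfrac{n}{4}(q_a + q_b - 2q_0)$. There is no substantive obstacle—all three ingredients (tensorization, the single-qubit Pauli expansion, and the single-qubit trace-distance bound of $2$) are elementary. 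The only mild subtlety is that the inequality $x \geq x^2/2$ is tight only when $x=2$, so the constant $1/4$ need not be optimal; tightening it is immaterial for the extensive lower bounds used downstream.
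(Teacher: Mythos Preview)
Your proposal is correct and follows essentially the same route as the paper: tensorization of $W_1$ to single-qubit marginals, identification of the single-qubit $W_1$ with half the trace norm, the exact Pauli-basis formula $\|(\bm{\rho}_a)_r-(\bm{\rho}_b)_r\|_1^2=\sum_\mu\Tr[\sigma_r^\mu(\bm{\rho}_a-\bm{\rho}_b)]^2$, and then the elementary bound $x\geq x^2/2$ for $x\in[0,2]$ to pass from $\|\cdot\|_1$ to $\|\cdot\|_1^2$. The paper's proof is exactly this, presented in the same order.
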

\begin{proof}
    Given a state $\bm{\rho} \in \mathcal{S}_n$, denote by $\bm{\rho}^i = \Tr_{[n]\setminus \{i\}}[\bm{\rho}] \in \mathcal{S}_1$ the reduced density matrix on the $i$-th qubit. Then by \Cref{fac:tensor},
    \begin{equation}
    \begin{split}
        \|\bm{\rho}_a - \bm{\rho}_b\|_{W_1} &\geq \sum_{i = 1}^n \|\bm{\rho}_a^i - \bm{\rho}_b^i\|_{W_1}      \\
        &= \frac12 \sum_{i = 1}^n \|\bm{\rho}_a^i - \bm{\rho}_b^i\|_{1}.  
    \end{split}
    \end{equation}
    In the second line above we use the fact that the Wasserstein distance equals the trace distance for one qubit. For one qubit states $\bm{\rho}, \bm{\rho}' \in \mathcal{S}_1$, the 1-Schatten norm explicitly equals: 
    \begin{equation}
        \| \bm{\rho} - \bm{\rho}'\|_1  =  \sqrt{
        \Big(\Tr[\bm{\rho}\,\sigma_x] - \Tr[\bm{\rho}'\,\sigma_x]\Big)^2
        + \Big(\Tr[\bm{\rho}\,\sigma_y] - \Tr[\bm{\rho}'\,\sigma_y]\Big)^2
        + \Big(\Tr[\bm{\rho}\,\sigma_z] - \Tr[\bm{\rho}'\,\sigma_z]\Big)^2
        }.
    \end{equation}
     Thus,
     \begin{equation}
    \begin{split}
        \|\bm{\rho}_a - \bm{\rho}_b\|_{W_1} &\geq   \sum_{i = 1}^n \frac12 \|\bm{\rho}_a^i - \bm{\rho}_b^i\|_{1} \\
        &\geq \sum_{i = 1}^n \frac14 \|\bm{\rho}_a^i - \bm{\rho}_b^i\|_{1}^2 \\
        &= \frac14 \sum_{i = 1}^n  \sum_{r \in \{x, y, z\}} \left( \Tr[\bm{\rho}_a^i \sigma_r] - \Tr[\bm{\rho}_b^i \sigma_r] \right)^2 \\
        &= \frac14 \sum_{i = 1}^n  \sum_{r \in \{x, y, z\}} \left( \Tr[\bm{\rho}_a \sigma_r^i] - \Tr[\bm{\rho}_b \sigma_r^i] \right)^2 \\
        &= \frac{n}{4}\left(q_a + q_b - 2q_0\right).
    \end{split}
    \end{equation}
\end{proof}

\begin{corollary}[Wasserstein distance from $\|\cdot\|_\pauli$] \label{cor:Was_from_Pauli}
    Given states $\bm \rho, \bm \rho'$ on $n$ qubits, it holds that
    \begin{equation}
        \|\bm \rho - \bm \rho'\|_{W_1} \geq \frac{1}{4} \|\bm \rho - \bm \rho'\|_\pauli^2.
    \end{equation}
\end{corollary}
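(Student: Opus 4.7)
The plan is to observe that this corollary is essentially a direct rewriting of \Cref{lem:was_dist_lower_bound}. By expanding the seminorm definition
\begin{equation}
\|\bm \rho - \bm \rho'\|_\pauli^2 = \langle \bm \rho, \bm \rho\rangle_\pauli + \langle \bm \rho', \bm \rho'\rangle_\pauli - 2\langle \bm \rho, \bm \rho'\rangle_\pauli,
\end{equation}
and setting $q_a = \tfrac{1}{n}\langle \bm \rho, \bm \rho\rangle_\pauli$, $q_b = \tfrac{1}{n}\langle \bm \rho', \bm \rho'\rangle_\pauli$, and $q_0 = \tfrac{1}{n}\langle \bm \rho, \bm \rho'\rangle_\pauli$ as in \Cref{lem:was_dist_lower_bound}, I would note that
\begin{equation}
n(q_a + q_b - 2 q_0) = \|\bm \rho - \bm \rho'\|_\pauli^2.
\end{equation}

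Substituting this identity into the bound $\|\bm \rho - \bm \rho'\|_{W_1} \geq \tfrac{n}{4}(q_a + q_b - 2 q_0)$ given by \Cref{lem:was_dist_lower_bound} immediately yields the claimed inequality $\|\bm \rho - \bm \rho'\|_{W_1} \geq \tfrac{1}{4}\|\bm \rho - \bm \rho'\|_\pauli^2$. There is no real obstacle here; the work was done in the preceding lemma, whose proof already combined tensorization of the Wasserstein distance over single qubits (\Cref{fac:tensor}) with the explicit Bloch-sphere formula for the one-qubit trace distance and the bound $\|\bm\rho^i_a - \bm\rho^i_b\|_1 \geq \tfrac{1}{2}\|\bm\rho^i_a - \bm\rho^i_b\|_1^2$ (valid since the one-qubit trace distance is at most $2$, actually at most $1$ for states, so squaring is a valid lower bound). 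The corollary is simply a more intrinsic, basis-independent way of stating the same estimate in terms of the Pauli seminorm on the full Gibbs state rather than on individual one-qubit marginals.
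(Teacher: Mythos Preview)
Your proposal is correct and matches the paper's proof exactly: both simply observe that $n(q_a + q_b - 2q_0) = \|\bm \rho - \bm \rho'\|_\pauli^2$ and substitute into \Cref{lem:was_dist_lower_bound}. The only quibble is in your parenthetical recap of the lemma: the one-qubit quantity $\|\bm\rho_a^i - \bm\rho_b^i\|_1$ can be as large as $2$ for states (trace distance $\leq 1$ means $\|\cdot\|_1 \leq 2$), not $1$ as you write---but this does not affect anything since $x \geq x^2/2$ still holds on $[0,2]$.
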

\begin{proof}
    The proof follows by noting that $n(q_a + q_b - 2 q_0)$ in \Cref{lem:was_dist_lower_bound} is equal to $\|\bm \rho - \bm \rho'\|_\pauli^2$.
\end{proof}

\Cref{cor:Was_from_Pauli} only bounds the Wasserstein distance between clusters, not necessarily between the non-glassy and glassy Gibbs states. We show here that glassiness has a more operational implication in quantum optimal transport. Specifically, we prove a lower bound on the minimum Wasserstein complexity of a quantum channel that goes from a non-glassy to a glassy state.
We write $\operatorname{WC}$ to denote the \emph{Wasserstein complexity}~\cite{li2022wasserstein}, defined as
\begin{equation}
\operatorname{WC}\left(\bm{\varPhi}\right)=\sup_{\bm{\rho}}\left\lVert\bm{\rho}-\bm{\varPhi}\left(\bm{\rho}\right)\right\rVert_{W_1},
\end{equation}
where $\left\lVert\cdot\right\rVert_{W_1}$ is the quantum Wasserstein distance of \eqref{eq:wass}.

Not only does the Pauli bilinear form lower bound the Wasserstein distance, there is also a one local operator which can be used to quantify separation $\|\bm \rho_i - \bm \rho_j\|_{\rm Pauli}$ between two states $\bm \rho_i$ and $\bm \rho_j$. Differences in expectations $\Tr[X(\bm \rho_i-\bm \rho_j)]$ grow linearly with the distance $\|\bm \rho_i - \bm \rho_j\|_{\rm Pauli}^2$ as we show below.

\begin{lemma}\label{lem:observable_from_Pauli}
    Given states $\bm \rho_i, \bm \rho_j$ at distance $\|\bm \rho_i - \bm \rho_j\|_{\rm Pauli}^2$, there exists a one-local observable $X = \sum_i X_i$ made of observables $X_1, \dots, X_n$ acting on qubits $1, \dots, n$ respectively normalized as $\|X_i\|\leq 1$ such that $\Tr[X\bm \rho_i] - \Tr[X\bm \rho_j] \geq \frac{1}{2} \|\bm \rho_i - \bm \rho_j\|_{\rm Pauli}^2$.
\end{lemma}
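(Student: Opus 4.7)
The plan is to build the one-local operator $X$ qubit by qubit from the single-qubit marginals of $\bm{\rho}_i-\bm{\rho}_j$, using the fact that a qubit operator aligned with a Bloch vector saturates the operator norm.

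First I would introduce, for each qubit $r$, the three-dimensional real vector with components $v_r^\mu := \Tr[\sigma_r^\mu (\bm{\rho}_i-\bm{\rho}_j)]$ for $\mu\in\{X,Y,Z\}$. Unpacking the definition of the Pauli seminorm \eqref{eq:pauliform} immediately gives the identity
\begin{equation}
    \|\bm{\rho}_i-\bm{\rho}_j\|_{\mathrm{Pauli}}^2 = \sum_{r=1}^n \|v_r\|_2^2,
\end{equation}
so the problem reduces to constructing, for each $r$, a single-qubit operator $X_r$ with $\|X_r\|\leq 1$ whose ``signal'' $\Tr[X_r(\bm{\rho}_i-\bm{\rho}_j)]$ dominates $\tfrac{1}{2}\|v_r\|_2^2$.

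Next I would take the natural choice $X_r := \|v_r\|_2^{-1} \sum_\mu v_r^\mu \sigma_r^\mu$ when $v_r\neq 0$, and $X_r:=0$ otherwise. Because a linear combination $\sum_\mu \alpha^\mu \sigma^\mu$ of single-qubit Paulis has operator norm equal to the Euclidean norm of the coefficient vector, this choice satisfies $\|X_r\|\leq 1$. A direct computation then yields $\Tr[X_r(\bm{\rho}_i-\bm{\rho}_j)] = \|v_r\|_2$.

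The final step converts the linear quantity $\|v_r\|_2$ into the quadratic $\|v_r\|_2^2$. Here I would use that the Bloch vector of any single-qubit state has Euclidean norm at most $1$, so $\|v_r\|_2\leq 2$ and hence $\|v_r\|_2 \geq \tfrac{1}{2}\|v_r\|_2^2$. Summing over $r$ gives
\begin{equation}
    \Tr[X(\bm{\rho}_i-\bm{\rho}_j)] = \sum_{r=1}^n \|v_r\|_2 \geq \frac{1}{2}\sum_{r=1}^n \|v_r\|_2^2 = \frac{1}{2}\|\bm{\rho}_i-\bm{\rho}_j\|_{\mathrm{Pauli}}^2,
\end{equation}
as required. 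There is no real obstacle in this argument; the only subtle point is recognizing that the factor $\tfrac12$ in the statement arises precisely from the $\|v_r\|_2\leq 2$ bound on the Bloch-vector difference, which is why a traceless Pauli ansatz saturates the inequality up to this constant.
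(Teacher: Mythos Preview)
Your proof is correct and is essentially identical to the paper's: your operator $X_r=\|v_r\|_2^{-1}\sum_\mu v_r^\mu\sigma_r^\mu$ is precisely the spectral projector $v_rv_r^\dagger-w_rw_r^\dagger$ the paper constructs from the marginal $\bm\rho_i^r-\bm\rho_j^r$, and your inequality $\|v_r\|_2\geq\tfrac12\|v_r\|_2^2$ (from $\|v_r\|_2\leq 2$) is exactly the paper's $\sqrt{x}\geq x/2$ for $0\leq x\leq 4$.
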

\begin{proof}
Let $\bm \rho_i^r = \Tr_{[n]\setminus \{r\}}[\bm \rho_i]$ denote the marginal state of $\bm \rho_i$ on qubit $r$ (and respectively $\bm \rho_j^r$ for $\bm \rho_j$).
Denote the eigendecomposition of $\bm\rho_i^r - \bm\rho_j^r = e_r v_r v_r^\dagger - e_r w_r w_r^\dagger$ and define $X_r$ as the operator
    \begin{equation}
        X_r \coloneqq v_r v_r^\dagger - w_r w_r^\dagger,
    \end{equation}
    which has norm $\|X_r\|=1$.
    Note that
    \begin{equation}
        \begin{split}
            \sum_r \Tr[X_r (\bm \rho_i - \bm \rho_j)] & = \sum_r \|\bm \rho_i^r - \bm \rho_j^r\|_1 \\
            &= \sum_{r} \sqrt{ \sum_\mu \Tr[\sigma_r^\mu (\bm \rho_i^r - \bm \rho_j^r)]^2 } \\
            &\geq \frac{1}{2} \sum_{r,\mu} \Tr[\sigma_r^\mu (\bm \rho_i^r - \bm \rho_j^r)]^2 \\
            &= \frac{1}{2}\|\bm \rho_i - \bm \rho_j\|_{\rm Pauli}^2.
        \end{split}
    \end{equation}
    In the inequality above we use the fact that $\sqrt{x} \geq x/2$ for all $0 \leq x \leq 4$.
\end{proof}

We now show that low-Wasserstein complexity quantum channels preserves the decomposition of quantum states into clusters.
\begin{lemma}[Clustering is preserved]\label{lem:wc_preserves_shattering}
    Assume $\bm{\rho}$ is $\left(\epsilon,q,M\right)$-decomposable with clustering $\left\{\left(c_a,\bm{\rho}_a\right)\right\}_{a=1}^M$. Consider a state $\bm{\sigma}$ satisfying
    \begin{equation}
        \bm{\sigma}=\bm{\varPhi}\left(\bm{\rho}\right)
    \end{equation}
    for some quantum channel $\bm{\varPhi}$. If
    \begin{equation}
        \operatorname{WC}\left(\bm{\varPhi}\right)\leq\frac{q}{144}n,
    \end{equation}
    then $\bm{\sigma}$ is $\left(\epsilon,4q/9,M\right)$-decomposable.
\end{lemma}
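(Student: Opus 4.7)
The plan is to push the clustering of $\bm{\rho}$ forward through $\bm{\varPhi}$: define $\bm{\sigma}_a := \bm{\varPhi}(\bm{\rho}_a)$ and keep the weights $c_a$ unchanged. The trace-distance error of the decomposition is inherited for free from contractivity of quantum channels under $\|\cdot\|_1$, since
\begin{equation}
    \tfrac{1}{2}\bigl\lVert\bm{\sigma}-\textstyle\sum_a c_a\bm{\sigma}_a\bigr\rVert_1 = \tfrac{1}{2}\bigl\lVert\bm{\varPhi}\bigl(\bm{\rho}-\sum_a c_a \bm{\rho}_a\bigr)\bigr\rVert_1 \leq \epsilon,
\end{equation}
and the equal-weight property holds trivially because the $c_a$ are untouched. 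So only the separation condition between the new clusters requires real work.

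The key intermediate estimate to prove is that individual clusters cannot travel far in Pauli seminorm:
\begin{equation}
    \|\bm{\sigma}_a - \bm{\rho}_a\|_\pauli^2 \leq 2\,\operatorname{WC}(\bm{\varPhi}).
\end{equation}
I would iterate \Cref{fac:tensor} down to single-qubit marginals and use $\|\cdot\|_{W_1}=\|\cdot\|_1$ on one qubit to obtain
\begin{equation}
    \operatorname{WC}(\bm{\varPhi}) \geq \|\bm{\sigma}_a-\bm{\rho}_a\|_{W_1} \geq \sum_{r=1}^n \|\bm{\sigma}_a^r - \bm{\rho}_a^r\|_1,
\end{equation}
where $\bm{\sigma}_a^r, \bm{\rho}_a^r$ are the marginals on qubit $r$. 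Writing $x_r := \|\bm{\sigma}_a^r - \bm{\rho}_a^r\|_1 \in [0,2]$ and using the one-qubit Bloch identity from the proof of \Cref{lem:was_dist_lower_bound} to rewrite $\|\bm{\sigma}_a - \bm{\rho}_a\|_\pauli^2 = \sum_r x_r^2$, the elementary bound $x_r^2 \leq 2 x_r$ on $[0,2]$ gives $\sum_r x_r^2 \leq 2 \sum_r x_r \leq 2\,\operatorname{WC}(\bm{\varPhi})$, as claimed.

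The separation between clusters then falls out of the triangle inequality for the seminorm $\|\cdot\|_\pauli$. Using $\|\bm{\rho}_a - \bm{\rho}_b\|_\pauli \geq \sqrt{qn}$ from the input decomposition together with $\|\bm{\sigma}_c-\bm{\rho}_c\|_\pauli \leq \sqrt{2\,\operatorname{WC}(\bm{\varPhi})} \leq \sqrt{qn/72}$ for $c\in\{a,b\}$,
\begin{equation}
    \|\bm{\sigma}_a - \bm{\sigma}_b\|_\pauli \geq \sqrt{qn}\bigl(1 - 2/\sqrt{72}\bigr) = \sqrt{qn}\bigl(1 - \tfrac{1}{3\sqrt{2}}\bigr) \geq \tfrac{2}{3}\sqrt{qn},
\end{equation}
since $1/(3\sqrt{2}) \leq 1/3$. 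Squaring and dividing by $n$ yields $\tfrac{1}{n}\|\bm{\sigma}_a - \bm{\sigma}_b\|_\pauli^2 \geq 4q/9$, exactly matching the claim; the constant $q/144$ in the hypothesis is precisely calibrated to make this last inequality work out.

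The main obstacle is the $\ell^1$-to-$\ell^2$ conversion: the Wasserstein complexity controls a \emph{sum} of single-qubit trace distances (an $\ell^1$ quantity via \Cref{fac:tensor}), while the Pauli seminorm is the \emph{sum of squares} of those same distances. A naive Cauchy--Schwarz moves in the unhelpful direction (upper bounds on $\ell^1$ would suggest arbitrarily large $\ell^2$); the elementary inequality $x^2 \leq 2x$ on $[0,2]$, which exploits that no single marginal can move by more than the trivial trace-distance bound, is what makes the bound survive the conversion with a favorable absolute constant.
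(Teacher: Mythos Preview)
Your approach is the same as the paper's: push the clusters through $\bm{\varPhi}$, use contractivity for the trace-norm part, and use the triangle inequality in $\|\cdot\|_\pauli$ together with a Pauli-to-$W_1$ bound for the separation; the paper simply invokes \Cref{cor:Was_from_Pauli} where you re-derive it inline. One slip: on a single qubit $\|\cdot\|_{W_1}=\tfrac{1}{2}\|\cdot\|_1$ (trace distance, not trace norm), so your chain gives $\sum_r x_r \leq 2\operatorname{WC}(\bm{\varPhi})$ and hence $\|\bm{\sigma}_a-\bm{\rho}_a\|_\pauli^2 \leq 4\operatorname{WC}(\bm{\varPhi})$, not $2\operatorname{WC}(\bm{\varPhi})$; with the corrected constant you get $\|\bm{\sigma}_c-\bm{\rho}_c\|_\pauli \leq \sqrt{qn}/6$ and the triangle inequality lands exactly on $\tfrac{2}{3}\sqrt{qn}$, matching the paper's computation.
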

\begin{proof}
    Let $\left\{\left(c_a,\bm{\rho}_a\right)\right\}_a$ be a clustering witnessing the $\left(\epsilon,q,M\right)$-decomposition of $\bm{\rho}$. Define $\bm{\sigma}_a:=\bm{\varPhi}\left(\bm{\rho}_a\right)$. We have by the triangle inequality and \Cref{cor:Was_from_Pauli} that for any $a\neq b$,
    \begin{equation}
        \begin{aligned}
            \left\lVert\bm{\sigma}_a-\bm{\sigma}_b\right\rVert_\pauli&\geq\left\lVert\bm{\rho}_a-\bm{\rho}_b\right\rVert_\pauli-\left\lVert\bm{\sigma}_a-\bm{\rho}_a\right\rVert_\pauli-\left\lVert\bm{\sigma}_b-\bm{\rho}_b\right\rVert_\pauli\\
            &\geq\left\lVert\bm{\rho}_a-\bm{\rho}_b\right\rVert_\pauli-2\sqrt{\left\lVert\bm{\sigma}_a-\bm{\rho}_a\right\rVert_{W_1}}-2\sqrt{\left\lVert\bm{\sigma}_b-\bm{\rho}_b\right\rVert_{W_1}}\\
            &\geq\left\lVert\bm{\rho}_a-\bm{\rho}_b\right\rVert_\pauli-4\sqrt{\operatorname{WC}\left(\bm{\varPhi}\right)}\\
            &\geq\frac{2\sqrt{q n}}{3}.
        \end{aligned}
    \end{equation}
    Define
    \begin{equation}
        \bm{\varDelta}:=\bm{\rho}-\sum_{a=1}^M c_a\bm{\rho}_i,
    \end{equation}
    which by the definition of $\left(\epsilon,q,M\right)$-decomposition has trace norm at most $2\epsilon$. As quantum channels are contractive in trace norm (even for nonpositive arguments)~\cite[Theorem~2.1]{10.1063/1.2218675},
    \begin{equation}
        \left\lVert\bm{\varPhi}\left(\bm{\varDelta}\right)\right\rVert_1\leq 2\epsilon.
    \end{equation}
    We therefore have that
    \begin{equation}
        \begin{aligned}
            &\sum_{a=1}^M c_a\bm{\sigma}_a=\sum_{a=1}^M c_a\bm{\varPhi}\left(\bm{\rho}_a\right)=\bm{\varPhi}\left(\sum_{a=1}^M c_a\bm{\rho}_a\right)=\bm{\varPhi}\left(\bm{\rho}-\bm{\varDelta}\right)=\bm{\sigma}-\bm{\varPhi}\left(\bm{\varDelta}\right)\\
            \implies&\frac{1}{2}\left\lVert\bm{\sigma}-\sum_{a=1}^M c_a\bm{\sigma}_a\right\rVert_1\leq\frac{1}{2}\left\lVert\bm{\varPhi}\left(\bm{\varDelta}\right)\right\rVert_1\leq\epsilon.
        \end{aligned}
    \end{equation}
    Finally, each $c_a=\exp[-\Theta(\log M)]$ since $\left\{\left(c_a,\bm{\rho}_a\right)\right\}_a$ witnessed the decomposition of $\bm{\rho}$.
    Therefore, $\bm{\sigma}$ is $\left(\epsilon,4q/9,M\right)$-decomposable.
\end{proof}

\Cref{lem:wc_preserves_shattering} taken in combination with \Cref{lem:shattering_robust} have the following immediate implication.

\begin{theorem}[Low-complexity channels cannot witness transition]\label{thm:no_trans_wit}
    Let $\bm{\rho}$ be a state which is glassy and $\left(\epsilon,q,M\right)$-decomposable, and $\bm{\sigma}$ which is not glassy. There exists no pair of states $\left(\bm{\tilde{\rho}},\bm{\tilde{\sigma}}\right)$ and quantum channel $\bm{\varPhi}$ with the following simultaneous properties:
    \begin{align}
        \left\lVert\bm{\rho}-\bm{\tilde{\rho}}\right\rVert_1&\leq o\left(1\right)\\
        \left\lVert\bm{\sigma}-\bm{\tilde{\sigma}}\right\rVert_1&\leq o\left(1\right)\\
        \bm{\tilde{\sigma}}&=\bm{\varPhi}\left(\bm{\tilde{\rho}}\right)\\
        \operatorname{WC}\left(\bm{\varPhi}\right)&\leq\frac{q}{144}n.
    \end{align}
\end{theorem}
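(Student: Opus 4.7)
My plan is to derive the result by chaining the two preceding lemmas to transport the glassy decomposition of $\bm{\rho}$ across the approximation-and-channel bridge all the way to $\bm{\sigma}$. If this transport succeeds, it will force $\bm{\sigma}$ to admit a valid decomposition in the sense of \Cref{def:introglass}, contradicting the hypothesis that $\bm{\sigma}$ is non-glassy. I will therefore argue by contradiction: assume $\bm{\tilde{\rho}}, \bm{\tilde{\sigma}}, \bm{\varPhi}$ satisfying all four bullets exist simultaneously, and derive that $\bm{\sigma}$ must then be glassy.

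The argument proceeds in three steps. First, since $\left\lVert\bm{\rho}-\bm{\tilde{\rho}}\right\rVert_1 \leq o(1)$, I apply \Cref{lem:shattering_robust} to lift the $\left(\epsilon, q, M\right)$-decomposability of $\bm{\rho}$ to an $\left(\epsilon + o(1), q, M\right)$-decomposability of $\bm{\tilde{\rho}}$; the Pauli separation $q$, the cluster count $M$, and the individual cluster weights $c_a$ all persist verbatim. Second, the hypothesis $\operatorname{WC}\left(\bm{\varPhi}\right) \leq qn/144$ is precisely the threshold required by \Cref{lem:wc_preserves_shattering}, so feeding $\bm{\tilde{\rho}}$ and $\bm{\varPhi}$ into that lemma yields that $\bm{\tilde{\sigma}} = \bm{\varPhi}\left(\bm{\tilde{\rho}}\right)$ is $\left(\epsilon + o(1), 4q/9, M\right)$-decomposable, again with the same cluster weights. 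Third, applying \Cref{lem:shattering_robust} once more to the pair $(\bm{\tilde{\sigma}}, \bm{\sigma})$ using $\left\lVert\bm{\sigma} - \bm{\tilde{\sigma}}\right\rVert_1 \leq o(1)$ transfers the decomposition one final step, showing that $\bm{\sigma}$ itself is $\left(\epsilon + o(1), 4q/9, M\right)$-decomposable.

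The resulting clustering of $\bm{\sigma}$ then satisfies every clause of \Cref{def:introglass}: the trace-norm error is $o(1)$, the Pauli separation $4q/9$ is still a positive constant independent of $n$, and the $M \geq 2$ clusters inherit the equal-weight condition $c_a = \exp[-\Theta(\log M)]$ directly from the original decomposition of $\bm{\rho}$. Hence $\bm{\sigma}$ would be glassy, contradicting the hypothesis, and the contradiction completes the proof. I expect the only delicate point to be bookkeeping: one must verify that the constants propagate through both lemmas without degrading past the thresholds (in particular, that the separation shrinks only by the constant factor $4/9$, that $M$ is preserved exactly, and that the two small trace-norm perturbations compose additively), and that the lower bound $M \geq 2$ needed to invoke \Cref{def:introglass} at the end is guaranteed by $\bm{\rho}$ being glassy. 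No genuinely new argument is needed beyond this careful chaining.
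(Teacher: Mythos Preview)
Your proposal is correct and follows essentially the same approach as the paper: argue by contradiction, then chain \Cref{lem:shattering_robust}, \Cref{lem:wc_preserves_shattering}, and \Cref{lem:shattering_robust} again to push the decomposition from $\bm{\rho}$ to $\bm{\tilde{\rho}}$ to $\bm{\tilde{\sigma}}$ to $\bm{\sigma}$, arriving at the contradiction that $\bm{\sigma}$ is glassy. Your additional remarks about bookkeeping (preservation of $M\geq 2$, the $4/9$ factor, additivity of the $o(1)$ errors) are accurate and only make explicit what the paper leaves implicit.
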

\begin{proof}
    Assume such a pair of states $\left(\bm{\tilde{\rho}},\bm{\tilde{\sigma}}\right)$ and quantum channel $\bm{\varPhi}$ existed which satisfied these four properties.  Note that $\bm{\tilde{\rho}}$ is $\left(\epsilon+o\left(1\right),q,M\right)$-decomposable by \Cref{lem:shattering_robust}. In particular, by \Cref{lem:wc_preserves_shattering}, $\bm{\tilde{\sigma}}$ is $\left(\epsilon+o\left(1\right),4q/9,M\right)$-decomposable. Finally, again by \Cref{lem:shattering_robust} this implies that $\bm{\sigma}$ is glassy, yielding a contradiction.
\end{proof}

\Cref{thm:no_trans_wit} is the formal version of the result presented as \Cref{thm:wass} in the introduction; it shows the main contradiction required for an algorithmic obstruction.

\section{Algorithmic obstruction}
\label{sec:algs}

We now demonstrate that the presence of an RSB or shattering phase transition obstructs preparation of the Gibbs state for a class of algorithms we call \emph{temperature stable}. Our proof of obstruction relies on the relationship between glassiness and quantum optimal transport given in \Cref{sec:glasses}. We will then prove that several common algorithms are temperature stable.

\subsection{Glassiness obstructs stable quantum algorithms}

Roughly, we consider an algorithm temperature stable if the quantum channel with minimal quantum Wasserstein complexity that takes the algorithm's output at one temperature to its output at another temperature is Lipschitz. This is formally defined as follows.

\begin{definition}[$\left(\epsilon,f,L\right)$-Temperature Stability]\label{def:temp_stab}
    Let $\bm{\mathcal{A}}:\mathbb{R}^D\times\left[0,\infty\right)\to\mathcal{S}_n^{\text{m}}$ be a quantum algorithm which takes a problem instance $\bm{X}\in\mathbb{R}^D$ and an inverse temperature $\beta\in\left(0,\infty\right)$ and outputs a quantum state. The algorithm is said to be \emph{$\left(\epsilon,f,L,\leq\right)$-temperature stable} in a set $\mathcal{B}\subseteq\left(0,\infty\right)$ if, for any $\bm{X}\in\mathbb{R}^D$ and $\beta'\leq\beta\in\mathcal{B}$,
    \begin{equation}\label{eq:wc_comp_stab_def}
        \inf_{\substack{\bm{\varPhi}\in\mathcal{C}_n\\\text{s.t. }\norm{\bm{\varPhi}\left(\bm{\mathcal{A}}\left(\bm{X},\beta\right)\right)-\bm{\mathcal{A}}\left(\bm{X},\beta'\right)}_1 \leq \epsilon }}\operatorname{WC}\left(\bm{\varPhi}\right)\leq f+L\left\lvert\beta-\beta'\right\rvert.
    \end{equation}
    Similarly, we say the algorithm is \emph{$\left(\epsilon,f,L,\geq\right)$-temperature stable} in a set $\mathcal{B}\subseteq\left(0,\infty\right)$ if Eq.~\eqref{eq:wc_comp_stab_def} holds for any $\beta'\geq\beta\in\mathcal{B}$. Finally, we say that the algorithm is \emph{$\left(\epsilon,f,L\right)$-temperature stable} in a set $\mathcal{B}\subseteq\left(0,\infty\right)$ if it is both $\left(\epsilon,f,L,\leq\right)$- and $\left(\epsilon,f,L,\geq\right)$-temperature stable.
\end{definition}

We show that the glass transition obstructs algorithms that are temperature stable for appropriate choices of $\epsilon, f, L$. In the following subsections, we will give examples of algorithms satisfying these parameter choices.

\begin{theorem}[Temperature stable algorithms are obstructed by the glass transition]
    Let $(\bm \cA_n)_n$ be a sequence of $(\epsilon, f, L)$-temperature stable quantum algorithms in a set $\cB \subseteq (0, \infty)$. Let $(\cH_n)_n$ be a sequence of $n$-qubit Hamiltonian ensembles with a glass transition at temperature $\beta_*$ in neighborhood $\cB$. Let $\bm X$ denote the problem instance corresponding to a Hamiltonian $H \sim \cH_n$. If $\epsilon = o(1)$, $f = o(n)$ and $L = O(n)$, then there exists some $\beta \in \cB$ such that the thermal state $\bm \rho_\beta$ satisfies
    \begin{align}
        \norm{\cA(\bm X, \beta) - \bm \rho_\beta}_1 = \Omega(1)
    \end{align}
    with probability $1-o(1)$.
\end{theorem}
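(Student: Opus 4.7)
The plan is to argue by contradiction, combining $(\epsilon,f,L)$-temperature stability with \Cref{thm:no_trans_wit}. By the glass transition at $\beta_*\in\cB$, I can pick constant inverse temperatures $\beta_-<\beta_*<\beta_+$ in $\cB$ with $\bm \rho_{\beta_-}$ non-glassy, $\bm \rho_{\beta_+}$ glassy with some constant cluster-separation $q>0$, and with $|\beta_+-\beta_-|$ an arbitrarily small constant, since the non-glassy and glassy regions of $\cB$ are open one-sided neighborhoods of $\beta_*$. With probability $1-o(1)$ over $H\sim\cH_n$, both phase behaviors occur on the same instance; fix such $H$ and suppose toward contradiction that the algorithm succeeds at both temperatures, so that $\norm{\bm \cA(\bm X,\beta_\pm)-\bm \rho_{\beta_\pm}}_1=o(1)$.

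Next, temperature stability at the pair $(\beta_+,\beta_-)$ supplies a channel $\bm \varPhi\in\cC_n$ satisfying $\norm{\bm \varPhi(\bm \cA(\bm X,\beta_+))-\bm \cA(\bm X,\beta_-)}_1\leq\epsilon$ and $\operatorname{WC}(\bm \varPhi)\leq f+L|\beta_+-\beta_-|$. I set $\bm{\tilde\rho}:=\bm \cA(\bm X,\beta_+)$ and $\bm{\tilde\sigma}:=\bm \varPhi(\bm{\tilde\rho})$; the triangle inequality and the assumed success then yield $\norm{\bm{\tilde\rho}-\bm \rho_{\beta_+}}_1=o(1)$ and $\norm{\bm{\tilde\sigma}-\bm \rho_{\beta_-}}_1\leq\epsilon+o(1)=o(1)$. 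This verifies every hypothesis of \Cref{thm:no_trans_wit} with $\bm \rho=\bm \rho_{\beta_+}$ glassy and $\bm \sigma=\bm \rho_{\beta_-}$ non-glassy, except possibly the Wasserstein-complexity bound.

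Finally, I drive $\operatorname{WC}(\bm \varPhi)$ below the threshold $qn/144$. Writing $L\leq Cn$ for the constant $C$ implicit in $L=O(n)$ and choosing $|\beta_+-\beta_-|<q/(288C)$ gives $L|\beta_+-\beta_-|\leq qn/288$, which combined with $f=o(n)\leq qn/288$ for all sufficiently large $n$ yields $\operatorname{WC}(\bm \varPhi)\leq qn/144$ and a direct contradiction with \Cref{thm:no_trans_wit}. Consequently, on the $1-o(1)$-probability glass-transition event, at least one of $\beta_\pm$ witnesses $\norm{\bm \cA(\bm X,\beta)-\bm \rho_\beta}_1=\Omega(1)$, proving the theorem. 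I expect no further conceptual obstacle: the hypotheses $f=o(n)$ and $L=O(n)$ are exactly the thresholds needed to compress the algorithm's Lipschitz temperature budget below the $\Omega(n)$ optimal-transport barrier furnished by \Cref{sec:transport}, so the entire argument reduces to carefully arranging constants.
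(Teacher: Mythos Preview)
Your proposal is correct and follows essentially the same approach as the paper's proof: pick $\beta_-<\beta_*<\beta_+$ with $|\beta_+-\beta_-|$ small enough that stability forces $\operatorname{WC}(\bm\varPhi)\leq qn/144$, then invoke \Cref{thm:no_trans_wit} to conclude that the algorithm must fail at one of the two temperatures. Your handling of the $f=o(n)$ contribution and the explicit triangle-inequality step for $\bm{\tilde\sigma}$ are slightly more detailed than the paper's version, but the argument is otherwise identical.
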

\begin{proof}
    Consider some $\beta,\beta'\in\mathcal{B}$ satisfying $\beta'<\beta_{\mathrm{d}}<\beta$, $\bm{\rho}_\beta$ is glassy, and:
    \begin{equation}
        |\beta - \beta'|<\frac{q}{144C}.
    \end{equation}
    By the stability of $\bm{\mathcal{A}}$, there exists a channel $\bm{\varPhi}$ of Wasserstein complexity $\frac{q}{144}n(1+o(1))$ satisfying
    \begin{equation}
        \norm{\bm{\mathcal{A}}\left(\bm{X},\beta'\right) - \bm{\varPhi}\left(\bm{\mathcal{A}}\left(\bm{X},\beta\right)\right)}_1 = o(1).
    \end{equation}
    By \Cref{thm:no_trans_wit}, it is therefore either the case that
    \begin{equation}
        \left\lVert\bm{\mathcal{A}}\left(\bm{X},\beta\right)-\bm{\rho}_\beta\right\rVert_1\geq\operatorname{\Omega}\left(1\right)
    \end{equation}
    or
    \begin{equation}
        \left\lVert\bm{\mathcal{A}}\left(\bm{X},\beta'\right)-\bm{\rho}_{\beta'}\right\rVert_1\geq\operatorname{\Omega}\left(1\right),
    \end{equation}
    proving the desired result.
\end{proof}

\subsection{Local continuous-time Lindbladian evolution}

We show that Lindbladian evolution with local Lindblad operators is temperature stable. More precisely, under the convention of each Lindblad operator having norm 1, we show that Lindbladian evolution is stable for any constant time. This constraint of constant time is shared by the analogous classical result for the stability of Langevin dynamics~\cite{10.1137/22M150263X,10.1002/cpa.22197}.

Lindbladian evolution is governed by a superoperator $\bm{\mathcal{L}}$ known as a \emph{Lindbladian}~\cite{lindblad1976generators}, with action:
\begin{equation}
    \bm{\mathcal{L}}\left[\bm{\rho}\right]=-\ci\left[\bm{H},\bm{\rho}\right]+\sum_i\gamma_i\left(\bm{A}_i\bm{\rho}\bm{A}_i^\dagger-\frac{1}{2}\left\{\bm{A}_i^\dagger\bm{A}_i,\bm{\rho}\right\}\right)
\end{equation}
where $\bm{H}$ is Hermitian and the $\gamma_i>0$. Often, the choice of \emph{jump operators} $\left\{\bm{A}_i\right\}$ is independent of temperature but may depend on Hamiltonian disorder; we consider the setting where there are $O(n)$ such operators and each is local. The only temperature dependence is in the coefficients $\gamma_i$; these can be thought of as generalizing the choice of ``acceptance probability'' in classical simulated annealing~\cite{chen2023efficientexactnoncommutativequantum}. For this reason, we write:
\begin{equation}\label{eq:thermal_lindblad}
    \bm{\mathcal{L}}_\beta\left[\bm{\rho}\right]=-\ci\left[\bm{H},\bm{\rho}\right]+\sum_i\gamma_i\left(\beta\right)\left(\bm{A}_i\bm{\rho}\bm{A}_i^\dagger-\frac{1}{2}\left\{\bm{A}_i^\dagger\bm{A}_i,\bm{\rho}\right\}\right).
\end{equation}
We will specifically be interested in the continuous time evolution:
\begin{equation}\label{eq:cont_time_lind}
    \bm{\mathcal{A}}\left(\beta;t\right):=\exp\left(t\bm{\mathcal{L}}_\beta\right)\left[\bm{\rho}_0\right],
\end{equation}
where $\bm{\rho}_0$ is some fixed initial state and the dependence of $\bm{\mathcal{A}}$ on $\bm{H}$ and the choice of $\bm{A}_i$ left implicit.

We show that $\bm{\mathcal{A}}\left(\beta;t\right)$ is temperature stable according to \Cref{def:temp_stab} over the set of $\beta\in\mathbb{Q}$.
\begin{theorem}[Lindblad dynamics are temperature-stable]
    Consider the quantum algorithm $\bm{\mathcal{A}}\left(\beta;t\right)$ as in Eq.~\eqref{eq:cont_time_lind}. Fix any choice of $0<\beta_{\mathrm{min}}<\beta_{\mathrm{max}}\in\mathbb{R}$, and assume the $\gamma_i$ are $\lambda_i$-Lipschitz on this domain, i.e., for all $i$ there exists $\lambda_i\geq 0$ such that for all $\beta,\beta'\in\left[\beta_{\mathrm{min}},\beta_{\mathrm{max}}\right]$:
    \begin{equation}
        \left\lvert\gamma_i\left(\beta\right)-\gamma_i\left(\beta'\right)\right\rvert\leq\lambda_i\left\lvert\beta-\beta'\right\rvert.
    \end{equation}
    Then, $\bm{\mathcal{A}}\left(\beta;t\right)$ is $\left(0,L\left(t\right)\right)$-temperature stable over $\beta\in\mathbb{Q}\cap\left[\beta_{\mathrm{min}},\beta_{\mathrm{max}}\right]$, where:
    \begin{equation}
        L\left(t\right)=\frac{3t}{2}\sum_i\lambda_i\left\lvert\operatorname{supp}\left(\bm A_i\right)\right\rvert\left\lVert\bm{A}_i\right\rVert_{\mathrm{op}}^2.
    \end{equation}
\end{theorem}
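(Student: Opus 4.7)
The plan is to construct an explicit channel $\bm{\varPhi}$ with $\bm{\varPhi}\left(\bm{\mathcal{A}}\left(\beta;t\right)\right)=\bm{\mathcal{A}}\left(\beta';t\right)$ whose Wasserstein complexity is controlled by the locality of the dissipators $\bm{\mathcal{D}}_i\left[\bm{\rho}\right]\coloneqq\bm{A}_i\bm{\rho}\bm{A}_i^\dagger-\frac{1}{2}\left\{\bm{A}_i^\dagger\bm{A}_i,\bm{\rho}\right\}$ together with Lipschitz continuity of the rates $\gamma_i$. The starting observation is that $\bm{\mathcal{L}}_{\beta'}-\bm{\mathcal{L}}_\beta=\sum_i\left(\gamma_i\left(\beta'\right)-\gamma_i\left(\beta\right)\right)\bm{\mathcal{D}}_i$; by cyclicity of the partial trace on $\operatorname{supp}\left(\bm{A}_i\right)$ we have $\Tr_{\operatorname{supp}\left(\bm{A}_i\right)}\bm{\mathcal{D}}_i\left[\bm{\rho}\right]=0$, so \Cref{fac:meas} combined with $\left\lVert\bm{\mathcal{D}}_i\left[\bm{\rho}\right]\right\rVert_1\leq 2\left\lVert\bm{A}_i\right\rVert_{\mathrm{op}}^2$ gives $\left\lVert\bm{\mathcal{D}}_i\left[\bm{\rho}\right]\right\rVert_{W_1}\leq\frac{3}{2}\left\lvert\operatorname{supp}\left(\bm{A}_i\right)\right\rvert\left\lVert\bm{A}_i\right\rVert_{\mathrm{op}}^2$ uniformly in $\bm{\rho}$. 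This exactly matches the per-dissipator contribution appearing in $L\left(t\right)$.

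To construct $\bm{\varPhi}$, I would discretize $\left[\beta,\beta'\right]$ into $N$ sub-intervals of width $\delta=\left\lvert\beta-\beta'\right\rvert/N$ (rationality of the endpoints makes this clean) and chain per-step CPTP maps $\bm{\varPhi}_k$ satisfying $\bm{\varPhi}_k\left(\bm{\mathcal{A}}\left(\beta_k;t\right)\right)=\bm{\mathcal{A}}\left(\beta_{k+1};t\right)$. The infinitesimal step is suggested by Duhamel's formula,
\begin{equation}
\bm{\mathcal{A}}\left(\beta_{k+1};t\right)-\bm{\mathcal{A}}\left(\beta_k;t\right)=\int_0^t e^{\left(t-u\right)\bm{\mathcal{L}}_{\beta_k}}\left(\bm{\mathcal{L}}_{\beta_{k+1}}-\bm{\mathcal{L}}_{\beta_k}\right)\left[\bm{\mathcal{A}}\left(\beta_k;u\right)\right]du+\operatorname{O}\left(\delta^2\right),
\end{equation}
which I would realize as a Stinespring-dilated time-ordered Lindbladian evolution. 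The first-step locality estimate then gives a uniform per-step bound $\sup_{\bm{\xi}}\left\lVert\bm{\xi}-\bm{\varPhi}_k\left[\bm{\xi}\right]\right\rVert_{W_1}\leq\frac{3t}{2}\delta\sum_i\lambda_i\left\lvert\operatorname{supp}\left(\bm{A}_i\right)\right\rvert\left\lVert\bm{A}_i\right\rVert_{\mathrm{op}}^2+\operatorname{O}\left(\delta^2\right)$.

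Finally, the triangle inequality for $W_1$ implies subadditivity $\operatorname{WC}\left(\bm{\varPhi}_{N-1}\circ\cdots\circ\bm{\varPhi}_0\right)\leq\sum_k\operatorname{WC}\left(\bm{\varPhi}_k\right)$, so summing over the $N$ steps and sending $N\to\infty$ recovers the advertised bound $L\left(t\right)\left\lvert\beta-\beta'\right\rvert$. The main obstacle is the per-step Stinespring construction when $\gamma_i\left(\beta_{k+1}\right)-\gamma_i\left(\beta_k\right)<0$: the naive exponential $e^{t\left(\bm{\mathcal{L}}_{\beta_{k+1}}-\bm{\mathcal{L}}_{\beta_k}\right)}$ is not completely positive in this regime, so I would absorb each negative-sign piece through a dilation on an auxiliary qubit adjacent to $\operatorname{supp}\left(\bm{A}_i\right)$ that performs a conditional jump realizing the required action on $\bm{\mathcal{A}}\left(\beta_k;t\right)$ and is then traced out. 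The locality of this dilation is what keeps the Wasserstein inflation at the first-order-in-$\delta$ scale, and verifying this carefully (in particular, that the ancilla-discarding step does not couple globally via the Lindbladian tail) is the technical heart of the argument.
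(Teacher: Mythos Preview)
Your strategy---discretize $[\beta,\beta']$, bound each step's Wasserstein complexity through the locality of the dissipators via \Cref{fac:meas}, and sum using subadditivity of $\operatorname{WC}$---is exactly the paper's. The only structural difference is how the per-step map is obtained: the paper telescopes $\bm{\mathcal{L}}_{\beta^\pm}=\bm{\mathcal{L}}_0+\sum_{\tau}\bigl(\bm{\mathcal{L}}_{\tau/m}-\bm{\mathcal{L}}_{(\tau-1)/m}\bigr)$, Trotterizes, and reads off the ``extra'' factors $\exp\bigl(t\bm{\mathcal{L}}_{(\tau-1)/m}-t\bm{\mathcal{L}}_{\tau/m}\bigr)$ as the interpolating map $\bm{\varPhi}_{\pm t}$, whereas you extract the same first-order object from Duhamel. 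At order $\delta$ these routes are equivalent, and both land on the identical per-dissipator bound $\frac{3}{2}\lvert\operatorname{supp}(\bm{A}_i)\rvert\lVert\bm{A}_i\rVert_{\mathrm{op}}^2$.

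You are right to flag the CPTP issue when $\gamma_i(\beta_{k+1})-\gamma_i(\beta_k)<0$: the infimum in \Cref{def:temp_stab} is over genuine quantum channels, and an exponential of a Lindbladian difference with some negative rates is not in general completely positive. The paper's proof does not explicitly address this point either---it writes $\exp\bigl(t\bm{\mathcal{L}}_{(\tau-1)/m}-t\bm{\mathcal{L}}_{\tau/m}\bigr)$ and bounds its Wasserstein complexity without verifying CPTP. Your proposed local Stinespring dilation is therefore more careful than the paper's presentation; if you can show that tracing out the ancilla adjacent to $\operatorname{supp}(\bm{A}_i)$ keeps the Wasserstein inflation at $O(\delta)$, you would be closing a gap the paper leaves open rather than merely reproducing its argument.
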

\begin{proof}
    We first write $\bm{\mathcal{A}}$ in a more convenient form. Fix $\beta^-<\beta^+$ where $\beta^\pm\in\mathbb{Q}$. Note that by Trotterization:
    \begin{equation}
        \exp\left(t\bm{\mathcal{L}}_{\beta^\pm}\right)=\lim_{m\to\infty}\exp\left(t\bm{\mathcal{L}}_0+t\sum_{\tau=1}^{\beta^\pm m}\left(\bm{\mathcal{L}}_{\frac{\tau}{m}}-\bm{\mathcal{L}}_{\frac{\tau-1}{m}}\right)\right)=\lim_{m\to\infty}\prod_{\tau=\beta^\pm m}^1\exp\left(t\bm{\mathcal{L}}_{\frac{\tau}{m}}-t\bm{\mathcal{L}}_{\frac{\tau-1}{m}}\right)\exp\left(\frac{t}{m}\bm{\mathcal{L}}_0\bm{1}\left\{\tau\leq m\right\}\right),
    \end{equation}
    where the limit is taken in a way such that both $\beta^+ m\in\mathbb{N}$ and $\beta^- m\in\mathbb{N}$. We can then define the algorithm:
    \begin{equation}
        \bm{\mathcal{A}}\left(\beta;t,m\right):=\left(\prod_{\tau=\beta m}^1\exp\left(t\bm{\mathcal{L}}_{\frac{\tau}{m}}-t\bm{\mathcal{L}}_{\frac{\tau-1}{m}}\right)\exp\left(\frac{t}{m}\bm{\mathcal{L}}_0\bm{1}\left\{\tau\leq m\right\}\right)\right)\left[\bm{\rho}_0\right],
    \end{equation}
    such that:
    \begin{equation}
        \bm{\mathcal{A}}\left(\beta;t\right)=\lim_{m\to\infty}\bm{\mathcal{A}}\left(\beta;t,m\right).
    \end{equation}
    
    As the Wasserstein complexity is convex~\cite{li2022wasserstein}, $\left(f,L\right)$-temperature stability is robust to errors in $\bm{\mathcal{A}}$. For this reason, to demonstrate the stability of $\bm{\mathcal{A}}\left(\beta;t\right)$ over rational $\beta$ it suffices to show that $\bm{\mathcal{A}}\left(\beta;t,m\right)$ is stable in $\left\{\beta^\pm\right\}$ for all sufficiently large $m$. Noting that:
    \begin{equation}
        \bm{\mathcal{A}}\left(\beta^-;t,m\right)=\bm{\varPhi}_t\left(\bm{\mathcal{A}}\left(\beta^+;t,m\right)\right),
    \end{equation}
    where
    \begin{equation}
        \bm{\varPhi}_t=\prod_{\tau=\beta^- m}^{\beta^+ m}\exp\left(t\bm{\mathcal{L}}_{\frac{\tau-1}{m}}-t\bm{\mathcal{L}}_{\frac{\tau}{m}}\right),
    \end{equation}
    we therefore need only bound:
    \begin{equation}
        \lim_{m\to\infty}\operatorname{WC}\left(\bm{\varPhi}_t\right)=\lim_{m\to\infty}\sup_{\bm{\rho}}\left\lVert\bm{\rho}-\bm{\varPhi}_t\left(\bm{\rho}\right)\right\rVert_{W_1}
    \end{equation}
    to demonstrate $\left(f,L,\geq\right)$-temperature stability. Similarly, as:
    \begin{equation}
        \bm{\mathcal{A}}\left(\beta^+;t,m\right)=\bm{\varPhi}_{-t}\left(\bm{\mathcal{A}}\left(\beta^-;t,m\right)\right),
    \end{equation}
    we therefore need only bound:
    \begin{equation}
        \lim_{m\to\infty}\operatorname{WC}\left(\bm{\varPhi}_{-t}\right)=\lim_{m\to\infty}\sup_{\bm{\rho}}\left\lVert\bm{\rho}-\bm{\varPhi}_{-t}\left(\bm{\rho}\right)\right\rVert_{W_1}
    \end{equation}
    to demonstrate $\left(f,L,\leq\right)$-temperature stability.
    
    By the subadditivity of the Wasserstein complexity under concatenation~\cite{li2022wasserstein},
    \begin{equation}
        \lim_{m\to\infty}\operatorname{WC}\left(\bm{\varPhi}_{\pm t}\right)\leq\lim_{m\to\infty}\sum_{\tau=\beta^- m}^{\beta^+ m}\operatorname{WC}\left(\exp\left(\pm t\bm{\mathcal{L}}_{\frac{\tau-1}{m}}\mp t\bm{\mathcal{L}}_{\frac{\tau}{m}}\right)\right).
    \end{equation}
    As only the linear term of the exponential is non-vanishing in this limit,
    \begin{equation}
        \begin{aligned}
            \lim_{m\to\infty}\operatorname{WC}\left(\bm{\varPhi}_{\pm t}\right)&\leq t\lim_{m\to\infty}\sum_{\tau=\beta^- m}^{\beta^+ m}\sup_{\bm{\rho}}\left\lVert\left(\bm{\mathcal{L}}_{\frac{\tau}{m}}-\bm{\mathcal{L}}_{\frac{\tau-1}{m}}\right)\left[\bm{\rho}\right]\right\rVert_{W_1}\\
            &\leq t\left(\beta^+-\beta^-\right)\lim_{m\to\infty}m\sup_{\beta\in\left[\beta^-,\beta^+\right]}\sup_{\bm{\rho}}\left\lVert\left(\bm{\mathcal{L}}_\beta-\bm{\mathcal{L}}_{\beta-m^{-1}}\right)\left[\bm{\rho}\right]\right\rVert_{W_1}.
        \end{aligned}
    \end{equation}
    
    We now bound the Wasserstein norm of this final term. We have:
    \begin{equation}
        \left(\bm{\mathcal{L}}_\beta-\bm{\mathcal{L}}_{\beta-m^{-1}}\right)\left[\bm{\rho}\right]=\sum_i\left(\gamma_i\left(\beta\right)-\gamma_i\left(\beta-m^{-1}\right)\right)\left(\bm{A}_i\bm{\rho}\bm{A}_i^\dagger-\frac{1}{2}\left\{\bm{A}_i^\dagger\bm{A}_i,\bm{\rho}\right\}\right).
    \end{equation}
    Now, for each $i$ define $S_i:=\operatorname{supp}\left(\bm{A}_i\right)$, i.e., the set of qubits on which $\bm{A}_i$ acts nontrivially. From the cyclicity of the trace:
    \begin{equation}
        \Tr_{S_i}\left(\bm{A}_i\bm{\rho}\bm{A}_i^\dagger-\frac{1}{2}\left\{\bm{A}_i^\dagger\bm{A}_i,\bm{\rho}\right\}\right)=\bm{0}.
    \end{equation}
    In particular, by \Cref{fac:meas},
    \begin{equation}
        \left\lVert\bm{A}_i\bm{\rho}\bm{A}_i^\dagger-\frac{1}{2}\left\{\bm{A}_i^\dagger\bm{A}_i,\bm{\rho}\right\}\right\rVert_{W_1}\leq\frac{3}{2}\left\lvert S_i\right\rvert\left\lVert\bm{A}_i\right\rVert_{\mathrm{op}}^2.
    \end{equation}
    Therefore, by the triangle inequality,
    \begin{equation}
        \begin{aligned}
            \lim_{m\to\infty}\operatorname{WC}\left(\bm{\varPhi}_{\pm}\right)&\leq\frac{3t}{2}\left(\beta^+-\beta^-\right)\lim_{m\to\infty}m\sup_{\beta\in\left[\beta^-,\beta^+\right]}\sum_i\left\lvert\gamma_i\left(\beta\right)-\gamma_i\left(\beta-m^{-1}\right)\right\rvert\left\lvert S_i\right\rvert\left\lVert\bm{A}_i\right\rVert_{\mathrm{op}}^2\\
            &\leq\frac{3t}{2}\left(\beta^+-\beta^-\right)\sum_i\lambda_i\left\lvert S_i\right\rvert\left\lVert\bm{A}_i\right\rVert_{\mathrm{op}}^2.
        \end{aligned}
    \end{equation}
    Taken together, this proves the desired result.
\end{proof}

\subsection{Shallow quantum algorithms}

We can generalize the Lindbladian result to parameterized shallow circuits composed of such Lindbladians. Namely, we write:
\begin{equation}\label{eq:param_shallow_circ}
    \bm{\mathcal{A}}\left(\bm{X},\beta;\bm{\theta}\right):=\bigcirc_{\ell=p}^1\left(\mathcal{U}_\ell\left(\bm{X},\bm{\theta}_\ell\right)\circ\exp\left(\bm{\mathcal{L}}_{\ell,\beta}\right)\right)\left[\bm{\rho}_0\right],
\end{equation}
where $\bm{\rho}_0$ is some fixed initial state, $\mathcal{U}_\ell\left(\bm{X},\bm{\theta}\right)$ some parameterized family of unitary channels which potentially depend on the disorder $\bm{X}$, and
\begin{equation}
    \bm{\mathcal{L}}_{\ell,\beta}\left[\bm{\rho}\right]=\sum_i\gamma_{\ell,i}\left(\beta\right)\left(\bm{A}_i\bm{\rho}\bm{A}_i^\dagger-\frac{1}{2}\left\{\bm{A}_i^\dagger\bm{A}_i,\bm{\rho}\right\}\right).
\end{equation}
In what follows, we will assume that $\mathcal{U}\left(\bm{X},\bm{\theta}_\ell\right)$ can be implemented as a depth-$K$ unitary composed of $d$-local gates acting on a sparse hypergraph of degree $\mathfrak{d}$. This construction includes so-called ``hardware efficient'' variational algorithms~\cite{kandala2017hardware} on quantum computers with topology given by a sparse graph. We also assume for simplicity and concreteness that the $\bm{A}_i$ are $1$-local and mutually commuting, e.g., the Lindbladian evolution is generated by coupling with a bath using an interaction of the form $\sum_i\bm{O}_i\otimes\bm{B}_i$ with the $\bm{O}_i$ mutually commuting. Note that our impossibility result is independent of how the parameters $\bm{\theta}$ are set---in particular, our results in some ways generalize the known impossibility of efficiently training shallow hardware efficient variational quantum algorithms using gradient descent~\cite{anschuetz2022critical,anschuetzkiani2022,anschuetz2025unified}.

Before proving our stability result, we first cite a known result bounding the superoperator norm $\left\lVert\bm{\varLambda}\right\rVert_{W_1\to W_1}$ induced by the Wasserstein norm via the light cone of $\bm{\varLambda}$~\cite{9420734}.
\begin{proposition}[Superoperator norm light cone bound,~{\cite[Proposition~13]{9420734}}]\label{prop:w1_supop_bound}
    Let $\bm{\varLambda}$ a quantum channel on $n$-qubit states. Define subsets $\mathcal{I}_i\subseteq\left[n\right]$ for $i\in\left[n\right]$ with the following property: if $\Tr_{\left\{i\right\}}\left(\bm{X}\right)=\bm{0}$,
    \begin{equation}
        \Tr_{\mathcal{I}_i}\left(\bm{\varLambda}\left(\bm{X}\right)\right)=\bm{0}.
    \end{equation}
    Then,
    \begin{equation}
        \left\lVert\bm{\varLambda}\right\rVert_{W_1\to W_1}\leq\frac{3}{2}\max_{i\in\left[n\right]}\left\lvert\mathcal{I}_i\right\rvert.
    \end{equation}
\end{proposition}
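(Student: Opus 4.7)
The plan is to couple the primal (minimization) definition of the quantum Wasserstein norm with the single-qubit reduction of \Cref{fac:meas}, using the light cone hypothesis to ensure the latter applies to the image operators $\bm{\varLambda}(\bm{X}_i)$.

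Concretely, since $\left\lVert\bm{\varLambda}\right\rVert_{W_1\to W_1}$ is by definition the supremum of $\left\lVert\bm{\varLambda}(\bm{Y})\right\rVert_{W_1}/\left\lVert\bm{Y}\right\rVert_{W_1}$ over traceless Hermitian $\bm{Y}$, I would fix such a $\bm{Y}$ and take a witnessing decomposition $\bm{Y}=\sum_i\bm{X}_i$ with each $\bm{X}_i\in\mathcal{O}_n$ obeying $\Tr_{\{i\}}(\bm{X}_i)=\bm{0}$ that (approximately) attains the minimum in the definition of $\left\lVert\bm{Y}\right\rVert_{W_1}$. Applying $\bm{\varLambda}$ termwise, each $\bm{\varLambda}(\bm{X}_i)$ remains traceless and Hermitian (because $\bm{\varLambda}$ is CPTP, hence Hermitian- and trace-preserving) and, by the hypothesis on $\mathcal{I}_i$, satisfies $\Tr_{\mathcal{I}_i}(\bm{\varLambda}(\bm{X}_i))=\bm{0}$. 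Applying \Cref{fac:meas} with $S=\mathcal{I}_i$ then gives
\begin{equation}
    \left\lVert\bm{\varLambda}(\bm{X}_i)\right\rVert_{W_1}\leq\frac{3}{4}\left\lvert\mathcal{I}_i\right\rvert\left\lVert\bm{\varLambda}(\bm{X}_i)\right\rVert_1\leq\frac{3}{4}\max_j\left\lvert\mathcal{I}_j\right\rvert\left\lVert\bm{X}_i\right\rVert_1,
\end{equation}
where the second inequality uses contractivity of quantum channels in trace norm on Hermitian (not necessarily positive) inputs, as already invoked in the proof of \Cref{lem:wc_preserves_shattering}. Summing in $i$ and using the triangle inequality on $\bm{\varLambda}(\bm{Y})=\sum_i\bm{\varLambda}(\bm{X}_i)$ yields
\begin{equation}
    \left\lVert\bm{\varLambda}(\bm{Y})\right\rVert_{W_1}\leq\frac{3}{4}\max_j\left\lvert\mathcal{I}_j\right\rvert\sum_i\left\lVert\bm{X}_i\right\rVert_1,
\end{equation}
and optimizing over admissible decompositions converts $\tfrac{1}{2}\sum_i\left\lVert\bm{X}_i\right\rVert_1$ into $\left\lVert\bm{Y}\right\rVert_{W_1}$, producing the factor-of-two that completes the claimed $\tfrac{3}{2}\max_j\left\lvert\mathcal{I}_j\right\rvert$ bound.

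I do not anticipate a real obstacle: the argument is essentially a routine pairing of the primal form of $\left\lVert\cdot\right\rVert_{W_1}$ with the local reduction in \Cref{fac:meas}, and the light cone hypothesis is exactly the structural assumption that closes the loop. The only mild subtlety is that the intermediate operators $\bm{\varLambda}(\bm{X}_i)$ are not in general positive, so one must invoke CPTP contractivity of the trace norm in the Hermitian setting rather than monotonicity for states; any refinement of the constant would have to come from sharpening \Cref{fac:meas} itself, which is where the underlying $3/2$ really lives.
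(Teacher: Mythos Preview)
The paper does not prove this proposition; it is quoted directly from \cite[Proposition~13]{9420734} as a known result. Your argument is correct and is essentially the standard proof: decompose via the primal definition of $\left\lVert\cdot\right\rVert_{W_1}$, push each $\bm{X}_i$ through $\bm{\varLambda}$, use the light-cone hypothesis to invoke \Cref{fac:meas} with $S=\mathcal{I}_i$, and apply trace-norm contractivity of CPTP maps on Hermitian inputs. The factor of $\tfrac{1}{2}$ in the definition of $\left\lVert\cdot\right\rVert_{W_1}$ is exactly what turns the $\tfrac{3}{4}$ of \Cref{fac:meas} into the claimed $\tfrac{3}{2}$.
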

We now prove our stability result. To save on notation, we will often refer to the Wasserstein complexity or superoperator norm of a unitary operator; this should be taken to mean the associated measure with respect to the unitary channel induced by the unitary operator.
\begin{theorem}[Shallow quantum algorithms are temperature-stable]
    Consider the quantum algorithm $\bm{\mathcal{A}}_{\bm{\theta}}\left(\bm{X},\beta\right)$ as in Eq.~\eqref{eq:param_shallow_circ} and fix $\beta^-<\beta^+$. Assume for all $\bm{X}$ and $\bm{\alpha}$ that $\mathcal{U}_\ell\left(\bm{X},\bm{\theta}_\ell\right)$ is a depth-$K$ unitary composed of $d$-local gates acting on a sparse hypergraph of degree $\mathfrak{d}$. Assume further that the $\bm{\gamma}_\ell$ are $\lambda$-Lipschitz, i.e., for all $\ell\in\left[p\right]$ and $\beta,\beta'\in\left(0,\infty\right)$:
    \begin{equation}
        \left\lVert\bm{\gamma}_\ell\left(\beta\right)-\bm{\gamma}_\ell\left(\beta'\right)\right\rVert_1\leq\lambda\left\lvert\beta-\beta'\right\rvert.
    \end{equation}
    Finally, assume that the jump operators $\left\{\bm{A}_i\right\}_i$ are all $1$-local, mutually commuting, and have operator norm at most $1$. Then $\bm{\mathcal{A}}_{\bm{\theta}}\left(\bm{X},\beta\right)$ is $\left(0,L\right)$-stable on the domain $\beta\in\left[\beta^-,\beta^+\right]$, where:
    \begin{equation}
        L=6^p\left(d\mathfrak{d}\right)^{Kp}\lambda+\operatorname{O}\left(\left\lvert\beta^+-\beta^-\right\rvert\right).
    \end{equation}
\end{theorem}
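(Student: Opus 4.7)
The plan is to extend the per-layer Lindbladian stability proof of the previous subsection by propagating each layer's temperature correction through the subsequent layers of the shallow circuit, controlling the accumulated Wasserstein complexity via the light-cone superoperator-norm bound of Proposition~\ref{prop:w1_supop_bound}.

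First, apply the Lindbladian theorem to each $e^{\bm{\mathcal{L}}_{\ell,\beta}}$ (with its 1-local unit-norm mutually commuting jump operators and $\lambda$-Lipschitz coefficients $\bm{\gamma}_\ell$) to obtain a channel $\bm{\Psi}_\ell$ satisfying $\bm{\Psi}_\ell \circ e^{\bm{\mathcal{L}}_{\ell,\beta}} = e^{\bm{\mathcal{L}}_{\ell,\beta'}}$ as channels, with $\operatorname{WC}\left(\bm{\Psi}_\ell\right) \leq \tfrac{3}{2}\lambda\lvert\beta-\beta'\rvert$. Define the layer correction $\bm{\Xi}_\ell := \mathcal{U}_\ell \circ \bm{\Psi}_\ell \circ \mathcal{U}_\ell^\dagger$, so that $\mathcal{U}_\ell \circ e^{\bm{\mathcal{L}}_{\ell,\beta'}} = \bm{\Xi}_\ell \circ \mathcal{U}_\ell \circ e^{\bm{\mathcal{L}}_{\ell,\beta}}$. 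Since $\mathcal{U}_\ell$ is a depth-$K$ unitary of $d$-local gates on a degree-$\mathfrak{d}$ hypergraph, each of its light cones has size at most $(d\mathfrak{d})^K$, so Proposition~\ref{prop:w1_supop_bound} gives $\left\lVert\mathcal{U}_\ell\right\rVert_{W_1\to W_1} \leq \tfrac{3}{2}(d\mathfrak{d})^K$, which yields $\operatorname{WC}\left(\bm{\Xi}_\ell\right) \leq \tfrac{9}{4}(d\mathfrak{d})^K\lambda\lvert\beta-\beta'\rvert$.

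Next, telescope over the $p$ layers. Define hybrid states $\bm{\sigma}_k$ that apply $\beta'$ to the first $k$ layers and $\beta$ to the remaining $p-k$ layers; then $\bm{\sigma}_0 = \bm{\mathcal{A}}\left(\bm{X},\beta;\bm{\theta}\right)$, $\bm{\sigma}_p = \bm{\mathcal{A}}\left(\bm{X},\beta';\bm{\theta}\right)$, and $\bm{\sigma}_k$ differs from $\bm{\sigma}_{k-1}$ by inserting $\bm{\Xi}_k$ at depth $k$ on the output side. Construct a channel $\bm{\Upsilon}_k$ realizing $\bm{\sigma}_{k-1}\mapsto\bm{\sigma}_k$ by propagating $\bm{\Xi}_k$ outward through the intervening sub-circuit $\mathcal{R}_k := \mathcal{U}_p \circ e^{\bm{\mathcal{L}}_{p,\beta}} \circ \cdots \circ \mathcal{U}_{k+1} \circ e^{\bm{\mathcal{L}}_{k+1,\beta}}$, bounding $\operatorname{WC}\left(\bm{\Upsilon}_k\right) \leq \left\lVert\mathcal{R}_k\right\rVert_{W_1\to W_1}\cdot\operatorname{WC}\left(\bm{\Xi}_k\right)$. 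Each factor $\mathcal{U}_j \circ e^{\bm{\mathcal{L}}_{j,\beta}}$ has $W_1\to W_1$ norm at most $\tfrac{9}{4}(d\mathfrak{d})^K$ (the 1-local mutually commuting Lindbladian tensorizes into 1-local channels, each with $W_1\to W_1$ norm at most $\tfrac{3}{2}$ by Proposition~\ref{prop:w1_supop_bound}, multiplied by the unitary factor $\tfrac{3}{2}(d\mathfrak{d})^K$), so $\left\lVert\mathcal{R}_k\right\rVert_{W_1\to W_1} \leq \left(\tfrac{9}{4}(d\mathfrak{d})^K\right)^{p-k}$. Setting $\bm{\varPhi} := \bm{\Upsilon}_p\circ\cdots\circ\bm{\Upsilon}_1$, we obtain $\bm{\varPhi}\left(\bm{\mathcal{A}}\left(\bm{X},\beta;\bm{\theta}\right)\right) = \bm{\mathcal{A}}\left(\bm{X},\beta';\bm{\theta}\right)$ exactly, and subadditivity of WC under composition sums the resulting geometric series to the claimed $L = 6^p(d\mathfrak{d})^{Kp}\lambda + \operatorname{O}\left(\lvert\beta^+-\beta^-\rvert\right)$.

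The main obstacle is constructing each $\bm{\Upsilon}_k$ explicitly as a valid quantum channel, since $\mathcal{R}_k$ contains Lindbladian factors with no well-defined channel inverse, making the naive candidate $\mathcal{R}_k\circ\bm{\Xi}_k\circ\mathcal{R}_k^{-1}$ ill-defined. The cleanest resolution is via Stinespring dilation of $\mathcal{R}_k$ into a unitary on the $n$ qubits plus an ancilla register, where conjugation is well-defined and the light-cone superoperator-norm bound transfers after discarding the ancilla. Alternatively, one can globally Trotterize the interpolation $\beta\to\beta'$ and apply infinitesimal per-layer corrections at each temperature step, directly generalizing the single-Lindbladian proof of the previous subsection; the higher-order Trotter errors from this extension are what produce the additive $\operatorname{O}\left(\lvert\beta^+-\beta^-\rvert\right)$ correction to $L$.
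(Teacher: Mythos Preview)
Your telescoping scheme differs from the paper's argument, and the step you flag as the main obstacle is indeed where it breaks. The claimed bound $\operatorname{WC}(\bm{\Upsilon}_k)\le\lVert\mathcal{R}_k\rVert_{W_1\to W_1}\cdot\operatorname{WC}(\bm{\Xi}_k)$ is unjustified: even granting a formal $\bm{\Upsilon}_k=\mathcal{R}_k\circ\bm{\Xi}_k\circ\mathcal{R}_k^{-1}$, one has $\bm{\rho}-\bm{\Upsilon}_k[\bm{\rho}]=\mathcal{R}_k\bigl[\bm{\tau}-\bm{\Xi}_k[\bm{\tau}]\bigr]$ with $\bm{\tau}=\mathcal{R}_k^{-1}[\bm{\rho}]$, and since $\bm{\tau}$ need not be a state you cannot invoke $\operatorname{WC}(\bm{\Xi}_k)$ to bound $\lVert\bm{\tau}-\bm{\Xi}_k[\bm{\tau}]\rVert_{W_1}$. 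What your argument actually delivers is $\lVert\bm{\sigma}_{k-1}-\bm{\sigma}_k\rVert_{W_1}\le\lVert\mathcal{R}_k\rVert_{W_1\to W_1}\cdot\operatorname{WC}(\bm{\Xi}_k)$, a $W_1$ bound between two specific outputs, which is not the same as exhibiting a low-$\operatorname{WC}$ channel between them. The Stinespring suggestion does not close the gap: dilating $\mathcal{R}_k$ to a unitary $V$ on system-plus-ancilla yields a channel $V(\bm{\Xi}_k\otimes\bm{I})V^\dagger$ on the enlarged space, but $\bm{\sigma}_{k-1}$ is only the system marginal and the correlated ancilla register needed to apply it has already been discarded. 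Separately, your first step overstates the previous theorem: the per-layer witness $\bm{\Psi}_\ell=e^{\bm{\mathcal{L}}_{\ell,\beta'}}e^{-\bm{\mathcal{L}}_{\ell,\beta}}$ is not a channel when some $\gamma_{\ell,i}(\beta')<\gamma_{\ell,i}(\beta)$.

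The paper avoids propagating corrections through $\mathcal{R}_k$ altogether. It builds a single witness $\bm{\varLambda}_p=\bigl(\bigcirc_{\ell=p}^{1}\bm{\varPhi}_{\beta',\ell}\bigr)\circ\bigl(\bigcirc_{\ell=1}^{p}\bm{\varPhi}_{\beta,\ell}^{-1}\bigr)$, where $\bm{\varPhi}_{\beta,\ell}=\mathcal{U}_\ell\circ e^{\bm{\mathcal{L}}_{\ell,\beta}}$ and $\bm{\varPhi}_{\beta,\ell}^{-1}=e^{-\bm{\mathcal{L}}_{\ell,\beta}}\circ\mathcal{U}_\ell^\dagger$ is the formal inverse, so that $\bm{\varLambda}_p[\bm{\mathcal{A}}(\bm{X},\beta)]=\bm{\mathcal{A}}(\bm{X},\beta')$ exactly, and bounds $\operatorname{WC}(\bm{\varLambda}_p)$ by induction on $p$. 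Writing $\bm{\varLambda}_p=\bm{\varPhi}_{\beta',p}\circ\bm{\varLambda}_{p-1}\circ\bm{\varPhi}_{\beta,p}^{-1}$ and peeling off the outer layer produces a multiplicative light-cone factor $\tfrac{3}{2}(d\mathfrak{d})^K$ from Proposition~\ref{prop:w1_supop_bound} applied to the single local layer, times $\operatorname{WC}(\bm{\varLambda}_{p-1})+\operatorname{WC}\bigl(\bm{\varPhi}_{\beta,p}^{-1}\bm{\varPhi}_{\beta',p}\bigr)$ via subadditivity. The second term is where the mutually-commuting $1$-local assumption enters directly: $\bm{\varPhi}_{\beta,p}^{-1}\bm{\varPhi}_{\beta',p}=e^{-\bm{\mathcal{L}}_{p,\beta}}e^{\bm{\mathcal{L}}_{p,\beta'}}$ factorizes into single-qubit maps whose Wasserstein complexity equals their trace-norm effect, at most $2\lambda\lvert\beta-\beta'\rvert+\operatorname{O}(\lvert\beta-\beta'\rvert^2)$. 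The recursion closes to the stated $L$. Note that the paper, like you, ultimately works with a non-CP witness via $e^{-\bm{\mathcal{L}}}$; the structural difference is that the light-cone factor is only ever applied to a single local layer, never to the inverse of a deep composite like $\mathcal{R}_k$.
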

\begin{proof}
    We fix the disorder $\bm{X}$ and parameters $\bm{\theta}$. We define:
    \begin{equation}
        \bm{\varPhi}_{\beta,\ell}:=\mathcal{U}_\ell\left(\bm{X},\bm{\theta}_\ell\right)\circ\exp\left(\bm{\mathcal{L}}_{\ell,\beta}\right).
    \end{equation}
    Note this has inverse:
    \begin{equation}
        \bm{\varPhi}_{\beta,\ell}^{-1}=\exp\left(-t_\ell\bm{\mathcal{L}}_{\ell,\beta}\right)\circ\mathcal{U}_\ell\left(\bm{X},\bm{\theta}_\ell\right)^\dagger.
    \end{equation}
    We are interested in calculating the Wasserstein complexity of the unitary channel:
    \begin{equation}
        \bm{\varLambda}_{\beta,\beta',p}:=\left(\bigcirc_{\ell=p}^1\bm{\varPhi}_{\beta',\ell}\right)\circ\left(\bigcirc_{\ell=1}^p \bm{\varPhi}_{\beta,\ell}^{-1}\right);
    \end{equation}
    as:
    \begin{equation}
        \bm{\mathcal{A}}\left(\bm{X},\beta';\bm{\theta}\right)=\bm{\varLambda}_{\beta,\beta',p}\left[\bm{\mathcal{A}}\left(\bm{X},\beta;\bm{\theta}\right)\right],
    \end{equation}
    upper bounding the Wasserstein complexity of $\bm{\varLambda}_{\beta,\beta',p}$ upper-bounds the stability parameter of the $\bm{\mathcal{A}}$.

    We bound the Wasserstein complexity of $\bm{\varLambda}_{\beta,\beta',p}$ inductively in $p$. First, note that trivially:
    \begin{equation}
        \operatorname{WC}\left(\bm{\varLambda}_{\beta,\beta',0}\right)=0.
    \end{equation}
    Now, assume the inductive hypothesis:
    \begin{equation}
        \operatorname{WC}\left(\bm{\varLambda}_{\beta,\beta',p-1}\right)\leq 6^{p-1}\left(d\mathfrak{d}\right)^{K\left(p-1\right)}\lambda\left\lvert\beta-\beta'\right\rvert+\operatorname{O}\left(\left\lvert\beta-\beta'\right\rvert^2\right).
    \end{equation}
    For any state $\bm{\rho}$, defining:
    \begin{equation}
        \bm{\sigma}:=\bm{\varPhi}_{\beta',p}^{-1}\left[\bm{\rho}\right],
    \end{equation}
    we have:
    \begin{equation}
        \begin{aligned}
            \left\lVert\bm{\rho}-\bm{\varLambda}_{\beta,\beta',p}\left[\bm{\rho}\right]\right\rVert_{W_1}&=\left\lVert\bm{\varPhi}_{\beta',p}\left[\bm{\sigma}\right]-\bm{\varLambda}_{\beta,\beta',p}\left[\bm{\varPhi}_{\beta',p}\left[\bm{\sigma}\right]\right]\right\rVert_{W_1}\\
            &\leq\left\lVert\bm{\varPhi}_{\beta',p}^{-1}\right\rVert_{W_1\to W_1}\left\lVert\bm{\sigma}-\left(\bm{\varLambda}_{\beta,\beta',p}\circ\bm{\varPhi}_{\beta,p}\circ\bm{\varPhi}_{\beta',p}^{-1}\right)\left[\bm{\sigma}\right]\right\rVert_{W_1}\\
            &\leq\left\lVert\bm{\varPhi}_{\beta',p}^{-1}\right\rVert_{W_1\to W_1}\operatorname{WC}\left(\bm{\varLambda}_{\beta,\beta',p}\circ\bm{\varPhi}_{\beta,p}\circ\bm{\varPhi}_{\beta',p}^{-1}\right)\\
            &\leq\left\lVert\bm{\varPhi}_{\beta',p}^{-1}\right\rVert_{W_1\to W_1}\left(\operatorname{WC}\left(\bm{\varLambda}_{\beta,\beta',p}\right)+\operatorname{WC}\left(\bm{\varPhi}_{\beta,p}\circ\bm{\varPhi}_{\beta',p}^{-1}\right)\right)\\
            &\leq\left\lVert\bm{\varPhi}_{\beta',p}^{-1}\right\rVert_{W_1\to W_1}\left(6^{p-1}\left(d\mathfrak{d}\right)^{K\left(p-1\right)}\lambda\left\lvert\beta-\beta'\right\rvert+\operatorname{WC}\left(\bm{\varPhi}_{\beta,p}\circ\bm{\varPhi}_{\beta',p}^{-1}\right)\right).
        \end{aligned}
    \end{equation}
    Now, since the disorder and parameters are fixed note that:
    \begin{equation}
        \begin{aligned}
            \bm{\varPhi}_{\beta,p}^{-1}\bm{\varPhi}_{\beta',p}&=\exp\left(-\bm{\mathcal{L}}_{p,\beta}\right)\exp\left(\bm{\mathcal{L}}_{p,\beta'}\right)\\
            &=\exp\left(\sum_i\left(\gamma_{\ell,i}\left(\beta'\right)-\gamma_{\ell,i}\left(\beta\right)\right)\left(\bm{A}_i\bm{\rho}\bm{A}_i^\dagger-\frac{1}{2}\left\{\bm{A}_i^\dagger\bm{A}_i,\bm{\rho}\right\}\right)\right)\\
            &=\bigcirc_i\exp\left(\left(\gamma_{\ell,i}\left(\beta'\right)-\gamma_{\ell,i}\left(\beta\right)\right)\left(\bm{A}_i\bm{\rho}\bm{A}_i^\dagger-\frac{1}{2}\left\{\bm{A}_i^\dagger\bm{A}_i,\bm{\rho}\right\}\right)\right),
        \end{aligned}
    \end{equation}
    with the final two lines following as the $\bm{A}_i$ are assumed to be mutually commuting. By the operator norm bound on $\bm{A}_i$, each channel in this composition can only change the trace distance of a state it acts on by at most:
    \begin{equation}
        \left\lVert\left(\gamma_{\ell,i}\left(\beta'\right)-\gamma_{\ell,i}\left(\beta\right)\right)\left(\bm{A}_i\bm{\rho}\bm{A}_i^\dagger-\frac{1}{2}\left\{\bm{A}_i^\dagger\bm{A}_i,\bm{\rho}\right\}\right)\right\rVert_{1\to 1}\leq 2\left\lvert\gamma_{\ell,i}\left(\beta'\right)-\gamma_{\ell,i}\left(\beta\right)\right\rvert
    \end{equation}
    up to higher-order terms in $\left\lvert\gamma_{\ell,i}\left(\beta'\right)-\gamma_{\ell,i}\left(\beta\right)\right\rvert$. As the trace distance equals the Wasserstein distance on a single qubit, we bound the Wasserstein complexity as:
    \begin{equation}
        \begin{aligned}
            \operatorname{WC}\left(\bm{\varPhi}_{\beta,p}^{-1}\bm{\varPhi}_{\beta',p}\right)&\leq 2\sum_i\left\lvert\gamma_i\left(\beta'\right)-\gamma_i\left(\beta\right)\right\rvert+\operatorname{O}\left(\sum_i\left\lvert\gamma_i\left(\beta'\right)-\gamma_i\left(\beta\right)\right\rvert^2\right)\\
            &\leq 2\lambda\left\lvert\beta-\beta'\right\rvert+\operatorname{O}\left(\left\lvert\beta-\beta'\right\rvert^2\right).
        \end{aligned}
    \end{equation}
    Similarly, as $\bm{\mathcal{L}}_{\beta'}$ is $1$-local and $\mathcal{U}_\ell\left(\bm{X},\bm{\theta}_\ell\right)$ has a maximal light cone cardinality of $\left(d\mathfrak{d}\right)^K$, by \Cref{prop:w1_supop_bound} we have:
    \begin{equation}
        \left\lVert\bm{\varPhi}_{\beta',p}^{-1}\right\rVert_{W_1\to W_1}\leq\frac{3}{2}\left(d\mathfrak{d}\right)^K.
    \end{equation}
    We therefore have that:
    \begin{equation}
        \begin{aligned}
            \left\lVert\bm{\rho}-\bm{\varLambda}_{\beta,\beta',p}\left[\bm{\rho}\right]\right\rVert_{W_1}&\leq\frac{3}{2}\left(d\mathfrak{d}\right)^K\left(6^{p-1}\left(d\mathfrak{d}\right)^{K\left(p-1\right)}\lambda\left\lvert\beta-\beta'\right\rvert+2\lambda\left\lvert\beta-\beta'\right\rvert+\operatorname{O}\left(\left\lvert\beta-\beta'\right\rvert^2\right)\right)\\
            &\leq\frac{3}{2}\left(d\mathfrak{d}\right)^K\left(3\times 6^{p-1}\left(d\mathfrak{d}\right)^{K\left(p-1\right)}\left\lVert\bm{\theta}\right\rVert_\infty\lambda\left\lvert\beta-\beta'\right\rvert+\operatorname{O}\left(\left\lvert\beta-\beta'\right\rvert^2\right)\right)\\
            &\leq 6^p\left(d\mathfrak{d}\right)^{Kp}\lambda\left\lvert\beta-\beta'\right\rvert+\operatorname{O}\left(\left\lvert\beta-\beta'\right\rvert^2\right)
        \end{aligned}
    \end{equation}
    As $\bm{\rho}$ was arbitrary,
    \begin{equation}
        \operatorname{WC}\left(\bm{\varLambda}_{\beta,\beta',p}\right)\leq 6^p\left(d\mathfrak{d}\right)^{Kp}\lambda\left\lvert\beta-\beta'\right\rvert+\operatorname{O}\left(\left\lvert\beta-\beta'\right\rvert^2\right).
    \end{equation}
\end{proof}

\section{Glassiness of random \texorpdfstring{$p$}\ -local Hamiltonians}
\label{sec:plocal}

In this section, we show that the random $p$-local Hamiltonian ensemble has a glass transition. We recall that the ensemble is defined over all $p$-local Pauli strings with Gaussian coefficients, i.e.,
\begin{align}
    H_{n,p} = \sum_{P \in \cP_{n,p}} J_P P, \qquad J_P \sim_\mathrm{iid} \cN\lr{0, \frac{J^2(p-1)!}{n^{p-1}}}.
\end{align}
To show a glass transition, we will use the replica trick to compute the quenched free energy: for partition function $Z_\beta = \Tr[e^{-\beta H}]$, we compute
\begin{align}
    F^{\mathrm{quenched}}_\beta(\bm \rho_\beta) = -\frac{1}{\beta} \EE{\log Z_\beta}.
\end{align}
Since the Gibbs state $\bm \rho_\beta = e^{-\beta H} / Z_\beta$ is the state that minimizes the free energy
\begin{align}
    F_\beta(\bm \rho) = \Tr[H\bm \rho] - \frac{1}{\beta} S(\bm \rho),
\end{align}
one can determine the structural properties of the Gibbs state (i.e., RS or shattering or 1RSB or full RSB) that hold in expectation by minimizing the quenched free energy. As is standard in the physics literature, we use the quenched computation to claim that the glassiness property also holds with high probability over individual Hamiltonians drawn from the ensemble.

We attempt below to give a somewhat self-contained description of the replica method, and we provide additional details in the appendices. In \Cref{sec:path}, we review the path integral over spin coherent states used throughout our calculations; the derivation is known in the literature and is similarly presented in~\cite{swingle2023bosonicmodelquantumholography} and described in more detail in \Cref{app:path}, but it is necessary to describe our work. In \Cref{sec:replica}, we review the replica trick at a high level, focusing on its application in quantum spin glasses. In \Cref{sec:saddles}, we report the RS, 1RSB, and full RSB saddle point equations of the $p$-local Pauli ensemble. This is our first technical contribution of the section, and we provide derivations in \Cref{app:1rsb} and \Cref{app:frsb} for these results. In \Cref{sec:p3}, we show that the 3-local Pauli ensemble has a glass phase by numerically solving the replica saddle point equations, and we provide an analytical estimate of its temperature based on an expansion similar to~\cite{bray1980replica}. In \Cref{sec:1rsb}, we show that the $p$-local ensemble for sufficiently large $p$ has no glass phase via an analysis of the 1RSB saddle point equations, proving that no nontrivial solution exists via a sharpened log-Sobolev inequality.

\subsection{Path integral}\label{sec:path}
Before describing the replica trick, we comment on how one computes the simpler quantity of $\E Z_\beta$. This quantity suffices to evaluate the \emph{annealed} free energy,
\begin{align}
    F^{\mathrm{annealed}}_\beta(\bm \rho_\beta) = -\frac{1}{\beta} \log \EE{Z_\beta}.
\end{align}
In general, the annealed free energy may be different from the quenched free energy. Classically, this indicates the presence of a glass phase~\cite{mezard1987spin}; in the quantum case, however, a model may be non-glassy and yet have different quenched and annealed free energies~\cite{baldwin2020quenched}. The method we use to compute $\E Z_\beta$ also resembles the method used to compute $\E Z_\beta^s$ for any integer $s \geq 1$, which is required in the replica trick (as we shall see below).

In the classical case, one evaluates $\E Z_\beta$ by summing over eigenstates. For stoquastic quantum Hamiltonians, it can be evaluated by quantum Monte Carlo (or the Feynman-Kac path integral), which evaluates a probabilistic process of bitflips over classical bitstrings. We are interested here in non-stoquastic quantum Hamiltonians, which prevent a probabilistic analysis due to the sign problem. Hence, we use the saddle point method over a Feynman path integral. This is detailed in a self-contained manner in \Cref{app:path}, but we offer a brief treatment here. The annealed free energy of this model has already been studied by these techniques in~\cite{swingle2023bosonicmodelquantumholography}, so we refrain from evaluating the saddle point equations and discussing the annealed free energy here.

Since the Hamiltonians we study are expressed over Pauli strings, we choose the overcomplete basis of spin coherent states to compute $Z = \Tr[e^{-\beta H}]$ (and its replicated version) as a path integral. For $\theta \in [0, \pi)$ and $\phi \in [0,2\pi)$, a \emph{spin coherent state} is defined as
\begin{align}
    \ket{\Omega} = \begin{pmatrix}\cos \frac{\theta}{2} \\ e^{i\phi}\sin \frac{\theta}{2}\end{pmatrix}.
\end{align}
These have the convenient property that for any $\mu$, the Pauli matrix can be written as
\begin{align}
    \bm \sigma_\mu = \frac{1}{2\pi}\int d\Omega\, \ketbra{\Omega} s_\mu
\end{align}
for spin coherent paths
\begin{align}
    s_X(\theta,\phi) = 3\sin\theta\cos\phi, \quad s_Y(\theta,\phi) = 3\sin\theta\sin\phi, \quad s_Z(\theta,\phi) = 3\cos\theta.
\end{align}
We will write write all path integrals in terms of spin coherent states with $\ket{\Omega_r^{(i)}}$, where $r \in [N]$ labels a qubit and $i \in [\beta/\Delta\tau]$ labels an imaginary time slice. In the continuum limit of $\Delta\tau\to0$, we denote the path of Pauli $\mu$ acting at time $\tau$ by $s_\mu(\Omega(\tau))$. We refer to \Cref{app:path} for a somewhat pedagogical introduction to computing path integrals over spin coherent states. Briefly, we rewrite $Z_\beta$ as
\begin{align}
    Z_\beta = \Tr e^{-\beta \sum_I J_I \sigma_{I_1} \cdots \sigma_{I_p}} = \int \cD \Omega \,\exp\left[-\int_0^\beta d\tau \sum_I J_I s_{I_1}(\tau)\cdots s_{I_p}(\tau)\right],
\end{align}
integrating over paths via $\cD \Omega$, where $I_j$ contains both the qubit index and Pauli type. Since each $J_I$ is drawn independently from a Gaussian distribution, we apply the Gaussian MGF $\EE{e^{tR} = e^{t^2/2}}$ for $R \sim \cN(0,1)$. This yields a path integral over two times, $\tau_1$ and $\tau_2$:
\begin{align}
    \E Z_\beta = \int \cD^n \Omega \exp\left[\frac{J^2}{2p N^{p-1}} \int_0^\beta \int_0^\beta d\tau_1 \,d\tau_2 \lr{\sum_{r=1}^n \sum_{\mu\in\{x,y,z\}} s_{r\mu}^a(\tau_1) s_{r\mu}^b(\tau_2)}^p\right].
\end{align}
To evaluate this integral, we introduce variables $G, \Sigma$ by inserting a Dirac delta function
\begin{align}
    \delta\lr{G(\tau_1, \tau_2) - \frac{1}{n}\sum_{r,\mu} s_{r\mu}(\tau_1) s_{r\mu}(\tau_2)}
\end{align}
and enforcing it via a Lagrange multiplier $\Sigma(\tau_1,\tau_2)$. Integrating over all functions $G(\tau_1,\tau_2), \Sigma(\tau_1,\tau_2)$ can be done rigorously by identifying a kernel and showing that its spectrum has a finite number of eigenvalues~\cite{kitaev2015talks,maldacena2016remarks}. Here, we will follow the physics convention of leaving the measure over $G, \Sigma$ unspecified, obtaining the path integral
\begin{align}
    \E Z_\beta &= \int \cD^n \Omega\, \cD G\, \cD\Sigma\, \exp\Bigg\{\int_0^\beta \int_0^\beta d\tau_1 \,d\tau_2 \Bigg(\frac{J^2 n}{2p} G^p(\tau_1, \tau_2) \nonumber \\
    &\qquad \qquad \qquad \qquad \qquad \quad - \frac{n}{2} \Sigma(\tau_1, \tau_2) \left[G(\tau_1, \tau_2) - \frac{1}{n} \sum_{r,\mu} s_{r\mu}(\tau_1) s_{r\mu}(\tau_2)\right]\Bigg)\Bigg\}.
\end{align}
The final step is to apply the Hubbard-Stratonovich transformation
\begin{align}
    \exp[\frac{1}{2}x^T M x] = \lr{1/\sqrt{(2\pi)^n\det(M)}}\int d^n h \exp[-\frac{1}{2}h^T M^{-1} h + h^T x]
\end{align}
for Hermitian PSD matrix $M$. Letting $\cD^n h$ denote the Gaussian measure relating $h_{r\mu}(\tau_1), h_{s\nu}(\tau_2)$ by covariance $\delta_{rs}\delta_{\mu\nu}\Sigma(\tau_1,\tau_2)$, we find that
\begin{align}
    \EE{Z} &= \int \cD^n \Omega\, \cD G\, \cD\Sigma\, \cD^n h\, \exp\Bigg\{\int_0^\beta \int_0^\beta d\tau_1 \,d\tau_2 \Bigg(\frac{J^2 n}{2p} G^p(\tau_1, \tau_2) - \frac{n}{2}\Sigma(\tau_1, \tau_2) G(\tau_1, \tau_2)\Bigg) \nonumber \\
    &\qquad \qquad \qquad \qquad \qquad \qquad \quad  + \sum_{r,\mu} \int_0^\beta d\tau\, h_{r\mu}(\tau) s_{r\mu}(\tau)\Bigg\}.
\end{align}
This integral can now be evaluated in the $n\to\infty$ limit by the saddle point approximation, where we set derivatives with respect to $G$ and $\Sigma$ to zero. This yields saddle point equations
\begin{align}\label{eq:saddleannealed}
    J^2 G(\tau_1,\tau_2)^{p-1} = \Sigma(\tau_1, \tau_2), \qquad G(\tau_1, \tau_2) = \left\langle \cT \sum_\mu \sigma^\mu(\tau_1)\sigma^\mu(\tau_2) \right\rangle_\Sigma,
\end{align}
where $\cT$ denotes the time-ordering operator and $\langle \cdot \rangle_\Sigma$ is the thermal expectation of a single-qubit Hamiltonian with time-dependent fields sampled according to covariance $\Sigma(\tau_1,\tau_2)$. See \Cref{app:path} for a derivation of these saddle point equations.

\subsection{Replica trick}\label{sec:replica}

To compute and minimize the quenched free energy $\E \log Z_\beta$, we follow the prescription of the replica trick: we evaluate the partition function over $s \geq 1$ copies of the system and then analytically continue $s \to 0$ to obtain
\begin{align}\label{eq:replica}
    \E \log Z_\beta = \lim_{s\to 0} \frac{1}{s}\lr{\E Z_\beta^s - 1}.
\end{align}
Performing an analytic continuation can often be made rigorous; however, because we will evaluate $\E Z_\beta^s$ via the saddle point method by taking $n\to\infty$ first, the quantity $\E Z_\beta^s$ cannot be formally controlled. We must commute limits of $n$ and $s$, which is difficult to control formally.

When $s$ copies are introduced, the replicated partition function becomes $\E Z_\beta^s = e^{s X}$ for
\begin{align}\label{eq:action}
    X &= \int_0^\beta\int_0^\beta d\tau_1\, d\tau_2\, \lr{\frac{J^2}{2p} \sum_{a,b=1}^{s}G_{ab}^p(\tau_1,\tau_2) - \frac{1}{2} \sum_{a,b=1}^{s} \Sigma_{ab}(\tau_1,\tau_2) G_{ab}(\tau_1, \tau_2)}\nonumber\\
    &\quad + \log \mathbb{E}_h \int \cD\Omega\, \exp[\sum_{a=1}^{s} \sum_\mu \int_0^\beta d\tau\, h_\mu^a(\tau) s_\mu(\Omega^a(\tau))].
\end{align}
The saddle point equations (\eqref{eq:saddleannealed} in the annealed case) become
\begin{align}\label{eq:repsaddle}
    J^2 G_{ab}(\tau_1,\tau_2)^{p-1} = \Sigma_{ab}(\tau_1, \tau_2), \qquad G_{ab}(\tau_1, \tau_2) = \left\langle \cT \sum_\mu \sigma^\mu_a(\tau_1)\sigma^\mu_b(\tau_2) \right\rangle_\Sigma
\end{align}
for indices $a, b\in [s]$, and $\Sigma_{ab}(\tau_1,\tau_2)$ also carries replica indices. Here, we use notation
\begin{align}\label{eq:sigmaexp}
    \langle \xi \rangle_\Sigma = \frac{\E_h \Tr(\cT\, \xi \exp[\sum_a \sum_\mu \int_0^\beta d\tau\, h_\mu^a(\tau) \sigma_\mu^a(\tau)])}{\mathbb{E}_h \Tr(\cT\, \exp[\sum_a \sum_\mu \int_0^\beta d\tau\, h_\mu^a(\tau) \sigma_\mu^a(\tau)])},
\end{align}
where the expectation is evaluated over $h_\mu^a$ distributed as a Gaussian process with covariance $\delta_{\mu\nu}\Sigma_{ab}(\tau_1,\tau_2)$. To solve these saddle point equations at arbitrary $s$, we must simplify the form of $G$. Several properties can directly be seen to hold: when $a \neq b$, $G_{ab}(\tau_1,\tau_2) = G_{ab}$ is time-independent; and when replica indices coincide, $G_{aa}(\tau_1,\tau_2) = G(\tau_1-\tau_2)$ due to translation invariance. The form of the overlaps when $a\neq b$ is then specified by assuming an ansatz corresponding to either replica symmetry or replica symmetry breaking.

The simplest choice is the RS ansatz, which takes $G_{ab} = q_0$ for all $a \neq b$:{
  \setlength{\arraycolsep}{4pt}
  \renewcommand{\arraystretch}{1.1}
  \begin{align}\label{eq:rspic}
    \underbrace{%
      \begin{pmatrix}
        G(\tau_1-\tau_2) & q_0 & \cdots & q_0\\
        q_0 & G(\tau_1-\tau_2) & \cdots & q_0\\
        \vdots & \vdots & \ddots & \vdots\\
        q_0 & q_0 & \cdots & G(\tau_1-\tau_2)
      \end{pmatrix}
    }_{s\times s}
  \end{align}
}%
In~\cite{swingle2023bosonicmodelquantumholography}, the $p$-local ensemble was analyzed in the $p\to\infty$ assuming $q_0=0$. In \Cref{sec:p3}, we will analyze the model in the RS ansatz with generic $q_0$ and find a phase transition from $q_0=0$ to $q_0 \neq 0$, indicating glassiness.

To show non-glassiness for sufficiently large $p$, we will analyze the 1RSB solution. The 1RSB ansatz $s \times s$ matrix $G_{ab}(\tau_1,\tau_2)$ to have elements indexed by multi-indices $a=(a_0,a_1)$ for $a_0 \in [m], a_1 \in [s/m]$ and has the form
\begin{align}\
    G_{ab}(\tau_1,\tau_2) = G(\tau_1-\tau_2)\delta_{ab} + q_0(1-\delta_{a_1b_1}) + q_1(1-\delta_{a_0b_0})\delta_{a_1b_1}.
\end{align}
Pictorially, the 1RSB ansatz corresponds to setting the matrix $G(\tau_1,\tau_2)$ to be{
  \setlength{\arraycolsep}{4pt}
  \renewcommand{\arraystretch}{1.1}
  \begin{align}\label{eq:1rsbpic}
    \underbrace{%
      \begin{pmatrix}
        \underbrace{%
          \begin{smallmatrix}
            G(\tau_1-\tau_2)&q_1&\cdots&q_1\\
            q_1&G(\tau_1-\tau_2)&\cdots&q_1\\
            \vdots&\vdots&\ddots&\vdots\\
            q_1&q_1&\cdots&G(\tau_1-\tau_2)
          \end{smallmatrix}
        }_{m\times m}
        &  
        & q_0
      \\[1.5ex]
         & \ddots & 
      \\
        q_0
        & 
        & \underbrace{%
          \begin{smallmatrix}
            G(\tau_1-\tau_2)&q_1&\cdots&q_1\\
            q_1&G(\tau_1-\tau_2)&\cdots&q_1\\
            \vdots&\vdots&\ddots&\vdots\\
            q_1&q_1&\cdots&G(\tau_1-\tau_2)
          \end{smallmatrix}
        }_{m\times m}
      \end{pmatrix}
    }_{s\times s}
  \end{align}
}%
where $q_0$ indicates that all matrix elements outside the block diagonal are filled with $q_0$. For a classical spin glass, the diagonal would be time-independent and equal to 1. The matrix $G$ would be directly interpretable as an overlap matrix with the following operational meaning. Each row and column index labels a ``typical'' spin configuration; drawing two spin configurations independently from the Gibbs distribution corresponds to drawing a uniformly random row and column index; the matrix element $G_{ab} \in \{1, q_1, q_0\}$ corresponds to the resulting overlap. Note that this overlap matrix only holds with probability $1-o(1)$, that $q_0, q_1$ are the values that the overlaps concentrate to, and that the size $m$ of the interior blocks is also a random value that the system concentrates to.

In a quantum spin glass, the values $q_0, q_1$ no longer correspond to overlaps between spin configurations. Instead, the path integral is formally decomposed over spin coherent states, and the overlaps are between paths. Although the paths are nonphysical, the form of the ansatz (and the choice of spin coherent states) relates $q_0, q_1$ to the decomposition in our definition of glassiness (\Cref{def:introglass}). A state decomposed as
\begin{align}\label{eq:glassdecomp}
    \bm \rho \approx \sum_i c_i \bm \rho_i
\end{align}
has
\begin{align}
    \langle \bm\rho_i, \bm\rho_j \rangle_\pauli = q_0 (1-\delta_{ij}) + q_1 \delta_{ij}.
\end{align}
Thus, the value $q$ in \Cref{def:introglass} is given by $2(q_1-q_0)$. The values $q_0, q_1$ can also be interpreted as overlaps between paths in the path integral representation of the partition function due to the delta function
\begin{align}
    \delta\lr{G_{ab}(\tau_1, \tau_2) - \frac{1}{n}\sum_{r,\mu} s_{r\mu}^a(\tau_1) s_{r\mu}^b(\tau_2)}.
\end{align}
However, the interpretation as a decomposition of the Gibbs state is more useful for our purposes since the paths are nonphysical quantities. In contrast, results evaluating the dynamics of quantum spin glasses when weakly coupled to a bath show that the two-point function $\langle \sum_\mu \sigma^\mu(\tau_1) \sigma^\mu(\tau_2) \rangle$ converges to values $q_1$ then $q_0$ at late times, providing the necessary physical grounds for our decomposition~\cite{cugliandolo1993analytical,cugliandolo1998quantum,cugliandolo1999real,biroli2001quantum,kennett2001time,biroli2002out,cugliandolo2006dissipative,lang2024replica}.

When evaluating $\E \log Z_\beta$ with the 1RSB ansatz, one must choose the value of $m$ to take. When performing the analytic continuation $s \to 0$, the value will lie in $m \in (0, 1]$, with larger $m$ corresponding to smaller coefficients $c_i$ in the cluster decomposition \eqref{eq:glassdecomp}. In static 1RSB, one solves a saddle point equation $\partial_m X = 0$. As discussed extensively for the case of quantum spin glasses in~\cite{baldwin2023revisiting}, one \emph{maximizes} the free energy with respect to $m$ (due to the limit $s \to 0$) in order to obtain the correct quenched free energy. A solution of $m \in (0,1]$ corresponds to the Gibbs state being dominated by a constant number of clusters (i.e., in the notation of \Cref{def:shatterrsb}, constant $m \geq 2$).

In shattering (also known as dynamical 1RSB), the saddle point equation $\partial_m X = 0$ is unsatisfied even at the extremal value of $m=1$. To check for shattering, one uses the formalism of the Thouless-Anderson-Palmer (TAP) free energy~\cite{monasson1995structural,franz1998effective,mezard1987spin,thouless1977solution,biroli2001quantum}. Rather than solving the saddle point equation, $m=1$ is fixed and the saddle point equation is reinterpreted as a quantity known as the TAP complexity, which counts the number of ``TAP states'' $\bm \rho_i$ in the decomposition \eqref{eq:glassdecomp}. To determine if the system is glassy, the stability of the 1RSB ansatz is examined via the marginality condition, which checks the value of the so-called replicon eigenvalue. 

In this work, we will rule out glassiness for sufficiently large $p$ in \Cref{sec:1rsb} by analyzing the 1RSB saddle point equations. Since dynamical 1RSB is always followed by static 1RSB (as $m$ eventually decreases from 1), we avoid checking the marginality condition and will simply rule out static 1RSB.

Before going through these results, we comment that the glassiness of the $p=3$ model can be further analyzed by computing its 1RSB and full RSB solutions. The full RSB solution is obtained by a recursive ansatz that repeatedly places block matrices within the block matrices of the 1RSB ansatz; it can be written as the $k\to\infty$ limit of the general $k$RSB ansatz. We derive the full RSB saddle point equations in \Cref{app:frsb} and note that they can be numerically solved in a manner similar to~\cite{kavokine2024exact}. For the purpose of identifying a glass transition, however, we only require the analysis of \Cref{sec:p3}.

\subsection{Saddle point equations}\label{sec:saddles}
We report here the saddle point equations obtained from the RS, 1RSB and full RSB solutions. Due to the spin-1/2 operators, the saddle point equations cannot be analytically solved in closed form but can be numerically solved (see, e.g.,~\cite{swingle2023bosonicmodelquantumholography,kavokine2024exact}). We emphasize that these numerical results analytically take the $n\to\infty$ limit and are solving the quenched disorder at any given $p, \beta$ and thus reflect glassiness; they do not capture finite-$n$ or annealed physics. In \Cref{sec:p3} and \Cref{sec:1rsb} we both numerically solve these saddle point equations and analytically analyze them to show either the presence or absence of glassiness (depending on $p,\beta$).

The simplest saddle point equations are in the RS ansatz. Taking
\begin{align}
    G_{ab}(\tau_1, \tau_2) &= G(\tau_1-\tau_2)\delta_{ab} + q_0(1-\delta_{ab})\\
    \Sigma_{ab}(\tau_1, \tau_2) &= \Sigma(\tau_1-\tau_2)\delta_{ab} + \hat q_0(1-\delta_{ab}),
\end{align}
the action of \eqref{eq:action} becomes
\begin{align}
    X &= \frac{s}{2}\int_0^\beta\int_0^\beta d\tau_1\, d\tau_2\, \lr{\frac{J^2 G(\tau_1-\tau_2)^p}{p} - \Sigma(\tau_1-\tau_2) G(\tau_1-\tau_2)} + \frac{\beta^2 s(s-1)}{2}\lr{\frac{J^2 q_0^p}{p} - \hat q_0 q_0}\nonumber\\
    &\quad + \log \mathbb{E}_h \int \cD\Omega\, \exp[\sum_{a=1}^{s} \sum_\mu \int_0^\beta d\tau\, h_\mu^a(\tau) s_\mu(\Omega^a(\tau))].
\end{align}
The saddle point equations of \eqref{eq:repsaddle} then become
\begin{align}\label{eq:rssaddle}
    J^2 G(\tau_1-\tau_2)^{p-1} = \Sigma(\tau_1-\tau_2), \quad J^2 q_0^{q-1} = \hat q_0, \quad G(\tau_1-\tau_2) = \left\langle\sum_\mu \sigma_\mu^1(\tau_1) \sigma_\mu^1(\tau_2)\right\rangle, \quad q_0 = \left\langle\sum_\mu \sigma_\mu^1(\tau_1) \sigma_\mu^2(\tau_2)\right\rangle.
\end{align}
In \Cref{app:1rsb}, we derive the 1RSB saddle point equations. For our analysis, it is more useful to write them in terms of random variables
\begin{align}
    z_\mu \sim \cN(0, \delta_{\mu\nu}J^2 q_1^{p-1}), \qquad \xi_\mu(\tau) \sim \cN(0, \delta_{\mu\nu}(\Sigma(\tau_1-\tau_2) - J^2 q_1^{p-1}))
\end{align}
and quantities
\begin{align}
    \zeta(z) &= \E_{\xi_\mu(\tau)} \Tr( \cT \exp[\sum_\mu \int_0^\beta d\tau\, (z_\mu + \xi_\mu(\tau)) \sigma_\mu(\tau)] )\\
    a_\nu(z;\tau_1, \dots, \tau_k) &= \frac{\E_{\xi_\mu(\tau)} \Tr(\cT \sigma_\nu(\tau_1)\cdots\sigma_\nu(\tau_k) \exp[\sum_\mu \int_0^\beta d\tau\, (z_\mu+\xi_\mu(\tau)) \sigma_\mu(\tau)])}{\zeta(z)}.
\end{align}
We evaluate only the case where $q_0=0$, which suffices to rule out glassiness when $p\to\infty$. The saddle point equations are
\begin{align}
    \Sigma(\tau_1 - \tau_2) &= J^2 G(\tau_1 - \tau_2)^{p-1}\label{eq:1rsb-eq1}\\
    G(\tau_1-\tau_2) &= \frac{\E_z\left[\zeta(z)^m \sum_\nu a_\nu(z;\tau_1,\tau_2)\right]}{\E_z\left[\zeta(z)^m\right]}\\
    q_1 &= \frac{\E_z\left[\zeta(z)^m \sum_\nu a_\nu(z;\tau_1) a_\nu(z;\tau_2)\right]}{\E_z\left[\zeta(z)^m\right]}\label{eq:1rsb-eq3}\\
    0 &= \frac{(\beta J)^2}{2} \lr{1 - \frac{1}{p}} q_1^{p} + \frac{1}{m^2}\log \E_z \zeta(z)^m - \frac{1}{m}\frac{\E_z \zeta(z)^m \log \zeta(z)}{\E_z \zeta(z)^m}\label{eq:1rsb-eq4}.
\end{align}
The final saddle point equation at $m=1$ corresponds to the TAP complexity, i.e.,
\begin{align}\label{eq:tapS}
    S &= \frac{(\beta J)^2}{2} \lr{1 - \frac{1}{p}} q_1^{p} + \log \E_z \zeta(z) - \frac{\E_z \zeta(z) \log \zeta(z)}{\E_z \zeta(z)}.
\end{align}
Although we do not provide a numerical solution in this manuscript, we also derive in \Cref{app:frsb} the full RSB saddle point equations. To define full RSB, we recall that the 1RSB ansatz of \eqref{eq:1rsbpic} assumes $G$ has a block diagonal form. In 2RSB, each block is similarly assumed to be block-diagonal; this introduces a new parameter $q_2$ satisfying $q_2 > q_1 > q_0$. (If $q_2=q_1$, then the model is simply 1RSB.) This hierarchical ansatz of block matrices inside block matrices can be generalized to $k$RSB for arbitrary integer $k$. Formally, the matrix is $G^\rsbk$ can be recursively constructed from $G:[-1,1]\to\mathbb{R}$ and two sequences $(q_t)_{0,\dots, k}$ and $(m_t)_{0,\dots,k}$. We set $m_0=n$ and define the $k$th stage of RSB according to
\begin{align}
    G^\rsbk_k &= (q_k - q_{k-1})\,U_{m_k}, \\
    G^\rsbk_p &= \operatorname{diag}_{\,m_t/m_{t+1}}\lr{G^\rsbk_{t+1}} + (q_t - q_{t-1})U_{m_t} \text{for all } t\in\{0,\dots,k-1\} \\
    G^\rsbk &\equiv G^\rsbk_0.
\end{align}
Here, $U_m$ is the $m\times m$ matrix of ones, $\operatorname{diag}_r(A)$ is the block-diagonal matrix with $r$ copies of $A$, $m_0=n$, and $q_{-1}=0$. The ansatz of \emph{full RSB} takes the $k\to\infty$ limit and thus replaces $q_0, \dots, q_k$ with the so-called Parisi order parameter $q(x)$ for $x \in [0,1]$, which monotonically increases. If $q(x)$ is constant, the model is replica symmetric; if it is a step function, the model has 1RSB; in general, $q(x)$ can be continuous. We derive the full RSB saddle point equations in \Cref{app:frsb} and report the result here. For single-qubit action
\begin{align}
    \cS_\infty(h) &= \cS_0[s(\tau)] - \int_0^1 d\tau_1\, d\tau_2\, \frac{(\beta J)^2}{2}\lr{G(\tau_1-\tau_2)^{p-1} - q(1)^{p-1}} s(\tau_1) \cdot s(\tau_2) - \beta \int_0^1 d\tau\, h \cdot s(\tau)
\end{align}
we have
\begin{align}
    G(\tau) = \int dh\, P(1,h) \langle \cT \, s(\tau) \cdot s(0) \rangle_{\cS_\infty(h)}, \qquad q(x) = \int dh \, P(x,h) u(x,h)^2
\end{align}
for $P, u$ obtained by solving
\begin{align}
    \frac{\partial P}{\partial x} &= \frac{J^2}{2} \frac{dq^{p-1}}{dx} \lr{\nabla^2 P - 2\beta x \nabla(u \cdot P)}, \qquad P(0, h) = \delta(h)\\
    \frac{\partial u}{\partial x} &= -\frac{J^2}{2} \frac{dq^{p-1}}{dx} \lr{\nabla^2 u + 2 \beta x (u\cdot \nabla) u}, \qquad u(1,h) = \int_0^1 d\tau\, \langle s(\tau) \rangle_{\cS_\infty(h)}.
\end{align}
The free energy is then
\begin{align}
    F &= \frac{\beta J^2}{2}\lr{1 - \frac{1}{p}} \int_0^1 d\tau\, \lr{G(\tau)^p - \int_0^1 dx\, q(x)^p} - \frac{1}{\beta}\phi(0,0)
\end{align}
for
\begin{align}
    \frac{\partial \phi}{\partial x} &= -\frac{J^2}{2} \frac{dq^{p-1}}{dx} \lr{\nabla^2 \phi + x(\nabla \phi)^2}, \qquad \phi(1,h) = \log \int \cD \Omega \exp[-\cS_\infty(h)].
\end{align}

\subsection{Glassiness of the 3-local Hamiltonian ensemble}\label{sec:p3}

\begin{figure}
    \centering
    \includegraphics[width=0.45\linewidth]{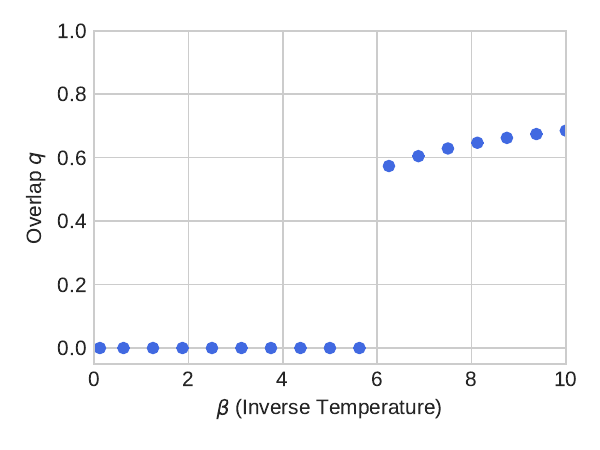}
    \caption{Numerical solutions to the replica saddle point equations indicate the presence of replica symmetry breaking in the $p=3$ model. Here, values of the overlap $q_0$ in the RS solution of the $p=3$ model are plotted over various values of inverse temperature $\beta$ with $J=1$ throughout. A phase transition in the off-diagonal value indicates that the RS ansatz is incorrect at some temperature $\beta \approx 6$, demonstrating a glass transition at constant temperature. We note that because the glass transition likely occurs before the off-diagonals of the RS ansatz become nonzero, the RS solver is fairly unstable and the precise temperature of the glass transition cannot be deduced from this plot alone; $\beta \approx 6$ only provides an approximate estimate of the transition.} 
    \label{fig:rs}
\end{figure}

In \Cref{fig:rs}, we numerically solve the RS saddle point equations~\eqref{eq:rssaddle} with a generic off-diagonal $q_0$ in the RS ansatz, using CTSEG~\cite{kavokine2025ctseg} and code adapted from the recent solution of the quantum Heisenberg model~\cite{kavokine2024exact}. This gives $G(\tau)$ and $q_0$ via a continuous-time quantum Monte Carlo algorithm (\Cref{app:rs}. By inspecting $q_0$, we find a clear phase transition at $\beta \approx 6$, indicating the presence of replica symmetry breaking. This is consistent with the breakdown of RS and the need for a 1RSB ansatz (which has two different off-diagonal values). We conclude that a glass transition occurs at some $\beta J \lesssim 6$. By numerically solving the 1RSB saddle point equations, one could in principle identify the precise value of $\beta J$ where the glass transition occurs; one could also solve the full RSB saddle point equations at even lower temperature to fully solve the physics of the ensemble. We leave such numerical solutions for future work.

We also give an analytical estimate of the glass transition temperature. We compute the stability of the RS solution under the constant trial function $G(\tau)=G$. This suggests the glass transition occurs at inverse temperature $\beta J \gtrsim 4$. Moreover, it closely corroborates our numerical results: if we set $q_0 = 0.6$, our analytical approximation suggests a transition at $\beta J \approx 5.9$, which closely matches the transition at $\beta J \approx 6$ observed in \Cref{fig:rs}. Our method resembles that of~\cite{bray1980replica} used to predict the glass transition temperature of the quantum Heisenberg model, which was recently numerically confirmed via an exact full RSB solution~\cite{kavokine2024exact}.

We start from the action $X$ of \eqref{eq:action} specialized to the RS ansatz with zero on the off-diagonals, which we denote by $X^{(0)}$, i.e.,
\begin{align}
    X^{(0)} &= \frac{s}{2}\int_0^\beta\int_0^\beta d\tau_1\,d\tau_2\, \lr{\frac{J^2G(\tau_1-\tau_2)^p}{p} - \Sigma(\tau_1-\tau_2)G(\tau_1-\tau_2)} + \log \mathbb{E}_h \int  \cD\Omega\, \exp[\sum_{a=1}^{s} \sum_\mu \int_0^\beta d\tau\, h_\mu^a(\tau) s_\mu(\Omega^a(\tau))].
\end{align}
Recall that the corresponding saddle point equations are
\begin{align}
    \Sigma(\tau) = J^2 G(\tau)^{p-1}, \qquad G(\tau) = \left\langle \sum_\mu \sigma_\mu(\tau) \sigma_\mu(0) \right\rangle_\Sigma.
\end{align}
We evaluate $F$ with the trial function $G(\tau, \tau') = G$ independent of $\tau, \tau'$ to obtain free energy
\begin{align}
    F \leq \frac{\beta J^2}{2}\lr{1 - \frac{1}{p}} G^p - \frac{1}{\beta} \log \zeta(G)
\end{align}
where we can exactly evaluate
\begin{align}
    \zeta(G) &= \lim_{s\to 0}\partial_s \E_h \int  \cD\Omega\, \exp[\sum_{a=1}^{s} \sum_\mu \int_0^\beta d\tau\, h_\mu^a(\tau) s_\mu(\Omega^a(\tau))]\\
    &= \Tr \cT \exp[\frac{\beta^2 J^2 G^{p-1}}{2} \lr{\int_0^1 d\tau \, \vec \sigma(\tau)}^2]\\
    &= 2\lr{1 + \frac{\beta^2 J^2 G^{p-1}}{4}} \exp[\frac{\beta^2 J^2 G^{p-1}}{8}]
\end{align}
using the identity~\cite{bray1980replica}
\begin{align}
    \Tr \cT \exp[\alpha\lr{\int_0^1 d\tau \, \vec \sigma(\tau)}^2] &= 2\lr{1 + \frac{\alpha}{2}} \exp[\frac{\alpha}{4}].
\end{align}
This gives variational bound
\begin{align}
    F \leq \frac{\beta J^2}{2}\lr{1 - \frac{1}{p}} G^p - \frac{1}{\beta} \log \left[2\lr{1 + \frac{\beta^2 J^2 G^{p-1}}{4}} \exp[\frac{\beta^2 J^2 G^{p-1}}{8}]\right].
\end{align}
We minimize the RHS at $p=3$ by choosing
\begin{align}\label{eq:gopt}
    G &= \frac{\beta^2 J^2 + \beta J \mathcal A^{1/3} + \mathcal A^{2/3} - 192}{12 \beta J \mathcal A^{1/3}} \text{ for } \mathcal A = \beta^3 J^3 + 24\sqrt{9\beta^4 J^4 + 9024 \beta^2 J^2 + 12288} + 2304\,\beta J.
\end{align}
Note that as $\beta\to\infty$, this gives
\begin{align}
    F \leq - \frac{\beta J^2}{384}
\end{align}
to leading order; since it is nonphysical for the free energy to diverge, this computation is consistent with the presence of a glass phase. However, it is known that this computation does not necessarily imply glassiness: a more careful treatment must first maximize the free energy with respect to replica parameters and then minimize with respect to conventional parameters~\cite{baldwin2023revisiting}.

To estimate the temperature of the glass transition, we estimate the appropriate value of $G$ and solve for $\beta J$ to obtain the temperature of the glass transition. We take the action
\begin{align}
    X &= \int_0^\beta\int_0^\beta d\tau_1\, d\tau_2\, \lr{\frac{J^2}{2p} \sum_{a,b=1}^{s}G_{ab}^p(\tau_1,\tau_2) - \frac{1}{2} \sum_{a,b=1}^{s} \Sigma_{ab}(\tau_1,\tau_2) G_{ab}(\tau_1, \tau_2)}\nonumber\\
    &\quad + \log \mathbb{E}_h \int \cD\Omega\, \exp[\sum_{a=1}^{s} \sum_\mu \int_0^\beta d\tau\, h_\mu^a(\tau) s_\mu(\Omega^a(\tau))].
\end{align}
with an RS ansatz that takes nonzero off-diagonals (i.e., $G_{ab}(\tau_1,\tau_2) = q_0, \Sigma_{ab}(\tau_1,\tau_2) = \hat q_0$ when $a \neq b$) to obtain
\begin{align}
    X &= \frac{s}{2}\int_0^\beta\int_0^\beta d\tau_1\,d\tau_2\, \lr{\frac{J^2G(\tau_1-\tau_2)^p}{p} - \Sigma(\tau_1-\tau_2)G(\tau_1-\tau_2)} + \frac{\beta^2(s-1)s}{2}\lr{\frac{J^2q_0^p}{p} - \hat q_0 q_0}\nonumber \\
    &\quad + \log \mathbb{E}_h \int  \cD\Omega\, \exp[\sum_{a=1}^{s} \sum_\mu \int_0^\beta d\tau\, h_\mu^a(\tau) s_\mu(\Omega^a(\tau))].
\end{align}
We expand the final term around the RS ansatz with vanishing off-diagonals. Writing it as
\begin{align}
    A(\Sigma) = \log \Tr \cT \exp[\frac{1}{2}\sum_{a,b=1}^s \sum_\mu \int_0^\beta d\tau_1\, d\tau_2\, \Sigma_{ab}(\tau_1,\tau_2) \sigma_\mu^a(\tau_1) \sigma_\mu^b(\tau_2)]
\end{align}
we expand around the $\hat q_0 = 0$ in $\Sigma^{(0)}$ to obtain
\begin{align}
    A(\Sigma) \approx A(\Sigma^{(0)}) + \frac{1}{2} \sum_{a \neq b} \sum_\mu \int_0^\beta d\tau_1\,d\tau_2\, \hat q_0 \langle \sigma_\mu^a(\tau_1) \sigma_\mu^b(\tau_2) \rangle_{\Sigma^{(0)}} + \frac{1}{8} \sum_{\substack{a \neq b\\c \neq d}} \sum_{\mu,\nu} \int_0^\beta d\tau_1\cdots d\tau_4 \hat q_0^2 \langle \sigma_\mu^a(\tau_1) \sigma_\mu^b(\tau_2); \sigma_\nu^c(\tau_3) \sigma_\nu^d(\tau_4) \rangle_{\Sigma^{(0)}}^\mathrm{conn}
\end{align}
for connected correlator $\langle A;B \rangle^\mathrm{conn} = \langle AB \rangle - \langle A \rangle \langle B \rangle$. Replica independence in $\Sigma^{(0)}$ gives for $a \neq b, c\neq d$ that
\begin{align}
    \langle \sigma_\mu^a(\tau_1) \sigma_\mu^b(\tau_2) \rangle_{\Sigma^{(0)}} &= 0\\
    \left\langle \sum_{\mu\nu} \sigma_\mu^a(\tau_1) \sigma_\mu^b(\tau_2);\sigma_\nu^c(\tau_3) \sigma_\nu^d(\tau_4) \right\rangle_{\Sigma^{(0)}}^\mathrm{conn} &= \delta_{ac}\delta_{bd} \sum_\mu \langle \sigma_\mu(\tau_1) \sigma_\mu(\tau_3) \rangle_{\Sigma^{(0)}} \langle \sigma_\mu(\tau_2) \sigma_\mu(\tau_4) \rangle_{\Sigma^{(0)}} \nonumber\\
    &\qquad + \delta_{ad}\delta_{bc} \sum_\mu \langle \sigma_\mu(\tau_1) \sigma_\mu(\tau_4) \rangle_{\Sigma^{(0)}} \langle \sigma_\mu(\tau_2) \sigma_\mu(\tau_3) \rangle_{\Sigma^{(0)}},
\end{align}
where we also used $\langle \sigma_\mu(\tau) \sigma_\nu(\tau')\rangle_{\Sigma^{(0)}} = \delta_{\mu\nu} \langle \sigma_\mu(\tau) \sigma_\mu(\tau') \rangle_{\Sigma^{(0)}}$. We now apply saddle point equations
\begin{align}
    \hat q_0 = J^2 q_0, \qquad G(\tau-\tau') = \sum_\mu \langle \sigma_\mu(\tau) \sigma_\mu(\tau') \rangle_{\Sigma^{(0)}}.
\end{align}
This gives
\begin{align}
    A(\Sigma) \approx A(\Sigma^{(0)}) + \frac{J^4 q_0^{2p-2} s(s-1)}{12} \lr{\int_0^\beta d\tau_1\,d\tau_2\, G(\tau_1-\tau_2)}^2
\end{align}
and thus
\begin{align}
    X &\approx X^{(0)} - \frac{(\beta J)^2 q_0^p (s-1)s}{2}\lr{1-\frac{1}{p}} + \frac{J^4 q_0^{2p-2} s(s-1)}{12} \lr{\int_0^\beta d\tau_1\,d\tau_2\, G(\tau_1-\tau_2)}^2.
\end{align}
Under our ansatz that $G(\tau) = G$ is constant, this gives
\begin{align}
    X &\approx X^{(0)} + \frac{(\beta J)^2 q_0^p s(s-1)}{2}\left[-\lr{1-\frac{1}{p}} + \frac{(\beta J)^2 q_0^{p-2}}{6} G^2\right].
\end{align}
At the glass transition, one would expect the RS solution to become marginally stable; i.e., the perturbation to $X^{(0)}$ should vanish. This gives
\begin{align}
    1 - \frac{1}{p} = \frac{(\beta J)^2 q_0^{p-2}}{6} G^2 \implies G = \frac{1}{\beta J} \sqrt{\frac{6(p-1)}{p q_0^{p-2}}}.
\end{align}
We plug this into \eqref{eq:gopt} at $p=3$ and set $q_0 < 1$ to obtain approximate bound $\beta J \gtrsim 4$. Note that if we take $q_0 \approx 0.6$ from \Cref{fig:rs} (which solves $G(\tau)$ via the RS saddle point equations rather than assuming $G(\tau) = G$), we obtain $\beta J \approx 5.9$, which closely agrees with the value of $\beta \approx 6$ shown in \Cref{fig:rs}. We also note that this yields $G \approx 0.4$ (and which decreases further with larger $\beta$), similar to the quenched saddle and dissimilar to the $G \to 1$ annealed saddle observed in~\cite{swingle2023bosonicmodelquantumholography}.

\subsection{Non-glassiness of the \texorpdfstring{$q$}{q}-local Hamiltonian ensemble for large \texorpdfstring{$q$}{q}}\label{sec:1rsb}

Here, we show that the 1RSB saddle point equations cannot be satisfied for any $\beta J$ satisfying $(\beta J)^2 q_1^{p-1} \to 0$ as $p \to \infty$. For $q_1$ bounded away from 0, this precludes any $\beta J = \exp[o(p)]$, which is the form of the result stated in the introduction. Since a glass transition would occur at $m=1$, it suffices to rule out a $q_1 \neq 0$ solution to the 1RSB saddle point equations at $m=1$. We recall that the final saddle point equation can be written as (using \eqref{eq:tapS})
\begin{align}
    S = \frac{(\beta J)^2}{2} \lr{1 - \frac{1}{p}} q_1^{p} + \log \E_z \zeta(z) - \frac{\E_z \zeta(z) \log \zeta(z)}{\E_z \zeta(z)} = 0.
\end{align}
For $Q = \cN(0, \hat q_1 \delta_{\mu\nu})$ and measure
\begin{align}
    dP(z) = \frac{\zeta(z)}{\E_Q \zeta} dQ(z),
\end{align}
we can rewrite this quantity as (see \Cref{lem:kl} in \Cref{app:1rsb})
\begin{align}
    S = \frac{(\beta J)^2 q_1^p}{2}\lr{1 - \frac{1}{p}} - D_\KL(P||Q).
\end{align}
Observe that if $q_1 = 0$, then $S$ cannot be positive; we will show that if $q_1 \neq 0$, then $S > 0$ for all $(\beta J)^2 q_1^{p-1} = o(1)$ in asymptotically large $p$, preventing the saddle point equations from being satisfied. To complete this argument, it suffices to upper-bound $D_\KL(P||Q)$.

The Gaussian log-Sobolev inequality gives an upper bound on the KL divergence
\begin{align}
    D_\KL(P||Q) \leq \frac{\hat q_1}{2} \E_P \norm{\nabla \log \zeta(z)}^2.
\end{align}
Using the saddle point equation at $m=1$
\begin{align}
    q_1 = \frac{\E_z \zeta(z) \sum_\nu a_\nu(z;\tau_1) a_\nu(z;\tau_2)}{\E_z \zeta(z)},
\end{align}
one can compute that
\begin{align}
    \E_P \norm{\nabla \log \zeta(z)}^2 = \E_P \sum_\nu \lr{\int_0^\beta a_\nu(z;\tau)d\tau}^2 = \beta^2 \frac{\E_z \zeta(z) \sum_\nu a_\nu(z;\tau_1) a_\nu(z;\tau_2)}{\E_z \zeta(z)} = \beta^2 q_1.
\end{align}
The saddle point equation
\begin{align}
    \hat q_1 = J^2 q_1^{p-1}
\end{align}
then gives upper bound
\begin{align}
    D_\KL(P||Q) \leq \frac{(\beta J)^2 q_1^p}{2}.
\end{align}
Unfortunately, this yields a bound $S \geq S \geq -\frac{(\beta J)^2 q_1^p}{2p}$ that is too loose. In \Cref{app:1rsb}, we instead use the sharpened log-Sobolev inequality~\cite{fathi2014quantitative}
\begin{align}
    D_\KL(P||Q) \leq \frac{(\beta J)^2q_1^q}{2} \cdot \frac{1 - \lambda + \lambda \log \lambda}{(1-\lambda)^2}
\end{align}
for a Poincar\'e constant $\lambda$ computable in terms of a two-point correlator. For sufficiently small $\beta$, we find that the factor $\frac{1 - \lambda + \lambda \log \lambda}{(1-\lambda)^2}$ yields a nontrivial improvement over Gaussian log-Sobolev due to strong convexity of the potential associated with $P$. This gives the following result, which is rigorous assuming the 1RSB saddle point equations~\eqref{eq:1rsb-eq1}-\eqref{eq:1rsb-eq3}. The proof is in \Cref{app:1rsb}.

\begin{theorem}[No nontrivial solution to 1RSB saddle point equations]\label{thm:no1rsb}
    Let $\lambda = 1 - (\beta J)^2 q_1^{p-1} \geq 1 - x_*$ for some $x_* < 1$. Then the quantity $S$ in \eqref{eq:tapS} satisfies
    \begin{align}
        S \geq \frac{(\beta J)^2 q_1^p}{2}\lr{1 - \frac{1}{p} - \frac{1-\lambda + \lambda \log \lambda}{(1-\lambda)^2}}.
    \end{align}
    In particular, if $(\beta J)^2 q_1^{p-1} \to 0$ as $p \to \infty$, then
    \begin{align}
        S \geq \frac{(\beta J)^2 q_1^p}{4}\lr{1 + o_p(1)}.
    \end{align}
\end{theorem}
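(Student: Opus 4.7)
The plan is to reduce the inequality to an upper bound on $D_\KL(P\|Q)$, then apply the Fathi--Indrei--Ledoux sharpening of the Gaussian log-Sobolev inequality, with the ``sharpening parameter'' $\lambda$ obtained from a direct spectral estimate on the Hessian of $-\log\zeta$. First I would use the reformulation
\begin{equation}
    S = \frac{(\beta J)^2 q_1^p}{2}\lr{1 - \frac{1}{p}} - D_\KL(P\|Q),
\end{equation}
established in \Cref{app:1rsb}, so that proving the theorem reduces to showing $D_\KL(P\|Q) \leq \frac{(\beta J)^2 q_1^p}{2}\cdot\frac{1-\lambda+\lambda\log\lambda}{(1-\lambda)^2}$. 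As a sanity check, the plain Gaussian log-Sobolev inequality applied to $Q=\cN(0,\hat q_1\delta_{\mu\nu})$, combined with the computation $\E_P\|\nabla\log\zeta(z)\|^2 = \beta^2 q_1$ (using the saddle point equation \eqref{eq:1rsb-eq3}) and $\hat q_1 = J^2 q_1^{p-1}$, already produces the correct prefactor $\frac{(\beta J)^2 q_1^p}{2}$; only the factor $\frac{1-\lambda+\lambda\log\lambda}{(1-\lambda)^2}$ remains to be extracted.

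Second, I would invoke the Fathi--Indrei--Ledoux sharpened log-Sobolev~\cite{fathi2014quantitative}: if the measure $P$, viewed as a perturbation of the Gaussian $Q$, has a potential that is strongly log-concave with normalized Poincar\'e-type constant $\lambda\in(0,1]$, then the Gaussian log-Sobolev constant improves by the multiplicative factor $\frac{1-\lambda+\lambda\log\lambda}{(1-\lambda)^2}$. Writing the potential of $P$ as $V(z) = \tfrac{1}{2\hat q_1}|z|^2 - \log\zeta(z)$, identifying the right $\lambda$ reduces to establishing the Brascamp--Lieb--type bound
\begin{equation}
    \nabla^2 V(z) \succeq \hat q_1^{-1}\lr{1 - (\beta J)^2 q_1^{p-1}}\,I \quad \text{uniformly in } z.
\end{equation}
By direct differentiation, $\nabla^2\log\zeta(z)$ is the covariance matrix of the time-averaged Pauli observables $\int_0^\beta a_\nu(z;\tau)\,d\tau$ under the effective single-qubit (time-ordered) path measure inside $\zeta$. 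Since each Pauli satisfies $\|\sigma_\mu\|=1$, the time-ordered two-point function is bounded by $1$ in absolute value, so integrating over $[0,\beta]^2$ gives $\nabla^2\log\zeta(z)\preceq \beta^2 I$. Combined with the $\hat q_1^{-1}I$ from the quadratic term of $V$, this yields the claimed strong-convexity constant $\hat q_1^{-1}\lambda$ with $\lambda = 1-\beta^2\hat q_1 = 1-(\beta J)^2 q_1^{p-1}$.

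The first statement of the theorem then follows by substitution. For the asymptotic claim, the Taylor expansion $1-\lambda+\lambda\log\lambda = \tfrac{1}{2}(1-\lambda)^2 + O((1-\lambda)^3)$ shows that $\frac{1-\lambda+\lambda\log\lambda}{(1-\lambda)^2}\to\tfrac{1}{2}$ as $(\beta J)^2 q_1^{p-1}\to 0$, and substituting yields $S\geq \frac{(\beta J)^2 q_1^p}{2}\lr{\tfrac{1}{2} - \tfrac{1}{p} - o(1)} = \frac{(\beta J)^2 q_1^p}{4}(1+o_p(1))$. The hard step is the uniform-in-$z$ Hessian bound: because $\zeta$ is a disorder-averaged imaginary-time-ordered path integral, I expect the main subtlety to be verifying that the naive ``variance $\geq 0$'' and ``operator norm $\leq \beta^2$'' intuitions genuinely survive the time ordering and the auxiliary Gaussian average over $\xi_\mu(\tau)$. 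A secondary bookkeeping obstacle is matching the normalizations so that the $\lambda$ produced by the strong-convexity estimate above is literally the $\lambda$ appearing in the Fathi--Indrei--Ledoux bound, which is cleanest after rescaling $z\mapsto z/\sqrt{\hat q_1}$ to turn $Q$ into the standard Gaussian.
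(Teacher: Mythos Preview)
Your proposal is correct and matches the paper's approach essentially step for step: the paper likewise rewrites $S$ via the KL identity (\Cref{lem:kl}), applies the Fathi--Indrei--Ledoux sharpening after rescaling $z\mapsto\sqrt{\hat q_1}\,x$ to a standard Gaussian reference, and obtains the Poincar\'e constant $\lambda=1-(\beta J)^2 q_1^{p-1}$ from the uniform Hessian bound $\nabla^2\log\zeta\preceq\beta^2 I$ coming from the crude estimate $|\langle\sigma_\nu(\tau_1)\sigma_\nu(\tau_2)\rangle^{\mathrm{conn}}|\leq 1$. Your identification of the time-ordering/auxiliary-$\xi$ average as the main subtlety is apt---the paper treats this step at the same physics level of rigor you anticipate.
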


Since the 1RSB saddle point \eqref{eq:1rsb-eq4} is $S=0$ at $m=1$, we conclude that the saddle point equations cannot cannot be satisfied at $m=1$ for sufficiently large $p$ unless $q_1 = 0$. Since glassiness (e.g., the transition from dynamical to static 1RSB) is expected to occur at $m=1$, this provides strong evidence that there is no glassy phase for any $p \geq p_*$. By naively combining the numerical results of \Cref{fig:glass} with \Cref{thm:no1rsb}, we can obtain an extremely rough estimate the value of $p_*$.

\begin{remark}[Non-glassy $p$]
    Fixing $\beta J \approx 4$ and $q_1 \approx 0.6$ from \Cref{fig:rs}, \Cref{thm:no1rsb} implies non-glassiness for all $p \gtrsim 8$.
\end{remark}

The prior finite-$n$ numerical results of~\cite{swingle2023bosonicmodelquantumholography} at the ground state ($\beta$ extensively growing with $n$) suggested non-glassiness for all $p\gtrsim 5$. Consistent with this prediction --- and indeed roughly with the value of non-glassy $p$ --- our analytical result instead addresses constant temperature. We also note that in the classical case, the large-$p$ Ising $p$-spin model has a static 1RSB phase at inverse temperature $\beta J = \sqrt{2 \log 2}$ up to exponentially small corrections in $p$~\cite{talagrand2000rigorous}; this is much smaller than the $\beta J = \exp[o(p)]$ regime in which our result holds. Hence, our result provides strong evidence that the $p$-local Pauli ensemble is non-glassy at any constant temperature for all $p > p_*$ even at a fairly small constant $p_*$.

\begin{acknowledgments}
    ERA is funded in part by the Walter Burke Institute for Theoretical Physics at Caltech. AZ is supported by the Hertz Foundation. The authors thank David Gamarnik and Brian Swingle for discussions on an early version of this work, and especially thank Michael Winer for many conversations about spin glass techniques and the physics of $p$-local Hamiltonians. BK thanks Nikita Kavokine for helpful feedback about running the numerical simulations. AZ thanks Damien Barbier for an introduction of the replica method, Vittorio Erba and Bruno Loureiro for the Les Houches workshop ``Towards a theory for typical-case algorithmic hardness'', and Giulio Biroli, Leticia Cugliandolo, Laura Foini, Aram Harrow, and Florent Krzakala for helpful discussions about quantum spin glasses.
\end{acknowledgments}

\bibliography{main}

\begin{thebibliography}{97}%
\makeatletter
\providecommand \@ifxundefined [1]{%
 \@ifx{#1\undefined}
}%
\providecommand \@ifnum [1]{%
 \ifnum #1\expandafter \@firstoftwo
 \else \expandafter \@secondoftwo
 \fi
}%
\providecommand \@ifx [1]{%
 \ifx #1\expandafter \@firstoftwo
 \else \expandafter \@secondoftwo
 \fi
}%
\providecommand \natexlab [1]{#1}%
\providecommand \enquote  [1]{``#1''}%
\providecommand \bibnamefont  [1]{#1}%
\providecommand \bibfnamefont [1]{#1}%
\providecommand \citenamefont [1]{#1}%
\providecommand \href@noop [0]{\@secondoftwo}%
\providecommand \href [0]{\begingroup \@sanitize@url \@href}%
\providecommand \@href[1]{\@@startlink{#1}\@@href}%
\providecommand \@@href[1]{\endgroup#1\@@endlink}%
\providecommand \@sanitize@url [0]{\catcode `\\12\catcode `\$12\catcode `\&12\catcode `\#12\catcode `\^12\catcode `\_12\catcode `\%12\relax}%
\providecommand \@@startlink[1]{}%
\providecommand \@@endlink[0]{}%
\providecommand \url  [0]{\begingroup\@sanitize@url \@url }%
\providecommand \@url [1]{\endgroup\@href {#1}{\urlprefix }}%
\providecommand \urlprefix  [0]{URL }%
\providecommand \Eprint [0]{\href }%
\providecommand \doibase [0]{https://doi.org/}%
\providecommand \selectlanguage [0]{\@gobble}%
\providecommand \bibinfo  [0]{\@secondoftwo}%
\providecommand \bibfield  [0]{\@secondoftwo}%
\providecommand \translation [1]{[#1]}%
\providecommand \BibitemOpen [0]{}%
\providecommand \bibitemStop [0]{}%
\providecommand \bibitemNoStop [0]{.\EOS\space}%
\providecommand \EOS [0]{\spacefactor3000\relax}%
\providecommand \BibitemShut  [1]{\csname bibitem#1\endcsname}%
\let\auto@bib@innerbib\@empty
\bibitem [{\citenamefont {M{\'e}zard}\ \emph {et~al.}(1987)\citenamefont {M{\'e}zard}, \citenamefont {Parisi},\ and\ \citenamefont {Virasoro}}]{mezard1987spin}%
  \BibitemOpen
  \bibfield  {author} {\bibinfo {author} {\bibfnamefont {M.}~\bibnamefont {M{\'e}zard}}, \bibinfo {author} {\bibfnamefont {G.}~\bibnamefont {Parisi}},\ and\ \bibinfo {author} {\bibfnamefont {M.~A.}\ \bibnamefont {Virasoro}},\ }\href@noop {} {\emph {\bibinfo {title} {Spin glass theory and beyond: An Introduction to the Replica Method and Its Applications}}},\ Vol.~\bibinfo {volume} {9}\ (\bibinfo  {publisher} {World Scientific Publishing Company},\ \bibinfo {year} {1987})\BibitemShut {NoStop}%
\bibitem [{\citenamefont {Charbonneau}\ \emph {et~al.}(2023)\citenamefont {Charbonneau}, \citenamefont {Marinari}, \citenamefont {Parisi}, \citenamefont {Ricci-tersenghi}, \citenamefont {Sicuro}, \citenamefont {Zamponi},\ and\ \citenamefont {Mezard}}]{charbonneau2023spin}%
  \BibitemOpen
  \bibfield  {author} {\bibinfo {author} {\bibfnamefont {P.}~\bibnamefont {Charbonneau}}, \bibinfo {author} {\bibfnamefont {E.}~\bibnamefont {Marinari}}, \bibinfo {author} {\bibfnamefont {G.}~\bibnamefont {Parisi}}, \bibinfo {author} {\bibfnamefont {F.}~\bibnamefont {Ricci-tersenghi}}, \bibinfo {author} {\bibfnamefont {G.}~\bibnamefont {Sicuro}}, \bibinfo {author} {\bibfnamefont {F.}~\bibnamefont {Zamponi}},\ and\ \bibinfo {author} {\bibfnamefont {M.}~\bibnamefont {Mezard}},\ }\href@noop {} {\emph {\bibinfo {title} {Spin glass theory and far beyond: replica symmetry breaking after 40 years}}}\ (\bibinfo  {publisher} {World Scientific},\ \bibinfo {year} {2023})\BibitemShut {NoStop}%
\bibitem [{\citenamefont {Kirkpatrick}\ and\ \citenamefont {Selman}(1994)}]{kirkpatrick1994critical}%
  \BibitemOpen
  \bibfield  {author} {\bibinfo {author} {\bibfnamefont {S.}~\bibnamefont {Kirkpatrick}}\ and\ \bibinfo {author} {\bibfnamefont {B.}~\bibnamefont {Selman}},\ }\bibfield  {title} {\bibinfo {title} {Critical behavior in the satisfiability of random boolean expressions},\ }\href@noop {} {\bibfield  {journal} {\bibinfo  {journal} {Science}\ }\textbf {\bibinfo {volume} {264}},\ \bibinfo {pages} {1297} (\bibinfo {year} {1994})}\BibitemShut {NoStop}%
\bibitem [{\citenamefont {Monasson}\ and\ \citenamefont {Zecchina}(1997)}]{monasson1997statistical}%
  \BibitemOpen
  \bibfield  {author} {\bibinfo {author} {\bibfnamefont {R.}~\bibnamefont {Monasson}}\ and\ \bibinfo {author} {\bibfnamefont {R.}~\bibnamefont {Zecchina}},\ }\bibfield  {title} {\bibinfo {title} {Statistical mechanics of the random k-satisfiability model},\ }\href@noop {} {\bibfield  {journal} {\bibinfo  {journal} {Physical Review E}\ }\textbf {\bibinfo {volume} {56}},\ \bibinfo {pages} {1357} (\bibinfo {year} {1997})}\BibitemShut {NoStop}%
\bibitem [{\citenamefont {Monasson}\ \emph {et~al.}(1999)\citenamefont {Monasson}, \citenamefont {Zecchina}, \citenamefont {Kirkpatrick}, \citenamefont {Selman},\ and\ \citenamefont {Troyansky}}]{monasson1999determining}%
  \BibitemOpen
  \bibfield  {author} {\bibinfo {author} {\bibfnamefont {R.}~\bibnamefont {Monasson}}, \bibinfo {author} {\bibfnamefont {R.}~\bibnamefont {Zecchina}}, \bibinfo {author} {\bibfnamefont {S.}~\bibnamefont {Kirkpatrick}}, \bibinfo {author} {\bibfnamefont {B.}~\bibnamefont {Selman}},\ and\ \bibinfo {author} {\bibfnamefont {L.}~\bibnamefont {Troyansky}},\ }\bibfield  {title} {\bibinfo {title} {Determining computational complexity from characteristic ‘phase transitions’},\ }\href@noop {} {\bibfield  {journal} {\bibinfo  {journal} {Nature}\ }\textbf {\bibinfo {volume} {400}},\ \bibinfo {pages} {133} (\bibinfo {year} {1999})}\BibitemShut {NoStop}%
\bibitem [{\citenamefont {M{\'e}zard}\ \emph {et~al.}(2002)\citenamefont {M{\'e}zard}, \citenamefont {Parisi},\ and\ \citenamefont {Zecchina}}]{mezard2002analytic}%
  \BibitemOpen
  \bibfield  {author} {\bibinfo {author} {\bibfnamefont {M.}~\bibnamefont {M{\'e}zard}}, \bibinfo {author} {\bibfnamefont {G.}~\bibnamefont {Parisi}},\ and\ \bibinfo {author} {\bibfnamefont {R.}~\bibnamefont {Zecchina}},\ }\bibfield  {title} {\bibinfo {title} {Analytic and algorithmic solution of random satisfiability problems},\ }\href@noop {} {\bibfield  {journal} {\bibinfo  {journal} {Science}\ }\textbf {\bibinfo {volume} {297}},\ \bibinfo {pages} {812} (\bibinfo {year} {2002})}\BibitemShut {NoStop}%
\bibitem [{\citenamefont {Bray}\ and\ \citenamefont {Moore}(1981)}]{bray1981metastable}%
  \BibitemOpen
  \bibfield  {author} {\bibinfo {author} {\bibfnamefont {A.}~\bibnamefont {Bray}}\ and\ \bibinfo {author} {\bibfnamefont {M.}~\bibnamefont {Moore}},\ }\bibfield  {title} {\bibinfo {title} {Metastable states, internal field distributions and magnetic excitations in spin glasses},\ }\href@noop {} {\bibfield  {journal} {\bibinfo  {journal} {Journal of Physics C: Solid State Physics}\ }\textbf {\bibinfo {volume} {14}},\ \bibinfo {pages} {2629} (\bibinfo {year} {1981})}\BibitemShut {NoStop}%
\bibitem [{\citenamefont {Georges}\ \emph {et~al.}(2001)\citenamefont {Georges}, \citenamefont {Parcollet},\ and\ \citenamefont {Sachdev}}]{georges2001quantum}%
  \BibitemOpen
  \bibfield  {author} {\bibinfo {author} {\bibfnamefont {A.}~\bibnamefont {Georges}}, \bibinfo {author} {\bibfnamefont {O.}~\bibnamefont {Parcollet}},\ and\ \bibinfo {author} {\bibfnamefont {S.}~\bibnamefont {Sachdev}},\ }\bibfield  {title} {\bibinfo {title} {Quantum fluctuations of a nearly critical heisenberg spin glass},\ }\href@noop {} {\bibfield  {journal} {\bibinfo  {journal} {Physical Review B}\ }\textbf {\bibinfo {volume} {63}},\ \bibinfo {pages} {134406} (\bibinfo {year} {2001})}\BibitemShut {NoStop}%
\bibitem [{\citenamefont {Christos}\ \emph {et~al.}(2022)\citenamefont {Christos}, \citenamefont {Haehl},\ and\ \citenamefont {Sachdev}}]{christos2022spin}%
  \BibitemOpen
  \bibfield  {author} {\bibinfo {author} {\bibfnamefont {M.}~\bibnamefont {Christos}}, \bibinfo {author} {\bibfnamefont {F.~M.}\ \bibnamefont {Haehl}},\ and\ \bibinfo {author} {\bibfnamefont {S.}~\bibnamefont {Sachdev}},\ }\bibfield  {title} {\bibinfo {title} {Spin liquid to spin glass crossover in the random quantum heisenberg magnet},\ }\href@noop {} {\bibfield  {journal} {\bibinfo  {journal} {Physical Review B}\ }\textbf {\bibinfo {volume} {105}},\ \bibinfo {pages} {085120} (\bibinfo {year} {2022})}\BibitemShut {NoStop}%
\bibitem [{\citenamefont {Kavokine}\ \emph {et~al.}(2024)\citenamefont {Kavokine}, \citenamefont {M{\"u}ller}, \citenamefont {Georges},\ and\ \citenamefont {Parcollet}}]{kavokine2024exact}%
  \BibitemOpen
  \bibfield  {author} {\bibinfo {author} {\bibfnamefont {N.}~\bibnamefont {Kavokine}}, \bibinfo {author} {\bibfnamefont {M.}~\bibnamefont {M{\"u}ller}}, \bibinfo {author} {\bibfnamefont {A.}~\bibnamefont {Georges}},\ and\ \bibinfo {author} {\bibfnamefont {O.}~\bibnamefont {Parcollet}},\ }\bibfield  {title} {\bibinfo {title} {Exact numerical solution of the fully connected classical and quantum heisenberg spin glass},\ }\href@noop {} {\bibfield  {journal} {\bibinfo  {journal} {Physical Review Letters}\ }\textbf {\bibinfo {volume} {133}},\ \bibinfo {pages} {016501} (\bibinfo {year} {2024})}\BibitemShut {NoStop}%
\bibitem [{\citenamefont {Thouless}\ \emph {et~al.}(1977{\natexlab{a}})\citenamefont {Thouless}, \citenamefont {Anderson},\ and\ \citenamefont {Palmer}}]{thouless1977solution}%
  \BibitemOpen
  \bibfield  {author} {\bibinfo {author} {\bibfnamefont {D.~J.}\ \bibnamefont {Thouless}}, \bibinfo {author} {\bibfnamefont {P.~W.}\ \bibnamefont {Anderson}},\ and\ \bibinfo {author} {\bibfnamefont {R.~G.}\ \bibnamefont {Palmer}},\ }\bibfield  {title} {\bibinfo {title} {Solution of'solvable model of a spin glass'},\ }\href@noop {} {\bibfield  {journal} {\bibinfo  {journal} {Philosophical Magazine}\ }\textbf {\bibinfo {volume} {35}},\ \bibinfo {pages} {593} (\bibinfo {year} {1977}{\natexlab{a}})}\BibitemShut {NoStop}%
\bibitem [{\citenamefont {de~Almeida}\ and\ \citenamefont {Thouless}(1978)}]{almeida1978}%
  \BibitemOpen
  \bibfield  {author} {\bibinfo {author} {\bibfnamefont {J.~R.~L.}\ \bibnamefont {de~Almeida}}\ and\ \bibinfo {author} {\bibfnamefont {D.~J.}\ \bibnamefont {Thouless}},\ }\bibfield  {title} {\bibinfo {title} {Stability of the sherrington-kirkpatrick solution of a spin glass model},\ }\href {https://doi.org/10.1088/0305-4470/11/5/028} {\bibfield  {journal} {\bibinfo  {journal} {Journal of Physics A: Mathematical and General}\ }\textbf {\bibinfo {volume} {11}},\ \bibinfo {pages} {983} (\bibinfo {year} {1978})}\BibitemShut {NoStop}%
\bibitem [{\citenamefont {Yamamoto}\ and\ \citenamefont {Ishii}(1987)}]{yamamoto1987perturbation}%
  \BibitemOpen
  \bibfield  {author} {\bibinfo {author} {\bibfnamefont {T.}~\bibnamefont {Yamamoto}}\ and\ \bibinfo {author} {\bibfnamefont {H.}~\bibnamefont {Ishii}},\ }\bibfield  {title} {\bibinfo {title} {A perturbation expansion for the sherrington-kirkpatrick model with a transverse field},\ }\href@noop {} {\bibfield  {journal} {\bibinfo  {journal} {Journal of Physics C: Solid State Physics}\ }\textbf {\bibinfo {volume} {20}},\ \bibinfo {pages} {6053} (\bibinfo {year} {1987})}\BibitemShut {NoStop}%
\bibitem [{\citenamefont {Kope{\'c}}\ \emph {et~al.}(1989)\citenamefont {Kope{\'c}}, \citenamefont {Usadel},\ and\ \citenamefont {B{\"u}ttner}}]{kopec1989instabilities}%
  \BibitemOpen
  \bibfield  {author} {\bibinfo {author} {\bibfnamefont {T.~K.}\ \bibnamefont {Kope{\'c}}}, \bibinfo {author} {\bibfnamefont {K.}~\bibnamefont {Usadel}},\ and\ \bibinfo {author} {\bibfnamefont {G.}~\bibnamefont {B{\"u}ttner}},\ }\bibfield  {title} {\bibinfo {title} {Instabilities in the quantum sherrington-kirkpatrick ising spin glass in transverse and longitudinal fields},\ }\href@noop {} {\bibfield  {journal} {\bibinfo  {journal} {Physical Review B}\ }\textbf {\bibinfo {volume} {39}},\ \bibinfo {pages} {12418} (\bibinfo {year} {1989})}\BibitemShut {NoStop}%
\bibitem [{\citenamefont {Achlioptas}\ and\ \citenamefont {Ricci-Tersenghi}(2006)}]{achlioptas2006solution}%
  \BibitemOpen
  \bibfield  {author} {\bibinfo {author} {\bibfnamefont {D.}~\bibnamefont {Achlioptas}}\ and\ \bibinfo {author} {\bibfnamefont {F.}~\bibnamefont {Ricci-Tersenghi}},\ }\bibfield  {title} {\bibinfo {title} {On the solution-space geometry of random constraint satisfaction problems},\ }in\ \href@noop {} {\emph {\bibinfo {booktitle} {Proceedings of the thirty-eighth annual ACM symposium on Theory of computing}}}\ (\bibinfo {year} {2006})\ pp.\ \bibinfo {pages} {130--139}\BibitemShut {NoStop}%
\bibitem [{\citenamefont {M{\'e}zard}\ \emph {et~al.}(2005)\citenamefont {M{\'e}zard}, \citenamefont {Mora},\ and\ \citenamefont {Zecchina}}]{mezard2005clustering}%
  \BibitemOpen
  \bibfield  {author} {\bibinfo {author} {\bibfnamefont {M.}~\bibnamefont {M{\'e}zard}}, \bibinfo {author} {\bibfnamefont {T.}~\bibnamefont {Mora}},\ and\ \bibinfo {author} {\bibfnamefont {R.}~\bibnamefont {Zecchina}},\ }\bibfield  {title} {\bibinfo {title} {Clustering of solutions in the random satisfiability problem},\ }\href@noop {} {\bibfield  {journal} {\bibinfo  {journal} {Physical Review Letters}\ }\textbf {\bibinfo {volume} {94}},\ \bibinfo {pages} {197205} (\bibinfo {year} {2005})}\BibitemShut {NoStop}%
\bibitem [{\citenamefont {Gamarnik}(2021)}]{ogp}%
  \BibitemOpen
  \bibfield  {author} {\bibinfo {author} {\bibfnamefont {D.}~\bibnamefont {Gamarnik}},\ }\bibfield  {title} {\bibinfo {title} {The overlap gap property: A topological barrier to optimizing over random structures},\ }\href {https://doi.org/10.1073/pnas.2108492118} {\bibfield  {journal} {\bibinfo  {journal} {Proc. Natl. Acad. Sci. U.S.A.}\ }\textbf {\bibinfo {volume} {118}},\ \bibinfo {pages} {e2108492118} (\bibinfo {year} {2021})}\BibitemShut {NoStop}%
\bibitem [{\citenamefont {Barak}\ \emph {et~al.}(2019)\citenamefont {Barak}, \citenamefont {Hopkins}, \citenamefont {Kelner}, \citenamefont {Kothari}, \citenamefont {Moitra},\ and\ \citenamefont {Potechin}}]{barak2019nearly}%
  \BibitemOpen
  \bibfield  {author} {\bibinfo {author} {\bibfnamefont {B.}~\bibnamefont {Barak}}, \bibinfo {author} {\bibfnamefont {S.}~\bibnamefont {Hopkins}}, \bibinfo {author} {\bibfnamefont {J.}~\bibnamefont {Kelner}}, \bibinfo {author} {\bibfnamefont {P.~K.}\ \bibnamefont {Kothari}}, \bibinfo {author} {\bibfnamefont {A.}~\bibnamefont {Moitra}},\ and\ \bibinfo {author} {\bibfnamefont {A.}~\bibnamefont {Potechin}},\ }\bibfield  {title} {\bibinfo {title} {A nearly tight sum-of-squares lower bound for the planted clique problem},\ }\href@noop {} {\bibfield  {journal} {\bibinfo  {journal} {SIAM Journal on Computing}\ }\textbf {\bibinfo {volume} {48}},\ \bibinfo {pages} {687} (\bibinfo {year} {2019})}\BibitemShut {NoStop}%
\bibitem [{\citenamefont {Hopkins}\ and\ \citenamefont {Steurer}(2017)}]{hopkins2017bayesian}%
  \BibitemOpen
  \bibfield  {author} {\bibinfo {author} {\bibfnamefont {S.~B.}\ \bibnamefont {Hopkins}}\ and\ \bibinfo {author} {\bibfnamefont {D.}~\bibnamefont {Steurer}},\ }\bibfield  {title} {\bibinfo {title} {Bayesian estimation from few samples: community detection and related problems},\ }\href@noop {} {\bibfield  {journal} {\bibinfo  {journal} {arXiv preprint arXiv:1710.00264}\ } (\bibinfo {year} {2017})}\BibitemShut {NoStop}%
\bibitem [{\citenamefont {Hopkins}\ \emph {et~al.}(2017)\citenamefont {Hopkins}, \citenamefont {Kothari}, \citenamefont {Potechin}, \citenamefont {Raghavendra}, \citenamefont {Schramm},\ and\ \citenamefont {Steurer}}]{hopkins2017power}%
  \BibitemOpen
  \bibfield  {author} {\bibinfo {author} {\bibfnamefont {S.~B.}\ \bibnamefont {Hopkins}}, \bibinfo {author} {\bibfnamefont {P.~K.}\ \bibnamefont {Kothari}}, \bibinfo {author} {\bibfnamefont {A.}~\bibnamefont {Potechin}}, \bibinfo {author} {\bibfnamefont {P.}~\bibnamefont {Raghavendra}}, \bibinfo {author} {\bibfnamefont {T.}~\bibnamefont {Schramm}},\ and\ \bibinfo {author} {\bibfnamefont {D.}~\bibnamefont {Steurer}},\ }\bibfield  {title} {\bibinfo {title} {The power of sum-of-squares for detecting hidden structures},\ }in\ \href@noop {} {\emph {\bibinfo {booktitle} {2017 IEEE 58th Annual Symposium on Foundations of Computer Science (FOCS)}}}\ (\bibinfo {organization} {IEEE},\ \bibinfo {year} {2017})\ pp.\ \bibinfo {pages} {720--731}\BibitemShut {NoStop}%
\bibitem [{\citenamefont {Hopkins}(2018)}]{hopkins2018statistical}%
  \BibitemOpen
  \bibfield  {author} {\bibinfo {author} {\bibfnamefont {S.}~\bibnamefont {Hopkins}},\ }\href@noop {} {\emph {\bibinfo {title} {Statistical inference and the sum of squares method}}}\ (\bibinfo  {publisher} {Cornell University},\ \bibinfo {year} {2018})\BibitemShut {NoStop}%
\bibitem [{\citenamefont {El~Alaoui}\ \emph {et~al.}(2022)\citenamefont {El~Alaoui}, \citenamefont {Montanari},\ and\ \citenamefont {Sellke}}]{el2022sampling}%
  \BibitemOpen
  \bibfield  {author} {\bibinfo {author} {\bibfnamefont {A.}~\bibnamefont {El~Alaoui}}, \bibinfo {author} {\bibfnamefont {A.}~\bibnamefont {Montanari}},\ and\ \bibinfo {author} {\bibfnamefont {M.}~\bibnamefont {Sellke}},\ }\bibfield  {title} {\bibinfo {title} {Sampling from the sherrington-kirkpatrick gibbs measure via algorithmic stochastic localization},\ }in\ \href@noop {} {\emph {\bibinfo {booktitle} {2022 IEEE 63rd Annual Symposium on Foundations of Computer Science (FOCS)}}}\ (\bibinfo {organization} {IEEE},\ \bibinfo {year} {2022})\ pp.\ \bibinfo {pages} {323--334}\BibitemShut {NoStop}%
\bibitem [{\citenamefont {El~Alaoui}\ \emph {et~al.}(2025{\natexlab{a}})\citenamefont {El~Alaoui}, \citenamefont {Montanari},\ and\ \citenamefont {Sellke}}]{el2025sampling}%
  \BibitemOpen
  \bibfield  {author} {\bibinfo {author} {\bibfnamefont {A.}~\bibnamefont {El~Alaoui}}, \bibinfo {author} {\bibfnamefont {A.}~\bibnamefont {Montanari}},\ and\ \bibinfo {author} {\bibfnamefont {M.}~\bibnamefont {Sellke}},\ }\bibfield  {title} {\bibinfo {title} {Sampling from mean-field gibbs measures via diffusion processes},\ }\href@noop {} {\bibfield  {journal} {\bibinfo  {journal} {Probability and Mathematical Physics}\ }\textbf {\bibinfo {volume} {6}},\ \bibinfo {pages} {961} (\bibinfo {year} {2025}{\natexlab{a}})}\BibitemShut {NoStop}%
\bibitem [{\citenamefont {El~Alaoui}\ \emph {et~al.}(2025{\natexlab{b}})\citenamefont {El~Alaoui}, \citenamefont {Montanari},\ and\ \citenamefont {Sellke}}]{el2025shattering}%
  \BibitemOpen
  \bibfield  {author} {\bibinfo {author} {\bibfnamefont {A.}~\bibnamefont {El~Alaoui}}, \bibinfo {author} {\bibfnamefont {A.}~\bibnamefont {Montanari}},\ and\ \bibinfo {author} {\bibfnamefont {M.}~\bibnamefont {Sellke}},\ }\bibfield  {title} {\bibinfo {title} {Shattering in pure spherical spin glasses},\ }\href@noop {} {\bibfield  {journal} {\bibinfo  {journal} {Communications in Mathematical Physics}\ }\textbf {\bibinfo {volume} {406}},\ \bibinfo {pages} {1} (\bibinfo {year} {2025}{\natexlab{b}})}\BibitemShut {NoStop}%
\bibitem [{\citenamefont {Chen}\ \emph {et~al.}(2024)\citenamefont {Chen}, \citenamefont {Dalzell}, \citenamefont {Berta}, \citenamefont {Brand{\~a}o},\ and\ \citenamefont {Tropp}}]{chen2024sparse}%
  \BibitemOpen
  \bibfield  {author} {\bibinfo {author} {\bibfnamefont {C.-F.}\ \bibnamefont {Chen}}, \bibinfo {author} {\bibfnamefont {A.~M.}\ \bibnamefont {Dalzell}}, \bibinfo {author} {\bibfnamefont {M.}~\bibnamefont {Berta}}, \bibinfo {author} {\bibfnamefont {F.~G.}\ \bibnamefont {Brand{\~a}o}},\ and\ \bibinfo {author} {\bibfnamefont {J.~A.}\ \bibnamefont {Tropp}},\ }\bibfield  {title} {\bibinfo {title} {Sparse random hamiltonians are quantumly easy},\ }\href@noop {} {\bibfield  {journal} {\bibinfo  {journal} {Physical Review X}\ }\textbf {\bibinfo {volume} {14}},\ \bibinfo {pages} {011014} (\bibinfo {year} {2024})}\BibitemShut {NoStop}%
\bibitem [{\citenamefont {Basso}\ \emph {et~al.}(2024)\citenamefont {Basso}, \citenamefont {Chen},\ and\ \citenamefont {Dalzell}}]{basso2024optimizing}%
  \BibitemOpen
  \bibfield  {author} {\bibinfo {author} {\bibfnamefont {J.}~\bibnamefont {Basso}}, \bibinfo {author} {\bibfnamefont {C.-F.}\ \bibnamefont {Chen}},\ and\ \bibinfo {author} {\bibfnamefont {A.~M.}\ \bibnamefont {Dalzell}},\ }\bibfield  {title} {\bibinfo {title} {Optimizing random local hamiltonians by dissipation},\ }\href@noop {} {\bibfield  {journal} {\bibinfo  {journal} {arXiv preprint arXiv:2411.02578}\ } (\bibinfo {year} {2024})}\BibitemShut {NoStop}%
\bibitem [{\citenamefont {Gamarnik}\ \emph {et~al.}(2024{\natexlab{a}})\citenamefont {Gamarnik}, \citenamefont {Kiani},\ and\ \citenamefont {Zlokapa}}]{gamarnik2024slowmixingquantumgibbs}%
  \BibitemOpen
  \bibfield  {author} {\bibinfo {author} {\bibfnamefont {D.}~\bibnamefont {Gamarnik}}, \bibinfo {author} {\bibfnamefont {B.~T.}\ \bibnamefont {Kiani}},\ and\ \bibinfo {author} {\bibfnamefont {A.}~\bibnamefont {Zlokapa}},\ }\href@noop {} {\bibinfo {title} {Slow mixing of quantum {Gibbs} samplers}} (\bibinfo {year} {2024}{\natexlab{a}}),\ \Eprint {https://arxiv.org/abs/2411.04300} {arXiv:2411.04300 [quant-ph]} \BibitemShut {NoStop}%
\bibitem [{\citenamefont {Swingle}\ and\ \citenamefont {Winer}(2023)}]{swingle2023bosonicmodelquantumholography}%
  \BibitemOpen
  \bibfield  {author} {\bibinfo {author} {\bibfnamefont {B.}~\bibnamefont {Swingle}}\ and\ \bibinfo {author} {\bibfnamefont {M.}~\bibnamefont {Winer}},\ }\href@noop {} {\bibinfo {title} {A bosonic model of quantum holography}} (\bibinfo {year} {2023}),\ \Eprint {https://arxiv.org/abs/2311.01516} {arXiv:2311.01516 [hep-th]} \BibitemShut {NoStop}%
\bibitem [{\citenamefont {Ramkumar}\ and\ \citenamefont {Soleimanifar}(2024)}]{ramkumar2024mixing}%
  \BibitemOpen
  \bibfield  {author} {\bibinfo {author} {\bibfnamefont {A.}~\bibnamefont {Ramkumar}}\ and\ \bibinfo {author} {\bibfnamefont {M.}~\bibnamefont {Soleimanifar}},\ }\bibfield  {title} {\bibinfo {title} {Mixing time of quantum gibbs sampling for random sparse hamiltonians},\ }\href@noop {} {\bibfield  {journal} {\bibinfo  {journal} {arXiv preprint arXiv:2411.04454}\ } (\bibinfo {year} {2024})}\BibitemShut {NoStop}%
\bibitem [{\citenamefont {Anschuetz}(2025{\natexlab{a}})}]{anschuetz2025efficientlearningimpliesquantum}%
  \BibitemOpen
  \bibfield  {author} {\bibinfo {author} {\bibfnamefont {E.~R.}\ \bibnamefont {Anschuetz}},\ }\href@noop {} {\bibinfo {title} {Efficient learning implies quantum glassiness}} (\bibinfo {year} {2025}{\natexlab{a}}),\ \bibinfo {note} {arXiv preprint arXiv:2505.00087},\ \Eprint {https://arxiv.org/abs/2505.00087} {arXiv:2505.00087 [quant-ph]} \BibitemShut {NoStop}%
\bibitem [{\citenamefont {Edwards}\ and\ \citenamefont {Anderson}(1975)}]{edwards1975theory}%
  \BibitemOpen
  \bibfield  {author} {\bibinfo {author} {\bibfnamefont {S.~F.}\ \bibnamefont {Edwards}}\ and\ \bibinfo {author} {\bibfnamefont {P.~W.}\ \bibnamefont {Anderson}},\ }\bibfield  {title} {\bibinfo {title} {Theory of spin glasses},\ }\href@noop {} {\bibfield  {journal} {\bibinfo  {journal} {Journal of Physics F: Metal Physics}\ }\textbf {\bibinfo {volume} {5}},\ \bibinfo {pages} {965} (\bibinfo {year} {1975})}\BibitemShut {NoStop}%
\bibitem [{\citenamefont {Parisi}(1979)}]{parisi1979infinite}%
  \BibitemOpen
  \bibfield  {author} {\bibinfo {author} {\bibfnamefont {G.}~\bibnamefont {Parisi}},\ }\bibfield  {title} {\bibinfo {title} {Infinite number of order parameters for spin-glasses},\ }\href@noop {} {\bibfield  {journal} {\bibinfo  {journal} {Physical Review Letters}\ }\textbf {\bibinfo {volume} {43}},\ \bibinfo {pages} {1754} (\bibinfo {year} {1979})}\BibitemShut {NoStop}%
\bibitem [{\citenamefont {Monasson}(1995)}]{monasson1995structural}%
  \BibitemOpen
  \bibfield  {author} {\bibinfo {author} {\bibfnamefont {R.}~\bibnamefont {Monasson}},\ }\bibfield  {title} {\bibinfo {title} {Structural glass transition and the entropy of the metastable states},\ }\href@noop {} {\bibfield  {journal} {\bibinfo  {journal} {Physical review letters}\ }\textbf {\bibinfo {volume} {75}},\ \bibinfo {pages} {2847} (\bibinfo {year} {1995})}\BibitemShut {NoStop}%
\bibitem [{\citenamefont {Franz}\ and\ \citenamefont {Parisi}(1998)}]{franz1998effective}%
  \BibitemOpen
  \bibfield  {author} {\bibinfo {author} {\bibfnamefont {S.}~\bibnamefont {Franz}}\ and\ \bibinfo {author} {\bibfnamefont {G.}~\bibnamefont {Parisi}},\ }\bibfield  {title} {\bibinfo {title} {Effective potential in glassy systems: theory and simulations},\ }\href@noop {} {\bibfield  {journal} {\bibinfo  {journal} {Physica A: Statistical Mechanics and its Applications}\ }\textbf {\bibinfo {volume} {261}},\ \bibinfo {pages} {317} (\bibinfo {year} {1998})}\BibitemShut {NoStop}%
\bibitem [{\citenamefont {Thouless}\ \emph {et~al.}(1977{\natexlab{b}})\citenamefont {Thouless}, \citenamefont {Anderson},\ and\ \citenamefont {Palmer}}]{Thouless01031977}%
  \BibitemOpen
  \bibfield  {author} {\bibinfo {author} {\bibfnamefont {D.~J.}\ \bibnamefont {Thouless}}, \bibinfo {author} {\bibfnamefont {P.~W.}\ \bibnamefont {Anderson}},\ and\ \bibinfo {author} {\bibfnamefont {R.~G.}\ \bibnamefont {Palmer}},\ }\bibfield  {title} {\bibinfo {title} {Solution of {`Solvable model of a spin glass'}},\ }\href {https://doi.org/10.1080/14786437708235992} {\bibfield  {journal} {\bibinfo  {journal} {The Philosophical Magazine: A Journal of Theoretical Experimental and Applied Physics}\ }\textbf {\bibinfo {volume} {35}},\ \bibinfo {pages} {593} (\bibinfo {year} {1977}{\natexlab{b}})}\BibitemShut {NoStop}%
\bibitem [{\citenamefont {Rakovszky}\ \emph {et~al.}(2024)\citenamefont {Rakovszky}, \citenamefont {Placke}, \citenamefont {Breuckmann},\ and\ \citenamefont {Khemani}}]{rakovszky2024bottlenecksquantumchannelsfinite}%
  \BibitemOpen
  \bibfield  {author} {\bibinfo {author} {\bibfnamefont {T.}~\bibnamefont {Rakovszky}}, \bibinfo {author} {\bibfnamefont {B.}~\bibnamefont {Placke}}, \bibinfo {author} {\bibfnamefont {N.~P.}\ \bibnamefont {Breuckmann}},\ and\ \bibinfo {author} {\bibfnamefont {V.}~\bibnamefont {Khemani}},\ }\href@noop {} {\bibinfo {title} {Bottlenecks in quantum channels and finite temperature phases of matter}} (\bibinfo {year} {2024}),\ \Eprint {https://arxiv.org/abs/2412.09598} {arXiv:2412.09598 [quant-ph]} \BibitemShut {NoStop}%
\bibitem [{\citenamefont {Chen}\ \emph {et~al.}(2023)\citenamefont {Chen}, \citenamefont {Kastoryano},\ and\ \citenamefont {Gilyén}}]{chen2023efficientexactnoncommutativequantum}%
  \BibitemOpen
  \bibfield  {author} {\bibinfo {author} {\bibfnamefont {C.-F.}\ \bibnamefont {Chen}}, \bibinfo {author} {\bibfnamefont {M.~J.}\ \bibnamefont {Kastoryano}},\ and\ \bibinfo {author} {\bibfnamefont {A.}~\bibnamefont {Gilyén}},\ }\href@noop {} {\bibinfo {title} {An efficient and exact noncommutative quantum {Gibbs} sampler}} (\bibinfo {year} {2023}),\ \Eprint {https://arxiv.org/abs/2311.09207} {arXiv:2311.09207 [quant-ph]} \BibitemShut {NoStop}%
\bibitem [{\citenamefont {Li}\ \emph {et~al.}(2022)\citenamefont {Li}, \citenamefont {Bu}, \citenamefont {Koh}, \citenamefont {Jaffe},\ and\ \citenamefont {Lloyd}}]{li2022wasserstein}%
  \BibitemOpen
  \bibfield  {author} {\bibinfo {author} {\bibfnamefont {L.}~\bibnamefont {Li}}, \bibinfo {author} {\bibfnamefont {K.}~\bibnamefont {Bu}}, \bibinfo {author} {\bibfnamefont {D.~E.}\ \bibnamefont {Koh}}, \bibinfo {author} {\bibfnamefont {A.}~\bibnamefont {Jaffe}},\ and\ \bibinfo {author} {\bibfnamefont {S.}~\bibnamefont {Lloyd}},\ }\bibfield  {title} {\bibinfo {title} {Wasserstein complexity of quantum circuits},\ }\href@noop {} {\bibfield  {journal} {\bibinfo  {journal} {Journal of Physics A: Mathematical and Theoretical}\ } (\bibinfo {year} {2022})}\BibitemShut {NoStop}%
\bibitem [{\citenamefont {De~Palma}\ \emph {et~al.}(2021)\citenamefont {De~Palma}, \citenamefont {Marvian}, \citenamefont {Trevisan},\ and\ \citenamefont {Lloyd}}]{9420734}%
  \BibitemOpen
  \bibfield  {author} {\bibinfo {author} {\bibfnamefont {G.}~\bibnamefont {De~Palma}}, \bibinfo {author} {\bibfnamefont {M.}~\bibnamefont {Marvian}}, \bibinfo {author} {\bibfnamefont {D.}~\bibnamefont {Trevisan}},\ and\ \bibinfo {author} {\bibfnamefont {S.}~\bibnamefont {Lloyd}},\ }\bibfield  {title} {\bibinfo {title} {The quantum {Wasserstein} distance of order 1},\ }\href {https://doi.org/10.1109/TIT.2021.3076442} {\bibfield  {journal} {\bibinfo  {journal} {{IEEE} Trans. Inf. Theory}\ }\textbf {\bibinfo {volume} {67}},\ \bibinfo {pages} {6627} (\bibinfo {year} {2021})}\BibitemShut {NoStop}%
\bibitem [{\citenamefont {Anschuetz}\ \emph {et~al.}(2025{\natexlab{a}})\citenamefont {Anschuetz}, \citenamefont {Gamarnik},\ and\ \citenamefont {Lu}}]{anschuetz2025decodedquantuminterferometryrequires}%
  \BibitemOpen
  \bibfield  {author} {\bibinfo {author} {\bibfnamefont {E.~R.}\ \bibnamefont {Anschuetz}}, \bibinfo {author} {\bibfnamefont {D.}~\bibnamefont {Gamarnik}},\ and\ \bibinfo {author} {\bibfnamefont {J.~Z.}\ \bibnamefont {Lu}},\ }\href {https://arxiv.org/abs/2509.14509} {\bibinfo {title} {Decoded quantum interferometry requires structure}} (\bibinfo {year} {2025}{\natexlab{a}}),\ \Eprint {https://arxiv.org/abs/2509.14509} {arXiv:2509.14509 [quant-ph]} \BibitemShut {NoStop}%
\bibitem [{\citenamefont {Ben~Arous}\ and\ \citenamefont {Jagannath}(2018)}]{ben2018spectral}%
  \BibitemOpen
  \bibfield  {author} {\bibinfo {author} {\bibfnamefont {G.}~\bibnamefont {Ben~Arous}}\ and\ \bibinfo {author} {\bibfnamefont {A.}~\bibnamefont {Jagannath}},\ }\bibfield  {title} {\bibinfo {title} {Spectral gap estimates in mean field spin glasses},\ }\href@noop {} {\bibfield  {journal} {\bibinfo  {journal} {Communications in Mathematical Physics}\ }\textbf {\bibinfo {volume} {361}},\ \bibinfo {pages} {1} (\bibinfo {year} {2018})}\BibitemShut {NoStop}%
\bibitem [{\citenamefont {Kandala}\ \emph {et~al.}(2017)\citenamefont {Kandala}, \citenamefont {Mezzacapo}, \citenamefont {Temme}, \citenamefont {Takita}, \citenamefont {Brink}, \citenamefont {Chow},\ and\ \citenamefont {Gambetta}}]{kandala2017hardware}%
  \BibitemOpen
  \bibfield  {author} {\bibinfo {author} {\bibfnamefont {A.}~\bibnamefont {Kandala}}, \bibinfo {author} {\bibfnamefont {A.}~\bibnamefont {Mezzacapo}}, \bibinfo {author} {\bibfnamefont {K.}~\bibnamefont {Temme}}, \bibinfo {author} {\bibfnamefont {M.}~\bibnamefont {Takita}}, \bibinfo {author} {\bibfnamefont {M.}~\bibnamefont {Brink}}, \bibinfo {author} {\bibfnamefont {J.~M.}\ \bibnamefont {Chow}},\ and\ \bibinfo {author} {\bibfnamefont {J.~M.}\ \bibnamefont {Gambetta}},\ }\bibfield  {title} {\bibinfo {title} {Hardware-efficient variational quantum eigensolver for small molecules and quantum magnets},\ }\href {https://doi.org/10.1038/nature23879} {\bibfield  {journal} {\bibinfo  {journal} {Nature}\ }\textbf {\bibinfo {volume} {549}},\ \bibinfo {pages} {242} (\bibinfo {year} {2017})}\BibitemShut {NoStop}%
\bibitem [{\citenamefont {Ding}\ \emph {et~al.}(2025)\citenamefont {Ding}, \citenamefont {Junge}, \citenamefont {Schleich},\ and\ \citenamefont {Wu}}]{ding2025lower}%
  \BibitemOpen
  \bibfield  {author} {\bibinfo {author} {\bibfnamefont {Z.}~\bibnamefont {Ding}}, \bibinfo {author} {\bibfnamefont {M.}~\bibnamefont {Junge}}, \bibinfo {author} {\bibfnamefont {P.}~\bibnamefont {Schleich}},\ and\ \bibinfo {author} {\bibfnamefont {P.}~\bibnamefont {Wu}},\ }\bibfield  {title} {\bibinfo {title} {Lower bound for simulation cost of open quantum systems: Lipschitz continuity approach},\ }\href@noop {} {\bibfield  {journal} {\bibinfo  {journal} {Communications in Mathematical Physics}\ }\textbf {\bibinfo {volume} {406}},\ \bibinfo {pages} {60} (\bibinfo {year} {2025})}\BibitemShut {NoStop}%
\bibitem [{\citenamefont {J{\"o}rg}\ \emph {et~al.}(2010)\citenamefont {J{\"o}rg}, \citenamefont {Krzakala}, \citenamefont {Semerjian},\ and\ \citenamefont {Zamponi}}]{jorg2010first}%
  \BibitemOpen
  \bibfield  {author} {\bibinfo {author} {\bibfnamefont {T.}~\bibnamefont {J{\"o}rg}}, \bibinfo {author} {\bibfnamefont {F.}~\bibnamefont {Krzakala}}, \bibinfo {author} {\bibfnamefont {G.}~\bibnamefont {Semerjian}},\ and\ \bibinfo {author} {\bibfnamefont {F.}~\bibnamefont {Zamponi}},\ }\bibfield  {title} {\bibinfo {title} {First-order transitions and the performance of quantum algorithms<? format?> in random optimization problems},\ }\href@noop {} {\bibfield  {journal} {\bibinfo  {journal} {Physical review letters}\ }\textbf {\bibinfo {volume} {104}},\ \bibinfo {pages} {207206} (\bibinfo {year} {2010})}\BibitemShut {NoStop}%
\bibitem [{\citenamefont {Young}\ \emph {et~al.}(2010)\citenamefont {Young}, \citenamefont {Knysh},\ and\ \citenamefont {Smelyanskiy}}]{young2010first}%
  \BibitemOpen
  \bibfield  {author} {\bibinfo {author} {\bibfnamefont {A.}~\bibnamefont {Young}}, \bibinfo {author} {\bibfnamefont {S.}~\bibnamefont {Knysh}},\ and\ \bibinfo {author} {\bibfnamefont {V.}~\bibnamefont {Smelyanskiy}},\ }\bibfield  {title} {\bibinfo {title} {First-order phase transition in the quantum adiabatic algorithm},\ }\href@noop {} {\bibfield  {journal} {\bibinfo  {journal} {Physical review letters}\ }\textbf {\bibinfo {volume} {104}},\ \bibinfo {pages} {020502} (\bibinfo {year} {2010})}\BibitemShut {NoStop}%
\bibitem [{\citenamefont {Bapst}\ \emph {et~al.}(2013)\citenamefont {Bapst}, \citenamefont {Foini}, \citenamefont {Krzakala}, \citenamefont {Semerjian},\ and\ \citenamefont {Zamponi}}]{bapst2013quantum}%
  \BibitemOpen
  \bibfield  {author} {\bibinfo {author} {\bibfnamefont {V.}~\bibnamefont {Bapst}}, \bibinfo {author} {\bibfnamefont {L.}~\bibnamefont {Foini}}, \bibinfo {author} {\bibfnamefont {F.}~\bibnamefont {Krzakala}}, \bibinfo {author} {\bibfnamefont {G.}~\bibnamefont {Semerjian}},\ and\ \bibinfo {author} {\bibfnamefont {F.}~\bibnamefont {Zamponi}},\ }\bibfield  {title} {\bibinfo {title} {The quantum adiabatic algorithm applied to random optimization problems: The quantum spin glass perspective},\ }\href@noop {} {\bibfield  {journal} {\bibinfo  {journal} {Physics Reports}\ }\textbf {\bibinfo {volume} {523}},\ \bibinfo {pages} {127} (\bibinfo {year} {2013})}\BibitemShut {NoStop}%
\bibitem [{\citenamefont {Basso}\ \emph {et~al.}(2022)\citenamefont {Basso}, \citenamefont {Gamarnik}, \citenamefont {Mei},\ and\ \citenamefont {Zhou}}]{basso2022performance}%
  \BibitemOpen
  \bibfield  {author} {\bibinfo {author} {\bibfnamefont {J.}~\bibnamefont {Basso}}, \bibinfo {author} {\bibfnamefont {D.}~\bibnamefont {Gamarnik}}, \bibinfo {author} {\bibfnamefont {S.}~\bibnamefont {Mei}},\ and\ \bibinfo {author} {\bibfnamefont {L.}~\bibnamefont {Zhou}},\ }\bibfield  {title} {\bibinfo {title} {Performance and limitations of the qaoa at constant levels on large sparse hypergraphs and spin glass models},\ }in\ \href@noop {} {\emph {\bibinfo {booktitle} {2022 IEEE 63rd Annual Symposium on Foundations of Computer Science (FOCS)}}}\ (\bibinfo {organization} {IEEE},\ \bibinfo {year} {2022})\ pp.\ \bibinfo {pages} {335--343}\BibitemShut {NoStop}%
\bibitem [{\citenamefont {Chou}\ \emph {et~al.}(2021)\citenamefont {Chou}, \citenamefont {Love}, \citenamefont {Sandhu},\ and\ \citenamefont {Shi}}]{chou2021limitations}%
  \BibitemOpen
  \bibfield  {author} {\bibinfo {author} {\bibfnamefont {C.-N.}\ \bibnamefont {Chou}}, \bibinfo {author} {\bibfnamefont {P.~J.}\ \bibnamefont {Love}}, \bibinfo {author} {\bibfnamefont {J.~S.}\ \bibnamefont {Sandhu}},\ and\ \bibinfo {author} {\bibfnamefont {J.}~\bibnamefont {Shi}},\ }\bibfield  {title} {\bibinfo {title} {Limitations of local quantum algorithms on random max-k-xor and beyond},\ }\href@noop {} {\bibfield  {journal} {\bibinfo  {journal} {arXiv preprint arXiv:2108.06049}\ } (\bibinfo {year} {2021})}\BibitemShut {NoStop}%
\bibitem [{\citenamefont {Erd{\H{o}}s}\ and\ \citenamefont {Schr{\"o}der}(2014)}]{erdHos2014phase}%
  \BibitemOpen
  \bibfield  {author} {\bibinfo {author} {\bibfnamefont {L.}~\bibnamefont {Erd{\H{o}}s}}\ and\ \bibinfo {author} {\bibfnamefont {D.}~\bibnamefont {Schr{\"o}der}},\ }\bibfield  {title} {\bibinfo {title} {Phase transition in the density of states of quantum spin glasses},\ }\href@noop {} {\bibfield  {journal} {\bibinfo  {journal} {Mathematical Physics, Analysis and Geometry}\ }\textbf {\bibinfo {volume} {17}},\ \bibinfo {pages} {441} (\bibinfo {year} {2014})}\BibitemShut {NoStop}%
\bibitem [{\citenamefont {Anschuetz}\ \emph {et~al.}(2025{\natexlab{b}})\citenamefont {Anschuetz}, \citenamefont {Gamarnik},\ and\ \citenamefont {Kiani}}]{anschuetz2025bounds}%
  \BibitemOpen
  \bibfield  {author} {\bibinfo {author} {\bibfnamefont {E.~R.}\ \bibnamefont {Anschuetz}}, \bibinfo {author} {\bibfnamefont {D.}~\bibnamefont {Gamarnik}},\ and\ \bibinfo {author} {\bibfnamefont {B.~T.}\ \bibnamefont {Kiani}},\ }\bibfield  {title} {\bibinfo {title} {Bounds on the ground state energy of quantum p-spin hamiltonians: Er anschuetz, d. gamarnik, bt kiani},\ }\href@noop {} {\bibfield  {journal} {\bibinfo  {journal} {Communications in Mathematical Physics}\ }\textbf {\bibinfo {volume} {406}},\ \bibinfo {pages} {232} (\bibinfo {year} {2025}{\natexlab{b}})}\BibitemShut {NoStop}%
\bibitem [{\citenamefont {Kitaev}(2015)}]{kitaev2015talks}%
  \BibitemOpen
  \bibfield  {author} {\bibinfo {author} {\bibfnamefont {A.}~\bibnamefont {Kitaev}},\ }\bibfield  {title} {\bibinfo {title} {Talks at kitp, university of california, santa barbara},\ }\href@noop {} {\bibfield  {journal} {\bibinfo  {journal} {Entanglement in Strongly-Correlated Quantum Matter}\ } (\bibinfo {year} {2015})}\BibitemShut {NoStop}%
\bibitem [{\citenamefont {Montanari}\ and\ \citenamefont {Ricci-Tersenghi}(2003)}]{montanari2003nature}%
  \BibitemOpen
  \bibfield  {author} {\bibinfo {author} {\bibfnamefont {A.}~\bibnamefont {Montanari}}\ and\ \bibinfo {author} {\bibfnamefont {F.}~\bibnamefont {Ricci-Tersenghi}},\ }\bibfield  {title} {\bibinfo {title} {On the nature of the low-temperature phase in discontinuous mean-field spin glasses},\ }\href@noop {} {\bibfield  {journal} {\bibinfo  {journal} {The European Physical Journal B-Condensed Matter and Complex Systems}\ }\textbf {\bibinfo {volume} {33}},\ \bibinfo {pages} {339} (\bibinfo {year} {2003})}\BibitemShut {NoStop}%
\bibitem [{\citenamefont {Ferrari}\ \emph {et~al.}(2012{\natexlab{a}})\citenamefont {Ferrari}, \citenamefont {Leuzzi}, \citenamefont {Parisi},\ and\ \citenamefont {Rizzo}}]{ferrari2012two}%
  \BibitemOpen
  \bibfield  {author} {\bibinfo {author} {\bibfnamefont {U.}~\bibnamefont {Ferrari}}, \bibinfo {author} {\bibfnamefont {L.}~\bibnamefont {Leuzzi}}, \bibinfo {author} {\bibfnamefont {G.}~\bibnamefont {Parisi}},\ and\ \bibinfo {author} {\bibfnamefont {T.}~\bibnamefont {Rizzo}},\ }\bibfield  {title} {\bibinfo {title} {Two-step relaxation next to dynamic arrest in mean-field glasses: Spherical and ising p-spin model},\ }\href@noop {} {\bibfield  {journal} {\bibinfo  {journal} {Physical Review B—Condensed Matter and Materials Physics}\ }\textbf {\bibinfo {volume} {86}},\ \bibinfo {pages} {014204} (\bibinfo {year} {2012}{\natexlab{a}})}\BibitemShut {NoStop}%
\bibitem [{\citenamefont {Talagrand}(2000)}]{talagrand2000rigorous}%
  \BibitemOpen
  \bibfield  {author} {\bibinfo {author} {\bibfnamefont {M.}~\bibnamefont {Talagrand}},\ }\bibfield  {title} {\bibinfo {title} {Rigorous low-temperature results for the mean field p-spins interaction model},\ }\href@noop {} {\bibfield  {journal} {\bibinfo  {journal} {Probability theory and related fields}\ }\textbf {\bibinfo {volume} {117}},\ \bibinfo {pages} {303} (\bibinfo {year} {2000})}\BibitemShut {NoStop}%
\bibitem [{\citenamefont {Gamarnik}\ \emph {et~al.}(2023)\citenamefont {Gamarnik}, \citenamefont {Jagannath},\ and\ \citenamefont {Kızıldağ}}]{gamarnik2023shatteringisingpurepspin}%
  \BibitemOpen
  \bibfield  {author} {\bibinfo {author} {\bibfnamefont {D.}~\bibnamefont {Gamarnik}}, \bibinfo {author} {\bibfnamefont {A.}~\bibnamefont {Jagannath}},\ and\ \bibinfo {author} {\bibfnamefont {E.~C.}\ \bibnamefont {Kızıldağ}},\ }\href@noop {} {\bibinfo {title} {Shattering in the {Ising} pure $p$-spin model}} (\bibinfo {year} {2023}),\ \Eprint {https://arxiv.org/abs/2307.07461} {arXiv:2307.07461 [math.PR]} \BibitemShut {NoStop}%
\bibitem [{\citenamefont {Alaoui}(2024)}]{alaoui2024near}%
  \BibitemOpen
  \bibfield  {author} {\bibinfo {author} {\bibfnamefont {A.~E.}\ \bibnamefont {Alaoui}},\ }\bibfield  {title} {\bibinfo {title} {Near-optimal shattering in the ising pure p-spin and rarity of solutions returned by stable algorithms},\ }\href@noop {} {\bibfield  {journal} {\bibinfo  {journal} {arXiv preprint arXiv:2412.03511}\ } (\bibinfo {year} {2024})}\BibitemShut {NoStop}%
\bibitem [{\citenamefont {Sachdev}\ and\ \citenamefont {Ye}(1993)}]{sachdev1993gapless}%
  \BibitemOpen
  \bibfield  {author} {\bibinfo {author} {\bibfnamefont {S.}~\bibnamefont {Sachdev}}\ and\ \bibinfo {author} {\bibfnamefont {J.}~\bibnamefont {Ye}},\ }\bibfield  {title} {\bibinfo {title} {Gapless spin-fluid ground state in a random quantum heisenberg magnet},\ }\href@noop {} {\bibfield  {journal} {\bibinfo  {journal} {Physical review letters}\ }\textbf {\bibinfo {volume} {70}},\ \bibinfo {pages} {3339} (\bibinfo {year} {1993})}\BibitemShut {NoStop}%
\bibitem [{\citenamefont {Maldacena}\ and\ \citenamefont {Stanford}(2016)}]{maldacena2016remarks}%
  \BibitemOpen
  \bibfield  {author} {\bibinfo {author} {\bibfnamefont {J.}~\bibnamefont {Maldacena}}\ and\ \bibinfo {author} {\bibfnamefont {D.}~\bibnamefont {Stanford}},\ }\bibfield  {title} {\bibinfo {title} {Remarks on the sachdev-ye-kitaev model},\ }\href@noop {} {\bibfield  {journal} {\bibinfo  {journal} {Physical Review D}\ }\textbf {\bibinfo {volume} {94}},\ \bibinfo {pages} {106002} (\bibinfo {year} {2016})}\BibitemShut {NoStop}%
\bibitem [{\citenamefont {Fathi}\ \emph {et~al.}(2014)\citenamefont {Fathi}, \citenamefont {Indrei},\ and\ \citenamefont {Ledoux}}]{fathi2014quantitative}%
  \BibitemOpen
  \bibfield  {author} {\bibinfo {author} {\bibfnamefont {M.}~\bibnamefont {Fathi}}, \bibinfo {author} {\bibfnamefont {E.}~\bibnamefont {Indrei}},\ and\ \bibinfo {author} {\bibfnamefont {M.}~\bibnamefont {Ledoux}},\ }\bibfield  {title} {\bibinfo {title} {Quantitative logarithmic sobolev inequalities and stability estimates},\ }\href@noop {} {\bibfield  {journal} {\bibinfo  {journal} {arXiv preprint arXiv:1410.6922}\ } (\bibinfo {year} {2014})}\BibitemShut {NoStop}%
\bibitem [{\citenamefont {Bray}\ and\ \citenamefont {Moore}(1980)}]{bray1980replica}%
  \BibitemOpen
  \bibfield  {author} {\bibinfo {author} {\bibfnamefont {A.}~\bibnamefont {Bray}}\ and\ \bibinfo {author} {\bibfnamefont {M.}~\bibnamefont {Moore}},\ }\bibfield  {title} {\bibinfo {title} {Replica theory of quantum spin glasses},\ }\href@noop {} {\bibfield  {journal} {\bibinfo  {journal} {Journal of Physics C: Solid State Physics}\ }\textbf {\bibinfo {volume} {13}},\ \bibinfo {pages} {L655} (\bibinfo {year} {1980})}\BibitemShut {NoStop}%
\bibitem [{\citenamefont {Rieger}\ and\ \citenamefont {Young}(1994)}]{rieger1994zero}%
  \BibitemOpen
  \bibfield  {author} {\bibinfo {author} {\bibfnamefont {H.}~\bibnamefont {Rieger}}\ and\ \bibinfo {author} {\bibfnamefont {A.~P.}\ \bibnamefont {Young}},\ }\bibfield  {title} {\bibinfo {title} {Zero-temperature quantum phase transition of a two-dimensional ising spin glass},\ }\href@noop {} {\bibfield  {journal} {\bibinfo  {journal} {Physical review letters}\ }\textbf {\bibinfo {volume} {72}},\ \bibinfo {pages} {4141} (\bibinfo {year} {1994})}\BibitemShut {NoStop}%
\bibitem [{\citenamefont {Guo}\ \emph {et~al.}(1994)\citenamefont {Guo}, \citenamefont {Bhatt},\ and\ \citenamefont {Huse}}]{guo1994quantum}%
  \BibitemOpen
  \bibfield  {author} {\bibinfo {author} {\bibfnamefont {M.}~\bibnamefont {Guo}}, \bibinfo {author} {\bibfnamefont {R.~N.}\ \bibnamefont {Bhatt}},\ and\ \bibinfo {author} {\bibfnamefont {D.~A.}\ \bibnamefont {Huse}},\ }\bibfield  {title} {\bibinfo {title} {Quantum critical behavior of a three-dimensional ising spin glass in a transverse magnetic field},\ }\href@noop {} {\bibfield  {journal} {\bibinfo  {journal} {Physical review letters}\ }\textbf {\bibinfo {volume} {72}},\ \bibinfo {pages} {4137} (\bibinfo {year} {1994})}\BibitemShut {NoStop}%
\bibitem [{\citenamefont {Laumann}\ \emph {et~al.}(2008)\citenamefont {Laumann}, \citenamefont {Scardicchio},\ and\ \citenamefont {Sondhi}}]{laumann2008cavity}%
  \BibitemOpen
  \bibfield  {author} {\bibinfo {author} {\bibfnamefont {C.}~\bibnamefont {Laumann}}, \bibinfo {author} {\bibfnamefont {A.}~\bibnamefont {Scardicchio}},\ and\ \bibinfo {author} {\bibfnamefont {S.}~\bibnamefont {Sondhi}},\ }\bibfield  {title} {\bibinfo {title} {Cavity method for quantum spin glasses on the bethe lattice},\ }\href@noop {} {\bibfield  {journal} {\bibinfo  {journal} {Physical Review B—Condensed Matter and Materials Physics}\ }\textbf {\bibinfo {volume} {78}},\ \bibinfo {pages} {134424} (\bibinfo {year} {2008})}\BibitemShut {NoStop}%
\bibitem [{\citenamefont {Morita}\ \emph {et~al.}(2005)\citenamefont {Morita}, \citenamefont {Ozeki},\ and\ \citenamefont {Nishimori}}]{morita2005gauge}%
  \BibitemOpen
  \bibfield  {author} {\bibinfo {author} {\bibfnamefont {S.}~\bibnamefont {Morita}}, \bibinfo {author} {\bibfnamefont {Y.}~\bibnamefont {Ozeki}},\ and\ \bibinfo {author} {\bibfnamefont {H.}~\bibnamefont {Nishimori}},\ }\bibfield  {title} {\bibinfo {title} {Gauge theory for quantum spin glasses},\ }\href@noop {} {\bibfield  {journal} {\bibinfo  {journal} {Journal of the Physical Society of Japan}\ }\textbf {\bibinfo {volume} {75}},\ \bibinfo {pages} {014001} (\bibinfo {year} {2005})}\BibitemShut {NoStop}%
\bibitem [{\citenamefont {Itoi}\ and\ \citenamefont {Sakamoto}(2024)}]{itoi2024gauge}%
  \BibitemOpen
  \bibfield  {author} {\bibinfo {author} {\bibfnamefont {C.}~\bibnamefont {Itoi}}\ and\ \bibinfo {author} {\bibfnamefont {Y.}~\bibnamefont {Sakamoto}},\ }\bibfield  {title} {\bibinfo {title} {Gauge theory for quantum xyz spin glasses},\ }\href@noop {} {\bibfield  {journal} {\bibinfo  {journal} {Journal of Physics A: Mathematical and Theoretical}\ }\textbf {\bibinfo {volume} {57}},\ \bibinfo {pages} {045001} (\bibinfo {year} {2024})}\BibitemShut {NoStop}%
\bibitem [{\citenamefont {Jiang}\ and\ \citenamefont {Irani}(2024)}]{jiang2024quantummetropolissamplingweak}%
  \BibitemOpen
  \bibfield  {author} {\bibinfo {author} {\bibfnamefont {J.}~\bibnamefont {Jiang}}\ and\ \bibinfo {author} {\bibfnamefont {S.}~\bibnamefont {Irani}},\ }\href@noop {} {\bibinfo {title} {Quantum {Metropolis} sampling via weak measurement}} (\bibinfo {year} {2024}),\ \Eprint {https://arxiv.org/abs/2406.16023} {arXiv:2406.16023 [quant-ph]} \BibitemShut {NoStop}%
\bibitem [{\citenamefont {Cugliandolo}\ and\ \citenamefont {Kurchan}(1993)}]{cugliandolo1993analytical}%
  \BibitemOpen
  \bibfield  {author} {\bibinfo {author} {\bibfnamefont {L.~F.}\ \bibnamefont {Cugliandolo}}\ and\ \bibinfo {author} {\bibfnamefont {J.}~\bibnamefont {Kurchan}},\ }\bibfield  {title} {\bibinfo {title} {Analytical solution of the off-equilibrium dynamics of a long-range spin-glass model},\ }\href@noop {} {\bibfield  {journal} {\bibinfo  {journal} {Physical Review Letters}\ }\textbf {\bibinfo {volume} {71}},\ \bibinfo {pages} {173} (\bibinfo {year} {1993})}\BibitemShut {NoStop}%
\bibitem [{\citenamefont {Cugliandolo}\ and\ \citenamefont {Lozano}(1998)}]{cugliandolo1998quantum}%
  \BibitemOpen
  \bibfield  {author} {\bibinfo {author} {\bibfnamefont {L.~F.}\ \bibnamefont {Cugliandolo}}\ and\ \bibinfo {author} {\bibfnamefont {G.}~\bibnamefont {Lozano}},\ }\bibfield  {title} {\bibinfo {title} {Quantum aging in mean-field models},\ }\href@noop {} {\bibfield  {journal} {\bibinfo  {journal} {Physical review letters}\ }\textbf {\bibinfo {volume} {80}},\ \bibinfo {pages} {4979} (\bibinfo {year} {1998})}\BibitemShut {NoStop}%
\bibitem [{\citenamefont {Cugliandolo}\ and\ \citenamefont {Lozano}(1999)}]{cugliandolo1999real}%
  \BibitemOpen
  \bibfield  {author} {\bibinfo {author} {\bibfnamefont {L.~F.}\ \bibnamefont {Cugliandolo}}\ and\ \bibinfo {author} {\bibfnamefont {G.}~\bibnamefont {Lozano}},\ }\bibfield  {title} {\bibinfo {title} {Real-time nonequilibrium dynamics of quantum glassy systems},\ }\href@noop {} {\bibfield  {journal} {\bibinfo  {journal} {Physical Review B}\ }\textbf {\bibinfo {volume} {59}},\ \bibinfo {pages} {915} (\bibinfo {year} {1999})}\BibitemShut {NoStop}%
\bibitem [{\citenamefont {Biroli}\ and\ \citenamefont {Cugliandolo}(2001)}]{biroli2001quantum}%
  \BibitemOpen
  \bibfield  {author} {\bibinfo {author} {\bibfnamefont {G.}~\bibnamefont {Biroli}}\ and\ \bibinfo {author} {\bibfnamefont {L.~F.}\ \bibnamefont {Cugliandolo}},\ }\bibfield  {title} {\bibinfo {title} {Quantum thouless-anderson-palmer equations for glassy systems},\ }\href@noop {} {\bibfield  {journal} {\bibinfo  {journal} {Physical Review B}\ }\textbf {\bibinfo {volume} {64}},\ \bibinfo {pages} {014206} (\bibinfo {year} {2001})}\BibitemShut {NoStop}%
\bibitem [{\citenamefont {Kennett}\ and\ \citenamefont {Chamon}(2001)}]{kennett2001time}%
  \BibitemOpen
  \bibfield  {author} {\bibinfo {author} {\bibfnamefont {M.~P.}\ \bibnamefont {Kennett}}\ and\ \bibinfo {author} {\bibfnamefont {C.}~\bibnamefont {Chamon}},\ }\bibfield  {title} {\bibinfo {title} {Time reparametrization group and the long time behavior in quantum glassy systems},\ }\href@noop {} {\bibfield  {journal} {\bibinfo  {journal} {Physical Review Letters}\ }\textbf {\bibinfo {volume} {86}},\ \bibinfo {pages} {1622} (\bibinfo {year} {2001})}\BibitemShut {NoStop}%
\bibitem [{\citenamefont {Biroli}\ and\ \citenamefont {Parcollet}(2002)}]{biroli2002out}%
  \BibitemOpen
  \bibfield  {author} {\bibinfo {author} {\bibfnamefont {G.}~\bibnamefont {Biroli}}\ and\ \bibinfo {author} {\bibfnamefont {O.}~\bibnamefont {Parcollet}},\ }\bibfield  {title} {\bibinfo {title} {Out-of-equilibrium dynamics of a quantum heisenberg spin glass},\ }\href@noop {} {\bibfield  {journal} {\bibinfo  {journal} {Physical Review B}\ }\textbf {\bibinfo {volume} {65}},\ \bibinfo {pages} {094414} (\bibinfo {year} {2002})}\BibitemShut {NoStop}%
\bibitem [{\citenamefont {Cugliandolo}(2006)}]{cugliandolo2006dissipative}%
  \BibitemOpen
  \bibfield  {author} {\bibinfo {author} {\bibfnamefont {L.~F.}\ \bibnamefont {Cugliandolo}},\ }\bibfield  {title} {\bibinfo {title} {Dissipative quantum disordered models},\ }\href@noop {} {\bibfield  {journal} {\bibinfo  {journal} {International Journal of Modern Physics B}\ }\textbf {\bibinfo {volume} {20}},\ \bibinfo {pages} {2795} (\bibinfo {year} {2006})}\BibitemShut {NoStop}%
\bibitem [{\citenamefont {Guerra}\ and\ \citenamefont {Toninelli}(2002)}]{guerra2002thermodynamic}%
  \BibitemOpen
  \bibfield  {author} {\bibinfo {author} {\bibfnamefont {F.}~\bibnamefont {Guerra}}\ and\ \bibinfo {author} {\bibfnamefont {F.~L.}\ \bibnamefont {Toninelli}},\ }\bibfield  {title} {\bibinfo {title} {The thermodynamic limit in mean field spin glass models},\ }\href@noop {} {\bibfield  {journal} {\bibinfo  {journal} {Communications in Mathematical Physics}\ }\textbf {\bibinfo {volume} {230}},\ \bibinfo {pages} {71} (\bibinfo {year} {2002})}\BibitemShut {NoStop}%
\bibitem [{\citenamefont {Guerra}(2003)}]{guerra2003broken}%
  \BibitemOpen
  \bibfield  {author} {\bibinfo {author} {\bibfnamefont {F.}~\bibnamefont {Guerra}},\ }\bibfield  {title} {\bibinfo {title} {Broken replica symmetry bounds in the mean field spin glass model},\ }\href@noop {} {\bibfield  {journal} {\bibinfo  {journal} {Communications in mathematical physics}\ }\textbf {\bibinfo {volume} {233}},\ \bibinfo {pages} {1} (\bibinfo {year} {2003})}\BibitemShut {NoStop}%
\bibitem [{\citenamefont {Talagrand}(2003)}]{talagrand2003spin}%
  \BibitemOpen
  \bibfield  {author} {\bibinfo {author} {\bibfnamefont {M.}~\bibnamefont {Talagrand}},\ }\href@noop {} {\emph {\bibinfo {title} {Spin glasses: a challenge for mathematicians: cavity and mean field models}}},\ Vol.~\bibinfo {volume} {46}\ (\bibinfo  {publisher} {Springer Science \& Business Media},\ \bibinfo {year} {2003})\BibitemShut {NoStop}%
\bibitem [{\citenamefont {Talagrand}(2006)}]{talagrand2006parisi}%
  \BibitemOpen
  \bibfield  {author} {\bibinfo {author} {\bibfnamefont {M.}~\bibnamefont {Talagrand}},\ }\bibfield  {title} {\bibinfo {title} {The parisi formula},\ }\href@noop {} {\bibfield  {journal} {\bibinfo  {journal} {Annals of mathematics}\ ,\ \bibinfo {pages} {221}} (\bibinfo {year} {2006})}\BibitemShut {NoStop}%
\bibitem [{\citenamefont {Panchenko}(2013)}]{panchenko2013sherrington}%
  \BibitemOpen
  \bibfield  {author} {\bibinfo {author} {\bibfnamefont {D.}~\bibnamefont {Panchenko}},\ }\href@noop {} {\emph {\bibinfo {title} {The sherrington-kirkpatrick model}}}\ (\bibinfo  {publisher} {Springer Science \& Business Media},\ \bibinfo {year} {2013})\BibitemShut {NoStop}%
\bibitem [{\citenamefont {Kirkpatrick}\ and\ \citenamefont {Thirumalai}(1987)}]{kirkpatrick1987p}%
  \BibitemOpen
  \bibfield  {author} {\bibinfo {author} {\bibfnamefont {T.~R.}\ \bibnamefont {Kirkpatrick}}\ and\ \bibinfo {author} {\bibfnamefont {D.}~\bibnamefont {Thirumalai}},\ }\bibfield  {title} {\bibinfo {title} {p-spin-interaction spin-glass models: Connections with the structural glass problem},\ }\href@noop {} {\bibfield  {journal} {\bibinfo  {journal} {Physical Review B}\ }\textbf {\bibinfo {volume} {36}},\ \bibinfo {pages} {5388} (\bibinfo {year} {1987})}\BibitemShut {NoStop}%
\bibitem [{\citenamefont {Leschke}\ \emph {et~al.}(2021)\citenamefont {Leschke}, \citenamefont {Manai}, \citenamefont {Ruder},\ and\ \citenamefont {Warzel}}]{leschke2021existence}%
  \BibitemOpen
  \bibfield  {author} {\bibinfo {author} {\bibfnamefont {H.}~\bibnamefont {Leschke}}, \bibinfo {author} {\bibfnamefont {C.}~\bibnamefont {Manai}}, \bibinfo {author} {\bibfnamefont {R.}~\bibnamefont {Ruder}},\ and\ \bibinfo {author} {\bibfnamefont {S.}~\bibnamefont {Warzel}},\ }\bibfield  {title} {\bibinfo {title} {Existence of replica-symmetry breaking in quantum glasses},\ }\href@noop {} {\bibfield  {journal} {\bibinfo  {journal} {Physical Review Letters}\ }\textbf {\bibinfo {volume} {127}},\ \bibinfo {pages} {207204} (\bibinfo {year} {2021})}\BibitemShut {NoStop}%
\bibitem [{\citenamefont {Manai}\ and\ \citenamefont {Warzel}(2025)}]{manai2025parisi}%
  \BibitemOpen
  \bibfield  {author} {\bibinfo {author} {\bibfnamefont {C.}~\bibnamefont {Manai}}\ and\ \bibinfo {author} {\bibfnamefont {S.}~\bibnamefont {Warzel}},\ }\bibfield  {title} {\bibinfo {title} {A parisi formula for quantum spin glasses},\ }\href@noop {} {\bibfield  {journal} {\bibinfo  {journal} {Electronic Journal of Probability}\ }\textbf {\bibinfo {volume} {30}},\ \bibinfo {pages} {1} (\bibinfo {year} {2025})}\BibitemShut {NoStop}%
\bibitem [{\citenamefont {Chakrabarti}\ \emph {et~al.}(1996)\citenamefont {Chakrabarti}, \citenamefont {Dutta},\ and\ \citenamefont {Sen}}]{chakrabarti1996quantum}%
  \BibitemOpen
  \bibfield  {author} {\bibinfo {author} {\bibfnamefont {B.~K.}\ \bibnamefont {Chakrabarti}}, \bibinfo {author} {\bibfnamefont {A.}~\bibnamefont {Dutta}},\ and\ \bibinfo {author} {\bibfnamefont {P.}~\bibnamefont {Sen}},\ }\href@noop {} {\emph {\bibinfo {title} {Quantum Ising phases and transitions in transverse Ising models}}}\ (\bibinfo  {publisher} {Springer},\ \bibinfo {year} {1996})\BibitemShut {NoStop}%
\bibitem [{\citenamefont {Pérez-García}\ \emph {et~al.}(2006)\citenamefont {Pérez-García}, \citenamefont {Wolf}, \citenamefont {Petz},\ and\ \citenamefont {Ruskai}}]{10.1063/1.2218675}%
  \BibitemOpen
  \bibfield  {author} {\bibinfo {author} {\bibfnamefont {D.}~\bibnamefont {Pérez-García}}, \bibinfo {author} {\bibfnamefont {M.~M.}\ \bibnamefont {Wolf}}, \bibinfo {author} {\bibfnamefont {D.}~\bibnamefont {Petz}},\ and\ \bibinfo {author} {\bibfnamefont {M.~B.}\ \bibnamefont {Ruskai}},\ }\bibfield  {title} {\bibinfo {title} {Contractivity of positive and trace-preserving maps under {$L_p$} norms},\ }\href {https://doi.org/10.1063/1.2218675} {\bibfield  {journal} {\bibinfo  {journal} {J. Math. Phys.}\ }\textbf {\bibinfo {volume} {47}},\ \bibinfo {pages} {083506} (\bibinfo {year} {2006})}\BibitemShut {NoStop}%
\bibitem [{\citenamefont {Gamarnik}\ \emph {et~al.}(2024{\natexlab{b}})\citenamefont {Gamarnik}, \citenamefont {Jagannath},\ and\ \citenamefont {Wein}}]{10.1137/22M150263X}%
  \BibitemOpen
  \bibfield  {author} {\bibinfo {author} {\bibfnamefont {D.}~\bibnamefont {Gamarnik}}, \bibinfo {author} {\bibfnamefont {A.}~\bibnamefont {Jagannath}},\ and\ \bibinfo {author} {\bibfnamefont {A.~S.}\ \bibnamefont {Wein}},\ }\bibfield  {title} {\bibinfo {title} {Hardness of random optimization problems for {Boolean} circuits, low-degree polynomials, and {Langevin} dynamics},\ }\href {https://doi.org/10.1137/22M150263X} {\bibfield  {journal} {\bibinfo  {journal} {SIAM Journal on Computing}\ }\textbf {\bibinfo {volume} {53}},\ \bibinfo {pages} {1} (\bibinfo {year} {2024}{\natexlab{b}})}\BibitemShut {NoStop}%
\bibitem [{\citenamefont {Sellke}(2024)}]{10.1002/cpa.22197}%
  \BibitemOpen
  \bibfield  {author} {\bibinfo {author} {\bibfnamefont {M.}~\bibnamefont {Sellke}},\ }\bibfield  {title} {\bibinfo {title} {The threshold energy of low temperature {Langevin} dynamics for pure spherical spin glasses},\ }\href {https://doi.org/10.1002/cpa.22197} {\bibfield  {journal} {\bibinfo  {journal} {Communications on Pure and Applied Mathematics}\ }\textbf {\bibinfo {volume} {77}},\ \bibinfo {pages} {4065} (\bibinfo {year} {2024})}\BibitemShut {NoStop}%
\bibitem [{\citenamefont {Lindblad}(1976)}]{lindblad1976generators}%
  \BibitemOpen
  \bibfield  {author} {\bibinfo {author} {\bibfnamefont {G.}~\bibnamefont {Lindblad}},\ }\bibfield  {title} {\bibinfo {title} {On the generators of quantum dynamical semigroups},\ }\href@noop {} {\bibfield  {journal} {\bibinfo  {journal} {Communications in mathematical physics}\ }\textbf {\bibinfo {volume} {48}},\ \bibinfo {pages} {119} (\bibinfo {year} {1976})}\BibitemShut {NoStop}%
\bibitem [{\citenamefont {Anschuetz}(2022)}]{anschuetz2022critical}%
  \BibitemOpen
  \bibfield  {author} {\bibinfo {author} {\bibfnamefont {E.~R.}\ \bibnamefont {Anschuetz}},\ }\bibfield  {title} {\bibinfo {title} {Critical points in quantum generative models},\ }in\ \href {https://openreview.net/forum?id=2f1z55GVQN} {\emph {\bibinfo {booktitle} {{I}nternational {C}onference on {L}earning {R}epresentations}}},\ \bibinfo {editor} {edited by\ \bibinfo {editor} {\bibfnamefont {K.}~\bibnamefont {Hofmann}}, \bibinfo {editor} {\bibfnamefont {A.}~\bibnamefont {Rush}}, \bibinfo {editor} {\bibfnamefont {Y.}~\bibnamefont {Liu}}, \bibinfo {editor} {\bibfnamefont {C.}~\bibnamefont {Finn}}, \bibinfo {editor} {\bibfnamefont {Y.}~\bibnamefont {Choi}},\ and\ \bibinfo {editor} {\bibfnamefont {M.}~\bibnamefont {Deisenroth}}}\ (\bibinfo  {publisher} {OpenReview},\ \bibinfo {year} {2022})\BibitemShut {NoStop}%
\bibitem [{\citenamefont {Anschuetz}\ and\ \citenamefont {Kiani}(2022)}]{anschuetzkiani2022}%
  \BibitemOpen
  \bibfield  {author} {\bibinfo {author} {\bibfnamefont {E.~R.}\ \bibnamefont {Anschuetz}}\ and\ \bibinfo {author} {\bibfnamefont {B.~T.}\ \bibnamefont {Kiani}},\ }\bibfield  {title} {\bibinfo {title} {Quantum variational algorithms are swamped with traps},\ }\href {https://doi.org/10.1038/s41467-022-35364-5} {\bibfield  {journal} {\bibinfo  {journal} {Nat. Commun.}\ }\textbf {\bibinfo {volume} {13}},\ \bibinfo {pages} {7760} (\bibinfo {year} {2022})}\BibitemShut {NoStop}%
\bibitem [{\citenamefont {Anschuetz}(2025{\natexlab{b}})}]{anschuetz2025unified}%
  \BibitemOpen
  \bibfield  {author} {\bibinfo {author} {\bibfnamefont {E.~R.}\ \bibnamefont {Anschuetz}},\ }\bibfield  {title} {\bibinfo {title} {A unified theory of quantum neural network loss landscapes},\ }in\ \href {https://openreview.net/forum?id=fv8TTt9srF} {\emph {\bibinfo {booktitle} {{I}nternational {C}onference on {L}earning {R}epresentations}}},\ \bibinfo {editor} {edited by\ \bibinfo {editor} {\bibfnamefont {Y.}~\bibnamefont {Yue}}, \bibinfo {editor} {\bibfnamefont {A.}~\bibnamefont {Garg}}, \bibinfo {editor} {\bibfnamefont {N.}~\bibnamefont {Peng}}, \bibinfo {editor} {\bibfnamefont {F.}~\bibnamefont {Sha}},\ and\ \bibinfo {editor} {\bibfnamefont {R.}~\bibnamefont {Yu}}}\ (\bibinfo  {publisher} {OpenReview},\ \bibinfo {year} {2025})\BibitemShut {NoStop}%
\bibitem [{\citenamefont {Baldwin}\ and\ \citenamefont {Swingle}(2020)}]{baldwin2020quenched}%
  \BibitemOpen
  \bibfield  {author} {\bibinfo {author} {\bibfnamefont {C.}~\bibnamefont {Baldwin}}\ and\ \bibinfo {author} {\bibfnamefont {B.}~\bibnamefont {Swingle}},\ }\bibfield  {title} {\bibinfo {title} {Quenched vs annealed: Glassiness from sk to syk},\ }\href@noop {} {\bibfield  {journal} {\bibinfo  {journal} {Physical Review X}\ }\textbf {\bibinfo {volume} {10}},\ \bibinfo {pages} {031026} (\bibinfo {year} {2020})}\BibitemShut {NoStop}%
\bibitem [{\citenamefont {Lang}\ \emph {et~al.}(2024)\citenamefont {Lang}, \citenamefont {Sachdev},\ and\ \citenamefont {Diehl}}]{lang2024replica}%
  \BibitemOpen
  \bibfield  {author} {\bibinfo {author} {\bibfnamefont {J.}~\bibnamefont {Lang}}, \bibinfo {author} {\bibfnamefont {S.}~\bibnamefont {Sachdev}},\ and\ \bibinfo {author} {\bibfnamefont {S.}~\bibnamefont {Diehl}},\ }\bibfield  {title} {\bibinfo {title} {Replica symmetry breaking in spin glasses in the replica-free keldysh formalism},\ }\href@noop {} {\bibfield  {journal} {\bibinfo  {journal} {SciPost Physics}\ }\textbf {\bibinfo {volume} {17}},\ \bibinfo {pages} {160} (\bibinfo {year} {2024})}\BibitemShut {NoStop}%
\bibitem [{\citenamefont {Baldwin}\ and\ \citenamefont {Swingle}(2023)}]{baldwin2023revisiting}%
  \BibitemOpen
  \bibfield  {author} {\bibinfo {author} {\bibfnamefont {C.~L.}\ \bibnamefont {Baldwin}}\ and\ \bibinfo {author} {\bibfnamefont {B.}~\bibnamefont {Swingle}},\ }\bibfield  {title} {\bibinfo {title} {Revisiting the replica trick: Competition between spin glass and conventional order},\ }\href@noop {} {\bibfield  {journal} {\bibinfo  {journal} {Journal of statistical physics}\ }\textbf {\bibinfo {volume} {190}},\ \bibinfo {pages} {125} (\bibinfo {year} {2023})}\BibitemShut {NoStop}%
\bibitem [{\citenamefont {Kavokine}\ \emph {et~al.}(2025)\citenamefont {Kavokine}, \citenamefont {Lu}, \citenamefont {Ayral}, \citenamefont {Ferrero}, \citenamefont {Wentzell},\ and\ \citenamefont {Parcollet}}]{kavokine2025ctseg}%
  \BibitemOpen
  \bibfield  {author} {\bibinfo {author} {\bibfnamefont {N.}~\bibnamefont {Kavokine}}, \bibinfo {author} {\bibfnamefont {H.}~\bibnamefont {Lu}}, \bibinfo {author} {\bibfnamefont {T.}~\bibnamefont {Ayral}}, \bibinfo {author} {\bibfnamefont {M.}~\bibnamefont {Ferrero}}, \bibinfo {author} {\bibfnamefont {N.}~\bibnamefont {Wentzell}},\ and\ \bibinfo {author} {\bibfnamefont {O.}~\bibnamefont {Parcollet}},\ }\bibfield  {title} {\bibinfo {title} {Ctseg: A segment picture quantum impurity solver based on triqs},\ }\href@noop {} {\bibfield  {journal} {\bibinfo  {journal} {Journal of Open Source Software}\ }\textbf {\bibinfo {volume} {10}},\ \bibinfo {pages} {7425} (\bibinfo {year} {2025})}\BibitemShut {NoStop}%
\bibitem [{\citenamefont {Ferrari}\ \emph {et~al.}(2012{\natexlab{b}})\citenamefont {Ferrari}, \citenamefont {Leuzzi}, \citenamefont {Parisi},\ and\ \citenamefont {Rizzo}}]{parisi2012}%
  \BibitemOpen
  \bibfield  {author} {\bibinfo {author} {\bibfnamefont {U.}~\bibnamefont {Ferrari}}, \bibinfo {author} {\bibfnamefont {L.}~\bibnamefont {Leuzzi}}, \bibinfo {author} {\bibfnamefont {G.}~\bibnamefont {Parisi}},\ and\ \bibinfo {author} {\bibfnamefont {T.}~\bibnamefont {Rizzo}},\ }\bibfield  {title} {\bibinfo {title} {Two-step relaxation next to dynamic arrest in mean-field glasses: Spherical and ising $p$-spin model},\ }\href {https://doi.org/10.1103/PhysRevB.86.014204} {\bibfield  {journal} {\bibinfo  {journal} {Phys. Rev. B}\ }\textbf {\bibinfo {volume} {86}},\ \bibinfo {pages} {014204} (\bibinfo {year} {2012}{\natexlab{b}})}\BibitemShut {NoStop}%
\bibitem [{\citenamefont {Marinari}\ \emph {et~al.}(1994)\citenamefont {Marinari}, \citenamefont {Parisi},\ and\ \citenamefont {Ritort}}]{marinari1994replica}%
  \BibitemOpen
  \bibfield  {author} {\bibinfo {author} {\bibfnamefont {E.}~\bibnamefont {Marinari}}, \bibinfo {author} {\bibfnamefont {G.}~\bibnamefont {Parisi}},\ and\ \bibinfo {author} {\bibfnamefont {F.}~\bibnamefont {Ritort}},\ }\bibfield  {title} {\bibinfo {title} {Replica field theory for deterministic models. ii. a non-random spin glass with glassy behaviour},\ }\href@noop {} {\bibfield  {journal} {\bibinfo  {journal} {Journal of Physics A: Mathematical and General}\ }\textbf {\bibinfo {volume} {27}},\ \bibinfo {pages} {7647} (\bibinfo {year} {1994})}\BibitemShut {NoStop}%
\bibitem [{\citenamefont {Giamarchi}\ and\ \citenamefont {Le~Doussal}(1996)}]{giamarchi1996variational}%
  \BibitemOpen
  \bibfield  {author} {\bibinfo {author} {\bibfnamefont {T.}~\bibnamefont {Giamarchi}}\ and\ \bibinfo {author} {\bibfnamefont {P.}~\bibnamefont {Le~Doussal}},\ }\bibfield  {title} {\bibinfo {title} {Variational theory of elastic manifolds with correlated disorder and localization of interacting quantum particles},\ }\href@noop {} {\bibfield  {journal} {\bibinfo  {journal} {Physical Review B}\ }\textbf {\bibinfo {volume} {53}},\ \bibinfo {pages} {15206} (\bibinfo {year} {1996})}\BibitemShut {NoStop}%
\bibitem [{\citenamefont {Gabay}\ \emph {et~al.}(1982)\citenamefont {Gabay}, \citenamefont {Garel},\ and\ \citenamefont {De~Dominicis}}]{gabay1982symmetry}%
  \BibitemOpen
  \bibfield  {author} {\bibinfo {author} {\bibfnamefont {M.}~\bibnamefont {Gabay}}, \bibinfo {author} {\bibfnamefont {T.}~\bibnamefont {Garel}},\ and\ \bibinfo {author} {\bibfnamefont {C.}~\bibnamefont {De~Dominicis}},\ }\bibfield  {title} {\bibinfo {title} {Symmetry breaking a la parisi in the n-component sk model of a spin glass},\ }\href@noop {} {\bibfield  {journal} {\bibinfo  {journal} {Journal of Physics C: Solid State Physics}\ }\textbf {\bibinfo {volume} {15}},\ \bibinfo {pages} {7165} (\bibinfo {year} {1982})}\BibitemShut {NoStop}%
\end{thebibliography}%

\newpage
\appendix
\section{Path integrals for the \texorpdfstring{$q$}\ -local Hamiltonian ensemble}\label{app:path}

This appendix shows how to obtain the saddle point equations of the random $q$-local Hamiltonian (\Cref{def:pspin}). This computation already exists in the literature in~\cite{swingle2023bosonicmodelquantumholography}, but we attempt to provide a self-contained derivation of it in somewhat more pedagogical detail here. The later appendices provide new computations for RSB. Throughout the appendices, we will use the SYK-like notation of $N$ qubits, $q$ locality, and $n$ replicas.

\subsection{Partition function}
Let $D^N \Omega$ be the (normalized) path integral measure over spin configurations $\Omega_r(\tau)$ for $r \in [N]$ and $d\Omega_r(\tau) = d\theta_r(\tau)\,d\phi_r(\tau)\,\sin\theta_r(\tau)$ which is formally defined later. 
In this section, we rewrite the random $q$-local Hamiltonian partition function in the form
\begin{align}
    Z = \int \cD^N \Omega \,\exp\left[-\int_0^\beta d\tau \sum_I J_I s_{I_1}(\tau)\cdots s_{I_q}(\tau)\right].
\end{align}
More carefully, one could write this is in discretized form where $s_{r\mu}(\tau)$ is $s_\mu(\theta_r^{(\tau/\Delta\tau)}, \phi_r^{(\tau/\Delta\tau)})$, the exponential is trotterized, and
\begin{align}
    \cD^N \Omega &= \prod_{r=1}^N \prod_{i=1}^{\beta/\Delta \tau} \frac{d\Omega_r^{(i)}}{2\pi} \lr{\cos \frac{\theta_r^{(i)}}{2} \cos \frac{\theta_r^{(i+1)}}{2} + e^{-i(\phi_r^{(i)}-\phi_r^{(i+1)})}\sin \frac{\theta_r^{(i)}}{2} \sin \frac{\theta_r^{(i+1)}}{2}}
\end{align}
for
\begin{align}
    d\Omega_r^{(i)} = d\theta_r^{(i)} \, d\phi_r^{(i)} \, \sin\theta_r^{(i)}, \quad s_x(\theta,\phi) = 3\sin\theta\cos\phi, \quad s_y(\theta,\phi) = 3\sin\theta\sin\phi, \quad s_z(\theta,\phi) = 3\cos\theta.
\end{align}
For $\theta \in [0, \pi)$ and $\phi \in [0,2\pi)$, a spin coherent state is defined as
\begin{align}
    \ket{\Omega} = \begin{pmatrix}\cos \frac{\theta}{2} \\ e^{i\phi}\sin \frac{\theta}{2}\end{pmatrix}.
\end{align}
These have the convenient property that for any $\mu$,
\begin{align}
    \sigma_\mu = \frac{1}{2\pi}\int d\Omega\, \ketbra{\Omega} s_\mu.
\end{align}
Given $n$ parameters $\theta_1, \dots, \theta_n$ and $\phi_1, \dots, \phi_n$, each pair $\theta_i, \phi_i$ denotes a single qubit state $\ket{\Omega_i}$. 
We use shorthand notation $\ket{\Omega_1, \dots, \Omega_n}$ to then denote the product state $\ket{\Omega_1} \otimes \cdots \otimes \ket{\Omega_n}$.

To write the path integral in terms of the $s$ variables, we observe that for any time $\Delta \tau$
\begin{align}
    1-\Delta \tau \sum_I J_I \sigma_{I_1}\cdots \sigma_{I_q} = \int \prod_{r=1}^N \frac{d\Omega_r}{2\pi} \ketbra{\Omega_1, \dots, \Omega_n} \lr{1-\Delta \tau \sum_I J_I s_{I_1} \cdots s_{I_q}}.
\end{align}
We take the product for an integer $\beta/\Delta \tau$ times, giving
\begin{align}
    \Tr[\lr{1-\Delta \tau \sum_I J_I \sigma_{I_1}\cdots \sigma_{I_q}}^{\beta/\Delta \tau}] = \int \tilde \cD^N \Omega \prod_{i=1}^{\beta/\Delta \tau}\lr{1-\Delta \tau \sum_I J_I s_{I_1}^{(i)} \cdots s_{I_q}^{(i)}}
\end{align}
for
\begin{align}
    \tilde \cD^N \Omega = \Tr[\prod_{r=1}^N \prod_{i=1}^{\beta/\Delta \tau} \frac{d\Omega_r^{(i)}}{2\pi} \ketbra{\Omega_r^{(i)}}].
\end{align}
In the limit $\Delta \tau \to 0$, this becomes
\begin{align}
    \Tr(\exp\left[-\beta \sum_I J_I \sigma_{I_1}\cdots \sigma_{I_q} \right]) = \int \cD^N \Omega \exp\left[-\int_0^\beta d\tau \sum_I J_I s_{I_1}(\tau)\cdots s_{I_q}(\tau)\right]
\end{align}
for
\begin{align}
    \cD^N \Omega &= \lim_{\Delta \tau \to 0} \lr{\prod_{r=1}^N \prod_{i=1}^{\beta/\Delta \tau} \frac{d\Omega_r^{(i)}}{2\pi} \lr{\cos \frac{\theta_r^{(i)}}{2} \cos \frac{\theta_r^{(i+1)}}{2} + e^{-i(\phi_r^{(i)}-\phi_r^{(i+1)})}\sin \frac{\theta_r^{(i)}}{2} \sin \frac{\theta_r^{(i+1)}}{2}}},
\end{align}
where $D^N \Omega$ is the path integral measure over all spin configurations $\Omega_r(\tau)$ (and absorbs our previous normalization factors of $2\pi$).

\subsection{Expectation over disorder}

Here, we show how to rewrite the replicated partition function in $G,\Sigma$ variables by taking an expectation over disorder.
Define $G_{ab}(\tau_1, \tau_2)$ and associated Lagrange multiplier $\Sigma_{ab}(\tau_1, \tau_2)$, with measures $\cD G$ and $\cD \Sigma$ given in the continuum limit over times $\tau$. Let $\cD^N h$ be the Gaussian measure relating auxiliary fields $h^a_{r\mu}(\tau_1), h^b_{s\nu}(\tau_2)$ by covariance $\delta_{rs}\delta_{\mu\nu}\Sigma_{ab}(\tau_1, \tau_2)$. We will obtain
\begin{align}
    \EE{Z^m} &= \int \cD^N \Omega\, \cD G\, \cD\Sigma\, \cD^N h\, \exp\Bigg\{\int_0^\beta \int_0^\beta d\tau_1 \,d\tau_2 \Bigg(\frac{J^2 N}{2q} \sum_{a,b=1}^m G_{ab}^q(\tau_1, \tau_2) - \frac{N}{2}\sum_{a,b=1}^m \Sigma_{ab}(\tau_1, \tau_2) G_{ab}(\tau_1, \tau_2)\Bigg) \nonumber \\
    &\qquad \qquad \qquad \qquad \qquad \quad  + \sum_{r,\mu,a} \int_0^\beta d\tau\, h^a_{r\mu}(\tau) s_{r\mu}^a(\tau)\Bigg\}.
\end{align}
In discretized form, this can be rewritten as
\begin{align}
    \EE{Z^m} &= \int_0^\pi \int_0^{2\pi} \left[\prod_{a=1}^m\prod_{r=1}^N \prod_{i=1}^{\beta/\Delta \tau} d\theta_r^{(i,a)}\, d\phi_r^{(i,a)}\, \frac{\sin \theta_r^{(i,a)}}{2\pi} \lr{\cos \frac{\theta_r^{(i,a)}}{2} \cos \frac{\theta_r^{(i+1,a)}}{2} + e^{-i(\phi_r^{(i,a)}-\phi_r^{(i+1,a)})}\sin \frac{\theta_r^{(i,a)}}{2} \sin \frac{\theta_r^{(i+1,a)}}{2}}\right]\nonumber \\
    &\quad \int_{-\infty}^\infty \lr{\prod_{a,b=1}^m \prod_{i,j=1}^{\beta/\Delta \tau} dG_{ab}(i, j)} \int_{-i\infty}^{i\infty} \lr{\prod_{a,b=1}^m \prod_{i, j=1}^{\beta/\Delta \tau} d\Sigma_{ab}(i, j)} \lr{2\pi \det(\Sigma)}^{-1/2} \int_{-\infty}^\infty \lr{\prod_{a,b=1}^m\prod_{i,j=1}^{\beta/\Delta\tau}\prod_{r,\mu}dh_{r\mu}^{(i,a)}}\nonumber\\
    &\quad \exp\left[(\Delta \tau)^2\sum_{i,j=1}^{\beta/\Delta\tau}\lr{\frac{J^2N}{2q}\sum_{a,b=1}^m G_{ab}^q(i,j) - \frac{N}{2}\sum_{a,b=1}^m\Sigma_{ab}(i,j)G_{ab}(i,j)}\right]\nonumber\\
    &\quad \exp\left[-\frac{1}{2}\sum_{i,j=1}^{\beta/\Delta\tau}\sum_{a=1}^m\sum_{r,\mu}h_{r\mu}^{(i,a)}\Sigma_{ab}^{-1}(i,j)h_{r\mu}^{(j,b)} + (\Delta\tau)\sum_{i=1}^{\beta/\Delta\tau}\sum_{a=1}^m\sum_{r,\mu} h_{r\mu}^{(i,a)} s_\mu(\theta_r^{(i,a)},\phi_r^{(i,a)})\right],
\end{align}
where $\Sigma_{ab}^{-1}(i,j)$ refers to the inverted matrix indexed by $(i, a)$ and $(j, b)$, so $\sum_{c=1}^m\sum_{k=1}^{\beta/\Delta \tau} \Delta\tau \Sigma_{ac}(i,k) \Sigma^{-1}_{cb}(k, j) = \delta_{ab}\delta_{ij}$.

The starting point of this computation is the replicated path integral over variables $s_{I_j}$,
\begin{align}
    \EE{Z^m} = \int \cD^N \Omega \, \EE{\exp\left[-\int_0^\beta d\tau \sum_I \sum_{a=1}^m J_I s_{I_1}^a(\tau)\cdots s_{I_q}^a(\tau)\right]}
\end{align}
and apply identity $\EE{e^{\sigma X}} = e^{\sigma^2/2}$ for $X \sim \cN(0,1)$ to obtain
\begin{align}
    \EE{Z^m} &= \int \cD^N \Omega \exp\left[\frac{\EE{J_I^2}}{2q!} \sum_I \lr{\sum_{a=1}^m \int_0^\beta d\tau\, s_{I_1}^a(\tau)\cdots s_{I_q}^a(\tau)}^2\right]\\
    &= \int \cD^N \Omega \exp\left[\frac{J^2}{2q N^{q-1}} \int_0^\beta \int_0^\beta d\tau_1 \,d\tau_2 \sum_{a,b=1}^m \lr{\sum_{r=1}^N \sum_{\mu\in\{x,y,z\}} s_{r\mu}^a(\tau_1) s_{r\mu}^b(\tau_2)}^q\right],
\end{align}
where in the first line we obtain factor $2q!$ (the 2 comes from the identity and the $q!$ from the orderings of $I$), and in the second line we evaluate the expectation and rearrange the terms in the innermost parentheses.
Recall that explicitly, this is (as $\Delta \tau \to 0$) the quantity
\begin{align}
    \EE{Z^m} &= \int \left[\prod_{a=1}^m\prod_{r=1}^N \prod_{i=1}^{\beta/\Delta \tau} d\theta_r^{(i,a)}\, d\phi_r^{(i,a)}\, \frac{\sin \theta_r^{(i,a)}}{2\pi} \lr{\cos \frac{\theta_r^{(i,a)}}{2} \cos \frac{\theta_r^{(i+1,a)}}{2} + e^{-i(\phi_r^{(i,a)}-\phi_r^{(i+1,a)})}\sin \frac{\theta_r^{(i,a)}}{2} \sin \frac{\theta_r^{(i+1,a)}}{2}}\right]\nonumber \\
    &\qquad \times \exp\left[\frac{(\Delta \tau)^2NJ^2}{2q} \sum_{i,j=1}^{\beta/\Delta \tau} \sum_{a,b=1}^m \lr{\frac{1}{N}\sum_{r=1}^N s_\mu(\theta_r^{(i,a)},\phi^{(i,a)}) s_\mu(\theta_r^{(j,b)},\phi^{(j,b)})}^q\right].
\end{align}
We substitute in
\begin{align}
    G_{ab}(\tau_1, \tau_2) &= \frac{1}{N}\sum_{r,\mu} s_{r\mu}^a(\tau_1) s_{r\mu}^b(\tau_2)
\end{align}
to obtain
\begin{align}
    \EE{Z^m} &= \int \cD^N \Omega \,\exp\left[\frac{J^2 N}{2q} \int_0^\beta \int_0^\beta d\tau_1 \,d\tau_2 \sum_{a,b=1}^m G_{ab}^q(\tau_1, \tau_2)\right].
\end{align}
We insert as a prefactor the identity
\begin{align}
    1 = \int \cD G \, \prod_{a,b=1}^m \prod_{\tau_1, \tau_2} \delta\lr{G_{ab}(\tau_1, \tau_2) - \frac{1}{N}\sum_{r,\mu} s_{r\mu}^a(\tau_1) s_{r\mu}^b(\tau_2)}
\end{align}
for
\begin{align}
    \cD G = \prod_{a,b=1}^m \prod_{\tau_1, \tau_2} dG_{ab}(\tau_1, \tau_2).
\end{align}
We replace the delta function with a Lagrange multiplier via $\delta(x) = \frac{1}{2\pi i}\int_{-i\infty}^{i\infty} d\lambda\, \exp[-\lambda x]$ with $\lambda = \frac{N}{2}\Sigma_{ab}(\tau_1, \tau_2)$ to obtain
\begin{align}
    \EE{Z^m} &= \int \cD^N \Omega\, \cD G\, \cD\Sigma\, \exp\Bigg\{\int_0^\beta \int_0^\beta d\tau_1 \,d\tau_2 \Bigg(\frac{J^2 N}{2q} \sum_{a,b=1}^m G_{ab}^q(\tau_1, \tau_2) \nonumber \\
    &\qquad \qquad \qquad \qquad \qquad \quad - \frac{N}{2}\sum_{a,b=1}^m \Sigma_{ab}(\tau_1, \tau_2) \left[G_{ab}(\tau_1, \tau_2) - \frac{1}{N} \sum_{r,\mu} s_{r\mu}^a(\tau_1) s_{r\mu}^b(\tau_2)\right]\Bigg)\Bigg\}.
\end{align}
Finally, we introduce an auxiliary field to eliminate the quadratic term with the Hubbard-Stratonovich transformation $\exp[\frac{1}{2}x^T M x] = \lr{1/\sqrt{(2\pi)^n\det(M)}}\int d^n h \exp[-\frac{1}{2}h^T M^{-1} h + h^T x]$. Let $\cD^N h$ denote the Gaussian measure relating $h_{r\mu}^a(\tau_1), h_{s\nu}^b(\tau_2)$ by covariance $\delta_{rs}\delta_{\mu\nu}\Sigma_{ab}(\tau_1, \tau_2)$. Then
\begin{align}
    \EE{Z^m} &= \int \cD^N \Omega\, \cD G\, \cD\Sigma\, \cD^N h\, \exp\Bigg\{\int_0^\beta \int_0^\beta d\tau_1 \,d\tau_2 \Bigg(\frac{J^2 N}{2q} \sum_{a,b=1}^m G_{ab}^q(\tau_1, \tau_2) - \frac{N}{2}\sum_{a,b=1}^m \Sigma_{ab}(\tau_1, \tau_2) G_{ab}(\tau_1, \tau_2)\Bigg) \nonumber \\
    &\qquad \qquad \qquad \qquad \qquad \qquad \quad  + \sum_{r,\mu,a} \int_0^\beta d\tau\, h_{r\mu}^a(\tau) s_{r\mu}^a(\tau)\Bigg\}.
\end{align}

\subsection{Saddle point equations}
Here, we show how to obtain the saddle point equations
\begin{align}
    J^2 G_{ab}(\tau_1,\tau_2)^{q-1} = \Sigma_{ab}(\tau_1, \tau_2), \qquad G_{ab}(\tau_1, \tau_2) = \left\langle\sum_\mu \sigma_\mu^a(\tau_1)\sigma_\mu^b(\tau_2) \right\rangle_\Sigma.
\end{align}
We can write the second saddle point equation more explicitly as
\begin{align}
    G_{ab}(\tau_1,\tau_2) = \frac{\Tr\left\langle\cT\,\exp\left[\int_0^\beta d\tau \sum_{a=1}^m \sum_\mu h_\mu^{(a)}(\tau) \sigma_\mu^{(a)}\right] \sum_\mu \sigma_\mu^{(a)}(\tau_1)\sigma_\mu^{(b)}(\tau_2)\right\rangle_{\delta_{\mu\nu}\Sigma_{ab}(\tau_1,\tau_2)}}{\Tr\left\langle\cT\,\exp\left[\int_0^\beta d\tau \sum_{a=1}^m \sum_\mu h_\mu^{(a)}(\tau) \sigma_\mu^{(a)}(\tau)\right]\right\rangle_{\delta_{\mu\nu}\Sigma_{ab}(\tau_1,\tau_2)}},
\end{align}
where the expectation is over sampling Gaussian $h$ with mean zero and covariance $\langle h_\mu^{(a)}(\tau_1) h_\nu^{(b)}(\tau_2)\rangle = \delta_{\mu\nu}\Sigma_{ab}(\tau_1,\tau_2)$. Here, the times $\tau_1, \tau_2$ in $\sigma_\mu^{(a)}(\tau_1), \sigma_\mu^{(b)}(\tau_2)$ are just labels for the time-ordering $\cT$: for example, in the discretized form with $\tau_1 = i\Delta \tau$ and $\tau_2 = j \Delta\tau$ with $j > i$, the numerator is the trace of the expectation of
\begin{align}
    \sum_\nu \lr{\prod_{k=\beta/\Delta\tau}^{j} \exp\left[\Delta \tau \sum_{a,\mu} h_\mu^{(k,a)}\sigma_\mu^{(a)}\right]} \sigma_\nu^{(b)} \lr{\prod_{k=j-1}^i \exp\left[\Delta \tau \sum_{a,\mu} h_\mu^{(k,a)}\sigma_\mu^{(a)}\right]} \sigma_\nu^{(a)} \lr{\prod_{k=i-1}^{1} \exp\left[\Delta \tau \sum_{a,\mu} h_\mu^{(k,a)}\sigma_\mu^{(a)}\right]}.
\end{align}
We now derive these equations.
The derivative with respect to $G_{ab}(i,j)$ is trivial: setting it to zero gives the saddle point equation
\begin{align}
    J^2 G_{ab}^{q-1}(i,j) = \Sigma_{ab}(i,j).
\end{align}
The derivative with respect to $\Sigma_{ab}(i,j)$ is slightly trickier. We first observe that the $\theta,\phi$ dependence now only occurs in the final $s_\mu$ function. Hence, we can swap the order of integration and evaluate
\begin{align}
    &\int_0^\pi \int_0^{2\pi} \left[\prod_{a=1}^m\prod_{r=1}^N \prod_{i=1}^{\beta/\Delta \tau} d\theta_r^{(i,a)}\, d\phi_r^{(i,a)}\, \frac{\sin \theta_r^{(i,a)}}{2\pi}\bra{\Omega_r^{(i,a)}}\ket{\Omega_r^{(i+1,a)}}\right] \exp\left[(\Delta\tau)\sum_{i=1}^{\beta/\Delta\tau}\sum_{a=1}^m\sum_{r,\mu} h_{r\mu}^{(i,a)} s_\mu(\theta_r^{(i,a)},\phi_r^{(i,a)})\right]\\
    &= \Tr(\prod_{i=1}^{\beta/\Delta\tau}\exp\left[(\Delta\tau)\sum_{a=1}^m \sum_{r,\mu} h_{r\mu}^{(i,a)} \sigma_{r\mu}^{(i,a)}\right]),
\end{align}
where $\sigma_{r\mu}^{(i,a)}$ indicates a Pauli of type $\mu$ acting on a qubit indexed by $(r, a)$ at time slice $i$. Integrating over $h$ is equivalent taking the expectation when sampling $h_{r\mu}^{(i,a)}$ with covariance $\delta_{rs}\delta_{\mu\nu}\Sigma_{ab}(i,j)$. Hence, we have
\begin{align}
    \EE{Z^m} &= \int_{-\infty}^\infty \lr{\prod_{a,b=1}^m \prod_{i,j=1}^{\beta/\Delta \tau} dG_{ab}(i, j)} \int_{-i\infty}^{i\infty} \lr{\prod_{a,b=1}^m \prod_{i, j=1}^{\beta/\Delta \tau} d\Sigma_{ab}(i, j)} \nonumber\\
    &\quad \exp\left[(\Delta \tau)^2\sum_{i,j=1}^{\beta/\Delta\tau}\lr{\frac{J^2N}{2q}\sum_{a,b=1}^m G_{ab}^q(i,j) - \frac{N}{2}\sum_{a,b=1}^m\Sigma_{ab}(i,j)G_{ab}(i,j)}\right] \nonumber \\
    &\quad \left\langle\Tr(\prod_{i=1}^{\beta/\Delta\tau}\exp\left[(\Delta\tau)\sum_{a=1}^m \sum_{r,\mu} h_{r\mu}^{(i,a)} \sigma_{r\mu}^{(i,a)}\right])\right\rangle_{\delta_{rs}\delta_{\mu\nu}\Sigma_{ab}(i,j)}.
\end{align}
Since the $N$ sites indexed by $r$ are independent, we can further rewrite this in terms of single-qubit operators $\sigma_\mu$ and fields $h_\mu$:
\begin{align}
    \EE{Z^m} &= \int_{-\infty}^\infty \lr{\prod_{a,b=1}^m \prod_{i,j=1}^{\beta/\Delta \tau} dG_{ab}(i, j)} \int_{-i\infty}^{i\infty} \lr{\prod_{a,b=1}^m \prod_{i, j=1}^{\beta/\Delta \tau} d\Sigma_{ab}(i, j)} \nonumber\\
    &\quad \exp\Bigg\{(\Delta \tau)^2\sum_{i,j=1}^{\beta/\Delta\tau}\lr{\frac{J^2N}{2q}\sum_{a,b=1}^m G_{ab}^q(i,j) - \frac{N}{2}\sum_{a,b=1}^m\Sigma_{ab}(i,j)G_{ab}(i,j)} \nonumber \\
    &\qquad\quad + N\log \left\langle\Tr(\prod_{i=1}^{\beta/\Delta\tau}\exp\left[(\Delta\tau)\sum_{a=1}^m \sum_{\mu} h_\mu^{(i,a)} \sigma_\mu^{(i,a)}\right])\right\rangle_{\delta_{\mu\nu}\Sigma_{ab}(i,j)}\Bigg\}.
\end{align}
In the limit $\Delta\tau \to 0$, we write this as
\begin{align}
    \EE{Z^m} &= \int \cD G \, \cD \Sigma\, \exp\Bigg[\int_0^\beta\int_0^\beta d\tau_1\,d\tau_2\, \lr{\frac{J^2N}{2q} G_{ab}^q(\tau_1, \tau_2) - \frac{N}{2}\sum_{a,b=1}^m \Sigma_{ab}(\tau_1, \tau_2) G_{ab}(\tau_1, \tau_2)} \nonumber\\
    &\qquad + N \log \left\langle \Tr(\exp\left[\int_0^\beta d\tau \, \sum_{a=1}^m \sum_\mu h_\mu^{(a)}(\tau) \sigma_\mu^{(a)}(\tau)\right]) \right\rangle_{\delta_{\mu\nu}\Sigma_{ab}(\tau_1,\tau_2)}\Bigg].
\end{align}
We can undo the insertion of the auxiliary field $h$ from Hubbard-Stratonovich, giving
\begin{align}
    \EE{Z^m} &= \int \cD G \, \cD \Sigma\, \exp\Bigg[\int_0^\beta\int_0^\beta d\tau_1\,d\tau_2\, \lr{\frac{J^2N}{2q} G_{ab}^q(\tau_1, \tau_2) - \frac{N}{2}\sum_{a,b=1}^m \Sigma_{ab}(\tau_1, \tau_2) G_{ab}(\tau_1, \tau_2)} \nonumber\\
    &\qquad + N\log \Tr(\cT\,\exp\left[\frac{1}{2}\int_0^\beta \int_0^\beta d\tau_1\, d\tau_2\, \sum_{a,b=1}^m \sum_\mu \sigma_\mu^{(a)}(\tau_1)\Sigma_{ab}(\tau_1, \tau_2)\sigma_\mu^{(b)}(\tau_2)\right])\Bigg],
\end{align}
where we abuse notation and use the time-ordering operator $\cT$ to remind ourselves that the auxiliary field must be time-ordered.
Duhamel and cyclicity of trace gives derivative
\begin{align}
    \frac{\partial}{\partial \Sigma_{ab}(i,j)} \log \Tr(\exp[X]) = \frac{N(\Delta \tau)^2}{2}\frac{\Tr(\cT\,\exp\left[X\right]\sum_\mu \sigma_\mu^{(a)}(\tau_1)\sigma_\mu^{(b)}(\tau_2))}{\Tr(\cT\,\exp[X])}
\end{align}
for
\begin{align}
    X = \frac{(\Delta \tau)^2}{2}\sum_{i,j=1}^{\beta/\Delta\tau} \sum_{a,b=1}^m \sum_\mu \sigma_\mu^{(i,a)}\Sigma_{ab}(i, j)\sigma_\mu^{(j,b)}.
\end{align}
Hence, the second saddle point equation is
\begin{align}
    -\frac{N(\Delta\tau)^2}{2} G_{ab}(i,j) + \frac{N(\Delta \tau)^2}{2}\frac{\Tr(\cT\,\exp\left[X\right]\sum_\mu \sigma_\mu^{(i,a)}\sigma_\mu^{(j,b)})}{\Tr(\cT\,\exp[X])} = 0
\end{align}
giving
\begin{align}
    G_{ab}(i,j) = \frac{\Tr(\cT\,\exp\left[X\right]\sum_\mu \sigma_\mu^{(i,a)}\sigma_\mu^{(j,b)})}{\Tr(\cT\,\exp[X])}.
\end{align}
To write the time-ordering cleanly in the discretized form, we return to the previous representation with auxiliary fields $h$.

\section{RS solution of the \texorpdfstring{$q$}\ -local Hamiltonian ensemble}\label{app:rs}

We describe here the numerical procedure used to obtain \Cref{fig:rs}; the analytical details are presented in the main text. Our code is implemented using CTSEG~\cite{kavokine2025ctseg} and is adapted from the recent solution of the quantum Heisenberg model~\cite{kavokine2024exact}. We initialize $\Sigma(\tau) = J^2 G(\tau)^{q-1}$ using the Liouville solution of~\cite{swingle2023bosonicmodelquantumholography}, which approximates the large-$q$ solution of the saddle point equations by expanding $G(\tau)$ is around 3, i.e.,
\begin{align}
    G = 3\lr{1 + \frac{g(\tau)}{q} + \cdots}.
\end{align}
They resum the series as $3e^{g/q}$ to obtain approximation
\begin{align}
    G(\tau) = b \lr{\frac{\pi}{\beta \sin \frac{\pi \tau}{\beta}}}^{2/q}, \qquad b = \frac{1}{\pi}\lr{\frac{1}{2}-\frac{1}{q}}\tan \frac{\pi}{q}.
\end{align}
In~\cite{swingle2023bosonicmodelquantumholography}, it was observed that initializing numerical solvers from this solution converge to the quenched RS $G(\tau)$ as opposed to the annealed saddle $G(\tau)\to 1$. In the RS phase of \Cref{fig:rs}, we similarly find that $\Sigma(\beta/2) \sim 0.6$ (with $J=1$), consistent with seeing the quenched as opposed to the annealed saddle. We discretized $\Sigma(\tau)$ into $N_\tau = 2000$ timesteps and sampled the external field $N_h = 2000$ times; the remaining parameters are similar to those of~\cite{kavokine2024exact}. We note that the RS solver (with a generic off-diagonal element) is fairly unstable for the $p=3$ model; especially around the transition temperature shown in \Cref{fig:rs}, its output is sensitive to the initialization and choice of hyperparameters. This is likely because the glass transition occurs at significantly smaller $\beta$, and hence the RS solver is faced with multiple unstable saddles. In particular, the RS ansatz's off-diagonal term would only be expected to become nonzero when the size of each 1RSB cluster is large, although the glass transition occurs when they are exponentially small. The RS solution is thus likely unstable for a reasonable range of $\beta$ below the transition shown in \Cref{fig:rs}.

\section{1RSB solution of the \texorpdfstring{$q$}\ -local Hamiltonian ensemble}\label{app:1rsb}

We give the 1RSB saddle point equations of the random $q$-local Hamiltonian ensemble at arbitrary $q$. Unlike the finite-$N$ numerics of that probe glassiness at small values of $q$ in~\cite{swingle2023bosonicmodelquantumholography}, our computation takes the $N\to\infty$ limit analytically. The saddle point equations are given as integrals of single-qubit Hamiltonians, which can be directly evaluated numerically. In this appendix, we show that sufficiently large $q$ is not glassy.

\subsection{Setup}
We start from the $G,\Sigma$ action \eqref{eq:action} derived by~\cite{swingle2023bosonicmodelquantumholography} (and derived in a self-contained manner in \Cref{app:path})
\begin{align}
    X &= \int_0^\beta\int_0^\beta d\tau_1\, d\tau_2\, \lr{\frac{J^2}{2q} \sum_{a,b=1}^{n}G_{ab}^q(\tau_1,\tau_2) - \frac{1}{2} \sum_{a,b=1}^{n} \Sigma_{ab}(\tau_1,\tau_2) G_{ab}(\tau_1, \tau_2)}\nonumber\\
    &\quad + \log \mathbb{E}_h \int \cD\Omega\, \exp[\sum_{a=1}^{n} \sum_\mu \int_0^\beta d\tau\, h_\mu^a(\tau) s_\mu(\Omega^a(\tau))].
\end{align}
Recall that the expectation over $h$ is with respect to $h_\mu^a(\tau_1), h_\nu^b(\tau_2)$ drawn from a Gaussian with mean zero and covariance $\delta_{\mu\nu}\Sigma_{ab}(\tau_1,\tau_2)$. Taking derivatives with respect to $G$ and $\Sigma$, we obtain saddle point equations 
\begin{align}
    J^2 G_{ab}(\tau_1,\tau_2)^{q-1} = \Sigma_{ab}(\tau_1, \tau_2), \qquad G_{ab}(\tau_1, \tau_2) = \left\langle\sum_\mu \sigma_\mu^a(\tau_1)\sigma_\mu^b(\tau_2) \right\rangle_\Sigma.
\end{align}
Under the 1RSB ansatz with $q_0 = 0$, i.e.,
\begin{align}
    G_{ab}(\tau_1, \tau_2) &= G(\tau_1-\tau_2)\delta_{ab} + q_1\lr{1-\delta_{a_0b_0}}\delta_{a_1b_1}\\
    \Sigma_{ab}(\tau_1, \tau_2) &= \Sigma(\tau_1-\tau_2)\delta_{ab} + \hat q_1\lr{1-\delta_{a_0b_0}}\delta_{a_1b_1}\label{eq:sigma1rsb}
\end{align}
for $a_0 \in [m], a_1 \in [n/m]$ and $a=(a_0, a_1)$, the action becomes
\begin{align}\label{eq:Xnew}
    X &= \frac{n}{2}\int_0^\beta\int_0^\beta d\tau_1\,d\tau_2\, \lr{\frac{J^2G(\tau_1-\tau_2)^q}{q} - \Sigma(\tau_1-\tau_2)G(\tau_1-\tau_2)} + \frac{\beta^2(m-1)n}{2}\lr{\frac{J^2q_1^q}{q} - \hat q_1 q_1}\nonumber \\
    &\quad + \log \mathbb{E}_h \int  \cD\Omega\, \exp[\sum_{a=1}^{n} \sum_\mu \int_0^\beta d\tau\, h_\mu^a(\tau) s_\mu(\Omega^a(\tau))].
\end{align}
and the saddle point equations with respect to $G$ are
\begin{align}\label{eq:gsaddle}
    J^2 G(\tau_1-\tau_2)^{q-1} = \Sigma(\tau_1-\tau_2), \qquad  J^2 q_1^{q-1} = \hat q_1, \qquad q_1 = \left\langle \sum_\mu \sigma_\mu^1(\tau_1) \sigma_\mu^2(\tau_2) \right\rangle_{\Sigma'},
\end{align}
and with respect to $\Sigma$ are
\begin{align}\label{eq:dsigma}
    G(\tau_1-\tau_2) = \left\langle \sum_\mu \sigma_\mu^1(\tau_1) \sigma_\mu^1(\tau_2) \right\rangle_{\Sigma'}, \qquad q_1 = \left\langle \sum_\mu \sigma_\mu^1(\tau_1) \sigma_\mu^2(\tau_2) \right\rangle_{\Sigma'}
\end{align}
where $\Sigma'$ is the covariance matrix in one 1RSB block, i.e., for $a, b \in [m]$,
\begin{align}
    \Sigma'_{ab}(\tau_1,\tau_2) = \Sigma(\tau_1-\tau_2)\delta_{ab} + \hat q_1 (1-\delta_{ab}).
\end{align}
Below, we will also evaluate the saddle point equation in $\partial_m X$. For all these three saddle point equations, we need to take an analytic continuation to real $m$ after sending $n \to 0$.

\subsection{Saddle point equations}
The following notation will be useful. In each 1RSB block, we decompose the fields $h_\mu^a(\tau)$ (in the $\log \E_h$ term obtained by Hubbard-Stratonovich in \eqref{eq:Xnew}) into the sum of two independent random variables
\begin{align}\label{eq:randvars}
    h_\mu^a(\tau) = z_\mu + \xi_\mu^a(\tau), \qquad z_\mu \sim \cN(0, \delta_{\mu\nu}\hat q_1), \qquad \xi_\mu^a(\tau) \sim \cN(0, \delta_{\mu\nu}(\Sigma(\tau_1-\tau_2) - \hat q_1)),
\end{align}
which recover the proper covariance:
\begin{align}
    \E_{h\sim\Sigma'}[h_\mu^a(\tau_1) h_\nu^b(\tau_2)] = \EE{z_\mu z_\nu} + \EE{\xi_\mu^a(\tau_1) \xi_\nu^b(\tau_2)} = \delta_{\mu\nu}\lr{\hat q_1 + \delta_{ab} (\Sigma(\tau_1-\tau_2) - \hat q_1)} = \delta_{\mu\nu}\Sigma'_{ab}(\tau_1,\tau_2).
\end{align}
We also define the quantity
\begin{align}\label{eq:zeta}
    \zeta(z) \coloneqq \E_{\xi_\mu(\tau)} \int \cD\Omega\, \exp[\sum_\mu \int_0^\beta d\tau\, (z_\mu + \xi_\mu(\tau)) s_\mu(\Omega(\tau))]
\end{align} 
where here the exponent only contains Pauli matrices acting on a single qubit. This notation allows us to rewrite the log expectation term from Hubbard-Stratonovich in a more convenient form.

\begin{lemma}[Auxiliary field expectation]\label{lem:expzeta}
    \begin{align}
        \mathbb{E}_{h\sim\Sigma'} \int \cD\Omega\, \exp[\sum_{a=1}^m \sum_\mu \int_0^\beta d\tau\, h_\mu^a(\tau) s_\mu(\Omega^a(\tau))] = \E_z\left[\zeta(z)^m\right].
    \end{align}
\end{lemma}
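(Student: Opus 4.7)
The plan is to recognize that the covariance $\Sigma'_{ab}(\tau_1,\tau_2)=\Sigma(\tau_1-\tau_2)\delta_{ab}+\hat q_1(1-\delta_{ab})$ restricted to one 1RSB block admits an orthogonal decomposition into a replica-shared piece plus independent per-replica fluctuations. Concretely, one writes $h_\mu^a(\tau)=z_\mu+\xi_\mu^a(\tau)$ exactly as in \eqref{eq:randvars}, with $z_\mu$ a constant-in-$\tau$, replica-independent Gaussian of covariance $\hat q_1\delta_{\mu\nu}$ and $\xi_\mu^a$ mutually independent across $a$ with covariance $\delta_{\mu\nu}(\Sigma(\tau_1-\tau_2)-\hat q_1)$. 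A one-line calculation checks that this reproduces $\E[h_\mu^a(\tau_1)h_\nu^b(\tau_2)]=\delta_{\mu\nu}\Sigma'_{ab}(\tau_1,\tau_2)$, so it is a valid representation of the Gaussian process on the block.

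Next I would condition on $z$. Given $z$, the pairs $(\xi^a,\Omega^a)$ for $a=1,\dots,m$ are jointly independent across $a$, so the path integral over $\cD\Omega^1\cdots\cD\Omega^m$ together with the $\xi^a$ expectation factorizes:
\begin{align*}
\E_{\xi}\int\cD\Omega\,\exp\!\left[\sum_{a=1}^m\sum_\mu\int_0^\beta d\tau\,(z_\mu+\xi_\mu^a(\tau))\,s_\mu(\Omega^a(\tau))\right]
=\prod_{a=1}^m\E_{\xi^a}\!\int\cD\Omega^a\,\exp\!\left[\sum_\mu\int_0^\beta d\tau\,(z_\mu+\xi_\mu^a(\tau))\,s_\mu(\Omega^a(\tau))\right].
\end{align*}
Each factor on the right is, by definition \eqref{eq:zeta}, exactly $\zeta(z)$, and the factors are identical because the $\xi^a$ are identically distributed. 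Taking the outer expectation over $z$ then yields $\E_z[\zeta(z)^m]$, which is the claim.

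The proof is essentially bookkeeping, so I do not anticipate a genuine obstacle. The only subtlety worth flagging is that the decomposition requires $\Sigma(\tau_1-\tau_2)-\hat q_1$ to be a valid covariance kernel, i.e., positive semidefinite; this is consistent with the saddle-point relations $\hat q_1=J^2q_1^{q-1}$ and $\Sigma(\tau_1-\tau_2)=J^2G(\tau_1-\tau_2)^{q-1}$ together with $G(\tau_1-\tau_2)\geq q_1$, which holds on the physical 1RSB saddle. A second minor point is that $m$ should be taken integer in this computation, with the analytic continuation in $m$ performed at the end (as is standard in the replica trick); the identity $\E_z\zeta(z)^m$ is then the object to be continued.
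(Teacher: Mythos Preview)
Your proposal is correct and follows essentially the same approach as the paper: decompose $h_\mu^a(\tau)=z_\mu+\xi_\mu^a(\tau)$ as in \eqref{eq:randvars}, condition on $z$, use independence of the $\xi^a$ across replicas to factorize the path integral into $m$ identical copies of $\zeta(z)$, and take the outer expectation. The paper's proof is precisely this three-line computation; your additional remarks on the PSD requirement for $\Sigma-\hat q_1$ and on integer $m$ are reasonable caveats but not needed for the formal identity.
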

\begin{proof}
    We evaluate
    \begin{align}
        &\mathbb{E}_{h\sim\Sigma'} \int \cD\Omega\, \exp[\sum_{a=1}^m \sum_\mu \int_0^\beta d\tau\, h_\mu^a(\tau) s_\mu(\Omega^a(\tau))]\nonumber\\
        &= \E_z \left[\E_{\xi^1, \dots, \xi^m}\left[ \int \cD\Omega\, \prod_{a=1}^m \exp[\sum_\mu \int_0^\beta d\tau\, (z_\mu + \xi_\mu^a(\tau)) s_\mu(\Omega^a(\tau))] \,\Bigg|\, z\right]\right]\\
        &= \E_z \left[\lr{\E_{\xi_\mu(\tau)}\left[ \int \cD\Omega\, \exp[\sum_\mu \int_0^\beta d\tau\, (z_\mu + \xi_\mu(\tau)) s_\mu(\Omega(\tau))] \,\Bigg|\, z\right]}^m\right]\\
        &= \E_z\left[\zeta(z)^m\right].
    \end{align}
\end{proof}

The saddle point equations from $\partial_G X = 0$, i.e.,
\begin{align}
    J^2 G(\tau_1-\tau_2)^{q-1} = \Sigma(\tau_1-\tau_2), \qquad  J^2 q_1^{q-1} = \hat q_1,
\end{align}
are easy to work with independently of $m$. We first rewrite $\partial_\Sigma X = 0$ saddle point equation for the off-diagonal terms. Although we derive it from an integer number of copies in \eqref{eq:dsigma}, we apply it to any real $m$ in the replica method.
\begin{lemma}[Off-diagonal $\partial_\Sigma X$ saddle point equation]
    The off-diagonal equation in \eqref{eq:dsigma} is equivalent to
    \begin{align}
        q_1 = \frac{\E_z\left[\zeta(z)^m \sum_\nu a_\nu(z;\tau_1) a_\nu(z;\tau_2)\right]}{\E_z\left[\zeta(z)^m\right]}
    \end{align}
    for $\zeta$ defined in \eqref{eq:zeta}, random variables $z, \xi$ defined in \eqref{eq:randvars}, and
    \begin{align}
        a_\nu(z;\tau) \coloneqq \frac{\E_\xi \Tr(\cT \sigma_\nu(\tau) \exp[\sum_\mu \int_0^\beta d\tau'\, (z_\mu+\xi_\mu(\tau')) \sigma_\mu(\tau')])}{\zeta(z)}.
    \end{align}
\end{lemma}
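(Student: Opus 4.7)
The plan is to evaluate both sides of the off-diagonal equation in \eqref{eq:dsigma} using the decomposition $h_\mu^a(\tau) = z_\mu + \xi_\mu^a(\tau)$ from \eqref{eq:randvars}, which rewrites the Gaussian field $h\sim\Sigma'$ as a shared component $z$ across replicas in the block plus independent per-replica fluctuations $\xi^a$. The key point is that, conditioned on $z$, the different replica copies become independent and the generator factorizes as a product over $a\in[m]$. The denominator is then immediately $\E_z[\zeta(z)^m]$ by \Cref{lem:expzeta}, so the work lies in evaluating the numerator.

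For the numerator I would expand $\langle\sigma_\mu^1(\tau_1)\sigma_\mu^2(\tau_2)\rangle_{\Sigma'}$ as
\begin{equation}
\frac{\E_{h\sim\Sigma'}\,\Tr\!\left(\cT\,\sigma_\mu^1(\tau_1)\sigma_\mu^2(\tau_2)\,\exp\!\left[\sum_{a=1}^m\sum_\nu\int_0^\beta d\tau\,h_\nu^a(\tau)\sigma_\nu^a(\tau)\right]\right)}{\E_z[\zeta(z)^m]}.
\end{equation}
Since $\sigma^1$ and $\sigma^2$ act on different replica slots, the time-ordered trace factors into a product of single-qubit traces, one per replica, commuting across $a$. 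Conditioning on $z$ and then integrating out $\xi^a$ replica-by-replica, the two ``marked'' replicas $a=1,2$ each contribute a factor $\zeta(z)\,a_\mu(z;\tau_i)$ (by the very definition of $a_\mu(z;\tau)$ in the statement), while the remaining $m-2$ replicas each contribute $\zeta(z)$. This yields numerator
\begin{equation}
\sum_\mu \E_z\!\left[\zeta(z)^{m-2}\cdot\zeta(z)a_\mu(z;\tau_1)\cdot\zeta(z)a_\mu(z;\tau_2)\right]
= \E_z\!\left[\zeta(z)^m\sum_\nu a_\nu(z;\tau_1)a_\nu(z;\tau_2)\right],
\end{equation}
and dividing by $\E_z[\zeta(z)^m]$ gives the claimed formula.

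The last step is to pass from integer $m$ (where the computation makes rigorous combinatorial sense as a product over $m$ replica slots) to the real values $m\in(0,1]$ that arise after the $n\to 0$ analytic continuation. This is handled in the usual replica-trick manner: both sides of the identity are expressed purely through $\zeta(z)^m$, whose analytic continuation in $m$ is well-defined as a function of a real parameter, so the formal replacement integer $m\mapsto$ real $m$ is unambiguous. I would emphasize (consistent with \Cref{sec:replica}) that this is the same non-rigorous step underlying the whole replica computation; beyond that, the derivation itself is routine.

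The main obstacle is essentially bookkeeping: making sure that the factorization across replicas of $\cT\,\sigma_\mu^1(\tau_1)\sigma_\mu^2(\tau_2)\exp[\cdots]$ is handled correctly. This works because operators on distinct replica indices commute, so the time-ordering operator acts within each replica factor independently, and because separating the shared $z$ from the per-replica $\xi^a$ turns the $h\sim\Sigma'$ expectation into an outer $\E_z$ of a product of $m$ identical inner $\E_\xi$ expectations. Once that is set up cleanly, the identification with $\zeta(z)$ and $a_\mu(z;\tau)$ is immediate.
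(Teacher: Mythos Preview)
Your proposal is correct and follows essentially the same approach as the paper: write the expectation as a ratio with denominator $\E_z[\zeta(z)^m]$ via \Cref{lem:expzeta}, then condition on $z$ and factor the numerator across replicas so that the two marked replicas each contribute $\zeta(z)a_\nu(z;\tau_i)$ and the remaining $m-2$ contribute $\zeta(z)$. The paper does not dwell on the analytic continuation step, but your treatment of it is consistent with the surrounding discussion.
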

\begin{proof}
    We start with
    \begin{align}
        \left\langle \sum_\mu \sigma_\mu^1(\tau_1) \sigma_\mu^2(\tau_2) \right\rangle_{\Sigma'} =  \frac{\E_h \Tr(\cT\, \sum_\nu \sigma_\nu^1(\tau_1) \sigma_\nu^2(\tau_2) \exp[\sum_a \sum_\mu \int_0^\beta d\tau\, h_\mu^a(\tau) \sigma_\mu^a(\tau)])}{\mathbb{E}_h \Tr(\cT\, \exp[\sum_a \sum_\mu \int_0^\beta d\tau\, h_\mu^a(\tau) \sigma_\mu^a(\tau)])}
    \end{align}
    and observe that the denominator is $\E_z[\zeta(z)^m]$ by \Cref{lem:expzeta}. The numerator is computed via
    \begin{align}
        &\E_h \Tr(\cT\, \sum_\nu \sigma_\nu^1(\tau_1) \sigma_\nu^2(\tau_2) \exp[\sum_a \sum_\mu \int_0^\beta d\tau\, h_\mu^a(\tau) \sigma_\mu^a(\tau)]) \nonumber\\
        &= \E_z \Bigg[\sum_\nu \E_{\xi^1} \left[\Tr(\cT\, \sigma_\nu^1(\tau_1) \exp[\sum_\mu \int_0^\beta d\tau\, (z_\mu + \xi_\mu^1(\tau)) \sigma_\mu^1(\tau)])\,\Bigg|\,z\right]\nonumber\\
        &\qquad \times \E_{\xi^2} \left[\Tr(\cT\, \sigma_\nu^2(\tau_1) \exp[\sum_\mu \int_0^\beta d\tau\, (z_\mu+\xi_\mu^2(\tau)) \sigma_\mu^2(\tau)])\,\Bigg|\,z\right]\nonumber\\
        &\qquad \times \prod_{a=3}^m \E_{\xi^a} \left[\Tr(\cT\, \exp[\sum_\mu \int_0^\beta d\tau\, (z_\mu+\xi_\mu^a(\tau)) \sigma_\mu^a(\tau)])\,\Bigg|\,z\right]\Bigg]\\
        &= \E_z\left[\zeta(z)^m \sum_\nu a_\nu(z;\tau_1) a_\nu(z;\tau_2)\right]
    \end{align}
    for $a_\nu$ as defined in the lemma statement.
\end{proof}

\begin{remark}[Comparison to Ising $q$-spin]
    The Ising $q$-spin model, under the standard normalization with coefficient variance $J^2 q!/2N^{q-1}$, has saddle point equation~\cite{montanari2003nature,parisi2012}
    \begin{align}
        q_1 =  \frac{\EE{2^m \cosh^m \lr{\beta J\sqrt{qq_1^{q-1}} Z} \tanh^2 \lr{\beta J \sqrt{qq_1^{q-1}} Z}}}{\EE{2^m \cosh^m \lr{\beta J\sqrt{q q_1^{q-1}} Z}}}
    \end{align}
    for standard normal $Z$.
\end{remark}

In the quantum setting, the diagonal term $G(\tau), \Sigma(\tau)$ has nontrivial time dependence. We also require its saddle point equation.

\begin{lemma}[Diagonal $\partial_\Sigma X$ saddle point equation]
    For any real $m$, the diagonal equation in \eqref{eq:dsigma} is analytically continued to
    \begin{align}
        G(\tau_1-\tau_2) = \frac{\E_z\left[\zeta(z)^m \sum_\nu a_\nu(z;\tau_1,\tau_2)\right]}{\E_z\left[\zeta(z)^m\right]}
    \end{align}
    for $\zeta$ defined in \eqref{eq:zeta}, random variables $z, \xi$ defined in \eqref{eq:randvars}, and
    \begin{align}
        a_\nu(z;\tau_1, \tau_2) = \frac{\E_\xi \Tr(\cT \sigma_\nu(\tau_1) \sigma_\nu(\tau_2) \exp[\sum_\mu \int_0^\beta d\tau\, (z_\mu+\xi_\mu(\tau)) \sigma_\mu(\tau)])}{\zeta(z)}.
    \end{align}
\end{lemma}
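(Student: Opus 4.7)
The plan is to mimic the proof of the preceding off-diagonal lemma almost verbatim, with the single change that both Pauli insertions $\sigma_\nu(\tau_1)$ and $\sigma_\nu(\tau_2)$ now sit on the \emph{same} replica rather than on two distinct replicas. Starting from the definition
\[
G(\tau_1-\tau_2)=\left\langle\sum_\mu \sigma_\mu^1(\tau_1)\sigma_\mu^1(\tau_2)\right\rangle_{\Sigma'}=\frac{\E_h \Tr\!\lr{\cT\,\sum_\nu \sigma_\nu^1(\tau_1)\sigma_\nu^1(\tau_2)\,\exp[\sum_{a,\mu}\int_0^\beta d\tau\,h_\mu^a(\tau)\sigma_\mu^a(\tau)]}}{\E_h \Tr\!\lr{\cT\,\exp[\sum_{a,\mu}\int_0^\beta d\tau\,h_\mu^a(\tau)\sigma_\mu^a(\tau)]}},
\]
I would recognise the denominator as $\E_z[\zeta(z)^m]$ by \Cref{lem:expzeta}, so the only remaining work is to rewrite the numerator in the claimed form.

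For the numerator, I decompose $h_\mu^a(\tau)=z_\mu+\xi_\mu^a(\tau)$ as in \eqref{eq:randvars}. Conditional on $z$, the $m$ replicas decouple and the integrals over the $\xi^a$ are independent; only replica $1$ carries the Pauli insertions, while each of the remaining $m-1$ replicas contributes a copy of $\zeta(z)$ after integrating out its $\xi^a$. This gives
\[
\E_h \Tr\!\lr{\cT\sum_\nu \sigma_\nu^1(\tau_1)\sigma_\nu^1(\tau_2)\exp[\cdots]}=\E_z\!\left[\zeta(z)^{m-1}\,\E_{\xi}\Tr\!\lr{\cT\sum_\nu \sigma_\nu(\tau_1)\sigma_\nu(\tau_2)\exp[\sum_\mu\int_0^\beta d\tau\,(z_\mu+\xi_\mu(\tau))\sigma_\mu(\tau)]}\right],
\]
and the bracketed single-qubit expectation is precisely $\zeta(z)\sum_\nu a_\nu(z;\tau_1,\tau_2)$ by the definition in the statement. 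Collapsing the two factors produces $\E_z[\zeta(z)^m \sum_\nu a_\nu(z;\tau_1,\tau_2)]$, which divided by $\E_z[\zeta(z)^m]$ is the desired identity.

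The only subtle step is the extension from positive integer $m$ to real $m$. For integer $m$ the factorisation into one distinguished replica and $m-1$ identical $\zeta(z)$ factors is literal; for general $m$ one keeps the expression $\zeta(z)^m$ by the same analytic-continuation prescription already invoked in deriving the action \eqref{eq:Xnew} and in the companion off-diagonal lemma. I do not anticipate any genuine obstacle beyond this already-accepted replica step; the calculation is a direct parallel of the previous lemma and introduces no new ingredient.
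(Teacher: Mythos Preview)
Your proposal is correct and follows essentially the same approach as the paper: identify the denominator via \Cref{lem:expzeta}, decompose $h_\mu^a=z_\mu+\xi_\mu^a$, factor the $m$ replicas conditional on $z$ so that replica $1$ carries the two Pauli insertions while the remaining replicas each contribute $\zeta(z)$, and then recognise the single-replica factor as $\zeta(z)\sum_\nu a_\nu(z;\tau_1,\tau_2)$. Your bookkeeping of $m-1$ spectator replicas is in fact cleaner than the paper's displayed product (which reads $\prod_{a=3}^m$, evidently a copy-paste artifact from the off-diagonal lemma).
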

\begin{proof}
    We start with
    \begin{align}
        \left\langle \sum_\mu \sigma_\mu^1(\tau_1) \sigma_\mu^1(\tau_2) \right\rangle_{\Sigma'} =  \frac{\E_h \Tr(\cT\, \sum_\nu \sigma_\nu^1(\tau_1) \sigma_\nu^1(\tau_2) \exp[\sum_a \sum_\mu \int_0^\beta d\tau\, h_\mu^a(\tau) \sigma_\mu^a(\tau)])}{\mathbb{E}_h \Tr(\cT\, \exp[\sum_a \sum_\mu \int_0^\beta d\tau\, h_\mu^a(\tau) \sigma_\mu^a(\tau)])}
    \end{align}
    and observe that the denominator is $\E_z[\zeta(z)^m]$ by \Cref{lem:expzeta}. The numerator is computed via
    \begin{align}
        &\E_h \Tr(\cT\, \sum_\nu \sigma_\nu^1(\tau_1) \sigma_\nu^1(\tau_2) \exp[\sum_a \sum_\mu \int_0^\beta d\tau\, h_\mu^a(\tau) \sigma_\mu^a(\tau)]) \nonumber\\
        &= \E_z \Bigg[\sum_\nu \E_{\xi^1} \left[\Tr(\cT\, \sigma_\nu^1(\tau_1) \sigma_\nu^1(\tau_2) \exp[\sum_\mu \int_0^\beta d\tau\, (z_\mu + \xi_\mu^1(\tau)) \sigma_\mu^1(\tau)])\,\Bigg|\,z\right]\nonumber\\
        &\qquad \times \prod_{a=3}^m \E_{\xi^a} \left[\Tr(\cT\, \exp[\sum_\mu \int_0^\beta d\tau\, (z_\mu+\xi_\mu^a(\tau)) \sigma_\mu^a(\tau)])\,\Bigg|\,z\right]\Bigg]\\
        &= \E_z\left[\zeta(z)^m \sum_\nu a_\nu(z;\tau_1,\tau_2)\right]
    \end{align}
    for $a_\nu$ as defined in the lemma statement.
\end{proof}

We next need to identify the appropriate block size $m$ by evaluating the saddle point equation $\partial_m X = 0$; equivalently, this sets the TAP complexity $S=0$. (For convenience, we simplify our final answer below using $\hat q_1 = J^2 q_1^{q-1}$ from the $\partial_G X = 0$ saddle point equation.)

\begin{lemma}[$\partial_m X = 0$ saddle point equation]\label{lem:tapcomplexity}
    The saddle point equation $\partial_m X = 0$ is equivalent to
    \begin{align}
        \frac{(\beta J)^2}{2} \lr{1 - \frac{1}{q}} q_1^{q} + \frac{1}{m^2}\log \E_z \zeta(z)^m - \frac{1}{m}\frac{\E_z \zeta(z)^m \log \zeta(z)^m}{\E_z \zeta(z)^m} = 0
    \end{align}
    for $\zeta$ defined in \eqref{eq:zeta} and random variable $z = (z_\mu)_{\mu \in \{X,Y,Z\}} \sim \cN(0, \delta_{\mu\nu}\hat q_1)$ defined in \eqref{eq:randvars}.
\end{lemma}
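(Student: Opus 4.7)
The plan is to directly differentiate the 1RSB action \eqref{eq:Xnew} with respect to $m$, since every other saddle point variable has already been handled in the preceding two lemmas. First, I would identify the pieces of $X$ that carry $m$ dependence. The diagonal term $\frac{n}{2}\int\int d\tau_1\,d\tau_2\,\lr{\frac{J^2 G^q}{q}-\Sigma G}$ comes only from the $a=b$ contributions and is $m$-independent. The off-diagonal contribution gives the pre-factor $(m-1)$ in the $q_1,\hat q_1$ term, because inside each of the $n/m$ blocks of the 1RSB ansatz \eqref{eq:sigma1rsb} there are $m(m-1)$ pairs $a\neq b$ in the same block, and the double time integral produces $\beta^2$ times the constants $\frac{J^2 q_1^q}{q}$ and $\hat q_1 q_1$.

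The only nontrivial step is the Hubbard--Stratonovich log term. The key observation is that because $\Sigma_{ab}$ only couples replicas within a common block under the 1RSB ansatz, the $n$ replicas split into $n/m$ statistically independent blocks. I would apply \Cref{lem:expzeta} block by block to obtain
\begin{equation}
\log\mathbb{E}_h\int\cD\Omega\,\exp\!\left[\sum_{a=1}^{n}\sum_\mu\int_0^\beta d\tau\,h_\mu^a(\tau)\,s_\mu(\Omega^a(\tau))\right]=\frac{n}{m}\log\E_z\,\zeta(z)^m.
\end{equation}
Differentiating in $m$ via $\partial_m\zeta(z)^m=\zeta(z)^m\log\zeta(z)$ and the quotient rule yields
\begin{equation}
\partial_m\!\left[\frac{n}{m}\log\E_z\,\zeta(z)^m\right]=-\frac{n}{m^2}\log\E_z\,\zeta(z)^m+\frac{n}{m}\frac{\E_z[\zeta(z)^m\log\zeta(z)]}{\E_z\,\zeta(z)^m}.
\end{equation}

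Combining this with the derivative of the $(m-1)$ prefactor and dividing through by $n$ (the usual replica step preparing to take $n\to 0$), the condition $\partial_m X=0$ becomes
\begin{equation}
\frac{\beta^2}{2}\!\lr{\frac{J^2 q_1^q}{q}-\hat q_1 q_1}-\frac{1}{m^2}\log\E_z\,\zeta(z)^m+\frac{1}{m}\frac{\E_z[\zeta(z)^m\log\zeta(z)]}{\E_z\,\zeta(z)^m}=0.
\end{equation}
Finally I would invoke the saddle point relation $\hat q_1=J^2 q_1^{q-1}$ from \eqref{eq:gsaddle}, which collapses $\frac{J^2 q_1^q}{q}-\hat q_1 q_1$ into $-J^2 q_1^q\!\lr{1-\tfrac{1}{q}}$, and multiply by $-1$ to recover the claimed form. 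As a sanity check I would specialize to $m=1$ and confirm agreement with the TAP complexity $S$ of \eqref{eq:tapS}.

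The routine calculus is easy; the only conceptually delicate step is the block-factorization of the log-expectation, which depends on the $\xi_\mu^a(\tau)$ in different blocks being independent and on $z$ being shared within a block. This is exactly the structure made explicit in \eqref{eq:randvars} and used in \Cref{lem:expzeta}, so no new technical input is needed. The analytic continuation $n\to 0$ (and the treatment of $m$ as a real parameter in $(0,1]$) is non-rigorous but standard within the replica framework already adopted throughout \Cref{sec:plocal}.
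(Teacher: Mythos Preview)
Your proposal is correct and follows essentially the same route as the paper: both arguments use the block factorization of the Hubbard--Stratonovich term (via \Cref{lem:expzeta}) to write it as $\frac{n}{m}\log\E_z\zeta(z)^m$, differentiate in $m$, and then simplify the off-diagonal piece using $\hat q_1=J^2 q_1^{q-1}$. The only cosmetic difference is that the paper phrases the replica limit as $\lim_{n\to 0}\partial_n\partial_m X$ whereas you divide $\partial_m X$ by $n$; since every term of $X$ in \eqref{eq:Xnew} is manifestly linear in $n$, these are equivalent.
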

\begin{proof}
Since we can decompose
\begin{align}
    \mathbb{E}_{h\sim\Sigma} \int  \cD\Omega\, \exp[\sum_{a=1}^{n} \sum_\mu \int_0^\beta d\tau\, h_\mu^a(\tau) s_\mu(\Omega^a(\tau))] = \lr{\mathbb{E}_{h\sim\Sigma'} \int \cD\Omega\, \prod_{a=1}^m \exp[\sum_\mu \int_0^\beta d\tau\, h_\mu^a(\tau) s_\mu(\Omega^a(\tau))]}^{n/m},
\end{align}
the only nontrivial quantity to compute is $\partial_m \log \zeta_0$ for (by \Cref{lem:expzeta})
\begin{align}
    \zeta_0 = \mathbb{E}_{h\sim\Sigma'} \int \cD\Omega\, \prod_{a=1}^m \exp[\sum_\mu \int_0^\beta d\tau\, h_\mu^a(\tau) s_\mu(\Omega^a(\tau))] = \E_z\left[\zeta(z)^m\right]
\end{align}
since
\begin{align}
    \partial_m \log \mathbb{E}_{h\sim\Sigma} \int  \cD\Omega\, \exp[\sum_{a=1}^{mn} \sum_\mu \int_0^\beta d\tau\, h_\mu^a(\tau) s_\mu(\Omega^a(\tau))] = \partial_m \frac{n}{m} \log \zeta_0 = -\frac{n}{m^2} \log \zeta_0 + \frac{n}{m} \partial_m \log \zeta_0.
\end{align}
The derivative
\begin{align}
    \partial_m \zeta_0 = \E_z \left[\zeta(z)^m \log \zeta(z)\right]
\end{align}
yields
\begin{align}
    \lim_{n\to 0}\partial_n \partial_m \log \mathbb{E}_h \int  \cD\Omega\, \exp[\sum_{a=1}^{n} \sum_\mu \int_0^\beta d\tau\, h_\mu^a(\tau) s_\mu(\Omega^a(\tau))] = -\frac{1}{m^2} \log \E_z \zeta(z)^m + \frac{1}{m}\frac{\E_z \zeta(z)^m \log \zeta(z)}{\E_z \zeta(z)^m}.
\end{align}
To conclude the computation, we see that the action satisfies
\begin{align}
    \lim_{n\to 0}\partial_n \partial_m X = \partial_m \frac{\beta^2 (m-1)}{2}\lr{\frac{J^2 q_1^q}{q} - \hat q_1 q_1} + \lim_{n\to 0}\partial_n \partial_m \log \mathbb{E}_h \int  \cD\Omega\, \exp[\sum_{a=1}^{n} \sum_\mu \int_0^\beta d\tau\, h_\mu^a(\tau) s_\mu(\Omega^a(\tau))].
\end{align}
\end{proof}

\begin{remark}[Comparison to Ising $q$-spin]
    The Ising $q$-spin model, under the standard normalization with coefficient variance $J^2 q!/2N^{q-1}$, has $\partial_m X = 0$ saddle point equation~\cite{montanari2003nature,parisi2012}
    \begin{align}
        0 = \frac{(\beta J)^2}{2} q_1^q(q-1) + \frac{\log\EE{2^m \cosh^m \lr{\beta \sqrt{q q_1^{q-1}} Z}}}{m^2} - \frac{1}{m}\frac{\EE{2^m\cosh^m\lr{\beta J \sqrt{q q_1^{q-1}} Z} \log \lr{2^m\cosh^m\lr{\beta J \sqrt{q q_1^{q-1}} Z}}}}{\EE{2^m\cosh^m\lr{\beta J \sqrt{q q_1^{q-1}}}}}
    \end{align}
    for standard normal $Z$.
\end{remark}

\subsection{Solving the saddle point equations}
We summarize our saddle point equations that must be solved. We will solve them numerically at $q=3$ due to the lack of a closed form for the integrals over Pauli matrices (similar to the numerical solution of~\cite{kavokine2024exact} for the quantum Heisenberg model). For random variables
\begin{align}
    z_\mu \sim \cN(0, \delta_{\mu\nu}\hat q_1), \qquad \xi_\mu(\tau) \sim \cN(0, \delta_{\mu\nu}(\Sigma(\tau_1-\tau_2) - \hat q_1)),
\end{align}
we defined quantities
\begin{align}
    \zeta(z) &= \E_{\xi_\mu(\tau)} \Tr( \cT \exp[\sum_\mu \int_0^\beta d\tau\, (z_\mu + \xi_\mu(\tau)) \sigma_\mu(\tau)] )\\
    a_\nu(z;\tau_1, \dots, \tau_k) &= \frac{\E_{\xi_\mu(\tau)} \Tr(\cT \sigma_\nu(\tau_1)\cdots\sigma_\nu(\tau_k) \exp[\sum_\mu \int_0^\beta d\tau\, (z_\mu+\xi_\mu(\tau)) \sigma_\mu(\tau)])}{\zeta(z)}.
\end{align}
The saddle point equations are
\begin{align}
    \Sigma(\tau_1 - \tau_2) &= J^2 G(\tau_1 - \tau_2)^{q-1}\\
    \hat q_1 &= J^2 q_1^{q-1}\\
    G(\tau_1-\tau_2) &= \frac{\E_z\left[\zeta(z)^m \sum_\nu a_\nu(z;\tau_1,\tau_2)\right]}{\E_z\left[\zeta(z)^m\right]}\\
    q_1 &= \frac{\E_z\left[\zeta(z)^m \sum_\nu a_\nu(z;\tau_1) a_\nu(z;\tau_2)\right]}{\E_z\left[\zeta(z)^m\right]}\\
    0 &= \frac{(\beta J)^2}{2} \lr{1 - \frac{1}{q}} q_1^{q} + \frac{1}{m^2}\log \E_z \zeta(z)^m - \frac{1}{m}\frac{\E_z \zeta(z)^m \log \zeta(z)}{\E_z \zeta(z)^m} \label{eq:msaddle}.
\end{align}
Due to the spin-1/2 operators preventing a closed form for the expectation values involved, these saddle point must be solved numerically (cf.~\cite{swingle2023bosonicmodelquantumholography,kavokine2024exact}). In the $q\to\infty$ limit, we show analytically below that the 1RSB saddle point equations cannot be satisfied at temperatures satisfying $(\beta J)^2 q_1^{q-1}\to 0$ as $q \to \infty$. Since we permit unstable solutions, this gives evidence that neither 1RSB nor full RSB phases exist. Although our analysis does not address $\beta J$ exponentially large in $q$, we note that~\cite{swingle2023bosonicmodelquantumholography} provided finite-$N$ numerical results on the ground state ($\beta\to\infty$) and found strong evidence of non-glassiness for all $q \geq 5$. Moreover, if we take the numerical results of of \Cref{fig:rs} at $q=3$ as appropriate values of $\beta J$ and $q_1$ for general $q$, our analytical result suggests non-glassiness for at least all $q \geq 7$, which is in fairly close agreement with the ground state numerics of~\cite{swingle2023bosonicmodelquantumholography}.

Before proceeding with our analysis, we note that the TAP complexity~\cite{monasson1995structural,franz1998effective} is precisely given by \eqref{eq:msaddle} at $m=1$, i.e.,
\begin{align}\label{eq:tapcom}
    S = \frac{(\beta J)^2}{2} \lr{1 - \frac{1}{q}} q_1^{q} + \log \E_z \zeta(z) - \frac{\E_z \zeta(z) \log \zeta(z)}{\E_z \zeta(z)}.
\end{align}
Under the ansatz resembling our glass decomposition,
\begin{align}
    \bm \rho = \sum_a e^{-\beta N f_a} \bm \rho_a
\end{align}
for free energies $f_a$, the TAP free energy assumes that these free energies concentrate at some saddle point $f_*$ in the large-$n$ limit, i.e.,
\begin{align}
    Z = \sum_a e^{-\beta N f_a} = \int \dd{f} \sum_a \delta(f-f_a) e^{-\beta N f} = \int \dd{f} e^{N[S(f) - \beta f]} = e^{N[S(f_*)-\beta f_*]},
\end{align}
where we introduced a complexity measure that counts the number of TAP states
\begin{align}\label{eq:scomp}
    S(f) = \frac{1}{N} \log \sum_a \delta(f-f_a).
\end{align}
Solving for the TAP free energy $\Phi$ over a replicated system $Z_m = \sum_a e^{-\beta m N f_a(\beta)} = e^{-N\beta \Phi}$ gives at $m=1$ precisely the TAP complexity reported in \eqref{eq:tapcom}. However, this does not immediately imply glassiness; one must check the so-called \emph{marginality condition}, which requires that the replicon eigenvalue vanishes~\cite{almeida1978}, and which implies aging in out-of-equilibrium dynamics~\cite{cugliandolo1993analytical,marinari1994replica,giamarchi1996variational,biroli2002out}. Since dynamical and static 1RSB commonly appear together, we focus below on static 1RSB instead of computing the marginality condition. We will see that $S > 0$ for any $q_1 \neq 0$, implying that no static RSB phase ever occurs, due to the $S=0$ saddle point equation~\eqref{eq:msaddle}.

\begin{lemma}[TAP complexity from KL divergence]\label{lem:kl}
    For $Q = \cN(0, \hat q_1 \delta_{\mu\nu})$ and measure
    \begin{align}
        dP(z) = \frac{\zeta(z)}{\E_Q \zeta} dQ(z),
    \end{align}
    we can write the TAP complexity as
    \begin{align}
        S = \frac{(\beta J)^2 q_1^q}{2}\lr{1 - \frac{1}{q}} - D_\KL(P||Q).
    \end{align}
\end{lemma}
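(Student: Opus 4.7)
The plan is a direct bookkeeping argument: recognize that $P$ is the Gibbs tilt of $Q$ by the weight $\zeta$, so that the non-Gaussian part of the TAP complexity \eqref{eq:tapcom} is (minus) the log-partition difference that defines $D_{\KL}(P\|Q)$. Only the two entropy-like terms
\begin{align}
    \log\E_z \zeta(z) \;-\; \frac{\E_z[\zeta(z)\log\zeta(z)]}{\E_z \zeta(z)}
\end{align}
need to be handled; the first term $\tfrac{(\beta J)^2}{2}(1-1/q)q_1^q$ already matches the desired output (recall $\hat q_1 = J^2 q_1^{q-1}$ from the saddle point equation, which we do not actually need to invoke here).

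First I would unpack the Radon--Nikodym derivative implied by $dP(z)=\frac{\zeta(z)}{\E_Q\zeta}\,dQ(z)$, writing
\begin{align}
    D_{\KL}(P\|Q) \;=\; \int \log\frac{dP}{dQ}\,dP \;=\; \E_P\!\left[\log\frac{\zeta(z)}{\E_Q\zeta}\right].
\end{align}
Next I would push the expectation back to $Q$ using the identity $\E_P[f(z)] = \E_Q[\zeta(z) f(z)]/\E_Q\zeta$, which gives
\begin{align}
    D_{\KL}(P\|Q) \;=\; \frac{\E_Q[\zeta(z)\log\zeta(z)]}{\E_Q\zeta} \;-\; \log\E_Q\zeta.
\end{align}
Since $\E_z$ in \eqref{eq:tapcom} is precisely expectation under $Q=\cN(0,\hat q_1\delta_{\mu\nu})$, substituting this identity into \eqref{eq:tapcom} yields the claim
\begin{align}
    S \;=\; \frac{(\beta J)^2 q_1^q}{2}\lr{1-\frac{1}{q}} \;-\; D_{\KL}(P\|Q).
\end{align}

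The only minor subtlety is verifying that $P$ is a bona fide probability measure, which follows because $\zeta(z)>0$ (it is a partition function of a bounded-below single-qubit Hamiltonian, hence positive) and $\E_Q\zeta<\infty$ (each factor in \eqref{eq:zeta} is a trace of a bounded-norm time-ordered exponential). I do not foresee any real obstacle; the lemma is an algebraic identity whose role is to repackage the TAP complexity in a form where a log-Sobolev-type bound on $D_{\KL}(P\|Q)$ will, in the next step of the paper, yield the desired non-glassiness bound for large $q$.
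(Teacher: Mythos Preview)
Your proposal is correct and follows essentially the same approach as the paper's proof: expand $D_{\KL}(P\|Q)$ via the Radon--Nikodym derivative, change measure from $P$ back to $Q$ to obtain $D_{\KL}(P\|Q)=\frac{\E_Q[\zeta\log\zeta]}{\E_Q\zeta}-\log\E_Q\zeta$, and then substitute into \eqref{eq:tapcom}. Your added remark that $P$ is a bona fide probability measure is a nice touch the paper omits.
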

\begin{proof}
    The definition of KL divergence gives
    \begin{align}
        D_\KL(P||Q) = \E_P \log \frac{dP}{dQ} = \E_P \log \zeta - \log \E_Q \zeta.
    \end{align}
    We change measure using
    \begin{align}
        \E_P \log \zeta = \int (\log \zeta) \frac{dP}{dQ} dQ = \frac{\E_Q \zeta \log \zeta}{\E_Q \zeta}
    \end{align}
    to obtain
    \begin{align}
        D_\KL(P||Q) = \frac{\E_Q \zeta \log \zeta}{\E_Q \zeta} - \log \E_Q \zeta.
    \end{align}
    From the definition of TAP complexity \eqref{eq:tapcom}, we have
    \begin{align}
        S = \frac{(\beta J)^2}{2} \lr{1 - \frac{1}{q}} q_1^{q} + \log \E_Q \zeta - \frac{\E_Q \zeta \log \zeta}{\E_Q \zeta} = \frac{(\beta J)^2}{2} \lr{1 - \frac{1}{q}} q_1^{q} - D_\KL(P||Q).
    \end{align}
\end{proof}

We use~\cite{fathi2014quantitative}, which provides a sharper bound than the classical log-Sobolev inequality.

\begin{lemma}[Sharpened log-Sobolev bound]\label{lem:ls}
    For
    \begin{align}
        \langle \sigma_{\nu_1}(\tau_1)\cdots \sigma_{\nu_j}(\tau_j) \rangle = \frac{\E_{\xi_\mu(\tau)\sim\cN(0,(\Sigma(\tau_1-\tau_2)-\hat q_1)\delta_{\mu\nu})} \Tr(\cT \exp[\sum_\mu \int_0^\beta d\tau\, (z + \xi_\mu(\tau)) \sigma_\mu(\tau)] \prod_{r}\sigma_{\nu_r}(\tau_r))}{\zeta(z)},
    \end{align}
    define
    \begin{align}
        \eta \coloneqq \frac{1}{3} \sup_x \sum_\nu \int_0^\beta d\tau_1\,d\tau_2\, \lr{\langle \sigma_\nu(\tau_1) \sigma_\nu(\tau_2) \rangle - \langle \sigma_\nu(\tau_1)\rangle\langle \sigma_\nu(\tau_2) \rangle}\Bigg|_{z=\sqrt{\hat q_1}x}.
    \end{align}
    Then for $\lambda = 1 - \hat q_1 \eta$, we have
    \begin{align}
        D_\KL(P||Q) \leq \frac{(\beta J)^2q_1^q}{2} \cdot \frac{1 - \lambda + \lambda \log \lambda}{(1-\lambda)^2}.
    \end{align}
\end{lemma}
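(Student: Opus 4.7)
The plan is to apply the sharpened quantitative log-Sobolev inequality of~\cite{fathi2014quantitative} to the density $\zeta(z)/\E_Q\zeta$ of $P$ with respect to the Gaussian reference $Q=\cN(0,\hat q_1\delta_{\mu\nu})$. The standard Gaussian log-Sobolev inequality used immediately before the lemma already gives
\begin{align}
    D_\KL(P||Q)\le \frac{\hat q_1}{2}\,\E_P\norm{\nabla\log\zeta}^2,
\end{align}
so the problem reduces to replacing the prefactor $1/2$ by $(1-\lambda+\lambda\log\lambda)/(1-\lambda)^2$ under an appropriate strong-convexity hypothesis. Since $P$ has Lebesgue density proportional to $\exp\lr{-\tfrac{\norm{z}^2}{2\hat q_1}+\log\zeta(z)}$, it suffices to bound $\nabla^2\log\zeta$ uniformly from above to obtain quantitative strong log-concavity of $P$, which will feed into Fathi's sharpened constant.

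The main technical step is identifying and bounding $\nabla^2\log\zeta$. Differentiating $\zeta(z)=\E_\xi\Tr\bigl(\cT\exp[\sum_\mu\int_0^\beta(z_\mu+\xi_\mu(\tau))\sigma_\mu(\tau)\,d\tau]\bigr)$ twice via Duhamel's formula for the time-ordered exponential produces the connected two-point correlator
\begin{align}
    \partial_{z_\mu}\partial_{z_\nu}\log\zeta(z)=\int_0^\beta\!\int_0^\beta d\tau_1\,d\tau_2\,\bigl(\langle\sigma_\mu(\tau_1)\sigma_\nu(\tau_2)\rangle-\langle\sigma_\mu(\tau_1)\rangle\langle\sigma_\nu(\tau_2)\rangle\bigr).
\end{align}
This $3\times 3$ matrix is positive semidefinite, being the covariance of the linear statistics $\int_0^\beta\sigma_\mu(\tau)\,d\tau$. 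By the $SO(3)$ symmetry inherited from the isotropic covariance of the noise $\xi_\mu(\tau)$ and of $z_\mu$, its off-diagonal entries vanish and its three diagonal entries coincide, so the matrix reduces to $\tfrac13\operatorname{tr}(\nabla^2\log\zeta)\,I$. Reparametrizing $z=\sqrt{\hat q_1}\,x$ and taking the supremum over $x$ then yields the uniform matrix bound $\nabla^2\log\zeta(z)\preceq\eta\,I$ directly from the definition of $\eta$.

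Combining with the Gaussian contribution gives $\nabla^2(-\log P)\succeq(\hat q_1^{-1}-\eta)\,I=(\lambda/\hat q_1)\,I$, so $P$ is $(\lambda/\hat q_1)$-strongly-log-concave with respect to Lebesgue measure. The sharpened log-Sobolev inequality of~\cite{fathi2014quantitative} for such perturbations of Gaussian measures replaces the prefactor $1/2$ in the Gaussian log-Sobolev inequality by $(1-\lambda+\lambda\log\lambda)/(1-\lambda)^2$, producing
\begin{align}
    D_\KL(P||Q)\le\frac{1-\lambda+\lambda\log\lambda}{(1-\lambda)^2}\cdot\frac{\hat q_1}{2}\,\E_P\norm{\nabla\log\zeta}^2.
\end{align}
Substituting $\E_P\norm{\nabla\log\zeta}^2=\beta^2 q_1$ from the computation given just before the lemma (from the off-diagonal $m=1$ saddle point equation) together with $\hat q_1=J^2 q_1^{q-1}$ from the $\partial_G X=0$ saddle yields the stated bound. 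The main obstacle is the symmetry reduction of $\nabla^2\log\zeta$ to a scalar multiple of the identity rather than a generic positive semidefinite matrix; without $SO(3)$-covariance one would pay a factor of $3$ in $\eta$. Matching the exact functional form $(1-\lambda+\lambda\log\lambda)/(1-\lambda)^2$ of the Fathi prefactor to the strong-log-concavity parameter derived here requires a careful invocation of the cited theorem but is otherwise mechanical.
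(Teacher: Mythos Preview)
Your approach is essentially the paper's: rescale to the standard Gaussian, extract a Poincar\'e constant for $d\nu=f\,d\gamma$ via Bakry--\'Emery from a lower bound on $\nabla^2 V$ with $V(x)=\tfrac12\|x\|^2-\log\zeta(\sqrt{\hat q_1}\,x)$, invoke Theorem~1 of~\cite{fathi2014quantitative} (the paper also notes the evenness of $\zeta$ here), and evaluate the Fisher information $\E_P\|\nabla\log\zeta\|^2=\beta^2 q_1$ from the $m=1$ saddle.

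The one place your write-up departs from the paper is your justification of the operator bound $\nabla^2_z\log\zeta(z)\preceq\eta I$, and that justification is incorrect. You argue that $SO(3)$ symmetry forces $\nabla^2\log\zeta$ to be a scalar multiple of $I$. It is true that $\zeta$ is $SO(3)$-invariant (trace invariance under $SU(2)$ conjugation plus isotropy of $\xi$), so $\zeta$ depends only on $|z|$. But radiality does \emph{not} make the Hessian a scalar at a fixed $z\neq 0$: for $f(z)=\psi(|z|)$ one has
\[
\nabla^2 f(z)=\frac{\psi'(r)}{r}\,I+\Bigl(\psi''(r)-\frac{\psi'(r)}{r}\Bigr)\frac{zz^\top}{r^2},
\]
with radial eigenvalue $\psi''$ and tangential eigenvalue $\psi'/r$, which coincide only when $\psi$ is quadratic. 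So neither ``off-diagonals vanish'' nor ``diagonals coincide'' follows. The isotropy of the law of $z$ is irrelevant here: Bakry--\'Emery requires the pointwise matrix inequality after $z$ is fixed. Consequently, replacing the largest eigenvalue of $\nabla^2\log\zeta$ by $\tfrac13\operatorname{tr}\nabla^2\log\zeta$ is not justified by your argument. The paper's proof simply asserts $\nabla^2_x V\succeq(1-\hat q_1\eta)I_3$ at this step without giving a reason, so you have matched the paper's level of rigor in outcome---but the specific $SO(3)$ reasoning you supply does not work.
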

\begin{proof}
    Let $\gamma$ denote the standard Gaussian density; define $d\nu = f d\gamma$ for
    \begin{align}
        f(x) = \frac{\zeta(\sqrt{\hat q_1} x)}{\E_Q \zeta}
    \end{align}
    so $D_\KL(P||Q) = D_\KL(\nu||\gamma)$. Since we can rewrite
    \begin{align}
        d\nu = (2\pi)^{-3/2} e^{-V(x)}dx
    \end{align}
    for
    \begin{align}
        V(x) = \frac{1}{2}\norm{x}^2 - \log \frac{\zeta(\sqrt{\hat q_1} x)}{\E_q \zeta},
    \end{align}
    then
    \begin{align}
        \nabla^2 V(x) \succeq \lambda I_3
    \end{align}
    implies a Poincar\'e inequality with constant $\lambda$ due to Bakry-\'Emery. Computing the Hessian, we have first derivative
    \begin{align}\label{eq:fder}
        \partial_{z_\nu} \log \zeta(z) = \int_0^\beta \langle \sigma_\nu(\tau)\rangle d\tau = \int_0^\beta a_\nu(z;\tau) d\tau
    \end{align}
    and second derivative
    \begin{align}
        \frac{\partial^2}{\partial z_\mu \partial z_\nu} \log \zeta(z)\Bigg|_{z=\sqrt{\hat q_1} x} = \int_0^\beta d\tau_1\,d\tau_2\, \lr{\langle \sigma_\mu(\tau_1)\sigma_\nu(\tau_2)\rangle - \langle \sigma_\mu(\tau_1)\rangle\langle\sigma_\nu(\tau_2)\rangle}\Bigg|_{z=\sqrt{\hat q_1} x}.
    \end{align}
    Hence, we have
    \begin{align}
        \nabla^2_x V(x) = I - \hat q_1 \nabla^2_z \log \zeta(z)\Bigg|_{z=\sqrt{\hat q_1} x} \geq (1-\hat q_1 \eta)I_3.
    \end{align}
    By Theorem 1 of~\cite{fathi2014quantitative}, setting $\lambda = 1 - \hat q_1 \eta/3$ implies
    \begin{align}
        D_\KL(P||Q) \leq \frac{1 - \lambda + \lambda \log \lambda}{(1-\lambda)^2}\frac{1}{2}\int \frac{\norm{\nabla f}^2}{f} d\gamma
    \end{align}
    since $\zeta(z)$ is even.
    We compute
    \begin{align}
        \int \frac{\norm{\nabla f}^2}{f} d\gamma = \int f \norm{\nabla \log f}^2 d\gamma = \int \norm{\nabla \log f}^2 d\nu = \hat q_1 \E_P \norm{\nabla \log \zeta}^2.
    \end{align}
    Since \eqref{eq:fder} gives
    \begin{align}
        \E_P \norm{\nabla \log \zeta(z)}^2 = \E_P \sum_\nu \lr{\int_0^\beta a_\nu(z;\tau)d\tau}^2 = \beta^2 \frac{\E_z \zeta(z) \sum_\nu a_\nu(z;\tau_1) a_\nu(z;\tau_2)}{\E_z \zeta(z)} = \beta^2 q_1
    \end{align}
    by the 1RSB saddle point equation
    \begin{align}
        q_1 &= \frac{\E_z\left[\zeta(z)^m \sum_\nu a_\nu(z;\tau_1) a_\nu(z;\tau_2)\right]}{\E_z\left[\zeta(z)^m\right]}
    \end{align}
    at $m=1$, and since
    \begin{align}
        \hat q_1 = J^2 q_1^{q-1},
    \end{align}
    we conclude that
    \begin{align}
        \int \frac{\norm{\nabla f}^2}{f} d\gamma = (\beta J)^2 q_1^q.
    \end{align}
    This gives the claimed inequality
    \begin{align}
        D_\KL(P||Q) \leq \frac{(\beta J)^2q_1^q}{2} \cdot \frac{1 - \lambda + \lambda \log \lambda}{(1-\lambda)^2}.
    \end{align}
\end{proof}

\begin{corollary}
    Let $\lambda = 1 - (\beta J)^2 q_1^{q-1} \geq 1 - x_*$ for some $x_* < 1$. Then the TAP complexity satisfies
    \begin{align}
        S \geq \frac{(\beta J)^2 q_1^q}{2}\lr{1 - \frac{1}{q}-  \frac{1-\lambda + \lambda \log \lambda}{(1-\lambda)^2}} \geq \frac{(\beta J)^2 q_1^q}{4}\lr{1 - \frac{2}{q} - \lr{\frac{1}{3} + \frac{2x_*}{9(1-x_*)}}(\beta J)^2 q_1^{q-1}}.
    \end{align}
\end{corollary}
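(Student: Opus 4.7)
The plan is to combine the two preceding lemmas with a one-variable Taylor estimate. The left-hand inequality is immediate: \Cref{lem:kl} gives the identity
\begin{equation*}
S = \frac{(\beta J)^2 q_1^q}{2}\lr{1 - \frac{1}{q}} - D_{\rm KL}(P\|Q),
\end{equation*}
and \Cref{lem:ls} supplies the upper bound $D_{\rm KL}(P\|Q) \leq \tfrac{(\beta J)^2 q_1^q}{2}\cdot\tfrac{1-\lambda+\lambda\log\lambda}{(1-\lambda)^2}$. Substituting yields the first displayed inequality.

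For the right-hand inequality I would introduce $u := 1-\lambda = (\beta J)^2 q_1^{q-1} \in (0,x_*]$ and study the scalar function
\begin{equation*}
f(u) \;:=\; \frac{u + (1-u)\log(1-u)}{u^2}.
\end{equation*}
Expanding $\log(1-u) = -\sum_{k\geq 1} u^k/k$ and simplifying term by term gives the clean closed form $u + (1-u)\log(1-u) = \sum_{k\geq 2} u^k/(k(k-1))$, so
\begin{equation*}
f(u) = \frac{1}{2} + \frac{u}{6} + \sum_{k\geq 4} \frac{u^{k-2}}{k(k-1)}.
\end{equation*}
Since $k(k-1) \geq 12$ for $k \geq 4$, the tail is bounded by $u^2/(12(1-u))$. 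Monotonicity of $u\mapsto u/(1-u)$ on $[0,x_*]$ then yields $f(u) \leq \tfrac{1}{2} + \tfrac{u}{6} + \tfrac{u\,x_*}{12(1-x_*)}$, and relaxing $1/12$ to $1/9$ absorbs the constant into the shape required by the claim. Plugging this upper bound on $f$ into the first inequality and pulling the overall factor $1/2$ outside converts $1 - 1/q - f(u)$ into $\tfrac{1}{2}\bigl(1 - 2/q - (\tfrac{1}{3} + \tfrac{2x_*}{9(1-x_*)})u\bigr)$, which is exactly the stated bound after reinstating $u=(\beta J)^2 q_1^{q-1}$.

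The argument is essentially routine: both lemmas do the analytical heavy lifting, and what remains is a Taylor estimate of a single-variable function on $(0,x_*]$. The only judgment call is where to truncate the series---keeping the linear term explicit is necessary because this coefficient is what controls the quality of the final bound, while the quadratic remainder is just large enough to dominate every higher-order contribution once multiplied by $1/(1-x_*)$. No further ingredient is required.
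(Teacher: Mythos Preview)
Your approach is essentially the paper's: combine \Cref{lem:kl} and \Cref{lem:ls}, then Taylor-expand $f(u)=\frac{u+(1-u)\log(1-u)}{u^2}$ about $u=0$ and bound the tail geometrically. The series manipulation and the final algebra are correct.

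There is one small omission. The $\lambda$ appearing in \Cref{lem:ls} is $\lambda_{\rm LS}=1-\hat q_1\eta$, where $\eta$ is the supremum of the connected two-point correlator; the corollary's $\lambda$ is defined as $1-(\beta J)^2 q_1^{q-1}=1-\beta^2\hat q_1$. These are not the same quantity, so \Cref{lem:ls} does not directly hand you the bound with the corollary's $\lambda$. The paper closes this gap with the trivial pointwise estimate $|\langle\sigma_\nu(\tau_1)\sigma_\nu(\tau_2)\rangle^{\mathrm{conn}}|\leq 1$, which gives $\eta\leq\beta^2$ and hence $\lambda_{\rm LS}\geq\lambda$; one then needs that $\lambda\mapsto\frac{1-\lambda+\lambda\log\lambda}{(1-\lambda)^2}$ is nonincreasing on $(0,1)$, which your series expansion of $f$ already establishes (all coefficients are nonnegative). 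Adding this one sentence makes the first inequality airtight; the rest of your argument goes through unchanged.
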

\begin{proof}
    We bound
    \begin{align}
        \eta = \frac{1}{3}\sum_\nu \int_0^\beta d\tau_1\,d\tau_2\, \lr{\langle \sigma_\nu(\tau_1) \sigma_\nu(\tau_2) \rangle - \langle \sigma_\nu(\tau_1)\rangle\langle \sigma_\nu(\tau_2) \rangle} \leq \beta^2
    \end{align}
    using $|\langle \sigma_\nu(\tau_1) \sigma_\nu(\tau_2) \rangle^\mathrm{conn}| \leq 1$. We apply \Cref{lem:ls} with
    \begin{align}
        \lambda \geq 1 - \beta^2 \hat q_1 = 1 - (\beta J)^2 q_1^{q-1}.
    \end{align}
    Taking $x = (\beta J)^2 q_1^{q-1}$ and assuming $x < x_* < 1$, we obtain
    \begin{align}
        D_\KL(P||Q) \leq \frac{x q_1}{2} \cdot \frac{1-\lambda + \lambda \log \lambda}{(1-\lambda)^2} \leq \frac{x q_1}{4} \left[1 + \lr{\frac{1}{3} + \frac{2x_*}{9(1-x_*)}}x\right].
    \end{align}
    Applying \Cref{lem:kl} gives the claimed result.
\end{proof}

For $\beta J$ not exponentially growing in $q$, we thus find that $S > 0$ is lower-bounded by a constant (depending on $q$), ruling out the $S=0$ condition for the 1RSB saddle point equations at $m=1$. Note that we did not require stability; hence, this also provides evidence against full RSB.

\section{Full RSB solution of the \texorpdfstring{$q$}\ -local Hamiltonian ensemble}\label{app:frsb}
We show here the full RSB solution of the ensemble, obtaining the Parisi order parameter $q(x)$ at arbitrary $\beta$ and $\kappa$. (To avoid confusion between the Parisi order parameter and the locality of the Hamiltonian, we use $\kappa$ for locality in this section.) The computation proceeds by applying the replica trick to write the free energy in terms of a generic choice of overlap $G_{ab}(\tau_1,\tau_2)$ (and its Lagrange multiplier $\Sigma$), imposing the full RSB ansatz, then writing the resulting saddle point equations.

\subsection{Replica trick}
We start from
\begin{align}
    \EE{Z^n} &= \int \cD^N \Omega\, \cD G\, \cD\Sigma\, \exp\Bigg\{\int_0^\beta \int_0^\beta d\tau_1 \,d\tau_2 \Bigg(\frac{J^2 N}{2\kappa} \sum_{a,b=1}^n G_{ab}^\kappa(\tau_1, \tau_2) \nonumber \\
    &\qquad \qquad \qquad \qquad \qquad \quad - \frac{N}{2}\sum_{a,b=1}^n \Sigma_{ab}(\tau_1, \tau_2) \left[G_{ab}(\tau_1, \tau_2) - \frac{1}{N} \sum_{r,\mu} s_{r\mu}^a(\tau_1) s_{r\mu}^b(\tau_2)\right]\Bigg)\Bigg\}.
\end{align}
We reparameterize $G, \Sigma$ to have imaginary times $\tau_1,\tau_2 \in [0, 1]$ instead of $[0,\beta]$, so the temperature dependence is made explicit; we rewrite
\begin{align}
    Z &= \int \cD G \, \cD\Sigma\, \exp[NX]
\end{align}
for
\begin{align}
    X &= \frac{\beta^2}{2} \sum_{a,b=1}^n \int_0^1 d\tau_1\,d\tau_2 \lr{\frac{J^2 G_{ab}^\kappa(\tau_1, \tau_2)}{\kappa} - \Sigma_{ab}(\tau_1, \tau_2) G_{ab}(\tau_1, \tau_2)} \nonumber\\
    &\qquad + \log \int \cD^N \Omega \exp\left[\frac{\beta^2}{2} \sum_{a,b=1}^n \int_0^1 d\tau_1\,d\tau_2\, \Sigma_{ab}(\tau_1,\tau_2) \lr{s^a(\tau_1) \cdot s^b(\tau_2)}\right]
\end{align}
where we used shorthand
\begin{align}
    s^a(\tau_1) \cdot s^b(\tau_2) &= \frac{1}{N} \sum_{r=1}^N \sum_\mu s_{r\mu}^a(\tau_1) s_{r\mu}^b(\tau_2).
\end{align}
As $N\to\infty$, the free energy converges to $-X/\beta$ evaluated at $G,\Sigma$ satisfying the saddle point equations
\begin{align}
    G_{ab}(\tau_1,\tau_2) &= \left\langle \cT \, \sum_\mu \sigma^a_\mu(\tau_1) \sigma^b_\mu(\tau_2) \right\rangle_{\cS_\loc}, \qquad \Sigma_{ab}(\tau_1,\tau_2) = J^2 G_{ab}^{\kappa-1}(\tau_1,\tau_2)
\end{align}
for action
\begin{align}
    \cS_\loc = \sum_{a=1}^n \cS_0[\sigma^a(\tau)] - \frac{(\beta J)^2}{2} \int_0^1 d\tau_1\,d\tau_2\, \sum_{a,b=1}^n \Sigma_{ab}(\tau_1,\tau_2) \sum_\mu \sigma^a_\mu(\tau_1) \sigma^b_\mu(\tau_2),
\end{align}
where $\cS_0$ is a single-site action that we leave unspecified.
Substituting $G_{ab}(\tau_1,\tau_2) = G(\tau_1-\tau_2)$ and $G_{a\neq b}(\tau_1,\tau_2) = G_{ab}$, we obtain free energy per site
\begin{align}\label{eq:lilf}
    f &= \frac{\beta J^2}{2}\lr{1 - \frac{1}{\kappa}} \int_0^1 d\tau_1\,d\tau_2\, G^\kappa(\tau_1-\tau_2) + \lim_{n\to 0} \frac{1}{\beta n}\left[(\beta J)^2\lr{1 - \frac{1}{\kappa}}\sum_{a < b} G_{ab}^\kappa - \log Z_\loc\right]
\end{align}
by the replica trick, where
\begin{align}
    G_{ab} &= \left\langle \cT \, \sum_\mu \sigma^a_\mu(\tau_1) \sigma^b_\mu(\tau_2) \right\rangle_{\cS_\loc}, \qquad
    G(\tau) = \left\langle \cT \, \sum_\mu \sigma^a_\mu(\tau_1) \sigma^a_\mu(\tau_2) \right\rangle_{\cS_\loc}, \qquad
    Z_\loc = \int \cD \Omega \exp[-\cS_\loc]\label{eq:g2points}\\
    \cS_\loc &= \sum_{a=1}^n \cS_0[\sigma^a(\tau)] - \frac{(\beta J)^2}{2} \int_0^1 d\tau_1\,d\tau_2\, G^{\kappa-1}(\tau_1-\tau_2) \sum_\mu \sigma^a_\mu(\tau_1) \sigma^a_\mu(\tau_2) - (\beta J)^2 \sum_{a < b} G^{\kappa-1}_{ab} \int_0^1 d\tau_1\,d\tau_2\, \sum_\mu \sigma^a_\mu(\tau_1) \sigma^b_\mu(\tau_2).\label{eq:sloc}
\end{align}

\subsection{Full RSB free energy}
We recursively construct the full RSB ansatz and follow a similar solution as that in \cite{kavokine2024exact} for the quantum Heisenberg model. We define a self-replica function $G:[-1,1]\to\mathbb{R}$ and two sequences $(q_p)_{0,\dots, k}$ and $(m_p)_{0,\dots,k}$. We set $m_0=n$ and define the $k$th stage of RSB according to
\begin{align}
G^\rsbk_k &= (q_k - q_{k-1})\,U_{m_k}, \\
G^\rsbk_p &= \operatorname{diag}_{\,m_p/m_{p+1}}\lr{G^\rsbk_{p+1}} + (q_p - q_{p-1})U_{m_p} \text{for all } p\in\{0,\dots,k-1\} \\
G^\rsbk &\equiv G^\rsbk_0 .
\end{align}
Here, $U_m$ is the $m\times m$ matrix of ones, $\operatorname{diag}_r(A)$ is the block-diagonal matrix with $r$ copies of $A$, $m_0=n$, and $q_{-1}=0$. Introducing notation
\begin{align}
    \bar \sigma^a_\mu = \int_0^1 d\tau \sigma_\mu^a(\tau), \qquad \sigma^a(\tau_1) \cdot \sigma^b(\tau_2) = \sum_\mu \sigma^a_\mu(\tau_1) \sigma^b_\mu(\tau_2)
\end{align}
we have at the $k$th stage of RSB that
\begin{align}
    \cS_\loc^\rsbk &= \sum_{a=1}^n \lr{\cS_0[\sigma^a(\tau)] - \frac{(\beta J)^2}{2} \int_0^1 d\tau_1\,d\tau_2\, \lr{G^{\kappa-1}(\tau_1-\tau_2) - J^2 q_k^{\kappa-1}} \sigma^a(\tau_1) \cdot \sigma^a(\tau_2)} - \frac{(\beta J)^2}{2} \sum_{a,b} \lr{G^\rsbk}^{\kappa-1}_{ab} \bar \sigma^a\cdot \bar \sigma^b.
\end{align}
Introducing an auxiliary field $h$ (with three components, and which we will ultimately set to $h=0$) and writing out the path integral, the local partition function at step $p$ is
\begin{align}
    Z_p^\rsbk(h) &= \int \cD \Omega \exp\Bigg[-\sum_{a=1}^n\lr{ \cS_0[s^a(\tau)] - \frac{(\beta J)^2}{2} \int_0^1 d\tau_1\,d\tau_2\, \lr{G^{\kappa-1}(\tau_1-\tau_2) - J^2 q_k^{\kappa-1}} s^a(\tau_1) \cdot s^a(\tau_2)}\nonumber\\
    &\qquad\qquad\qquad\quad + \frac{(\beta J)^2}{2} \sum_{a,b} \lr{G^\rsbk_p}^{\kappa-1}_{ab} \bar s^a\cdot \bar s^b + \beta h \cdot \sum_{a=1}^{m_k} \bar s^a\Bigg].
\end{align}
Evaluating at steps $p$ and $p+1$ then applying a Hubbard-Stratonovich transformation, one finds the recurrence relation
\begin{align}
    Z_p^\rsbk(h) &= \int dh_p \, \mathbb{G}\lr{h_p | J^2(q_p^{\kappa-1} - q_{p-1}^{\kappa-1})} \lr{Z_{p+1}^\rsbk(h+h_p)}^{m_p/m_{p+1}},
\end{align}
where $\mathbb G(y|v)$ places a Gaussian distribution on $y$ with mean zero and variance $v$.
The replica computation for full RSB then takes limits $n \to 0$ and $k \to \infty$ with $m_k=1$. The variable $q_p$ becomes the Parisi order parameter $q(x)$ for $x \in [0,1]$, and $m_p$ becomes $x$. Given Laplacian operator $\nabla^2  := \sum_i \frac{\partial^2 }{\partial h_i^2}$, the heat-kernel identity
\begin{align}
    \left[\exp(\frac{v}{2}\nabla^2)\cdot f\right](h) = \int dh' \mathbb G(h'|v) f(h+h')
\end{align}
for any sufficiently regular function $f$ implies that (taking $Z_p^\rsbk(h) \to \zeta(x,h))$
\begin{align}
    \zeta(x-dx, h) = \exp(\frac{J^2}{2}dq^{\kappa-1}(x)\nabla^2) \zeta(x,h)^{1-dx/x},
\end{align}
i.e.,
\begin{align}
    \frac{\partial \zeta}{\partial x} = -\frac{J^2}{2} \frac{d q^{\kappa-1}}{dx} \nabla^2 \zeta + \frac{1}{x}\zeta \log \zeta
\end{align}
with boundary condition $\zeta(1,h) = \int \cD \Omega \exp[-\cS_\infty(h)]$ for single-site action
\begin{align}\label{eq:sinfty}
    \cS_\infty(h) &= \cS_0[s(\tau)] - \int_0^1 d\tau_1\, d\tau_2\, \frac{(\beta J)^2}{2}\lr{G^{\kappa-1}(\tau_1-\tau_2) - q^{\kappa-1}(1)} s(\tau_1) \cdot s(\tau_2) - \beta h \cdot \bar s.
\end{align}
Using~\cite{gabay1982symmetry}
\begin{align}
    \lim_{n\to 0} \sum_{a < b} G_{ab}^\kappa = -\frac{1}{2}\int_0^1 dx\, q(x)^\kappa
\end{align}
and
\begin{align}
    \lim_{n\to0}\frac{1}{n} \log Z_\loc = \int dh \, \mathbb{G}(h|J^2 q^{\kappa-1}(0)) \lim_{x\to 0} \frac{1}{x} \log \zeta(x, h),
\end{align}
we plug the above quantities into \eqref{eq:lilf} (and set $h=0$) to obtain free energy
\begin{align}
    f &= \frac{\beta J^2}{2}\lr{1 - \frac{1}{\kappa}} \int_0^1 d\tau_1\,d\tau_2\, \lr{G(\tau_1-\tau_2)^\kappa - \int_0^1 dx\, q(x)^\kappa} - \frac{1}{\beta}\phi(0,0)\\
    &= \frac{\beta J^2}{2}\lr{1 - \frac{1}{\kappa}} \int_0^1 d\tau\, \lr{G(\tau)^\kappa - \int_0^1 dx\, q(x)^\kappa} - \frac{1}{\beta}\phi(0,0),
\end{align}
where we applied the periodicity of $G$ and introduced $\phi(x,h) = (1/x)\log\zeta(x,h)$ which satisfies
\begin{align}
    \frac{\partial \phi}{\partial x} &= -\frac{J^2}{2} \frac{dq^{\kappa-1}}{dx} \lr{\nabla^2 \phi + x(\nabla \phi)^2}\\
    \phi(1,h) &= \log \int \cD \Omega \exp[-\cS_\infty(h)]
\end{align}
for $\cS_\infty(h)$ defined in \eqref{eq:sinfty}. To evaluate the quantities in \eqref{eq:g2points}, we compute the expectation of the observable $G(\tau_1,\tau_2) = \left\langle s^a(\tau_1) \cdot s^a(\tau_2) \right\rangle$ on the thermal state similarly to above. In the full RSB ansatz, we introduce the quantity $G_p^\rsbk(h)$ which evaluates $G(\tau_1,\tau_2)$ with the matrix $Q^\rsbk$ at step $p$ (with an external field). Then the $a=b$ case is
\begin{align}\label{eq:saa}
    \langle s^a(\tau_1) \cdot s^b(\tau_2) \rangle_p^\rsbk(h) &= \int dh_p\, \mathbb G\lr{h_p|J^2(q_p^{\kappa-1}-q_{p-1}^{\kappa-1})} \frac{\langle s^a(\tau_1) \cdot s^b(\tau_2) \rangle_{p+1}^\rsbk(h+h_p) Z_{p+1}^\rsbk(h+h_p)^{m_p/m_{p+1}}}{Z_p^\rsbk(h)}.
\end{align}
For $a \neq b$, we similarly have
\begin{align}\label{eq:sab}
    \langle \bar s^a \cdot \bar s^b \rangle_p^\rsbk(h) &= \int dh_p\, \mathbb G\lr{h_p|J^2(q_p^{\kappa-1}-q_{p-1}^{\kappa-1})} \frac{\langle \bar s^a \cdot \bar s^b \rangle_{p+1}^\rsbk(h+h_p) Z_{p+1}^\rsbk(h+h_p)^{m_p/m_{p+1}}}{Z_p^\rsbk(h)}.
\end{align}
As $n\to 0$ and $k\to\infty$, $G_p^\rsbk(h)$ becomes $\rho(x,h)$ and satisfies (temporarily suppressing $\tau_1,\tau_2$)
\begin{align}
    \zeta(x-dx,h)\rho(x-dx,h) = \exp[\frac{J^2}{2} dq(x)^{\kappa-1}\nabla^2]\rho(x,h)\zeta(x,h)^{1-dx/x},
\end{align}
i.e., for variable substitution $\beta u(x,h) = (1/x)\nabla \log \zeta = \nabla \phi(x, h)$,
\begin{align}
    \frac{\partial \rho}{\partial x} = -\frac{J^2}{2} \frac{dq^{\kappa-1}}{dx} \lr{\nabla^2 \rho + 2 \beta x (u\cdot \nabla) \rho}
\end{align}
with boundary condition $\rho(1,h)_{\tau_1,\tau_2} = \langle s(\tau_1)\cdot s(\tau_2)\rangle_{\cS_\infty(h)}$. Similarly, $u$ satisfies
\begin{align}
    \frac{\partial u}{\partial x} = -\frac{J^2}{2} \frac{dq^{\kappa-1}}{dx} \lr{\nabla^2 u + 2 \beta x (u\cdot \nabla) u}
\end{align}
with boundary condition $u(1,h) = \langle \bar s\rangle_{\cS_\infty(h)}$, as seen by evaluating $\nabla \phi(1,h)$ given our boundary condition for $\phi(1,h)$. A similar computation for the $a \neq b$ case gives the PDE
\begin{align}
    \frac{\partial \nu}{\partial x} = -\frac{J^2}{2} \frac{dq^{\kappa-1}}{dx} \lr{\nabla^2 \nu + 2 \beta x (u\cdot \nabla) \nu}
\end{align}
for $\nu(y, x, h)$ with boundary condition $\nu(x, x, h) = u^2(x,h)$.
Plugging these answers into \eqref{eq:saa} and \eqref{eq:sab} (using \eqref{eq:g2points}), we obtain equations
\begin{align}
    G(\tau_1-\tau_2) &= \int dh\, \mathbb G\lr{h|J^2 q^{\kappa-1}(0)} \rho(0,h)_{\tau_1,\tau_2} = \rho(0, 0)_{\tau_1,\tau_2}\\
    q(x) &= \int dh \, \mathbb G\lr{h|J^2 q^{\kappa-1}(0)} \nu(0, x, h) = \nu(0, x, 0)
\end{align}
since we anticipate $q(0) = 0$, producing a $\delta$ function at $h=0$. We reparameterize these PDEs as in~\cite{kavokine2024exact} to rewrite $\rho,\nu$ in terms of a probability distribution $P(h)$.

We summarize the full RSB solution as follows. For single-site action
\begin{align}
    \cS_\infty(h) &= \cS_0[s(\tau)] - \int_0^1 d\tau_1\, d\tau_2\, \frac{(\beta J)^2}{2}\lr{G(\tau_1-\tau_2)^{\kappa-1} - q(1)^{\kappa-1}} s(\tau_1) \cdot s(\tau_2) - \beta \int_0^1 d\tau\, h \cdot s(\tau)
\end{align}
we have
\begin{align}
    G(\tau) = \int dh\, P(1,h) \langle \cT \, s(\tau) \cdot s(0) \rangle_{\cS_\infty(h)}, \qquad q(x) = \int dh \, P(x,h) u(x,h)^2
\end{align}
for $P, u$ obtained by solving
\begin{align}
    \frac{\partial P}{\partial x} &= \frac{J^2}{2} \frac{dq^{\kappa-1}}{dx} \lr{\nabla^2 P - 2\beta x \nabla(u \cdot P)}, \qquad P(0, h) = \delta(h)\\
    \frac{\partial u}{\partial x} &= -\frac{J^2}{2} \frac{dq^{\kappa-1}}{dx} \lr{\nabla^2 u + 2 \beta x (u\cdot \nabla) u}, \qquad u(1,h) = \int_0^1 d\tau\, \langle s(\tau) \rangle_{\cS_\infty(h)}.
\end{align}
The free energy is then
\begin{align}
    F &= \frac{\beta J^2}{2}\lr{1 - \frac{1}{\kappa}} \int_0^1 d\tau\, \lr{G(\tau)^\kappa - \int_0^1 dx\, q(x)^\kappa} - \frac{1}{\beta}\phi(0,0)
\end{align}
for
\begin{align}
    \frac{\partial \phi}{\partial x} &= -\frac{J^2}{2} \frac{dq^{\kappa-1}}{dx} \lr{\nabla^2 \phi + x(\nabla \phi)^2}, \qquad \phi(1,h) = \log \int \cD \Omega \exp[-\cS_\infty(h)].
\end{align}

\end{document}